\PassOptionsToPackage{dvipsnames}{xcolor}
\documentclass[11pt,a4paper]{article}
\usepackage[margin=2.5cm]{geometry}
\usepackage[utf8]{inputenc}
\usepackage{authblk}

\usepackage{amsthm}
\usepackage{amssymb}
\usepackage{amstext}
\usepackage{amsmath}
\usepackage{url}
\usepackage{hyperref}
\usepackage{bookmark}
\usepackage{xspace}
\usepackage{caption}
\usepackage{subcaption}
\usepackage{lineno}

\usepackage[T1]{fontenc}
\usepackage{lmodern}
\usepackage{bm}
\usepackage{dsfont}

\usepackage{xcolor}

\usepackage{algorithm}
\usepackage{algpseudocode}
\algrenewcommand\algorithmicrequire{\textbf{Input:}}
\algrenewcommand\algorithmicensure{\textbf{Output:}}

\usepackage{graphicx}
\graphicspath{{./images/}}
\usepackage[outline]{contour}
\contourlength{0.8pt}

\usepackage{enumerate}
\usepackage{enumitem}

\usepackage{booktabs}
\usepackage{array}
\usepackage{longtable}

\usepackage{cleveref}

\usepackage{scalerel}
\usepackage{stackengine}

\stackMath
\newcommand\widecheck[1]{%
  \savestack{\tmpbox}{\stretchto{%
    \scaleto{%
      \scalerel*[\widthof{#1}]{\kern-.6pt\bigvee\kern-.6pt}%
      {\rule[-\textheight/2]{1ex}{\textheight}}%
    }{\textheight}%
  }{0.5ex}}%
  \stackon[1pt]{#1}{\tmpbox}%
}

\newtheorem{proposition}{Proposition}

\newtheorem{lemma}{Lemma}

\newtheorem{assumption}{Assumption}
\newtheorem{remark}{Remark}

\crefname{assumption}{assumption}{assumptions}
\Crefname{assumption}{Assumption}{Assumptions}

\def\argmin{\mathop{\rm arg \; min}\limits}%
\DeclareMathOperator{\arctantwo}{arctan2}

\newcommand{\D}{\mathcal{D}}

\newcommand{\R}{{\mathbb R}}

\newcommand{\N}{{\mathbb N}}
\newcommand{\Z}{{\mathbb Z}}

\newcommand{\Id}{{\mathrm{Id}}}
\newcommand{\indic}{{\mathds{1}}}
\newcommand{\dint}{{\mathrm{d}}}
\newcommand{\err}{{\mathtt{err}}}

\newcommand{\bc}{\bm{c}}

\newcommand{\be}{\bm{e}}
\newcommand{\bbf}{\bm{f}}

\newcommand{\bh}{\bm{h}}

\newcommand{\bn}{\bm{n}}
\newcommand{\bp}{\bm{p}}
\newcommand{\bq}{\bm{q}}
\newcommand{\bs}{\bm{s}}

\newcommand{\bv}{\bm{v}}

\newcommand{\bx}{\bm{x}}

\newcommand{\bz}{\bm{z}}

\newcommand{\bA}{\bm{A}}

\newcommand{\bR}{\bm{R}}

\newcommand{\dsop}{\mathord{\downarrow_U}}

\newcommand{\balpha}{\bm{\alpha}}

\newcommand{\bPhi}{\bm{\Phi}}

\def\argmin{\mathop{\rm arg \; min}\limits}%
\newcommand{\supp}{\mathop{\mathrm{supp}}}
\DeclareMathOperator{\diam}{diam}
\DeclareMathOperator*{\esssup}{ess\,sup}
\newcommand{\dist}{\mathrm{dist}}

\usepackage{tikz}
\usetikzlibrary{calc}
\usepackage{pgfplots}
\usepgfplotslibrary{groupplots}
\pgfplotsset{compat=newest}
\usetikzlibrary{plotmarks}
\usepgfplotslibrary{colorbrewer}
\usetikzlibrary{pgfplots.statistics, pgfplots.colorbrewer}
\pgfplotsset{cycle list/Set1-8}
\usepackage{pgfplotstable}
\usepackage{grffile}
\usepackage{pgffor,etoolbox}
\usepackage{python}

\usetikzlibrary{3d}
\usetikzlibrary{shapes.geometric, arrows.meta, positioning, calc,fit}

\tikzset{
  noblock/.style={
    minimum width=1.2cm, minimum height=0.9cm,
    text centered,
    font=\footnotesize
  },
  block/.style={
    rectangle, draw, thick,
    minimum width=1.2cm, minimum height=0.9cm,
    text width=2cm,
    text centered,
    font=\footnotesize
  },
  junction/.style={
    circle, fill=black, inner sep=0pt, minimum size=7pt
  },
  arr/.style={-{Stealth[length=6pt]}, thick},
  lin/.style={thick},
  dashbox/.style={
    rectangle, draw, dashed, thick,
    inner sep=15pt, rounded corners=0pt
  }
}

\hypersetup{colorlinks=true, allcolors=blue}

\newcommand{\orcid}[1]{}
\newcommand{\ack}[1]{\section*{Acknowledgments}#1}
\newcommand{\funding}[1]{\section*{Funding}#1}
\newcommand{\data}[1]{\section*{Data availability}#1}

\title{Scalable photoacoustic tomography implementations accounting for the spatial impulse response of transducers}

\author[1,4,6]{Trung-Thai Do}
\author[1]{Paul Escande}
\author[2,6]{Caroline Chaux}
\author[3]{Jérôme Gateau}
\author[4,5,6,7,8]{Hwee Kuan Lee}

\affil[1]{Institut de Mathématiques de Toulouse; UMR 5219, Université de Toulouse, CNRS; UPS, F-31062 Toulouse Cedex 9, France}
\affil[2]{Aix Marseille Univ, CNRS, Institut de Mathématiques de Marseille, Marseille, France}
\affil[3]{Sorbonne Université, CNRS, Inserm, Laboratoire d'Imagerie Biomédicale, LIB, F-75006, Paris, France}
\affil[4]{Bioinformatics Institute, Agency for Science, Technology and Research, 30 Biopolis Street, 138671, Singapore}
\affil[5]{Centre for Frontier AI Research, Agency for Science, Technology and Research, 1 Fusionopolis Way, 138671, Singapore}
\affil[6]{International Research Laboratory on Artificial Intelligence, Agency for Science, Technology and Research, 1 Fusionopolis Way, 138671, Singapore}
\affil[7]{School of Biological Sciences, Nanyang Technological University, 60 Nanyang Dr, 639798, Singapore}
\affil[8]{School of Computing, National University of Singapore, 13 Computing Dr, 117417, Singapore}

\date{}

\begin{document}

\maketitle

\begin{abstract}
Iterative model-based reconstruction in photoacoustic tomography repeatedly applies the forward operator mapping the initial pressure to the transducer signals, and its adjoint. At the scale of current three-dimensional systems, this operator cannot be stored and must be evaluated matrix-free, while accounting for the finite, focused surface of the transducers, whose spatial impulse response degrades the resolution when ignored.
Representing the initial pressure by compactly supported radial functions, we show that the measured signal is exactly a temporal convolution between a system kernel gathering the radial function and the electrical impulse response, and a purely geometric quantity accounting for the portion of the transducer surface reached by the wave emitted from a voxel during one time step. Two implementations are proposed, differing only in how this quantity is evaluated: a quadrature over points of the surface, as in existing works, or a closed-form area, which never discretizes the surface.
We derive closed forms for planar and cylindrically focused transducers and provide, in the latter case, two accelerations of the resulting elliptic integrals, a lookup table and a trapezoidal approximation, together with the piecewise planar approximation customary in the literature.
These implementations reduce the per-voxel geometric computations and are released as an open-source Python package for graphics processing units.
The performance of these operators is first demonstrated on a synthetic phantom, where the lookup-table-based operator reaches the accuracy of the exact evaluation ten times faster and outperforms the point discretization on both accuracy and runtime. A second experiment shows that they enable the processing of a realistic vascular phantom at full scale, with a higher peak signal-to-noise ratio and a better resolution than the back-projection counterpart.
The released implementations are an important step towards the adoption of three-dimensional model-based photoacoustic reconstructions with finite and focused transducers.

\medskip
\noindent\textbf{Keywords:} photoacoustic tomography, inverse problems, GPU forward operators, cylindrically focused transducers, spatial impulse response
\end{abstract}

\section{Introduction}
\label{sec:introduction}
Photoacoustic tomography (PAT) is a hybrid biomedical imaging modality that combines optical contrast with ultrasonic resolution \cite{beard2011biomedical, wang2012photoacoustic}: by recording the acoustic waves generated following pulsed optical illumination of a sample, it reconstructs the spatial distribution of optical absorption at depths of several centimeters with sub-millimeter resolution. The resulting optical-absorption contrast, sensitive to endogenous chromophores such as hemoglobin, has made PAT a versatile tool for structural, functional and molecular imaging \cite{xia2014photoacoustic, yao2014photoacoustic, ntziachristos2010molecular}, and its non-ionizing nature has supported a growing number of preclinical \cite{perez2025dual} and clinical applications \cite{park2025clinical, attia2019review}.

\paragraph{Physical principle}
Photoacoustic systems are based on a pulsed optical excitation, typically of nanosecond duration, which ensures an efficient conversion of absorbed optical energy into ultrasonic waves under the stress- and thermal-confinement conditions \cite{wang2017photoacoustic}.
The generation of these waves in soft tissues can be described in two stages \cite{beard2011biomedical}: (i)~incident light is absorbed by optical chromophores, locally depositing heat inside the tissue; (ii)~the resulting thermoelastic expansion generates an initial pressure distribution $p_0 : \R^3 \to \R$, which subsequently propagates through the tissue as an ultrasonic wave.
These pressure waves are detected by an array of ultrasonic transducers surrounding the sample, whose piezoelectric elements convert the acoustic waves into measurable electrical signals $s$.

\paragraph{PAT reconstruction as an inverse problem}
Given the initial pressure $p_0$, the measured signals $s$ are determined by the acoustic wave equation and the detector model.
PAT image reconstruction addresses the converse direction: recovering $p_0$ from the measured signals $s$ alone \cite{poudel2019survey, anastasio2007application}.
Under the standard acoustic-wave model in a homogeneous, non-attenuating medium \cite{wang2017photoacoustic}, the complete mapping from the initial pressure distribution to the recorded measurements is linear and can be represented by a forward operator $\bA$.
The reconstruction problem therefore amounts to inverting
\begin{equation}
    \label{eq:inverse-problem}
    s = \bA p_0 + \eta,
\end{equation}
where the forward operator $\bA$ models acoustic propagation together with the transducer response, while $\eta$ accounts for measurement noise and modeling errors.
Although the continuous operator is closely related to the spherical-mean Radon transform, whose inversion has been extensively studied \cite{kuchment2008mathematics}, practical PAT systems differ significantly from this idealized setting. Attenuation, limited-view acquisition, finite transducer bandwidth and measurement noise \cite{arridge2016adjoint} make \eqref{eq:inverse-problem} severely ill-posed, so stable reconstruction relies on regularized iterative schemes that repeatedly evaluate $\bA$ and its adjoint $\bA^{*}$.
For realistic three-dimensional systems, applying $\bA$ and $\bA^*$ can be computationally expensive and require prohibitive amounts of memory.
Efficient evaluation of the forward operator and its adjoint is therefore essential for practical model-based PAT reconstruction.
The present work develops and analyzes efficient matrix-free implementations of these operators for realistic transducer geometries, aiming to reduce both computational cost and memory requirements.

\paragraph{Spatial impulse response of ultrasound detectors}
Practical ultrasound detectors are far from idealized point receivers.
Because the sensitivity per unit area of a piezoelectric element is limited, a finite aperture is used to collect sufficient acoustic signal.
Cylindrically focused surfaces are commonly employed to further benefit from acoustic focusing gain.

Such a detector integrates the incident pressure field over its surface before converting it into an electrical signal.
The resulting response depends on the position of the acoustic source relative to the detector and is described by the spatial impulse response (SIR) \cite{Wang2011,rosenthal2011model, mitsuhashi2014investigation}.
The signal recorded from a point source is not the pressure at a single location but a filtered version that depends on the geometry of the transducer and the angle under which the source is seen.
The SIR is therefore part of the forward operator $\bA$ in \eqref{eq:inverse-problem}.
Ignoring the SIR introduces a systematic loss of resolution that no amount of iterative refinement can recover \cite{queiros2014modeling, roitner2014deblurring}.

Existing strategies to account for the SIR range from post-acquisition deblurring with a measured or simulated SIR \cite{roitner2014deblurring}, which decouples the correction from the inversion, to explicit surface integration inside model-based pipelines \cite{piwakowski1989method,piwakowski1999new,rosenthal2011model,ding2017efficient, wise2019representing}.
The latter all hinge on a discrete sampling of the detector surface into quadrature points, whose density grows with the curvature of the surface and with the upper cut-off frequency of the detector, and thus directly controls the cost of every application of $\bA$ and $\bA^{*}$.
Closed-form expressions of the SIR, which would remove this quadrature altogether, are known only for a handful of canonical geometries (planar piston \cite{lockwood1973high,san1992diffraction,schmerr2007ultrasonic}, focused bowl \cite{hunt1983ultrasound}, rectangular double curved \cite{baek2012spatial}).
For the cylindrically focused detectors considered in this paper, the SIR is usually obtained by approximating the curved surface with small flat elements, for which a closed form is known \cite{jensen1992calculation,jensen1996field,wu1999spatial}. These detectors are therefore left without an exact closed-form expression suited to matrix-free forward operators.

\paragraph{Forward operators for PAT: state of the art}
Iterative regularized reconstruction algorithms \cite{dean2022practical, do2025reconstruction, paltauf2002iterative} are known to outperform direct analytical inversion in three-dimensional (3D) settings, especially for partial views and non-canonical detector geometries, but they only become tractable when the forward operator and its adjoint can be evaluated efficiently.
Existing implementations of the forward operator fall into four broad families: closed-form analytical formulas, full-wave numerical solvers, matrix-free Green's-function implementations, and learning-based reconstructions.
The first three differ in how they discretize the continuous wave-equation forward map while the fourth replaces part of the inversion pipeline with a neural network but, as discussed below, still relies on an efficient $\bA$ at training and inference time.

\begin{description}
    \item[\normalfont \textit{Closed-form analytical formulas}] such as universal back-projection \cite{xu2006photoacoustic, xu2005universal, burgholzer2007exact} exploit the analytical link between the photoacoustic measurement and the spherical-mean Radon transform of $p_0$ \cite{kuchment2008mathematics, finch2004determining, kunyansky2007explicit, haltmeier2014universal}. They are extremely fast but rest on idealized assumptions: homogeneous medium, point-like detectors that neglect the SIR and are arranged on a canonical full-view geometry (sphere, cylinder, plane), and infinite acquisition bandwidth.
    Beyond these regimes, the reconstructions degrade rapidly and exhibit characteristic limited-view artifacts \cite{paltauf2007experimental, xu2004reconstructions}.
Nevertheless, their computational efficiency makes them particularly attractive in practice for large-scale problems, where the computational and memory costs of model-based methods can become prohibitive.

    \item[\normalfont \textit{Full-wave numerical solvers}] numerically solve the acoustic wave equation by discretizing the spatial domain and propagating the pressure field over discrete time steps.
Finite-difference time-domain (FDTD) schemes approximate both the spatial and temporal derivatives by finite differences \cite{sheu2008simulations, huang2013full, mitsuhashi2017forward}.
Finite-element variants \cite{Yuan2007threeD, Yao2009finite} accommodate complex detector geometries at the price of mesh generation.
Both methods require a large number of points per wavelength for accurate discretization.
Moreover, the time step is constrained by a Courant--Friedrichs--Lewy (CFL) stability condition and must decrease as the spatial resolution is refined.
This leads to a larger number of time steps and prohibitive computational costs.

The dominant variant in PAT is the $k$-space pseudo-spectral scheme \cite{mast2001k, cox2007k} popularized by the open-source toolbox k-Wave \cite{treeby2010kwave, treeby2012modeling} and adopted in recent open-source pipelines such as PATATO \cite{else2024patato} and j-Wave \cite{stanziola2023j}: spatial derivatives are evaluated in the Fourier domain via the Fast-Fourier Transform (FFT), while the time derivative is approximated by a leapfrog scheme.
This drastically reduces the number of spatial points required. Together with a relaxed CFL condition, it permits larger time steps, allowing both the spatial and temporal discretizations to remain coarse and substantially reducing the computational cost.
The ability to handle arbitrary geometries and heterogeneous media, together with the availability of mature open-source code, has made these solvers a \textit{de facto} standard in many PAT studies \cite{poudel2019survey}. Time-reversal reconstruction \cite{hristova2008reconstruction, burgholzer2007exact} is built on the same machinery: it re-propagates the measured field backward in time with such a solver, and therefore inherits both their flexibility (arbitrary geometries and heterogeneous media) and their computational cost.

However, two features make these solvers ill-suited for the evaluation of the forward operator at a finite set of detector positions in large-scale 3D problems.
First, the simulation grid must encompass the sample, the entire detector array, and an absorbing layer, potentially resulting in one to two orders of magnitude more grid points than the region of interest. This is particularly inefficient since only the values at the detector surfaces are ultimately required.
Second, the treatment of finite-aperture detectors adds some technicalities.
In particular, the detector surface is not aligned with the grid, and representing it accurately requires off-grid quadrature points mapped onto the grid through band-limited interpolants \cite{wise2019representing}. This machinery is inexpensive in itself, but its accuracy is linked to the grid spacing and usually requires finer spatial sampling and correspondingly more time steps.

For these reasons, full-wave solvers are widely reported to be prohibitive for iterative reconstruction of large 3D systems \cite{li2026slingbag,wang2026gpair}, and require distributed memory \cite{jaros2016fullwave}.
    \item[\normalfont \textit{Matrix-free Green's-function implementations}]
Alternative strategies leverage the Green kernel of the wave equation by expressing each transducer signal as an integral of spherical waves emitted from each voxel of the sample over the detector surface \cite{rosenthal2010fast,buehler2011model,caballero2013model,queiros2014modeling,ding2017efficient, Hauptmann2018,ding2020model,li2026slingbag,wang2026gpair}.
These approaches only require the computational domain to encompass the sample, avoiding the need to extend the simulation grid to include the transducers.
Moreover, since the Green's function provides the pressure field at arbitrary observation times, the wave equation does not need to be solved using an explicit time discretization.
This avoids the large number of time steps imposed by the CFL condition and can substantially reduce the computational cost, and has been applied successfully to large-scale 3D reconstruction \cite{ding2020model,li2026slingbag,li2026slingbagpro,wang2026gpair}.

The remaining computational bottleneck is the surface integration itself: each detector surface must be discretized into quadrature points at which the contributions from all voxels are evaluated.
Accurate quadrature over curved, extended detector surfaces requires a sufficiently dense discretization, substantially increasing the computational cost.

A recent line of work chooses to represent the initial pressure field as an expansion in radial basis functions (RBFs).
For example, \cite{diebold1991photoacoustic} considers indicator functions, \cite{Lewitt1990attenuation,wang2014discrete} use Kaiser--Bessel kernels, while \cite{ding2020model} considers hat functions, and \cite{li2026slingbag,li2026slingbagpro,wang2026gpair} consider isotropic Gaussian kernels.
This approach is closely related to Kansa's radial basis function collocation \cite{kansa1990multiquadrics, franke1998solving}.
For a radially symmetric source, the emitted pressure field is a spherical wave whose temporal profile is determined by the radial profile of the source.
This property allows the spatial representation of the initial pressure and its acoustic propagation to be treated separately and in closed form. The resulting operations are well suited to efficient implementations on graphics processing units (GPUs).
However, existing approaches derive closed-form expressions for specific choices of radial basis functions. They work with isotropic point-detectors or approximate the transducer surface using quadrature points, and do not quantify the approximation error with respect to the continuous model.

Symmetries of the acquisition geometry can further compress these operators.
For detectors regularly sampled on a plane, Fourier-based techniques yield very fast
operators \cite{kostli2001temporal,cox2005fast,hauptmann2018approximate,kucukkomurcu2026depth}.
They rely, however, on the translation invariance of this specific geometry and do not
extend to general acquisition schemes.

\item[\normalfont \textit{Learning-based reconstruction}]
A complementary family of methods replaces part or all of the inversion pipeline with neural networks. Early works approached PAT reconstruction as an image-to-image task, training convolutional networks to map an analytical back-projection or the raw sinogram to the target image \cite{antholzer2019deep, davoudi2019deep, schwab2019learned, allman2018photoacoustic, lan2020net}. More recent unrolled and model-based architectures \cite{Hauptmann2018, hauptmann2020deep} embed the physical forward operator inside learned iterations, achieving better generalization across acquisition geometries but inheriting the same computational requirement: each unrolled step still requires one application of $\bA$ and $\bA^{*}$ at training and inference time. The same observation applies to the score- and diffusion-based posteriors, both generic \cite{song2024solving} and PAT-specific \cite{dey2024score}. Comprehensive surveys of the field are given in \cite{yang2021review, grohl2021deep, deng2021deep}. The efficient evaluation of $\bA$ is therefore not made obsolete by deep learning. On the contrary, it is a prerequisite for training and deploying efficient physics-aware reconstruction networks at 3D resolution.

A comparison of these four families through the lens of an iterative reconstruction pipeline reveals complementary limitations. Analytical inversion is fast but rigid and accuracy-limited outside canonical geometries. Full-wave solvers are flexible but oversized for repeated forward evaluations at a fixed sparse set of detectors. Matrix-free Green's-function schemes are efficient on point detectors but lose accuracy (or runtime) as soon as the detector surface is sampled too coarsely, with no \emph{a priori} bound on the resulting error. Learning-based reconstruction inherits the cost of $\bA$ at training time and, for unrolled architectures, at inference time as well. To the best of our knowledge, no existing method combines a closed-form treatment of extended, curved detectors, a quantified approximation error with respect to the continuous wave-equation model, and GPU-friendly arithmetic at the volumes encountered in modern 3D photoacoustic systems. This is the gap this paper addresses.

\end{description}

\paragraph{Contributions}

This paper provides fast matrix-free implementations of the forward operator $\bA$ and its adjoint $\bA^*$, which account for the SIR of the transducers, with low memory requirements that scale well to large 3D systems.
Based on a radial basis expansion of the initial pressure fields, these methods are accompanied by a rigorous mathematical framework in which their accuracy and their complexity are analyzed, and by a numerical validation at the scale of a real system.
In particular, our contributions are the following:
\begin{itemize}
        \item \emph{A rigorous modeling framework.}
        The discrete forward operators are derived from the wave equation to the final sampled digital signal, with the surface of the transducer explicitly incorporated through an integration of the pressure field over the surface to account for its SIR.
        The model also incorporates the electrical impulse response (EIR) of the transducers.
        Finally, the proposed formulation leverages the Green's function of the wave equation together with a Kansa-type expansion of the initial pressure field for arbitrary compactly supported radial bases.
        The most common approach to modeling finite-aperture transducers in the photoacoustic literature, that is, the point discretization of the detector surface, is recovered within this framework and forms the basis of the first of our two implementations.

    \item \emph{A new treatment of the transducer surface.}
    For our second implementation, we show that the measured signal can be expressed as a temporal convolution between a kernel and a sequence of coefficients indexed by the time steps. Each coefficient is the area of the detector surface enclosed between its intersections with two spheres centered at a given voxel, whose radii correspond to the distances traveled by the acoustic wave between consecutive time steps. We derive this area in closed form for planar and cylindrically focused transducers and present a general methodology that extends to arbitrary transducer geometries.

    \item \emph{Quantified approximation errors.}
Error bounds between the computed measurements and the continuous wave-equation model are established for both implementations, together with their computational and memory complexities. The two bounds differ by a quadrature term, present for the point method and removed by the analytical treatment of the surface, making the accuracy--cost trade-off of each implementation explicit.

    \item \emph{A GPU implementation validated at the scale of a real system.}
    Both implementations reduce to per-voxel computations followed by batched convolutions, making them well suited to GPU architectures.
    The forward and adjoint operators provided by both implementations are first compared on a reduced-size experiment mimicking a realistic acquisition with cylindrical transducers.
    The surface method achieves better reconstruction quality while requiring less computation.
    It is further demonstrated on a realistic large-scale 3D reconstruction scenario, which, to the best of our knowledge, is the first model-based reconstruction demonstrated at this scale that accounts for the SIR of detectors.
    The implementation is released as a Python package at \url{https://patminton.readthedocs.io/en/latest/}.
\end{itemize}

\paragraph{Paper structure}
The paper is organized as follows. \Cref{sec:forward_model} introduces the photoacoustic forward model, the Kansa discretization of $p_{0}$ and the convolutional reformulation of the measurements. \Cref{sec:implementations} develops the matrix-free forward operator in two variants, the point method and the surface method, with their complexities and error bounds. \Cref{sec:usual_surfaces} specializes the surface method to planar and cylindrical detectors. Numerical experiments on synthetic and realistic phantoms are reported in \Cref{sec:experiments}. Finally, \Cref{sec:conclusion} concludes the paper while sketching some future research directions.

\section{Modeling of PAT systems}
\label{sec:forward_model}
This section introduces the mathematical modeling of the PAT systems used in this paper. Additional details on its derivation can be found in \cite[Chapter 3]{wang2017photoacoustic} or \cite{bridal2024innovative}.

\subsection{Notations}
\label{sec:notations}

Vectors are denoted by bold lowercase letters, e.g. $\bv$, and matrices by bold uppercase letters, e.g. $\bm{M}$. For a vector $\bv$, $\bv[i]$ denotes its $i$-th component. For a matrix $\bm{M}$, $\bm{M}[i,j]$ denotes the entry in the $i$-th row and $j$-th column.
The $\ell_p$ norms of vectors will be denoted by $\| \cdot \|_p$.

Given any positive integer $n$, the notation $[n]$ refers to the set of the first $n$ integers, that is, $[n] = \{1, \ldots, n\}$. A family indexed by a multidimensional grid of $n$ points is always enumerated by the single flattened index set $[n]$, in an arbitrary but fixed order.

Let $a,b$ be two non-negative functions of a parameter $\alpha$ living in a set $A$. The notation $a(\alpha) \lesssim b(\alpha)$ means that there exists a constant $C > 0$ such that $a(\alpha) \leq C b(\alpha)$ for all $\alpha \in A$, with $C$ independent of $\alpha$.
Moreover, we shall often use the notation $C(\beta_1)$ or $C(\beta_1, \beta_2)$ to emphasize the dependence of the constants on $\beta_1$ and $(\beta_1, \beta_2)$ respectively, where each $\beta_1, \beta_2$ may stand for a domain, a function, a vector, etc.

For $1 \leq q \leq \infty$, $L^q(\Omega)$ stands for the usual Lebesgue space of real-valued functions $u : \Omega \subset \mathbb{R}^d \to \R$, equipped with the norm $\| \cdot \|_{L^q(\Omega)}$ defined by

\begin{equation*}
\|u\|_{L^q(\Omega)} = \biggl( \int_\Omega |u(\bx)|^q \, \dint \bx \biggr)^{1/q}, \; 1 \leq q < \infty \; \text{ and } \; \|u\|_{L^\infty(\Omega)} = \esssup_{\bx\in\Omega} |u(\bx)|.
\end{equation*}

When the domain $\Omega$ is clear from the context, its dependence will be dropped and we will write $ \|\cdot\|_{L^q}$.
The Lebesgue spaces of vector-valued functions are defined componentwise,

\begin{equation*}
       L^q(\Omega ; \R^m) = \bigl\{ u : \Omega \to \R^m \ \big| \ u_j \in L^q(\Omega) \ \text{ for all } j \in [m] \bigr\},
\end{equation*}
equivalently described by $\| u(\cdot) \|_2 \in L^q(\Omega)$ since all norms on $\R^m$ are equivalent; it is equipped with the norm $\| u \|_{L^q(\Omega;\R^m)} = \bigl\| \, \| u(\cdot) \|_2 \, \bigr\|_{L^q(\Omega)}$.

The Sobolev space $W^{k,q}(\Omega ; \R^m)$, with $1 \leq q \leq \infty$ and $k \in \N$, is defined as the set of functions $u \in L^q(\Omega ; \R^m)$ whose weak partial derivatives up to the order $k$ are in $L^q(\Omega ; \R^m)$. It is a Banach space, equipped with the norm
\begin{equation*}
       \| u \|_{W^{k,q}(\Omega ; \R^m)} = \| u \|_{L^q(\Omega ; \R^m)} + |u|_{W^{k,q}(\Omega ; \R^m)},
       \quad \text{where } |u|_{W^{k,q}(\Omega ; \R^m)} = \sum_{1 \le |\alpha| \le k} \| \partial^\alpha u \|_{L^q(\Omega ; \R^m)}
\end{equation*}
is the associated seminorm, $\alpha = (\alpha_1, \ldots, \alpha_d)$, $|\alpha| = \sum_{i=1}^{d} \alpha_i$ and $\partial^\alpha = \partial^{\alpha_1}_1 \ldots \partial^{\alpha_d}_d$. For a vector-valued function, $\partial^\alpha u = (\partial^\alpha u_1, \ldots, \partial^\alpha u_m)$. When $m=1$, the associated Sobolev space will be written $W^{k,q}(\Omega)$.

The main notations used throughout the paper are collected in \Cref{sec:notations_table}, each symbol is also defined at its first use.
\subsection{Optical generation of ultrasound waves}
\label{subsec:wave_prop}

\paragraph{Geometric setup}
The geometry of the photoacoustic system is depicted in \Cref{fig:geometry}.
We work in a single Cartesian coordinate system on $\R^3$, in which any point of space, whether inside the sample or at a transducer location, is represented by the same vector variable $\bx \in \R^3$. When a source point and a detection point appear in the same expression, the source point is written $\bx'$; the grid points introduced in \Cref{subsec:discretization} are written $\bx_i$.

The optical absorbers are confined to a bounded \emph{sample region} $\Omega \subset \R^3$, taken to be a rectangular box centered at the origin.

The acoustic field is recorded by $K$ ultrasound transducers, modeled as two-dimensional surfaces $\Sigma_1, \ldots, \Sigma_K \subset \R^3$, all disjoint from $\Omega$.

\begin{figure}[!htb]
       \centering
       \begin{tikzpicture}[
       x={(1cm, 0cm)},
       y={(-0.5cm, -0.3cm)},
       z={(0cm, 1cm)},
       scale=1.0,
       every node/.style={font=\small}
       ]
              \foreach \x in {-0.5,0.5,...,5.5}
              \draw[gray!25, very thin] (\x, -0.5, -0.5) -- (\x, 3.5, -0.5);
              \foreach \y in {-0.5,0.5,1.5,2.5,3.5}
              \draw[gray!25, very thin] (-0.5, \y, -0.5) -- (5.5, \y, -0.5);
              \foreach \x in {-0.5,0.5,...,5.5}
              \draw[gray!25, very thin] (\x, -0.5, -0.5) -- (\x, -0.5, 4.5);
              \foreach \z in {-0.5,0.5,...,4.5}
              \draw[gray!25, very thin] (-0.5, -0.5, \z) -- (5.5, -0.5, \z);
              \foreach \y in {-0.5,0.5,1.5,2.5,3.5}
              \draw[gray!25, very thin] (-0.5, \y, -0.5) -- (-0.5, \y, 4.5);
              \foreach \z in {-0.5,0.5,...,4.5}
              \draw[gray!25, very thin] (-0.5, -0.5, \z) -- (-0.5, 3.5, \z);

              \coordinate (P000) at (-1, -2, -1);
              \coordinate (P100) at ( 1, -2, -1);
              \coordinate (P110) at ( 1,  1, -1);
              \coordinate (P010) at (-1,  1, -1);
              \coordinate (P001) at (-1, -2,  1);
              \coordinate (P101) at ( 1, -2,  1);
              \coordinate (P111) at ( 1,  1,  1);
              \coordinate (P011) at (-1,  1,  1);

              \fill[blue!8,  opacity=0.5] (P000) -- (P100) -- (P101) -- (P001) -- cycle;
              \fill[blue!12, opacity=0.4] (P100) -- (P110) -- (P111) -- (P101) -- cycle;
              \fill[blue!6,  opacity=0.4] (P001) -- (P101) -- (P111) -- (P011) -- cycle;

              \fill[blue!8,  opacity=0.5] (P000) -- (P010) -- (P011) -- (P001) -- cycle;
              \fill[blue!12, opacity=0.4] (P111) -- (P101) -- (P001) -- (P011) -- cycle;
              \fill[blue!6,  opacity=0.4] (P000) -- (P100) -- (P110) -- (P010) -- cycle;

              \draw[blue!55!black, thick] (P100) -- (P110);
              \draw[blue!55!black, thick] (P100) -- (P101);
              \draw[blue!55!black, thick] (P001) -- (P101);
              \draw[blue!55!black, thick] (P101) -- (P111);
              \draw[blue!55!black, thick] (P110) -- (P111);
              \draw[blue!55!black, thick] (P111) -- (P011);
              \draw[blue!55!black, thick] (P001) -- (P011);
              \draw[blue!55!black, thick] (P010) -- (P110);
              \draw[blue!55!black, thick] (P010) -- (P011);

              \draw[blue!55!black, thin, dashed] (P000) -- (P010);
              \draw[blue!55!black, thin, dashed] (P000) -- (P001);
              \draw[blue!55!black, thin, dashed] (P000) -- (P100);

              \node[blue!55!black, anchor=north west, inner sep=3pt] at (P011) {$\Omega$};

              \coordinate (Xp) at (-0.3, -0.4, -0.4);
              \fill[red] (Xp) circle (1.6pt);
              \node[red, anchor=south east, inner sep=2pt] at (Xp) {$\bx_i$};

              \draw[red!70!black, thick] plot[smooth, samples=15, domain=-30:30]
                     ({5 + 0.2*sin(\x)}, {0.4 - 0.2*cos(\x*2)}, 3.6);
              \draw[red!70!black, thick] plot[smooth, samples=15, domain=-30:30]
                     ({5 + 0.2*sin(\x)}, {0.4 - 0.2*cos(\x*2)}, 4.4);
              \draw[red!70!black, thick] (4.9, 0.30, 3.6) -- (4.9, 0.30, 4.4);
              \draw[red!70!black, thick] (5.1, 0.30, 3.6) -- (5.1, 0.30, 4.4);
              \fill[red!15, opacity=0.5]
                     (4.9, 0.30, 3.6) -- (5.1, 0.30, 3.6) --
                     (5.1, 0.30, 4.4) -- (4.9, 0.30, 4.4) -- cycle;

              \coordinate (X) at (5, 0.3, 4);
              \fill[red] (X) circle (1.7pt);
              \node[red, anchor=west, inner sep=2pt] at (5.1, 0.3, 4) {$\bx$};
              \node[red!70!black, anchor=west, inner sep=2pt] at (5.2, 0.6, 4.6) {$\Sigma_k$};

              \draw[thick, black!60] (Xp) -- (1,0.21,0.6);
              \draw[->, dashed, thick, black!60] (1,0.21,0.6) -- (X);
              \node[anchor=north west, inner sep=3pt, sloped, red!70!black]
                     at ($(Xp)!0.5!(X)$) {$\|\bx - \bx_i\|$};

              \pgfmathsetmacro{\cxsp}{5}
              \pgfmathsetmacro{\cysp}{0.3}
              \pgfmathsetmacro{\czsp}{4}
              \pgfmathsetmacro{\spxv}{1.5}
              \pgfmathsetmacro{\spyv}{1.5}
              \pgfmathsetmacro{\spzv}{1.5}
              \pgfmathsetmacro{\Rxv}{\spxv-\cxsp}
              \pgfmathsetmacro{\Ryv}{\spyv-\cysp}
              \pgfmathsetmacro{\Rzv}{\spzv-\czsp}
              \pgfmathsetmacro{\Rsph}{sqrt(\Rxv*\Rxv+\Ryv*\Ryv+\Rzv*\Rzv)}
              \pgfmathsetmacro{\thp}{acos(\Rzv/\Rsph)}
              \pgfmathsetmacro{\php}{atan2(\Ryv,\Rxv)}
              \def\dth{20}
              \def\dph{30}

               \draw[green!70!black, thick] plot[smooth, samples=20, domain={\thp-\dth}:{\thp+\dth}]
                  ({\cxsp + \Rsph*sin(\x)*cos(\php-\dph)},
                   {\cysp + \Rsph*sin(\x)*sin(\php-\dph)},
                   {\czsp + \Rsph*cos(\x)});
              \draw[green!70!black, thick] plot[smooth, samples=20, domain={\thp-\dth}:{\thp+\dth}]
                  ({\cxsp + \Rsph*sin(\x)*cos(\php+\dph)},
                   {\cysp + \Rsph*sin(\x)*sin(\php+\dph)},
                   {\czsp + \Rsph*cos(\x)});
              \draw[green!70!black, thick] plot[smooth, samples=20, domain={\php-\dph}:{\php+\dph}]
                  ({\cxsp + \Rsph*sin(\thp-\dth)*cos(\x)},
                   {\cysp + \Rsph*sin(\thp-\dth)*sin(\x)},
                   {\czsp + \Rsph*cos(\thp-\dth)});
              \draw[green!70!black, thick] plot[smooth, samples=20, domain={\php-\dph}:{\php+\dph}]
                  ({\cxsp + \Rsph*sin(\thp+\dth)*cos(\x)},
                   {\cysp + \Rsph*sin(\thp+\dth)*sin(\x)},
                   {\czsp + \Rsph*cos(\thp+\dth)});

              \node[green!70!black, anchor=west, inner sep=2pt]
                  at ({\cxsp + \Rsph*sin(\thp+\dth+3)*cos(\php+\dph+3)},
                      {\cysp + \Rsph*sin(\thp+\dth+3)*sin(\php+\dph+3)},
                      {\czsp + \Rsph*cos(\thp+\dth+3)})
                  {$S(\bx, r)$};

              \foreach \i in {0,1,...,7} {
                \foreach \j in {0,1,...,11} {
                  \pgfmathsetmacro{\ti}{\thp - \dth + 2*\dth * \i / 8}
                  \pgfmathsetmacro{\tip}{\thp - \dth + 2*\dth * (\i+1) / 8}
                  \pgfmathsetmacro{\fj}{\php - \dph + 2*\dph * \j / 12}
                  \pgfmathsetmacro{\fjp}{\php - \dph + 2*\dph * (\j+1) / 12}
                  \fill[green!30!white, opacity=0.6]
                    ({\cxsp + \Rsph*sin(\ti)*cos(\fj)},
                     {\cysp + \Rsph*sin(\ti)*sin(\fj)},
                     {\czsp + \Rsph*cos(\ti)})
                    -- ({\cxsp + \Rsph*sin(\tip)*cos(\fj)},
                        {\cysp + \Rsph*sin(\tip)*sin(\fj)},
                        {\czsp + \Rsph*cos(\tip)})
                    -- ({\cxsp + \Rsph*sin(\tip)*cos(\fjp)},
                        {\cysp + \Rsph*sin(\tip)*sin(\fjp)},
                        {\czsp + \Rsph*cos(\tip)})
                    -- ({\cxsp + \Rsph*sin(\ti)*cos(\fjp)},
                        {\cysp + \Rsph*sin(\ti)*sin(\fjp)},
                        {\czsp + \Rsph*cos(\ti)})
                    -- cycle;
                }
              }

       \end{tikzpicture}
       \caption{Geometric setup of the photoacoustic system. The sample region $\Omega \subset \R^3$ is a rectangular box centered at the origin. All points are expressed in the same coordinate system: here $\bx$ is taken on a transducer surface $\Sigma_k$ (e.g.\ a cylindrically focused element), and $\bx_i$ inside $\Omega$ is a grid point (\Cref{subsec:discretization}). The green doubly-curved patch depicts a portion of the sphere $S(\bx, r)$ centered at $\bx$.}
       \label{fig:geometry}
\end{figure}

\paragraph{Wave propagation}
In this context, the initial pressure $p_0 : \R^3 \to \R$ is proportional to the product of the density of absorbers, their optical absorption cross-section and the local optical fluence; most photoacoustic imaging approaches focus on reconstructing $p_0$, which can be subsequently corrected by the optical fluence to retrieve the absorber density. We assume throughout that $p_0$ is compactly supported in the sample region $\Omega$ introduced above.

The medium between $\Omega$ and the transducer is assumed acoustically homogeneous and lossless. When $\Omega$ is excited by a short laser pulse at time $t = 0$, the pressure wave $p(\bx, t)$ can be modeled by the wave equation
\begin{equation} \label{eq:wave_equation}
       \left\{
              \begin{aligned}
                     &\partial_t^2 p(\bx,t) - c^2 \Delta p(\bx,t) = 0, \quad \bx \in \mathbb{R}^3, t > 0, \\
                     &p(\bx,0) = p_0(\bx), \quad \partial_t p(\bx,0) = 0
              \end{aligned}
       \right.
\end{equation}
where $c > 0$ is the speed of sound in the medium and $\Delta$ is the Laplacian operator.

The solution of the wave equation \eqref{eq:wave_equation} is given by Kirchhoff's and Poisson's formulas \cite[Section 2.4]{evans2022partial}. More precisely, the acoustic pressure field at position $\bx \in \mathbb{R}^3$ at time $t>0$ can be written as:
\begin{equation} \label{eq:forward}
       p(\bx,t) = \frac{1}{4 \pi c} \partial_t a(\bx, ct) \quad \text{with} \quad a(\bx, r) = \frac{1}{r} \int_{S(\bx,r)} p_0(\bx') \dint S_r(\bx'),
\end{equation}
where $S(\bx,r) = \left\{ \bx' \in \R^{3} \,|\, \|\bx' - \bx\| = r \right\}$ denotes the sphere of radius $r$ centered at $\bx$ and $\dint S_r$ is the surface element on this sphere (cf. \Cref{fig:geometry}).

\subsection{Pressure field measurement}
\label{subsec:measurement}

The acoustic pressure field is then measured over time, as electrical signals, by a collection of $K$ ultrasound transducers surrounding $\Omega$.
More precisely \cite{ding2017efficient}, the transducers are typically described by two-dimensional surfaces $\Sigma_k$ with surface element $\dint \sigma_k$.
Each transducer records a signal equal to the pressure field $p$ integrated over its surface.

Moreover, the piezoelectric material of the transducers acts as a band-pass filter, modeled by the \emph{electrical impulse response} (EIR) $e : \R \to \R$, a real-valued function of time depending on the piezoelectric material, the acoustic matching layers, and the associated electronic circuits \cite{schmerr2007ultrasonic,Wang2011}.

Therefore, the signal $s_k$ measured by the $k$-th transducer will be modeled by
\begin{equation} \label{eq:measurement}
       s_k(t) =  (e \star m_k)(t) = \int_{\R} e(t-t') \, m_k(t') \dint t', \quad \text{where } \quad  m_k(t) = \int_{\Sigma_k} p(\bx, t) \dint\sigma_k(\bx), \quad \text{for } t > 0.
\end{equation}

Following the modeling of \cite[Section 4.8.2]{oppenheim2010discrete}, each signal $s_k$ is converted by an analog-to-digital converter of sampling frequency $F_s$. This conversion is described through a sampling operator $\mathcal{S} : L^1(\R) \to \R^L$ defined as
\begin{equation} \label{eq:def_operatorS}
       (\mathcal{S} s)[l] = F_s \int_{t_{l-1}}^{t_l} s(t) \dint t, \quad l \in [L], \quad \text{where } t_l = t_0 + l/F_s,
\end{equation}
for some $t_0 \geq 0$ defining the origin time of recording.

Applying $\mathcal{S}$ to each $s_k$ gives the discrete measurements vector $\bs \in \R^{KL}$ defined as
\begin{equation} \label{eq:measurement_discrete}
       \bs[k,l] = (\mathcal{S}s_k)[l] \quad \text{for } l \in [L],\ k \in [K].
\end{equation}

\subsection{Discretization of $p_0$}
\label{subsec:discretization}

The forward modeling finally requires a spatial discretization of the initial pressure field $p_0$. We adopt a Kansa method \cite{kansa1990multiquadrics,franke1998solving}, that is, a representation of $p_0$ as a finite linear combination of radial functions. This approach, which has already been applied in the PAT context \cite{ding2020model,li2026slingbag}, offers several advantages discussed at the end of this section.

This discretization is performed by first choosing a set of points. We consider a uniform Cartesian grid $\{ \bx_i \}_{i \in [N]}$ of $N$ points covering $\Omega$, obtained by partitioning the box into equal voxels and taking their centers. Following \Cref{sec:notations}, the three grid indices are flattened into the single index $i \in [N]$: the unknown $\bp_0$ is therefore a vector of $\R^N$, not a third-order tensor.

Next, given a radial function $\phi : [0,\infty) \to \R$ and a coefficient vector $\bp_0 \in \R^N$, we represent the initial pressure field as
\begin{equation} \label{eq:continuous_discretized_field}
       p_0(\bx) = \sum_{i \in [N]} \bp_0[i] \phi( \| \bx - \bx_i\| ).
\end{equation}
The relation between the coefficients $\bp_0$ and the nodal values $(p_0(\bx_i))_i$, the implications of this discretization, and guidelines for the choice of $\phi$ and of its support radius $\kappa$, are discussed in \Cref{subsec:radial_function}.

\subsection{Setting}

In this paper, we will consider the setting where the transducers are sufficiently separated from $\Omega$, the support of $p_0$.
This assumption is formalized through the minimal grid-to-transducer distance $\dist_{\min}$ defined by
\begin{equation} \label{eq:min_dist}
       \dist_{\min} = \min_{k \in [K]} \, \min_{i \in [N]} \, \inf_{\bx \in \Sigma_k} \| \bx - \bx_i \|.
\end{equation}
In the remainder of the paper we will consider the setting formalized by the next assumption.
\begin{assumption}[Setting] \label{ass:setting}
       Assume that
       \begin{enumerate}[label=(\roman*)]
              \item (radial function) $\phi: [0,\infty) \to \R$ in \eqref{eq:continuous_discretized_field} is bounded, compactly supported in $[0,\kappa]$ with $\kappa > 0$; \label{ass:setting:phi}
              \item (EIR) $e : \R \to \R$ is compactly supported in $[0,T_e]$ and integrable; \label{ass:setting:e}
              \item (separation) $\dist_{\min} \geq \kappa $. \label{ass:setting:separation}
       \end{enumerate}
\end{assumption}
Common models describe the EIR as a damped oscillator \cite{arnau2004piezoelectric}. Although such models are not strictly compactly supported, their exponential decay allows them to be accurately approximated by functions with truncated support. Throughout this work, we assume that the truncation time $T_e$ is chosen sufficiently large so that the approximation error introduced by truncation can be neglected.
\Cref{ass:setting}\ref{ass:setting:separation} formalizes the separation of the support of $\bp_0$ and the transducers.
Smoothness conditions on $\phi$ and $e$ will be imposed in \Cref{sec:implementations}.

\subsection{Decoupling interpolation and propagation}
\label{subsec:decoupling}

In the case where the initial pressure field is an isolated emitting point $\bx'$, that is $p_0 = \delta_{\bx'}$, the solution formula \eqref{eq:forward} yields
\begin{equation} \label{eq:propagation_point2point}
       p(\bx, t) = \partial_t \left(\frac{1}{4 \pi c^2 t} \delta\left( \| \bx - \bx'\| - ct  \right)\right),
\end{equation}
meaning that $p$ is a spherical wave of center $\bx'$ propagating at speed $c$.

With the discretization \eqref{eq:continuous_discretized_field}, the pressure field $p(\bx,t)$ now depends non-trivially on the interpolation function $\phi$, resulting in a coupling between discretization and wave propagation.
In particular, the contribution of $\bx_i$ via \eqref{eq:forward} involves the integral of $\phi(\| \cdot - \bx_i\|)$ along the sphere $S(\bx, ct)$.
By rotational invariance, this integral does not depend on the direction of $\bx - \bx_i$, as already observed in \cite{ding2017efficient,ding2020model}.
It does not depend on the curvature of the sphere, which is governed by the distance $ct$.
This strategy allows treating each $\bx_i$ as an isolated point source, as in \eqref{eq:propagation_point2point}, while correcting their contribution through a convolution kernel that only depends on $\phi$.
Such a decoupling of wave propagation and interpolation has already appeared in the PAT context \cite{diebold1991photoacoustic,wang2014discrete,ding2020model,li2026slingbag} for specific choices of radial functions $\phi$.

\begin{proposition} \label{prop:point_value_pressure_field}
       Given an initial pressure field of the form \eqref{eq:continuous_discretized_field}, let $p$ be the associated pressure field given by \eqref{eq:forward}. Then, under \Cref{ass:setting}\ref{ass:setting:phi} and \ref{ass:setting:separation}
       \begin{equation*}
              p(\bx, t) = \sum_{i \in [N]} g\left( t - \frac{\|\bx - \bx_i\|}{c} \right) \frac{\bp_0[i]}{\| \bx - \bx_i \|}, \quad \text{for almost every } t > 0, \, \forall \bx \in \bigcup_{k \in [K]} \Sigma_k,
       \end{equation*}
       with $g : \R \to \R$ defined by
       \begin{equation*}
              g(t) = - \frac{1}{2} ct \phi(|ct|).
       \end{equation*}
\end{proposition}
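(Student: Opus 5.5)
The plan is to reduce to a single radial source and then compute the spherical mean in \eqref{eq:forward} explicitly. Both $a(\bx, r)$ and $\partial_t$ in \eqref{eq:forward} are linear in $p_0$, so by \eqref{eq:continuous_discretized_field} it suffices to treat $p_0 = \phi(\|\cdot - \bx_i\|)$ for a fixed $i \in [N]$. Fix $\bx \in \Sigma_k$ and set $d = \|\bx - \bx_i\|$. By \Cref{ass:setting}\ref{ass:setting:separation} and \eqref{eq:min_dist}, $d \geq \dist_{\min} \geq \kappa > 0$.

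\emph{Step 1: the spherical integral.} I will use spherical coordinates centred at $\bx$, with polar axis along $\bx_i - \bx$. A point $\bx' \in S(\bx, r)$ at polar angle $\theta$ is at distance $\rho$ from $\bx_i$, where $\rho^2 = r^2 + d^2 - 2rd\cos\theta$. The surface element is $\dint S_r = r^2 \sin\theta \, \dint\theta \, \dint\varphi$. Differentiating the law of cosines gives $\rho \, \dint\rho = r d \sin\theta \, \dint\theta$. Changing variables $\theta \mapsto \rho$ and integrating out $\varphi$ then yields
\begin{equation*}
a(\bx, r) = \frac{1}{r}\int_{S(\bx,r)} \phi(\|\bx' - \bx_i\|)\,\dint S_r(\bx') = \frac{2\pi}{d} \int_{|d-r|}^{d+r} \phi(\rho)\,\rho \,\dint\rho .
\end{equation*}
This makes explicit the rotational invariance noted before the statement. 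Since $r > 0$ and $d \geq \kappa$, the upper limit satisfies $d + r > \kappa$. Because $\phi$ is supported in $[0,\kappa]$ by \Cref{ass:setting}\ref{ass:setting:phi}, the upper limit can be replaced by $\kappa$. Hence $a(\bx, r) = \frac{2\pi}{d}\Psi(|d-r|)$ with $\Psi(u) = \int_u^{\kappa} \phi(\rho)\rho\,\dint\rho$. This is exactly where the separation assumption is used: without it, a second boundary term coming from $d+r$ would appear.

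\emph{Step 2: differentiation.} Since $\phi$ is bounded, $\Psi$ is Lipschitz. By the Lebesgue differentiation theorem, $\Psi'(u) = -u\,\phi(u)$ for almost every $u$. The map $r \mapsto |d-r|$ is Lipschitz with derivative $\operatorname{sign}(r-d)$ for $r \neq d$. The chain rule for a Lipschitz function composed with a piecewise affine map therefore gives, for almost every $r > 0$,
\begin{equation*}
\partial_r a(\bx, r) = -\frac{2\pi}{d}\,(r-d)\,\phi(|r-d|).
\end{equation*}
Substituting $r = ct$ gives $\partial_t a(\bx, ct) = c\,\partial_r a(\bx, r)|_{r=ct}$. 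Plugging this into \eqref{eq:forward} produces
\begin{equation*}
p(\bx,t) = -\frac{1}{2d}(ct - d)\,\phi(|ct-d|).
\end{equation*}
This equals $g(t - d/c)/d$, since $c(t - d/c) = ct - d$. Summing over $i$ with weights $\bp_0[i]$ gives the claim.

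\emph{Main obstacle.} The only delicate point is regularity: $\phi$ is merely bounded, not continuous, so the identity can only hold almost everywhere in $t$. I will handle this by interpreting $\partial_t$ in \eqref{eq:forward} as the a.e. (equivalently, weak) derivative of the absolutely continuous function $t \mapsto a(\bx, ct)$. Then the exceptional set is a null set for each $i$, and the finite union over $i \in [N]$ is still null. The change of variables in Step 1 is valid because $\rho \mapsto \theta$ is a monotone $C^1$ bijection on $(0,\pi)$.
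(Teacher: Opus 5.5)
Your proof is correct and follows essentially the same route as the paper: spherical coordinates centred at $\bx$ with polar axis towards $\bx_i$, the law-of-cosines substitution giving $a(\bx,r) = \frac{2\pi}{d}\int_{|r-d|}^{r+d} \lambda\phi(\lambda)\,\dint\lambda$, and an almost-everywhere differentiation in which the separation assumption removes the boundary term at $r+d$. The only cosmetic difference is in how the absolute value and that boundary term are handled: the paper uses the oddness of $g$ to rewrite the integral as $-\frac{4\pi c}{d}\int_{(r-d)/c}^{(r+d)/c} g$ and discards $g\bigl(\frac{r+d}{c}\bigr)$ after differentiating, whereas you truncate the upper limit at $\kappa$ beforehand and treat $|d-r|$ with a piecewise-affine chain rule.
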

When $\phi$ is continuous and vanishes at $\kappa$, the identity holds for every $t > 0$.
The proof is postponed to \Cref{sec:prop1}.

It is worth mentioning that the radial symmetry of $\phi$ ensures that $g$ does not depend on the direction $\bx - \bx_i$.
This decoupled structure is particularly suited for GPU implementations. On the one hand, arrival times between each detection point $\bx$ and the voxels $\bx_i$ can be efficiently parallelized. On the other hand, the convolution structure is highly optimized on modern architectures.
However, any angular dependence would require evaluating the function $g$ at the corresponding time of arrival for each $(\bx, \bx_i)$ pair, destroying memory coalescing and significantly degrading computational efficiency. This is the case, for example, where $p_0$ is discretized with tensor products of one dimensional hat functions \cite{Wang_2013, ding2017efficient}.
Note that for $\phi = \indic_{[0,\kappa]}$, the kernel $g$ recovers the N-shape wave of the ball \cite{diebold1991photoacoustic}.

\subsection{Reference experimental setup}
\label{subsec:lib_setup}
In the remainder of the paper, we instantiate our framework on the photoacoustic acquisition system developed at the Laboratoire d'Imagerie Biom\'edicale (LIB) \cite{linger2023volumetric}. This system fits the modeling assumptions introduced above and provides the reference parameters $(N, K, L)$ against which the complexities of \Cref{sec:implementations} will be quoted.

\paragraph{Geometry and acquisition}
The sample is immersed in a water tank in which the speed of sound is assumed constant, $c = 1500\,\mathrm{m\,s^{-1}}$.
The initial pressure field $p_0$ is supported in a cube $\Omega$ of side $10\,\mathrm{mm}$, discretized on a uniform Cartesian grid $(200, 200, 200)$ of $N = 8 \cdot 10^{6}$ points and voxel spacing $w = 50\,\mu\mathrm{m}$.
The acoustic field is recorded by the $64$ central elements of a linear ultrasound array (L7-4, ATL, USA): each element is a transducer $0.25\,\mathrm{mm}$ wide and cylindrically focused at $25\,\mathrm{mm}$ through a $7.5\,\mathrm{mm}$ elevation aperture. The inter-element pitch is $0.298\,\mathrm{mm}$. An assembly of motorized stages sweeps the array in a translate-rotate scanning scheme over $12$ rotation angles and $15$ translation steps at each angle, so each acquisition yields $K = 64 \times 12 \times 15 = 11\,520$ signals.
The focal zone is swept across $\Omega$ for the different angles and the distance between the elements and the center of $\Omega$ is equal to or larger than the focal distance. The transducers have a limited bandwidth (2 to $10\,\mathrm{MHz}$ in reception) and a center frequency of $5\,\mathrm{MHz}$ (acoustic wavelength $\approx 0.30\,\mathrm{mm}$ in water). Each signal is sampled following \eqref{eq:measurement_discrete} at $F_s = 62.5\,\mathrm{MHz}$ and time-gated to the window during which spherical wavefronts emitted from $\Omega$ can intersect the transducer surface, yielding on average $L \approx 1\,000$ samples per element.

\paragraph{Orders of magnitude}
The forward operator $\bA : \R^N \mapsto \R^{KL}$ maps $N = 8 \cdot 10^{6}$ unknowns to $KL \approx 1.2 \cdot 10^{7}$ time samples. Storing $\bA$ as a dense matrix would require $NKL \approx 9 \cdot 10^{13}$ entries ($\approx 740\,\mathrm{TB}$ in double precision), well out of reach. Each implementation of \Cref{sec:implementations} will be analyzed against the reference parameters $(N, K, L)$ listed here.

\subsection{Choice of the radial function} \label{subsec:radial_function}

The expansion \eqref{eq:continuous_discretized_field} is supported by the well-established theory of radial basis functions \cite{wendland2005scattered}, in particular for compactly supported radial basis functions \cite{wendland1995piecewise}.
The values $\bp_0[i]$ coincide with the nodal values $p_0(\bx_i)$ when the function $\phi$ is interpolatory, that is when $\phi(0) = 1$ and $\phi$ vanishes at the neighboring nodes.
Otherwise, the coefficients $\bp_0[i]$ should be obtained by solving the linear system $\bPhi \bp_0 = \balpha$ where $\balpha = (p_0(\bx_i)) \in \R^N$ contains the nodal values and $\bPhi \in \R^{N \times N}$ is the Gram matrix associated to $\phi$ defined as $\bPhi[i,j] = \phi(\|\bx_i - \bx_j\|)$.
Whenever $\phi$ is positive definite on $\R^3$ the system is uniquely solvable \cite{wendland2005scattered}.

Since $\phi$ is compactly supported, and assuming that $\kappa$ is not too large compared to the grid point spacing, the matrix $\bPhi$ is sparse.
The system can therefore be solved efficiently using an iterative sparse linear solver.
Conversely, recovering the nodal values or converting the representation to another discretization, such as voxel representation, amounts to applying another sparse matrix to $\bp_0$.
Both operations will be negligible in cost compared with one application of $\bA$.

Popular choices include the compactly supported Wendland functions \cite{wendland1995piecewise}.
They are the minimal-degree piecewise polynomials that are positive definite in a prescribed spatial dimension, have controlled smoothness, and satisfy the compact-support requirement of \Cref{ass:setting}\ref{ass:setting:phi}.
For example, in $\R^3$, the function $\phi(r) = \max\left(1 - \tfrac{r}{\kappa},0\right)^2$ is positive definite.
Its associated native space is norm-equivalent to the Sobolev space $W^{2,2}(\R^3)$, providing a natural setting for approximation error analysis.

Two approximation regimes are commonly distinguished in the literature. In the non-stationary regime \cite{wendland2005scattered,narcowich2006sobolev}, the support radius $\kappa$ is kept fixed while the spacing between the centers tends to zero. Approximation errors measured in the native space converge at a rate determined by the smoothness of the kernel and the target functions.
In the stationary regime, where $\kappa$ is kept proportional to the spacing between the grid points as the latter tends to zero, such convergence is generally not guaranteed \cite{buhmann2003radial,schaback2006kernel} as the family $\{ \phi(\|\cdot - \bx_i\|) \}_{i \in [N]}$ is not a partition of unity.

In the present setting however, the transducer EIR acts as a band-pass filter, and this provides guidelines to choose $\kappa$.
It must be small enough for the basis functions to represent the frequencies of the transducer bandwidth.
Further decreasing $\kappa$ introduces higher-frequency components that are suppressed by the EIR, while increasing the number of basis functions $N$ and hence the computational cost.

Reducing the complexity of the method also motivates the use of the stationary regime, as it uses significantly fewer grid points than the non-stationary one.
Although it lacks convergence guarantees on $p_0$ itself, the conclusions differ for the other quantity of interest, the measured signals $s$.
The failure of partition of unity is carried in the high-frequencies which the EIR band-limits.
A complete approximation theory for this regime, quantifying the error with the band-limiting action of the EIR and leading to an optimal choice of the triple $\phi, \kappa$ and grid spacing, is beyond the scope of this paper. In what follows, $p_0$ is therefore taken as given in the form \eqref{eq:continuous_discretized_field}, so that the error bounds of \Cref{sec:implementations} quantify the discretization of the forward operator, and not that of the initial pressure itself.

\section{Implementation of the forward operator}
\label{sec:implementations}

This section is dedicated to the derivation of efficient implementations of the forward operator mapping $\bp_0 \in \R^N \mapsto \bs \in \R^{KL}$ and its adjoint.
In this work, the signal recorded by each transducer is computed independently. Consequently, to simplify the notation, the dependence on the transducer index $k$ is dropped throughout this section: $\Sigma$ denotes a generic transducer $\Sigma_k$, $m$ its signal \eqref{eq:measurement} and $\bs \in \R^L$ its samples \eqref{eq:measurement_discrete}.

Two implementations are described: a point detector approximation (\Cref{subsec:point_method}) and a scheme exploiting the transducer surface (\Cref{subsec:surface_method}).
Prior to their descriptions, we provide an overview of the implementations (\Cref{subsec:overview}).
We use the following assumption on the transducer surface $\Sigma$.

\begin{assumption} \label{ass:smoothness_transducer}
       The transducer $\Sigma$ is defined by
       \begin{equation*}
              \Sigma = \left\{ \sigma(u, v) \, | \, (u, v) \in D \right\},
       \end{equation*}
       where $D \subset \R^{2}$ is a connected bounded open set and $\sigma : D \to \R^3$ is an injective immersion that is further assumed to be in $W^{2, \infty}(D ; \R^3)$.
\end{assumption}

Since $\sigma \in W^{2,\infty}(D;\R^3)$, it follows in particular that it is $C^1$. Combined with the immersion condition, this ensures that the surface element $\dint\sigma$ is well-defined, so that surface integrals on $\Sigma$ are properly defined.
Together with injectivity of $\sigma$ and the fact that $D$ is open, these assumptions are standard hypotheses for a parametrized surface.
The additional $W^{2,\infty}$ regularity will be used later to derive approximation bounds.

\subsection{Overview}
\label{subsec:overview}

The expression of the pressure field on $\Sigma$ in \Cref{prop:point_value_pressure_field} can alternatively be written, with distributions, as

\begin{equation*}
p(\bx, t) = (g \star f(\bx, \cdot))(t), \quad \text{with } \quad f(\bx, \tau) = \sum_{i \in [N]} \frac{\bp_0[i]}{\| \bx - \bx_i \|} \delta_{\tau = \frac{\| \bx - \bx_i\|}{c}}
\end{equation*}
leading to a measured signal $m$ written as
\begin{equation} \label{eq:def_f_sigma}
       m(t) = \int_{\Sigma} p(\bx, t) \dint\sigma = g \star f_{\Sigma} (t), \quad \text{with } \quad f_{\Sigma}(\tau) = \int_{\Sigma} f(\bx, \tau) \dint\sigma.
\end{equation}

Since $f(\bx,\cdot)$ is a weighted sum of Dirac masses, $f(\bx,\cdot)$ and $f_{\Sigma}$ are understood as distributions in the time variable.
The two identities above, and the convolution $\star$ of such distributions, are defined and justified in \Cref{sec:conv_form}.

Next, the distribution $f_{\Sigma}$ will be approximated by a piecewise constant \emph{function} defined on a fine time grid.
This serves two purposes: the approximation is described by a finite vector of coefficients, and the convolution with $g$, the convolution with the EIR $e$ \eqref{eq:measurement} and the sampling \eqref{eq:def_operatorS} are then computed \emph{exactly} by a discrete convolution.

\paragraph{Approximation of $f_{\Sigma}$}

Let $U \in \N$ be an upsampling factor and consider the upsampled time grid $(\tau_l)_{l=0}^{UL}$, sampled at frequency $U F_s$, that is $\tau_0 = t_0$ and $\tau_l = t_0 + l/(UF_s)$ for $l \in [UL]$.
Both methods developed in this paper replace $f_{\Sigma}$ by a function that is piecewise constant on this grid, that is of the form
\begin{equation} \label{eq:approx_u_sigma}
       \widehat{f}_{\Sigma} = \sum_{l \in [UL]} \bbf[l] \indic_{[\tau_{l-1},\tau_l)}, \qquad \bbf \in \R^{UL}.
\end{equation}
They differ only in the way the coefficients $\bbf$ are obtained: by a quadrature on $\Sigma$ (\Cref{subsec:point_method}) or by exploiting its surface (\Cref{subsec:surface_method}).
The rest of this overview uses \eqref{eq:approx_u_sigma} alone, so $\widehat{f}_{\Sigma}$ and $\bbf$ denote interchangeably the quantities produced by either method.
The usefulness of this upsampled grid will become clear in the following.

\paragraph{Discrete convolution}

Substituting the approximation $\widehat{m} = g \star \widehat{f}_\Sigma$ for the exact signal $m$ in \eqref{eq:measurement} yields the approximate measurements

\begin{equation} \label{eq:def_h}
       \widehat{\bs} = \mathcal{S}(e \star \widehat{m}) = \mathcal{S}(h \star \widehat{f}_{\Sigma}), \quad \text{with} \quad h = e \star g,
\end{equation}
that is $\widehat{\bs}[l] = F_s \int_{t_{l-1}}^{t_l} (h \star \widehat{f}_{\Sigma}) (t) \dint t$.
The system kernel $h$ gathers the radial function $\phi$, through $g$, and the EIR $e$.

Because $\widehat{f}_{\Sigma}$ is a piecewise constant function on the upsampled grid \eqref{eq:approx_u_sigma}, the coefficients $\widehat{\bs}$ can be exactly computed through

\begin{equation} \label{eq:discrete_conv}
       \widehat{\bs} = \dsop (\bh \star \bbf)
\end{equation}
where $\star$, $\bh$ and $\dsop$ are defined as follows.
First, $\star$ denotes the discrete convolution of sequences indexed by $\Z$, with $\bbf \in \R^{UL}$ being embedded in $\Z$ as a sequence of support $[UL]$, that is
\begin{equation} \label{eq:discrete_conv_2}
       (\bh \star \bbf)[l_2] = \sum_{l_1 \in [UL]} \bh[l_2 - l_1] \bbf[l_1],  \, \forall l_2 \in \Z.
\end{equation}
Second, $\bh : \Z \to \R$ is the discrete kernel defined by
\begin{equation} \label{eq:def_bh}
       \bh[l_2 - l_1] = U F_s \int_{0}^{1/(U F_s)} \int_{0}^{1/(U F_s)} h\left(t-t'+\tfrac{l_2 - l_1}{UF_s}\right) \dint t \, \dint t'.
\end{equation}
Third, $\dsop$ is the downsampling by the factor $U$, which maps a sequence $\bv : \Z \to \R$ to the averages of its consecutive blocks of $U$ samples,
\begin{equation} \label{eq:def_downsampling}
       (\dsop \bv) [l] = \frac{1}{U} \sum_{l' \in [U]} \bv[U(l-1) + l'], \quad \forall l \in \Z.
\end{equation}
The identity \eqref{eq:discrete_conv} is a classical result of signal processing detailed in \Cref{sec:discrete_conv}, see also \cite[Theorem 1]{chacko2013discretization}.

Moreover, since $g$ is supported in $[-\kappa c^{-1}, \kappa c^{-1}]$ and $e$ is supported in $[0,T_e]$, the vector $\bh : \Z \to \R$ is supported in $\Z \cap \left[-\kappa c^{-1} UF_s -1, (\kappa c^{-1} + T_e) UF_s +1 \right]$. As a consequence, the vector $\widehat{\bs}$ is also compactly supported and may have nonzero entries at negative indices.
In the remainder of the paper, the indexing is chosen such that the support of $\widehat{\bs}$ is contained in $[L]$.
This amounts to shifting the indexing of $\bbf$ and performing suitable left and right padding of $\bbf$.
With this convention, both $\bbf$ and $\bh \star \bbf$ can be stored by vectors of length $UL$, with their supports fully contained. Similarly, $\widehat{\bs}$ can be stored by a vector of length $L$ obtained by downsampling $\bh \star \bbf$ following \eqref{eq:def_downsampling} that simplifies to
\begin{equation*} \label{eq:downsampling_conv}
       \dsop(\bh \star \bbf) [l] = \frac{1}{U} \sum_{l' \in [U]} (\bh \star \bbf)[U(l-1) + l'], \quad \forall l \in [L].
\end{equation*}

\paragraph{Summary}

The construction of the approximating $\widehat{\bs}$ can therefore be summarized by the following diagram: \\

\begin{center}
\resizebox{0.96\linewidth}{!}{%
              \begin{tikzpicture}
  \node (M) [noblock, minimum width=0pt] {$m$} ;
  \node (FF)   [block,right=0.5cm of M]                        {\Cref{prop:point_value_pressure_field}};
  \node (WM) [noblock, right=0.5cm of FF] {$\displaystyle m = g \star f_{\Sigma}$ \eqref{eq:def_f_sigma}} ;

 \node (PA)   [block,  right=0.5cm of WM]       {Piecewise Approximation};
 \node (HM)   [noblock,  right=0.5cm of PA]       {$ \displaystyle \widehat{m} = g \star \widehat{f}_{\Sigma}$ \eqref{eq:approx_u_sigma}};

 \node (S) [noblock, minimum width=0pt, below=1.5cm of M, font=\small] {$\bs$} ;
 \node (HS) [noblock, below=1.5cm of HM, font=\small] {$\widehat{\bs} = \dsop(\bh \star \bbf)$ \eqref{eq:discrete_conv}} ;

 \draw[arr] (M) -- (FF) ;

 \draw[arr] (FF) -- (WM) ;
 \draw[arr] (WM) -- (PA) ;
 \draw[arr] (PA) -- (HM) ;

 \draw[arr] (M) -- node[right,font=\footnotesize]{$ \displaystyle \mathcal{S} (e \star  m)$ \eqref{eq:def_operatorS}} (S) ;
 \draw[arr] (HM) -- node[right,font=\footnotesize]{$ \displaystyle \mathcal{S} (e \star  \widehat{m})$ \eqref{eq:def_operatorS}} (HS) ;
\end{tikzpicture}
}
\end{center}
The integrated pressure field $m$ is first written as $m = g \star f_{\Sigma}$ using the result of \Cref{prop:point_value_pressure_field}. The resulting field is then approximated by $\widehat{m} = g \star \widehat{f}_{\Sigma}$, where $\widehat{f}_{\Sigma}$ is piecewise constant on the upsampled grid \eqref{eq:approx_u_sigma} and thus entirely described by the coefficients $\bbf$.
The latter structure is particularly convenient, since it allows one to exactly compute the coefficients $\widehat{\bs} = \mathcal{S} (e \star \widehat{m})$ by a discrete convolution between $\bbf$ and a discrete filter $\bh$ defined in \eqref{eq:def_bh}.

The proposed approximations build a forward operator $\bA : \bp_0 \to \widehat{\bs}$ by decoupling the wave propagation from the radial discretization and the EIR of the transducer.
Specifically, the propagation from the grid points $(\bx_i)$ to the transducer surface is encoded in $\bbf$ while the effects of the radial discretization $\phi$ and the EIR are incorporated through the convolution with the filter $\bh$.
The sampling operator $\mathcal{S}$ is naturally accounted for by the convolution and decimation operations.

From a numerical perspective, the application of $\bA$ can be formalized in \Cref{alg:forward_global}, which assembles the $K$ transducers and where the index $k$ is therefore restored: a subscript $k$, as in $\bbf_k$, refers to the quantity computed for the transducer $\Sigma_k$.
The kernel $\bh$ is computed once and reused across all transducers and for each evaluation of $\bA$.
The convolution $\bh \star \bbf_k$ can be implemented via FFT, so that its complexity is $O(UL \log (UL))$ operations.
Moreover, the $K$ convolutions can be computed simultaneously by exploiting batched FFT implementations, thereby taking advantage of parallelization.

The overall complexity of the forward operator is also driven by the cost of evaluating the coefficients $\bbf_k$.
Numerical experiments, including those presented in \Cref{sec:experiments}, suggest that the computation of the $\bbf_k$ is the dominant computational cost.

Iterative reconstruction applies the adjoint $\bA^{*} : \R^{KL} \to \R^N$ as often as $\bA$. It transposes the pipeline above and reuses the same geometric routines, so that every method below provides $\bA^{*}$ at the cost of $\bA$; the corresponding algorithms are given in \Cref{app:adjoint}.

\begin{algorithm}[ht]
\caption{Forward operator $\bA : \bp_0 \mapsto \widehat{\bs}$.}
\label{alg:forward_global}
\begin{algorithmic}[1]
\Require $\bp_0 \in \R^N$ ; $(\bx_i)_{i \in [N]}$ ; $(\Sigma_k)_{k\in[K]}$ ; a precomputed $\bh$ following \eqref{eq:def_bh} ;
       \Ensure Discrete measurements $\widehat{\bs} \in \R^{KL}$.
       \Statex
       \For{$k = 1, \ldots, K$} \Comment{Loop over transducers}
              \State Compute $\bbf_k \in \R^{UL}$ following \eqref{eq:approx_u_sigma}. \Comment{Method-specific: \Cref{subsec:point_method,subsec:surface_method}}
              \State $\widehat{\bs}_k \gets \dsop\!\left(\bh \star \bbf_k\right)$. \Comment{FFT convolution, then downsampling \eqref{eq:discrete_conv}}
       \EndFor
       \State \Return $\widehat{\bs} = \big(\widehat{\bs}_k\big)_{k \in [K]} \in \R^{KL}$.
\end{algorithmic}
\end{algorithm}

The two methods for computing the coefficients $\bbf_k$ are now described. For each of them, the computational complexity is analyzed, and an error bound between the measurements $\bs$ and the approximations $\widehat{\bs}$ is provided.
It will be shown that the upsampling factor $U$ is one of the parameters influencing the accuracy--complexity trade-off: larger values of $U$ can improve accuracy at the expense of increased computational cost.

\paragraph{Error bounds}
The error bounds will be obtained by applying the convolution with the system kernel $h = e \star g$ \eqref{eq:def_h} to the approximations of $f_\Sigma$.
The regularity of $g$ is controlled by the choice of $\phi$, and that of $e$ by the detector model: a damped oscillator gives $e \in W^{1,\infty}(\R)$, and the third-order Butterworth band-pass filter used in \Cref{sec:experiments} gives $e \in W^{2,\infty}(\R)$.

\subsection{Point detector approximation}
\label{subsec:point_method}

The first implementation relies on a discretization of the transducer surface $\Sigma$ by a family of points, which is the most common approach in the photoacoustic literature \cite{ding2017efficient,ding2020model,li2026slingbag} as it directly leverages the results of \Cref{prop:point_value_pressure_field}.
It consists in approximating
\begin{equation*}
       m(t) = \int_{\Sigma} p(\bx, t) \dint\sigma
\end{equation*}
by $\widetilde{m}$ through a quadrature rule on $Q$ sample points $(\bq_j)_{j \in [Q]}$ of $\Sigma$ and areas $(\Delta \bq_j)_{j \in [Q]}$, that is
\begin{equation*}
       \begin{aligned}
              \widetilde{m}(t) &= \sum_{j \in [Q]} p(\bq_j, t) \Delta \bq_j \\
              & = \sum_{j \in [Q], i \in [N]} g\left( t - \frac{\| \bq_j - \bx_i\|}{c} \right) \frac{\bp_0[i] \Delta \bq_j}{\| \bq_j - \bx_i\|} \\
              & = (g \star \widetilde{f}_\Sigma)(t),
       \end{aligned}
\end{equation*}
with $\widetilde{f}_{\Sigma}$ being defined by
\begin{equation*}
       \widetilde{f}_\Sigma =  \sum_{j \in [Q], i \in [N]} \frac{\bp_0[i] \Delta \bq_j}{\| \bq_j - \bx_i\|} \delta_{\frac{\| \bq_j - \bx_i\|}{c}}.
\end{equation*}
This equality is justified in \Cref{lem:conv_measure} of \Cref{sec:conv_form}.
The coefficients $\bbf$ are then set to
\begin{align}
              \bbf[l] &= U F_s \int_{\tau_{l-1}}^{\tau_l} \widetilde{f}_\Sigma (t) \dint t  \notag\\
              &=  U F_s \sum_{j \in [Q], i \in [N]} \frac{\bp_0[i] \Delta \bq_j}{\| \bq_j - \bx_i\|} \indic_{[\tau_{l-1}, \tau_l)}\left(\frac{\| \bq_j - \bx_i\|}{c}\right). \label{eq:def_f_point}
\end{align}
The last equality is justified in \Cref{lem:dirac_indicator} of \Cref{sec:conv_form}.

As can be seen from \eqref{eq:def_f_point}, the proposed method can be interpreted as summing the contributions of a collection of point detectors distributed across the transducer surface $\Sigma$.
This interpretation forms the basis of the implementation described in \Cref{alg:point_based}.
To compute $\bbf$, the distance $\| \bq_j - \bx_i\|$ must be evaluated between every point detector and every voxel.
This distance determines the time interval $[\tau_{l-1}, \tau_l)$, with $l = \lfloor (c^{-1}\| \bq_j - \bx_i\| - t_0) U F_s \rfloor + 1$, to which the contribution $\bp_0[i] \Delta \bq_j / \| \bq_j - \bx_i\|$ is accumulated.
The complexity of the algorithm is therefore $O(NQ)$.
Its adjoint, obtained by replacing the scatter into $\bbf$ by a gather over the voxels, is given in \Cref{alg:adjoint_point} of \Cref{app:adjoint} and has the same $O(NQ)$ complexity.

\begin{algorithm}[ht]
\caption{Point detector approximation of $\bbf$ for one transducer $\Sigma$.}
\label{alg:point_based}
\begin{algorithmic}[1]
       \Require $\bp_0 \in \R^N$ ; $(\bx_i)_{i \in [N]}$; quadrature $(\bq_j, \Delta \bq_j)_{j \in [Q]}$ of $\Sigma$;
       \Ensure Geometric coefficients $\bbf \in \R^{UL}$.
       \Statex
       \State \textbf{Initialization}: $\bbf \gets 0 \in \R^{UL}$
       \For{$i \in [N]$} \Comment{Loop over voxels}
              \For{$j \in [Q]$} \Comment{Loop over point detectors of $\Sigma$}
                     \State $r \gets \|\bq_j - \bx_i\|$.
                     \State $l \gets \lfloor (r/c - t_0)\, U F_s \rfloor + 1$.
                     \State $\bbf[l] \gets \bbf[l] + \bp_0[i] \Delta \bq_j / r$. \Comment{\eqref{eq:def_f_point}}
              \EndFor
       \EndFor
       \State $\bbf \gets U F_s \cdot \bbf$.
       \State \Return $\bbf$.
\end{algorithmic}
\end{algorithm}

The following proposition quantifies the error introduced by the point detector approximation.
\begin{proposition}
\label{prop:approximation_points}
Suppose that $D$ is partitioned into $Q$ convex cells $(D_j)_{j \in [Q]}$ such that $\forall j \in [Q]$, $\diam(D_j) \lesssim Q^{-\frac{1}{2}}$.
Suppose furthermore that $\bq_j = \sigma(u_j,v_j)$, with $(u_j,v_j)$ the barycenter of $D_j$ with respect to $\dint \sigma$ and define the area $\Delta \bq_j = | \sigma(D_j) |$.

Assuming that $h \in W^{2, \infty}(\R)$ together with \Cref{ass:setting,ass:smoothness_transducer}, the output $\widehat{\bs}$ of \Cref{alg:forward_global} restricted to one transducer with subroutine \Cref{alg:point_based} satisfies
       \begin{equation*}
              \| \bs - \widehat{\bs} \|_{\infty} \lesssim  |\Sigma| \, \| \bp_0 \|_1 \left( C(h, \dist_{\min}, \sigma) \frac{1}{Q} + \|h'\|_{L^\infty} \frac{1}{\dist_{\min} U F_s} \right),
       \end{equation*}
where $|\Sigma|$ denotes the area of $\Sigma$ and the constant $C(h, \dist_{\min}, \sigma)$ is made explicit in the proof.
\end{proposition}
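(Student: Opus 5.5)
The plan is to split the error with the triangle inequality through the intermediate signal $\widetilde{\bs} = \mathcal{S}(e \star \widetilde{m}) = \mathcal{S}(h \star \widetilde{f}_\Sigma)$. Here $\widetilde{m} = g \star \widetilde{f}_\Sigma$ is the quadrature signal, before any time binning. We then write
\begin{equation*}
\| \bs - \widehat{\bs} \|_\infty \le \| \mathcal{S}(h \star (f_\Sigma - \widetilde{f}_\Sigma)) \|_\infty + \| \mathcal{S}(h \star (\widetilde{f}_\Sigma - \widehat{f}_\Sigma)) \|_\infty .
\end{equation*}
The first term is the quadrature error and should give the $1/Q$ contribution. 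The second is the time-binning error and should give the $1/(UF_s)$ contribution. Both use that $\mathcal{S}$ is an average, so $\|\mathcal{S}v\|_\infty \le \|v\|_{L^\infty}$. It therefore suffices to bound $h \star (\cdot)$ pointwise in $t$.

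For the binning term, \Cref{lem:dirac_indicator} gives $\widehat{f}_\Sigma$ by replacing each Dirac $\delta_{\tau_{ij}}$, with $\tau_{ij} = \|\bq_j - \bx_i\|/c$, by $UF_s \indic_{[\tau_{l-1},\tau_l)}$ on the bin containing $\tau_{ij}$. The weights $\bp_0[i]\Delta\bq_j/\|\bq_j-\bx_i\|$ are unchanged. For each pair $(i,j)$, the difference of $h$ applied to the two objects is
\begin{equation*}
h(t-\tau_{ij}) - UF_s\int_{\tau_{l-1}}^{\tau_l} h(t-\tau)\,\dint\tau .
\end{equation*}
By the mean value theorem this is at most $\|h'\|_{L^\infty}/(UF_s)$. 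Summing, and using $\|\bq_j - \bx_i\| \ge \dist_{\min}$ and $\sum_j \Delta\bq_j = |\Sigma|$, gives $|\Sigma|\,\|\bp_0\|_1 \|h'\|_{L^\infty}/(\dist_{\min} UF_s)$, which is the second term.

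For the quadrature term, use \Cref{lem:conv_measure} and \eqref{eq:def_f_sigma} to write
\begin{equation*}
(h\star f_\Sigma)(t) = \sum_i \bp_0[i] \int_\Sigma F_{i,t}(\bx)\,\dint\sigma, \qquad F_{i,t}(\bx) = \frac{h(t-\|\bx-\bx_i\|/c)}{\|\bx-\bx_i\|}.
\end{equation*}
The quadrature version $h\star\widetilde{f}_\Sigma$ replaces each integral by $\sum_j F_{i,t}(\bq_j)\Delta\bq_j$. The error then splits cell by cell, using $\Delta\bq_j = |\sigma(D_j)|$:
\begin{equation*}
\int_{D_j} \bigl(G(u,v) - G(u_j,v_j)\bigr) J(u,v)\,\dint u\,\dint v ,
\end{equation*}
where $G = F_{i,t}\circ\sigma$ and $J = \|\partial_u\sigma\times\partial_v\sigma\|$ is the area density.

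Expand $G$ to first order around $(u_j,v_j)$. The linear term vanishes because $(u_j,v_j)$ is the barycenter of $D_j$ with respect to $\dint\sigma$, i.e.\ with respect to $J\,\dint u\,\dint v$; this is exactly why the barycenter is chosen. The remainder is bounded by $|G|_{W^{2,\infty}}\diam(D_j)^2$, and convexity of $D_j$ keeps the segment for Taylor's formula inside the cell. This gives a cell error of at most $|G|_{W^{2,\infty}} |\sigma(D_j)|\, Q^{-1}$. Summing over $j$ and $i$ yields $|\Sigma|\,\|\bp_0\|_1\, C\, Q^{-1}$, with $C = \sup_{i,t} |F_{i,t}\circ\sigma|_{W^{2,\infty}(D)}$.

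The main obstacle is making $C$ explicit and uniform in $i$ and $t$. By the chain rule, the second derivatives of $G$ involve $h''/c^2$, $h'/c$ and $h$. They also involve derivatives of $\bx\mapsto\|\bx-\bx_i\|$ and of its inverse; these are bounded by powers of $1/\dist_{\min}$ thanks to \Cref{ass:setting}\ref{ass:setting:separation}. Finally they involve $\|\sigma\|_{W^{2,\infty}}$ and its square, from the second-order chain rule. The result is
\begin{equation*}
C(h,\dist_{\min},\sigma) \lesssim \|h\|_{W^{2,\infty}}\, P(1/\dist_{\min}, 1/c)\, \bigl(\|\sigma\|_{W^{1,\infty}}^2 + \|\sigma\|_{W^{2,\infty}}\bigr)
\end{equation*}
for some polynomial $P$. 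One must check that $G \in W^{2,\infty}(D)$ and that Taylor's formula with integral remainder applies for $W^{2,\infty}$ maps on convex sets. Both hold since $h \in W^{2,\infty}$ and $\sigma \in W^{2,\infty}$, so $G$ has Lipschitz gradient.
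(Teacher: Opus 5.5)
Your proposal is correct and follows essentially the same route as the paper. The paper also bounds $\|\bs-\widehat{\bs}\|_\infty$ by $\|s-\widehat{s}\|_{L^\infty}$ and splits the error into a quadrature term, handled by a second-order Taylor expansion of $q_t$ around the $\dint\sigma$-barycenter of each cell with the Hessian estimate of \Cref{lem:smoothness_f}, and an arrival-time binning term, handled by the mean value bound of \Cref{lem:quantization}; your observation that only a Lipschitz gradient is needed for the Taylor remainder is, if anything, slightly more careful than the paper's pointwise use of the Hessian.
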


\begin{proof}
       See \Cref{app:proof_prop_23}.
\end{proof}

The assumptions on the partition of $D$ are quite general and are satisfied, in particular, when $D$ is a rectangular domain and $(D_j)_{j\in[Q]}$ is the partition obtained as the tensor product of partitions of the two directions into $Q_1$ and $Q_2$ intervals, respectively, giving $Q = Q_1 Q_2$ cells. In this case, the barycenters $(u_j, v_j)$ are the centers of the subrectangles $D_j$. Moreover, if $Q_1$ and $Q_2$ are proportional, the diameters of $D_j$ are also proportional to $Q^{-\frac{1}{2}}$.

The error bound in \Cref{prop:approximation_points} consists of two terms.
The first one, of order $Q^{-1}$, is due to the quadrature approximation of the integral, over $\Sigma$, of the filtered pressure field $e \star p$.
Through the constant $C(h, \dist_{\min}, \sigma)$, this error depends on the curvature of the transducer surface and the curvature of the spherical waves centered at $\bx_i$, as well as on the second order smoothness of $h$.
The second term in $(U F_s)^{-1}$ arises from the quantization of the arrival times $ c^{-1} \| \bq_j - \bx_i \| $ into the time intervals $[\tau_{l-1}, \tau_l)$, and depends on the Lipschitz constant of $h$.
While $F_s$ is fixed by the acquisition system, the upsampling factor $U$ is a user-defined parameter that can be increased to improve the accuracy of the implementation.

The assumption $h \in W^{2, \infty}(\R)$ describes a shared regularity between $e$ and $g$. Since the derivatives of $h$ can be transferred onto $e$ or $g$, this can be achieved when
\begin{enumerate}
    \item $\phi \in W^{2, \infty}(\R_+)$ and $e \in L^1(\R)$ for example with Wendland functions with sufficient smoothness, see \Cref{subsec:radial_function};
    \item $\phi \in W^{1, \infty}(\R_+)$ and $e \in W^{1, 1}(\R)$ for example with $\phi$ the cone function;
    \item $\phi \in L^{\infty}(\R_+)$ and $e \in W^{2, 1}(\R)$ for $\phi$ the indicator of $[0,\kappa]$.
\end{enumerate}
Moreover, when $h$ fails to be in $W^{2, \infty}(\R)$ but still in $W^{1, \infty}(\R)$, a similar result holds with $Q^{-1}$ replaced by $Q^{-\frac{1}{2}}$, using only the Lipschitz continuity of $h$ (see the proof in \Cref{app:proof_prop_23}).

\subsection{Exploiting the surface of the transducer} \label{subsec:surface_method}

The scheme presented in \Cref{subsec:point_method} computes $\bbf$ by approximating the surface integral defining $f_{\Sigma}$ with a quadrature on $\Sigma$.
In this section, we introduce an alternative scheme that treats the surface integral analytically, therefore avoiding any discretization of $\Sigma$.
The analytical treatment depends on the specific transducer geometry and will therefore be instantiated for planar and cylindrical transducers in \Cref{sec:usual_surfaces}.
Nevertheless, the present section establishes a systematic procedure to derive these analytical formulations.

Recall that $m = g \star f_\Sigma$, as defined in \eqref{eq:def_f_sigma}.
The piecewise constant approximation of $f_\Sigma$ is obtained by integration on $[\tau_{l-1}, \tau_l)$ giving
\begin{equation*}
       \begin{aligned}
              \overline{\bbf}[l]  &= U F_s \int_{\tau_{l-1}}^{\tau_l} f_\Sigma (t) \dint t = \sum_{i \in [N]} \bp_{0}[i]\, \overline{\bbf}_{i}[l],
        \end{aligned}
\end{equation*}
with
\begin{equation*}
       \begin{aligned}
              \overline{\bbf}_{i}[l] = U F_{s} \int_{\Sigma}
                       \frac{1}{\|\bx - \bx_i\|}\,
                       \indic_{[\tau_{l-1},\, \tau_l)} \left( \frac{\|\bx - \bx_i\|}{c} \right)\,
                       \dint\sigma(\bx).
        \end{aligned}
\end{equation*}
This is obtained by interchanging again the order of integration and using the fact that
\begin{equation*}
       \int_{\tau_{l-1}}^{\tau_l} \delta_{\tau = \frac{\|\bx - \bx_i\|}{c}} \dint \tau = \indic_{[\tau_{l-1},\, \tau_l)} \left( \frac{\|\bx - \bx_i\|}{c} \right),
\end{equation*}
see again \Cref{lem:dirac_indicator} in \Cref{sec:conv_form} for more details. The coefficient $\overline{\bbf}_{i}$ can alternatively be written as
\begin{equation} \label{eq:def:bffi}
       \begin{aligned}
              \overline{\bbf}_{i}[l] = U F_{s} \int_{\mathcal{M}_i^l} \frac{1}{\|\bx - \bx_i\|} \dint\sigma(\bx), \quad \text{with } \, \mathcal{M}_i^l = \left\{ \bx \in \Sigma \,| \, c\tau_{l-1} \leq \|\bx - \bx_i\| < c\tau_l \right\}.
        \end{aligned}
\end{equation}
Moreover, on this set, $\|\bx - \bx_i\|^{-1}$ can be approximated by $\left(c \tau_{l-\frac{1}{2}}\right)^{-1}$ with a small error, where $\tau_{l-\frac{1}{2}} = \frac{\tau_{l-1} + \tau_l}{2}$ is the midpoint of $[\tau_{l-1}, \tau_l)$.
Overall, the proposed method sets
\begin{equation} \label{eq:ui_area}
       \bbf_{i}[l] = \frac{U F_{s}}{c\,\tau_{l-\frac{1}{2}}} \left| \mathcal{M}_i^l \right|,
       \qquad
       \bbf[l] = \sum_{i \in [N]} \bp_0[i]\, \bbf_{i}[l],
\end{equation}
where $\left| \mathcal{M}_i^l \right|$ denotes the area of $\mathcal{M}_i^l$.
Geometrically, $\mathcal{M}_i^l$ describes the domain of $\Sigma$ that is swept by the spherical wave emitted by $\bx_i$ with radii in $[c \tau_{l-1}, c \tau_l)$, as depicted in \Cref{fig:M_set}.
\begin{figure}[!htb] \centering
       \begin{tikzpicture}[>=latex, line join=round,
    declare function={
      yfront(\x) = \apex - \RR + sqrt(\RR*\RR - \x*\x);
      afront(\x) = acos(\x/\RR);
    }]
  \def\RR{14}
  \def\apex{1.0}
  \def\dx{0.25}\def\dy{0.85}
  \def\xL{-2.5}\def\xR{3.0}
  \def\xa{0.9}\def\xb{1.95}
  \def\bskew{0.35}
  \def\vx{-1.4}\def\vy{-1.15}

  \fill[blue!12, draw=black, thin]
    ({\xL},{yfront(\xL)}) arc[start angle={afront(\xL)}, end angle={afront(\xR)}, radius=\RR]
    to[bend right=8] ({\xR+\dx},{yfront(\xR)+\dy})
    arc[start angle={afront(\xR)}, end angle={afront(\xL)}, radius=\RR]
    to[bend right=8] cycle;
  \node[above] at ({\xL+0.9},{yfront(\xL-2)+\dy}) {$\Sigma$};

  \fill[red!30]
    ({\xa},{yfront(\xa)}) to[bend right=8] ({\xa+\bskew+\dx},{yfront(\xa+\bskew)+\dy})
    arc[start angle={afront(\xa+\bskew)}, end angle={afront(\xb+\bskew)}, radius=\RR]
    to[bend left=8] ({\xb},{yfront(\xb)})
    arc[start angle={afront(\xb)}, end angle={afront(\xa)}, radius=\RR] -- cycle;
  \draw[red,thick] ({\xa},{yfront(\xa)}) to[bend right=8] ({\xa+\bskew+\dx},{yfront(\xa+\bskew)+\dy});
  \draw[red,thick] ({\xb},{yfront(\xb)}) to[bend right=8] ({\xb+\bskew+\dx},{yfront(\xb+\bskew)+\dy});
  \node[scale=0.9] at ({(\xa+\xb)/2+(\dx+\bskew)/2},{yfront((\xa+\xb)/2)+\dy/2}) {$\mathcal{M}_i^l$};

  \pgfmathsetmacro\Bax{\xa+\bskew+\dx}\pgfmathsetmacro\Bay{yfront(\xa+\bskew)+\dy}
  \pgfmathsetmacro\Bbx{\xb+\bskew+\dx}\pgfmathsetmacro\Bby{yfront(\xb+\bskew)+\dy}
  \pgfmathsetmacro\ra{veclen(\Bax-\vx,\Bay-\vy)}\pgfmathsetmacro\anga{atan2(\Bay-\vy,\Bax-\vx)}
  \pgfmathsetmacro\rb{veclen(\Bbx-\vx,\Bby-\vy)}\pgfmathsetmacro\angb{atan2(\Bby-\vy,\Bbx-\vx)}
  \draw[red, densely dashed]
    ($(\vx,\vy)+(\anga:\ra)$) arc[start angle=\anga, end angle=\anga+25, radius=\ra]
    node[above left=-3pt] {$S(\bx_i, c\tau_{l-1})$};
  \draw[red, densely dashed]
    ($(\vx,\vy)+(\angb:\rb)$) arc[start angle=\angb, end angle=\angb+20, radius=\rb]
    node[above right=-3pt] {$S(\bx_i, c\tau_{l})$};

  \draw[densely dashed] (\vx,\vy) -- ({(\xa+\xb)/2},{yfront((\xa+\xb)/2)});
  \fill (\vx,\vy) circle (1pt) node[below] {$\bx_i$};
\end{tikzpicture}
       \caption{The set $\mathcal{M}_i^l$ (red) on a cylindrical transducer $\Sigma$, delimited by the intersections of $\Sigma$ with the spheres $S(\bx_i, c\tau_{l-1})$ and $S(\bx_i, c\tau_l)$ centered at the voxel $\bx_i$.}
       \label{fig:M_set}
\end{figure}

\begin{remark}[Connection with the SIR]
The coefficients $\overline{\bbf}_i$ are proportional to the averages of the spatial impulse response (SIR) of the transducer over the time intervals $[\tau_{l-1},\tau_l)$.
    In \cite{stepanishen1971transient}, the SIR of $\Sigma$ at a point $\bx$ is defined, up to a constant factor, as
    \begin{equation*}
        \mathrm{SIR}(\bx,t)= \int_\Sigma \frac{1}{2\pi\,\|\bx'-\bx\|} \delta\left(t-\frac{\|\bx'-\bx\|}{c}\right) \dint \sigma(\bx').
    \end{equation*}
Integrating over the time bin $[\tau_{l-1},\tau_l)$ and using the same
interchange of integrals as in \eqref{eq:def:bffi} yields
\begin{equation*}
    \int_{\tau_{l-1}}^{\tau_l} \mathrm{SIR}(\bx_i,t) \dint t = \frac{1}{2\pi} \int_{\mathcal{M}_i^l}\frac{1}{\|\bx-\bx_i\|} \dint \sigma(\bx)
\end{equation*}
so that
\begin{equation*}
  \overline{\bbf}_i[l]=2\pi UF_s \int_{\tau_{l-1}}^{\tau_l} \mathrm{SIR}(\bx_i,t) \dint t.
\end{equation*}
The coefficients $\bbf_i$ are therefore proportional to an approximation of these averages.
\end{remark}

These derivations lead to the following \Cref{alg:surface_based}, which computes the areas $\left| \mathcal{M}_i^l \right|$ for each time interval and accumulates the corresponding contributions \eqref{eq:ui_area} in $\bbf$.
The range of time intervals for which $\mathcal{M}_i^l$ is nonempty can be determined a priori. Indeed, $\mathcal{M}_i^l \neq \emptyset$ if and only if the sphere $S(\bx_i,ct)$ intersects the transducer surface $\Sigma$ for some $t \in [\tau_{l-1}, \tau_l)$.
More precisely, the quantities
\begin{equation} \label{eq:rmin_rmax}
       \begin{aligned}
       r_{\min}(i) &= \min_{\bx \in \Sigma} \|\bx - \bx_i\|, & l_{\min}(i) &= \lfloor \left( r_{\min}(i) c^{-1} - t_0\right) UF_s \rfloor + 1 \\
       r_{\max}(i) &= \max_{\bx \in \Sigma} \|\bx - \bx_i\|, & l_{\max}(i) &= \lceil \left( r_{\max}(i) c^{-1} - t_0 \right) UF_s \rceil
       \end{aligned}
\end{equation}
can be computed analytically for common surface geometries, see \Cref{sec:usual_surfaces}.
Its adjoint, which reuses the same bounds $l_{\min}(i), l_{\max}(i)$ and the same areas $\left| \mathcal{M}_i^l \right|$, is given in \Cref{alg:adjoint_surface} of \Cref{app:adjoint}.

\begin{algorithm}[ht]
\caption{Surface approximation of $\bbf$ for one transducer $\Sigma$.}
\label{alg:surface_based}
\begin{algorithmic}[1]
       \Require $\bp_0 \in \R^N$ ; $(\bx_i)_{i \in [N]}$ ; transducer surface $\Sigma$
       \Ensure Geometric coefficients $\bbf \in \R^{UL}$.
       \State \textbf{Initialization:} $\bbf \gets 0 \in \R^{UL}$.
       \For{$i \in [N]$} \Comment{Loop over voxels}
       \State Compute $l_{\min}(i)$ and $l_{\max}(i)$ as in \eqref{eq:rmin_rmax}
              \For{$l = l_{\min}(i), \ldots, l_{\max}(i)$} \Comment{Loop over time steps}
                     \State Compute $\left| \mathcal{M}_i^l \right|$ in \eqref{eq:def:bffi} \Comment{Surface dependent: \Cref{sec:usual_surfaces}}
                     \State $\bbf[l] \gets \bbf[l] + \bp_0[i]\, \left| \mathcal{M}_i^l \right| / (c\,\tau_{l-\frac{1}{2}})$ \Comment{\eqref{eq:ui_area}}
              \EndFor
       \EndFor
       \State $\bbf \gets U F_s \cdot \bbf$.
       \State \Return $\bbf$.
\end{algorithmic}
\end{algorithm}

The following proposition quantifies the error introduced by the proposed approximation.
\begin{proposition} \label{prop:approximation_surface}
Assuming that $h \in W^{1, \infty}(\R)$ together with \Cref{ass:setting,ass:smoothness_transducer}, the output $\widehat{\bs}$ of \Cref{alg:forward_global} restricted to one transducer with subroutine \Cref{alg:surface_based} satisfies
       \begin{equation*}
              \| \bs - \widehat{\bs} \|_{\infty} \lesssim C(h, \dist_{\min}) |\Sigma| \, \| \bp_0 \|_1  \frac{1}{\dist_{\min} (U F_s)}
       \end{equation*}
as soon as $c (UF_s)^{-1} \leq \dist_{\min}$. The constant $C(h,\dist_{\min})$ is made explicit in the proof.
\end{proposition}

\begin{proof}
       See \Cref{app:proof_prop_23}.
\end{proof}

The error bound in \Cref{prop:approximation_surface} is of order $\dist_{\min}^{-1} (UF_s)^{-1}$ and comes from two sources.
First, it arises from the quantization of the arrival times $c^{-1}\| \bx - \bx_i\|$, for $\bx \in \Sigma$, into the time intervals $[\tau_{l-1}, \tau_l)$.
Second, it is due to the approximation of $\| \bx - \bx_i\|^{-1}$ by $(c \tau_{l - \frac{1}{2}})^{-1}$ over $[\tau_{l-1}, \tau_l)$.
While $F_s$ is fixed by the acquisition system, the upsampling factor $U$ is a user-defined parameter that can be increased to improve the accuracy of the implementation.

The condition $c (UF_s)^{-1} \leq \dist_{\min}$  is technical and always satisfied in practice. It ensures that the midpoint radius $c\tau_{l-\frac12}$ remains comparable to $\dist_{\min}$.

The smoothness assumption $h \in W^{1, \infty}(\R)$ is weaker than that of \Cref{prop:approximation_points}.
It is achieved when
\begin{enumerate}
    \item $\phi \in W^{1, \infty}(\R_+)$ and $e \in L^1(\R)$, for example for $\phi$ the cone function or any Wendland function, see \Cref{subsec:radial_function};
    \item $\phi \in L^{\infty}(\R_+)$ and $e \in W^{1, 1}(\R)$, for $\phi$ the indicator of $[0,\kappa]$.
\end{enumerate}

The complexity of the algorithm is less straightforward to analyze.
Indeed, for every voxel $i$ and every transducer $k$, the routine computing the areas $\left| \mathcal{M}_i^l \right|$ has to be called $O(l_{\max}(i) - l_{\min}(i) + 1)$ times.
The complexity of computing one $\left| \mathcal{M}_i^l \right|$ primarily depends on the surface $\Sigma$ but also on the relative position of $\bx_i$ with respect to the transducer.
We assume it to be uniformly bounded by $C_{\mathcal{M}}$ for all $i$ and $l$.

Letting $\overline{B}$ be the average over the voxels of the number of time intervals $l_{\max}(i) - l_{\min}(i) + 1$ defined in \eqref{eq:rmin_rmax}, the complexity of \Cref{alg:surface_based} is $O\left(N \overline{B} C_\mathcal{M}\right)$.
Since $\overline{B}$ grows linearly with $U$, this complexity increases with the upsampling factor.
The upsampling factor thus sets the accuracy--complexity trade-off.

\begin{remark} \label{rmk:coarser_area_grid}
    Increasing $U$ can have a dramatic impact on the computational cost of the method.
    The following strategy takes advantage of an upsampled grid while maintaining a competitive complexity.
    One advantage of the upsampled grid is that it provides a finer localization of the wavefront arrival times, thus reducing the quantization error, in particular in the determination of $l_{\min}$ and $l_{\max}$.
    However, it may not be necessary to evaluate $\left| \mathcal{M}_i^l \right|$ for every $l$ to compute the coefficients $\bbf$ on the fine grid.
    A coarser evaluation of these areas may be sufficient to achieve a satisfactory accuracy.

    To this end, it is possible to fix a prescribed step $\Delta l \in [U]$, and evaluate instead the areas
    \begin{equation*}
        \left| \mathcal{M}_i^{l-\Delta l+1 : l} \right| = \left|  \left\{ \bx \in \Sigma \, | \, c \tau_{l-\Delta l} \leq \| \bx - \bx_i\| < c \tau_l \right \} \right| = \sum_{l' = l - \Delta l + 1}^{l} \left| \mathcal{M}_i^{l
        '} \right|.
    \end{equation*}
    This area is then split evenly over the $\Delta l$ time intervals $[\tau_{l'-1}, \tau_{l'})$ for $l - \Delta l + 1 \leq l' \leq l$.

    This reduces the number of area evaluations by a factor $\Delta l$ while preserving the time-of-arrival resolution of the fine grid. It does, however, introduce an approximation error that must be taken into account.
    Setting $\Delta l = 1$ recovers \Cref{alg:surface_based} exactly, whereas $\Delta l = U$ evaluates the areas only at the native sampling frequency $F_s$, resulting in the lowest computational cost.

    In the numerical experiments of this paper, setting $\Delta l = U$ already provided the best trade-off.
    However, this choice may be system dependent.
    In cases where more accuracy is needed, adaptive step sizes could be used to account for the sharp variations in $\left|\mathcal{M}_i^l\right|$ occurring near the first and last time steps at which the spherical waves intersect $\Sigma$.
\end{remark}

\subsection{Comparison of the two implementations}
\label{subsec:comparison}

The two implementations only differ in the computation of the coefficients $\bbf$, the convolution and downsampling steps of \Cref{alg:forward_global} being identical. The comparison therefore focuses on this computation and is summarized in \Cref{tab:methods_comparison}.

\begin{table}[!htb]
	\centering
	\begin{tabular}{l l l}
		\toprule
		 & Point detector (\Cref{subsec:point_method}) & Surface (\Cref{subsec:surface_method}) \\
		\midrule
		Applicability & any quadrature of $\Sigma$ & geometry-specific (\Cref{sec:usual_surfaces}) \\
		Error terms & $Q^{-1} + (U F_s)^{-1}$ & $(U F_s)^{-1}$ \\
		Accuracy parameter & $Q, U$ & $U$ \\
		Computation of $\bbf$ & $O(N Q)$ & $O(N \overline{B} \, C_{\mathcal{M}})$ \\
   		Memory & $O(UL)$ & $O(UL)$ \\
		\bottomrule
	\end{tabular}
	\caption{Comparison of the two implementations of the coefficients $\bbf$ for one transducer.}
	\label{tab:methods_comparison}
\end{table}

The point detector approximation applies to any transducer geometry: it only requires a quadrature $(\bq_j, \Delta \bq_j)_{j \in [Q]}$ of $\Sigma$, and \Cref{alg:point_based} is otherwise identical for all geometries, making it simple to implement. In contrast, the surface method relies on an analytical evaluation of the areas $\left| \mathcal{M}_i^l \right|$, which must be derived for each transducer geometry. Such derivations are provided in \Cref{sec:usual_surfaces} for planar and cylindrical transducers, at the price of a more involved implementation.

In terms of accuracy, the error bound of \Cref{prop:approximation_points} contains a quadrature term of order $Q^{-1}$ which is absent from \Cref{prop:approximation_surface}, the surface integral being treated analytically by the surface method. The accuracy of the point detector approximation is therefore controlled by two parameters, $Q$ and $U$, whereas that of the surface method is only controlled by $U$.

Regarding complexity, computing $\bbf$ with \Cref{alg:point_based} costs $O(NQ)$ operations, independently of $U$: increasing $U$ improves the temporal accuracy without increasing this cost.
The cost of \Cref{alg:surface_based} is $O(N \overline{B} \, C_{\mathcal{M}})$ and thus grows linearly with $U$. This can be mitigated using the strategy developed in \Cref{rmk:coarser_area_grid}.

Regarding memory usage, the two implementations are identical.
In addition to $\bp_0$, which requires $O(N)$ values, the memory footprint is dominated by the $K$ vectors $\bbf_k$, their Fourier transforms and the workspaces of the forward and inverse batched FFT plans, that is about $4KUL$ floating point values.

Regarding the smoothness assumptions on $h$, the $Q^{-1}$ rate in \Cref{prop:approximation_points} requires a stronger smoothness condition than that in \Cref{prop:approximation_surface}. However, when both results are considered under the same smoothness assumption, the convergence rate of the point method becomes significantly slower, dropping to $Q^{-1/2}$.

\section{Implementations for usual transducer geometries}
\label{sec:usual_surfaces}

As shown in \Cref{subsec:surface_method}, computing the coefficients $\bbf$ reduces to evaluating the areas $\left| \mathcal{M}_i^l \right|$ defined in \eqref{eq:def:bffi}.
This section derives closed-form formulas for these areas for rectangular plane (\Cref{subsec:plane_surface}) and cylindrical (\Cref{subsec:cylinder_method}) transducers.

Since the grid point $\bx_i$ only appears through its relative position with respect to $\Sigma$, we absorb its coordinates into the parametrization: given a parametrization $\sigma_\Sigma$ of $\Sigma$ as in \Cref{ass:smoothness_transducer}, we set $\sigma = \sigma_\Sigma - \bx_i$, so that $\| \sigma(u,v) \|$ is the distance from $\bx_i$ to the corresponding point of $\Sigma$. The index $i$ is omitted in the following.
With this parametrization, computing the area $\left| \mathcal{M}_i^l \right|$ requires the determination of the subset
\begin{equation*}
    D^l = \left\{ (u,v) \in D \, | \, \| \sigma(u,v) \| \in [c\tau_{l-1}, c\tau_l) \right\}.
\end{equation*}

This will be carried out by solving the equation $\| \sigma(u,v) \|^2 = r^2$ with respect to $v$, thus obtaining $v$ as a function of $(u,r)$, that is, a description of the level sets of $\| \sigma(u,v) \|$. The boundaries of $D^l$ being the level sets associated with $r = c\tau_{l-1}$ and $r = c\tau_l$, together with the edges of $D$ that clip them, this yields closed-form expressions for its area.

\paragraph{Separable form}

Both geometries admit the separable decomposition:
\begin{equation} \label{eq:separable}
    \left\{
    \begin{aligned}
         \sigma &: D \to \R^3 \\
         D &= (u_1, u_2) \times (v_1,v_2) \\
         \|\sigma(u,v) \|^2 &= \psi(u) + v^2, \quad \forall(u,v) \in D \\
         \dint \sigma (u,v) &= J_0 \dint u \dint v, \quad \forall(u,v) \in D \\
    \end{aligned}
    \right.
\end{equation}
for some non-negative even function $\psi$ that is strictly increasing on $[0, \max(|u_1|, |u_2|)]$ (thus invertible) and a constant $J_0 > 0$.

Since $\sigma$ is injective and $\dint \sigma = J_0 \dint u \dint v$, the change of variables gives $\mathcal{M}^l = \sigma(D^l)$ and $\left| \mathcal{M}^l \right| = J_0 \left| D^l \right|$, so that computing the area reduces to a Lebesgue measure in the parameter plane.

\paragraph{Four-quadrant decomposition}

The function $\psi$ being even, it is strictly increasing only on $u \geq 0$, so the equation $\| \sigma(u,v)\|^2 = r^2$ cannot be inverted globally on $D$. We therefore split $D$ along the signs of $u$ and $v$ into the four subdomains
\begin{equation*}
    \begin{aligned}
        D_{\xi_1, \xi_2} & = \{ (u,v) \in D \, | \, \xi_1 u \geq 0, \xi_2 v \geq 0 \}, \qquad \xi_1, \xi_2 \in \{+,-\},
    \end{aligned}
\end{equation*}
as depicted in \Cref{fig:quadrant_decomposition}. Since $D$ is an open rectangle, each $D_{\xi_1,\xi_2}$ is again a rectangle. Accordingly,
\begin{equation*}
    \begin{aligned}
        D^l_{\xi_1, \xi_2} & = D^l \cap D_{\xi_1, \xi_2}, \\
        D^l &= \bigcup_{\xi_1, \xi_2 \in \{+,-\}} D^l_{\xi_1, \xi_2}. \\
    \end{aligned}
\end{equation*}
In other words, $D^l$ is decomposed into four components corresponding to the possible sign combinations of $(u,v)$. Note that depending on $D$ and $l$, some $D^l_{\xi_1, \xi_2}$ could be empty.
\begin{figure}[!htb] \centering
       \begin{tikzpicture}[>=latex, line join=round, scale=0.5]
  \def\sx{0.50}\def\sy{0.62}
  \def\umin{-4.0}\def\umax{6.0}\def\vmin{-3.6}\def\vmax{4.2}
  \def\ra{1.5}\def\rb{3.2}
  \def\rc{6.7}\def\rd{7.1}

  \begin{scope}[cm={1,0,\sx,\sy,(0,0)}]
    \fill[black!5, draw=black, thin] (\umin,\vmin) rectangle (\umax,\vmax);

    \fill[blue!25,even odd rule] (0,0) circle (\ra) (0,0) circle (\rb);
    \draw[red,thick] (0,0) circle (\ra);
    \draw[red,thick] (0,0) circle (\rb);

    \fill[blue!25] (38.83:\rc) arc (38.83:26.42:\rc) -- (32.32:\rd)
                   arc (32.32:36.28:\rd) -- cycle;
    \draw[red,thick,densely dashed] (47:\rc) arc (47:18:\rc);
    \draw[red,thick,densely dashed] (44:\rd) arc (44:24:\rd);
    \draw[red,thick] (38.83:\rc) arc (38.83:26.42:\rc);
    \draw[red,thick] (36.28:\rd) arc (36.28:32.32:\rd);

    \coordinate (O)  at (0,0);
    \coordinate (uA) at (\umin,0);   \coordinate (uB) at (\umax+1.2,0);
    \coordinate (vA) at (0,\vmin);   \coordinate (vB) at (0,\vmax+1.0);
    \coordinate (Dl) at (\umin+0.8,\vmin+0.6);
    \coordinate (Cpp) at (50:2.35);  \coordinate (Cmp) at (130:2.35);
    \coordinate (Cmm) at (230:2.35); \coordinate (Cpm) at (310:2.35);
    \coordinate (Ra) at (240:\ra);   \coordinate (Rb) at (310:\rb);
    \coordinate (Tc) at (31:6.9);
  \end{scope}

  \draw[->] (uA) -- (uB) node[below] {$u$};
  \draw[->] (vA) -- (vB) node[left] {$v$};
  \node at (Dl) {$D$};

  \coordinate (Xi) at (-1.0,-5.2);
  \draw[densely dashed] (Xi) -- (Ra) node[pos=0.25,left,scale=0.9] {$c\tau_{l-1}$};
  \draw[densely dashed] (Xi) -- (Rb) node[pos=0.50,right,scale=0.9] {$c\tau_{l}$};
  \fill (O) circle (1.4pt);  \fill (Ra) circle (1.4pt);  \fill (Rb) circle (1.4pt);
  \fill (Xi) circle (1.4pt) node[below] {$\bx_i$};

  \node[scale=0.85] at (Cpp) {$D^{l}_{++}$};
  \node[scale=0.85] at (Cmp) {$D^{l}_{-+}$};
  \node[scale=0.85] at (Cmm) {$D^{l}_{--}$};
  \node[scale=0.85] at (Cpm) {$D^{l}_{+-}$};
  \draw[<-, shorten <=1.5pt] (Tc) -- ++(2.0,-0.8) node[right,scale=0.9] {$D^{l'}_{++}$};
\end{tikzpicture}
       \caption{Decomposition of $D^l$ into the subdomains $D^l_{\xi_1,\xi_2}$, drawn in the $(u,v)$ plane for a planar transducer. The voxel $\bx_i$ lies below the surface and projects onto the origin, at distances $c\tau_{l-1}$ and $c\tau_l$ from the two curves delimiting $D^l$. For the time step $l$ the four components are nonempty, whereas for $l'$ only $D^{l'}_{++}$ is nonempty and clipped by the boundary of $D$.}
       \label{fig:quadrant_decomposition}
\end{figure}

The area $|\mathcal{M}^l|$ can then be obtained via
\begin{equation} \label{eq:swept_u}
       \left|\mathcal{M}^l\right| = J_0 \sum_{\xi_1, \xi_2 \in \{+,-\}} \left| D^l_{\xi_1, \xi_2}\right|.
\end{equation}

\paragraph{Reduction to the positive quadrant}

Again, since both $\psi$ and the square function $v \mapsto v^2$ are even, the reflection $(u,v) \mapsto (\xi_1 u, \xi_2 v)$ maps $D^l_{\xi_1,\xi_2}$ onto a set of the same form contained in the quadrant $u, v \geq 0$, without changing its area.
We therefore assume $\xi_1 = \xi_2 = +$ and adapt the domain limits accordingly.  Specifically, we define
\begin{equation} \label{eq:uvminmax}
    \begin{aligned}
        u_{\min}^+ &= \max(u_1,0)   \quad &  u_{\max}^+ &= \max(u_2,0), \\
        u_{\min}^- &= \max(-u_2,0)   \quad &  u_{\max}^- &= \max(-u_1,0),
    \end{aligned}
\end{equation}
and define $v_{\min}^{\xi_2}$ and $v_{\max}^{\xi_2}$ analogously for $\xi_2 \in \{ +, - \}$, the maxima producing a degenerate interval, hence a zero area, when $D$ does not meet the corresponding quadrant.

Abusing notation, we identify from now on $D_{\xi_1, \xi_2}$ and $D^l_{\xi_1,\xi_2}$ with their images under the reflection, that is
\begin{equation*}
    \begin{aligned}
        D_{\xi_1, \xi_2} &= \left(u_{\min}^{\xi_1}, u_{\max}^{\xi_1} \right) \times \left(v_{\min}^{\xi_2}, v_{\max}^{\xi_2} \right), \\
        D^l_{\xi_1, \xi_2} &= \left\{ (u,v) \in D_{\xi_1, \xi_2} \, | \, \| \sigma(u,v) \| \in [c\tau_{l-1}, c\tau_l) \right\}.
    \end{aligned}
\end{equation*}
All four quadrants are now contained in $\{ u \geq 0, \, v \geq 0 \}$, where $\psi$ is strictly increasing and therefore invertible by \eqref{eq:separable}.

\paragraph{Characterization of $D^l_{\xi_1, \xi_2}$}

First, the separable form allows us to derive closed-form expressions for the minimum and maximum radii where $S(\bx_i,r)$ and $\sigma(D_{\xi_1, \xi_2})$ intersect.
More specifically in each quadrant $D_{\xi_1, \xi_2}$, we seek the infimum and supremum of the values of $r$ for which the equation $\psi(u) + v^2 = r^2$ admits a solution for $(u,v) \in  D_{\xi_1, \xi_2}$. Since $\psi$ and $v \mapsto v^2$ are both strictly increasing on the quadrant, these extrema are given by the corners of its closure:
\begin{equation} \label{eq:closedform_rminrmax}
    \begin{aligned}
        r_{\min}^{\xi_1, \xi_2} &= \sqrt{ \psi\left(u_{\min}^{\xi_1}\right) + \left(v_{\min}^{\xi_2}\right)^2 },\\
        r_{\max}^{\xi_1, \xi_2} &= \sqrt{ \psi\left(u_{\max}^{\xi_1}\right) + \left(v_{\max}^{\xi_2}\right)^2 }.
    \end{aligned}
\end{equation}
As a by-product, taking the minimum and the maximum over the four quadrants gives the closed-form expressions of $r_{\min}(i)$ and $r_{\max}(i)$ defined in \eqref{eq:rmin_rmax}.

Since the treatment of each $D^l_{\xi_1, \xi_2}$ will be similar, the dependence on $\xi_1,\xi_2$ will be dropped in the remainder of the derivation.
The separable form \eqref{eq:separable} allows the equation $ \| \sigma(u,v) \|^2 = r^2$ to be solved explicitly as
\begin{equation*}
    \begin{aligned}
        v &= \Psi(u,r) = \sqrt{r^2 - \psi(u)}, \quad \text{decreasing in $u$ and increasing in $r$}, \\
        u &= \Phi(v,r) = \psi^{-1}(r^2 - v^2), \quad \text{decreasing in $v$ and increasing in $r$}.
    \end{aligned}
\end{equation*}
To properly handle their domain of definition when the level sets exit $D$, they will be considered as
\begin{equation*}
    \widehat{\Psi} : [u_{\min}, u_{\max}] \times \R_+ \to [v_{\min}, v_{\max}], \qquad
    \widehat{\Phi} : [v_{\min}, v_{\max}] \times \R_+ \to [u_{\min}, u_{\max}],
\end{equation*}
defined by
\begin{equation*}
    \widehat{\Psi}(u,r) = \begin{cases}
        v_{\min} & \text{if } r^2 - \psi(u) \leq v_{\min}^2, \\
        \Psi(u,r) & \text{if } r^2-\psi(u) \in [v_{\min}^2, v_{\max}^2], \\
        v_{\max} & \text{if } r^2 - \psi(u) \geq  v_{\max}^2,
    \end{cases}
\end{equation*}
and
\begin{equation*}
    \widehat{\Phi}(v,r) = \begin{cases}
        u_{\min} & \text{if } r^2-v^2 \leq \psi(u_{\min}), \\
        \Phi(v,r) & \text{if } r^2-v^2 \in [\psi(u_{\min}),\psi(u_{\max})], \\
        u_{\max} & \text{if } r^2-v^2 \geq \psi(u_{\max}).
    \end{cases}
\end{equation*}

We define
\begin{equation} \label{eq:alphabeta}
    \alpha_l = \widehat{\Phi}(v_{\max}, c\tau_l) \quad \text{and} \quad \beta_l = \widehat{\Phi}(v_{\min}, c\tau_l).
\end{equation}
Geometrically, $\alpha_l$ and $\beta_l$ are the abscissas at which the iso-distance curve $\| \sigma \| = c\tau_l$ meets the edges $v = v_{\max}$ and $v = v_{\min}$ of the quadrant, clipped to $[u_{\min}, u_{\max}]$.
Since $\widehat{\Phi}$ is decreasing in $v$, $\alpha_l \leq \beta_l$; since it is increasing in $r$, $\alpha_{l-1} \leq \alpha_l$ and $\beta_{l-1} \leq \beta_l$.
As $\widehat{\Psi}$ is increasing in $r$ as well, the interval $[\alpha_{l-1}, \beta_l]$ is exactly the set of abscissas at which $D^l_{\xi_1,\xi_2}$ has a non-empty vertical slice, and this slice is delimited by the two iso-distance curves. It follows that, up to a set of zero measure that does not affect the areas computed below,
\begin{equation} \label{eq:D_l_xi}
    D^l_{\xi_1,\xi_2} = \left\{ (u,v) \, | \, u \in [\alpha_{l-1}, \beta_l],\; v \in \left[\widehat{\Psi}(u, c\tau_{l-1}), \widehat{\Psi}(u, c\tau_l) \right] \right\}.
\end{equation}
Note that when $r \leq r_{\min}$, both $\widehat{\Phi}(v_{\max}, r)$ and $\widehat{\Phi}(v_{\min}, r)$ are equal to $u_{\min}$, whereas for $r \geq r_{\max}$ they are both equal to $u_{\max}$.
Consequently, $D^l_{\xi_1,\xi_2}$ has zero area whenever $c\tau_l \leq r_{\min}$ or $c\tau_{l-1} \geq r_{\max}$, which restricts the time steps to be visited for that quadrant to those with $c\tau_l \in (r_{\min}, r_{\max}]$. These quantities are illustrated in \Cref{fig:swept_area}.

\begin{figure}[!htb]
       \centering
       \begin{tikzpicture}[scale=1.6,>=latex]
  \def\ri{2.343}
  \def\ro{3.413}
  \def\umn{1.0}\def\umx{4.2}
  \def\vmn{0.6}\def\vmx{1.8}

  \coordinate (O)  at (0,0);
  \coordinate (Tm) at (1.500,\vmx);
  \coordinate (Tp) at (2.900,\vmx);
  \coordinate (Bm) at (2.265,\vmn);
  \coordinate (Bp) at (3.360,\vmn);

  \fill[blue!18]
      (Tm) -- (Tp) arc (31.83:10.12:\ro) -- (Bm) arc (14.84:50.19:\ri) -- cycle;

  \draw[red,thick] (10:\ri) arc (10:62:\ri);
  \draw[red,thick] (7:\ro)  arc (7:42:\ro);
  \node[red,above]       at (62:\ri) {$\Psi(u,c\tau_{l-1})$};
  \node[red,above right] at (42:\ro) {$\Psi(u,c\tau_{l})$};

  \draw[densely dashed] (\umn,\vmn) rectangle (\umx,\vmx);
  \foreach \y/\lab in {\vmn/{v_{\min}}, \vmx/{v_{\max}}}{
     \draw[densely dotted] (\umn,\y) -- (0,\y);
     \node[left] at (0,\y) {$\lab$};
  }

  \draw[->] (-0.15,0) -- (4.75,0) node[below] {$u$};
  \draw[->] (0,-0.15) -- (0,2.7)  node[left]  {$v$};
  \fill (O) circle (0.7pt) node[below left] {$\bx_i$};

  \draw[densely dotted] (O) -- (Tm);
  \draw[densely dotted] (O) -- (Tp);
  \node[rotate=50, fill=white, inner sep=1pt] at (0.675,0.85) {$c\tau_{l-1}$};
  \node[rotate=32, fill=white, inner sep=1pt] at (1.305,0.85) {$c\tau_{l}$};

  \foreach \x/\lab in {\umn/{u_{\min}}, \umx/{u_{\max}}}{
     \draw[densely dotted] (\x,\vmn) -- (\x,0);
     \node[below] at (\x,0) {$\lab$};
  }

  \foreach \p/\lab in {Tm/{\alpha_{l-1}}, Bm/{\beta_{l-1}}, Tp/{\alpha_{l}}, Bp/{\beta_{l}}}{
     \fill (\p) circle (0.6pt);
     \draw[densely dotted] (\p) -- (\p|-O);
     \node[below] at (\p|-O) {$\lab$};
  }

  \node[fill=blue!18, inner sep=1pt] at (2.52,1.15) {$D_{\xi_1,\xi_2}^l$};
\end{tikzpicture}
       \caption{The subdomain $D^l_{\xi_1,\xi_2}$ (in blue) is delimited by the two iso-distance curves $v = \Psi(u,c\tau_{l-1})$ and $v = \Psi(u,c\tau_l)$ (in red) and by the boundary of $D_{\xi_1,\xi_2}$.}
       \label{fig:swept_area}
\end{figure}

Finally, the area of $D^l_{\xi_1,\xi_2}$ is obtained by introducing the function $\mathcal{F}(r)$
\begin{equation*}
    \mathcal{F}(r) = \int_{u_{\min}}^{u_{\max}} \widehat{\Psi}(u,r) \dint u = \left| \left\{ (u,v) \in D_{\xi_1,\xi_2} \, \middle| \, \|\sigma(u,v)\| \leq r \right\} \right| + v_{\min} \left( u_{\max} - u_{\min} \right),
\end{equation*}
so that \eqref{eq:D_l_xi} reads $\left| D^l_{\xi_1, \xi_2} \right| = \mathcal{F}(c\tau_l) - \mathcal{F}(c\tau_{l-1})$.
By definition of $\alpha_l$ and $\beta_l$, the integrand is equal to $v_{\max}$ on $[u_{\min}, \alpha_l]$, to $\Psi(u, c\tau_l)$ on $[\alpha_l, \beta_l]$ and to $v_{\min}$ on $[\beta_l, u_{\max}]$, so that
\begin{equation*}
    \mathcal{F}(c\tau_l) = (\alpha_l - u_{\min}) v_{\max} + I_l + (u_{\max} - \beta_l) v_{\min}, \qquad \text{with } I_l = \int_{\alpha_l}^{\beta_l} \Psi(u, c\tau_l) \dint u.
\end{equation*}
Writing the same identity at $c\tau_{l-1}$ and subtracting, the terms in $u_{\min}$ and $u_{\max}$ cancel and
\begin{equation} \label{eq:area_D_l_xi}
    \left| D^l_{\xi_1, \xi_2} \right| = (\alpha_l - \alpha_{l-1}) v_{\max} + I_l - I_{l-1} - (\beta_l - \beta_{l-1}) v_{\min}.
\end{equation}

These derivations are summarized in \Cref{alg:separable_area}, which evaluates $\left| \mathcal{M}_i^l \right|$ for any surface admitting the separable form \eqref{eq:separable}: the only geometry-dependent inputs are the domain $D$, the function $\psi$, its inverse through $\widehat{\Phi}$, the closed form of $I_l$ and the constant $J_0$, derived for the plane in \Cref{subsec:plane_surface} and for the cylinder in \Cref{subsec:cylinder_method}.

\begin{algorithm}[ht]
\caption{Area $\big| \mathcal{M}_i^l \big|$ for a surface in separable form \eqref{eq:separable}.}
\label{alg:separable_area}
\begin{algorithmic}[1]
       \Require Voxel $\bx_i$ in the transducer frame and the transducer parameters; time step edge $c\tau_l$; triples $\left(\alpha_{l-1}, \beta_{l-1}, I_{l-1}\right)$ stored at the previous edge $c\tau_{l-1}$, initialized to $\left(u_{\min}^{\xi_1}, u_{\min}^{\xi_1}, 0\right)$ at the first edge.
       \Ensure Area $\big| \mathcal{M}_i^l \big|$; store updated to the edge $c\tau_l$.
       \State Get $D$, $\psi$, $\widehat{\Phi}$, $J_0$ and the closed form of $I_l$ \Comment{Geometry-specific, \Cref{subsec:plane_surface,subsec:cylinder_method}}
       \State $\big| \mathcal{M} \big| \gets 0$
       \For{$(\xi_1, \xi_2) \in \{+,-\}^2$} \Comment{Loop over the four quadrants of $D$}
              \State Quadrant limits $\left(u_{\min}^{\xi_1}, u_{\max}^{\xi_1}\right) \times \left(v_{\min}^{\xi_2}, v_{\max}^{\xi_2}\right)$
              \State Bounds $(\alpha_l, \beta_l)$ from \eqref{eq:alphabeta}
              \State $I_l \gets \int_{\alpha_l}^{\beta_l} \Psi(u, c\tau_l) \dint u$ \Comment{Geometry-specific closed-form \eqref{eq:plane_primitive},\eqref{eq:cyl_Il}}
              \State $\big| \mathcal{M} \big| \gets \big| \mathcal{M} \big| + J_0 \bigl[ (\alpha_l - \alpha_{l-1}) v_{\max} + I_l - I_{l-1} - (\beta_l - \beta_{l-1}) v_{\min} \bigr]$ \Comment{\eqref{eq:swept_u},\eqref{eq:area_D_l_xi}}
              \State Store $(\alpha_l, \beta_l, I_l)$ for time step $l+1$
       \EndFor
       \State \Return $\big| \mathcal{M} \big|$
\end{algorithmic}
\end{algorithm}

The following diagram shows the organization of the end of the section:
\begin{center}
       \begin{tikzpicture}[node distance=0.6cm and 0.6cm, block/.append style={text width=2.5cm}]
  \node (root)  [block] {Area $\big| \mathcal{M}_i^{l} \big|$ for surface transducer\\(\Cref{subsec:surface_method})};
  \node (junc1) [junction, right=0.9cm of root] {};

  \node (plane) [block, above right=0.6cm and 0.6cm of junc1] {Plane surface\\(\Cref{subsec:plane_surface})};
  \node (cyl)   [block, below right=0.6cm and 0.6cm of junc1] {Cylinder surface\\(\Cref{subsec:cylinder_method})};

  \node (junc2) [junction, right=0.9cm of cyl] {};
  \node (pw)    [block, above right=0.6cm and 0.6cm of junc2] {Piecewise-Planes\\(\Cref{subsec:plane_method})};
  \node (ell)   [block, below right=0.6cm and 0.6cm of junc2] {Elliptic integral\\(\Cref{subsec:elliptic_method})};

  \node (junc3) [junction, right=0.9cm of ell] {};
  \node (lut)   [block, text width=1.7cm, right=0.4cm of junc3]  {Lookup table};
  \node (exact) [block, text width=1.7cm, above=0.4cm of lut]  {Exact};
  \node (ff)    [block, text width=1.7cm, below=0.4cm of lut]  {Trapezoidal};

  \draw[arr] (root)  -- (junc1);
  \draw[arr] (junc1) |- (plane);
  \draw[arr] (junc1) |- (cyl);

  \draw[arr] (cyl)   -- (junc2);
  \draw[arr] (junc2) |- (ell);
  \draw[arr] (junc2) |- (pw);

  \draw[arr] (ell)   -- (junc3);
  \draw[arr] (junc3) |- (exact);
  \draw[arr] (junc3) -- (lut);
  \draw[arr] (junc3) |- (ff);
  \draw[arr, dashed] (plane) -| node[pos=0.25, font=\footnotesize, above] {reuse} (pw);
\end{tikzpicture}
\end{center}
The framework developed above applies to both planar and cylindrical transducers through an appropriate choice of the surface-dependent function $\psi$.
We first consider the planar transducer, for which a closed-form solution is readily obtained.
We then turn to the cylindrical transducer and derive an exact closed-form expression involving incomplete elliptic integrals. To reduce the computational cost, we introduce two approximations: a lookup table implementation and a trapezoidal approximation of the arc integral. Finally, we exploit the planar closed-form solution to derive a piecewise planar approximation of the cylindrical transducer.

\subsection{Plane surface transducer}
\label{subsec:plane_surface}
A planar transducer, depicted in \Cref{fig:planar}, is defined by a center $\bc_0 \in \R^3$, two orthonormal vectors $\be_1, \be_2$, a half-width $L_1 > 0$ and a half-height $L_2 > 0$. In the frame $(\be_1, \be_2, \be_1 \times \be_2)$ centered at $\bc_0$, the surface is the rectangle
\begin{equation} \label{eq:plane_param}
       \Sigma = \left\{ \bx = (x_1, x_2, 0) : x_1 \in [-L_1, L_1],\, x_2 \in [-L_2, L_2] \right\}, \quad \dint\sigma = \dint x_1\,\dint x_2,
\end{equation}
and a voxel is mapped from the global frame by $\bx_i \mapsto \bR_p(\bx_i - \bc_0)$ with $\bR_p^\top = (\be_1, \be_2, \be_1 \times \be_2)$.
To avoid overloading notation, in the remainder of the section, $\bx_i$ will refer to the voxel in the transducer frame.

\begin{figure}[!htb] \centering
       \begin{tikzpicture}[scale=1.5]

  \def\a{1.4}
  \def\b{0.9}

  \draw[->, thick] (0,0,0) -- (1.9,0,0)  node[anchor=north]      {$\be_1$};
  \draw[->, thick] (0,0,0) -- (0,0,1.7)  node[anchor=north west] {$\be_2$};
  \draw[->, thick] (0,0,0) -- (0,1.4,0)  node[anchor=south]      {$\be_1 \times \be_2$};
  \fill (0,0,0) circle (0.6pt) node[anchor=east] {$\bc_0$};

  \draw[fill=blue!20!white, opacity=0.4, draw=blue!60, thick]
    (-\a,0,-\b) -- (\a,0,-\b) -- (\a,0,\b) -- (-\a,0,\b) -- cycle;

  \draw[<->] (-\a,0,-\b-0.4) -- (\a,0,-\b-0.4);
  \node at (0,0,-\b-0.8) {$2L_1$};

  \draw[<->] (-\a-0.25,0,-\b) -- (-\a-0.25,0,\b);
  \node at (-\a-0.5,0,0) {$2L_2$};

\end{tikzpicture}
       \caption{Illustration of the plane on the coordinate system $(\be_1, \be_2, \be_1 \times \be_2)$.} \label{fig:planar}
\end{figure}

For a given $\bx_i$ the separable form \eqref{eq:separable} associated to $\Sigma$ is obtained by setting $u = x_1 - \bx_i[1]$, $v = x_2 - \bx_i[2]$ and reads
\begin{equation*}
    \left\{
    \begin{aligned}
        \sigma(u,v) &= (u, v, -\bx_i[3])\\
        D &= \left(-L_1 - \bx_i[1], L_1 - \bx_i[1]\right) \times \left(-L_2 - \bx_i[2], L_2 - \bx_i[2]\right) \\
        \psi(u) &= u^2 + \bx_i[3]^2 \\
        J_0 &= 1
    \end{aligned}
    \right.
\end{equation*}
and $\psi$ is non-negative even and strictly increasing on $u \geq 0$.

A few computations give furthermore the identities
\begin{equation*}
       \Psi(u, r) = \sqrt{\rho(r)^2 - u^2}, \qquad \Phi(v, r) = \sqrt{\rho(r)^2 - v^2}, \qquad \text{with } \rho(r) = \sqrt{r^2 - \bx_i[3]^2}.
\end{equation*}
Note that for $r \in [r_{\min}, r_{\max}]$ defined in \eqref{eq:closedform_rminrmax}, $r^2 - \bx_i[3]^2 \geq u_{\min}^2 + v_{\min}^2 \geq 0$ so that $\rho$ is properly defined.
The integral $I_l$ can therefore be obtained via the elementary primitive of $\sqrt{\rho^2 - u^2}$:
\begin{equation}  \label{eq:plane_primitive}
    I_l = \frac{1}{2}\left[ u \sqrt{\rho(c\tau_l)^2 - u^2} + \rho(c\tau_l)^2\,\arcsin \left( \frac{u}{\rho(c\tau_l)}\right) \right]_{\alpha_l}^{\beta_l}.
\end{equation}
With these computations at hand, the area of $D^l$ follows by summing the four $\left| D^l_{\xi_1, \xi_2} \right|$ given by \eqref{eq:area_D_l_xi}. \Cref{alg:separable_area} can therefore be instantiated with the planar quantities derived above.

\subsection{Cylindrical surface transducer}
\label{subsec:cylinder_method}
A cylindrical transducer, depicted in \Cref{fig:cylinder}, is defined by a center $\bc_0 \in \R^3$, two orthonormal vectors $\be_1, \be_2$, a radius $R > 0$, an angular half-aperture $\theta_{\max} \in (0, \pi/2)$ and an axial half-height $L_z > 0$. In the frame $(\be_2, \be_1 \times \be_2, \be_1)$ centered at $\bc_0$, the surface is
\begin{equation} \label{eq:cylinder_param}
       \Sigma = \left\{ \bx(\theta, z) = (R\cos\theta,\, R\sin\theta,\, z) : \theta \in [-\theta_{\max}, \theta_{\max}],\, z \in [-L_z, L_z] \right\}, \quad \dint\sigma = R\,\dint\theta\,\dint z,
\end{equation}
and a voxel is mapped from the global frame by $\bx_i \mapsto \bR_c(\bx_i - \bc_0)$, $\bR_c^\top = (\be_2,\, \be_1 \times \be_2,\, \be_1)$.
To avoid overloading notation, in the remainder of the section, $\bx_i$ will refer to the voxel in the transducer frame.

\begin{figure}[!htb] \centering
       \begin{tikzpicture}[scale=1.5]

  \def\R{2.5}
  \def\h{0.25}
  \def\thmax{45}
  \def\steps{128}

  \draw[->, thick] (0,0,0) -- (2,0,0)   node[anchor=south west]        {$\be_2$};
  \draw[->, thick] (0,0,0) -- (0,0,-3)   node[anchor=south]       {$\be_1\times \be_2$};
  \draw[->, thick] (0,0,0) -- (0,2,0)   node[anchor=south] {$\be_1$};
  \fill (0,0,0) circle (0.6pt) node[left] {$\bc_0$};

  \foreach \i in {0,...,\steps} {
    \pgfmathsetmacro\theta{-\thmax + 2*\thmax/\steps * \i}
    \pgfmathsetmacro\xA{\R*cos(\theta)}
    \pgfmathsetmacro\zA{\R*sin(\theta)}
    \pgfmathsetmacro\nextt{\theta + 2*\thmax/\steps}
    \pgfmathsetmacro\xB{\R*cos(\nextt)}
    \pgfmathsetmacro\zB{\R*sin(\nextt)}
    \pgfmathsetmacro\col{100 - (\i/\steps)*80}

    \draw[fill=blue!\col!white, opacity=0.5, draw=none]
      (\xA,-\h,\zA) -- (\xA,\h,\zA) --
      (\xB,\h,\zB) -- (\xB,-\h,\zB) -- cycle;
  }

  \draw[dashed] (0,0,0) -- ({\R*cos(\thmax)},0,{\R*sin(\thmax)});
  \draw[dashed] (0,0,0) -- ({\R*cos(-\thmax)},0,{\R*sin(-\thmax)});

  \draw[->] plot[domain=0:\thmax,variable=\t,samples=20]
    ({0.6*cos(\t)},0,{0.6*sin(\t)});
  \node at ({0.8*cos(22.5)},0,{sin(22.5)}) [right] {$\theta_{\max}$};

  \node at ({1.2*cos(\thmax)},0,{-1.2*sin(\thmax)}) [above] {$R$};

  \draw[<->] (1.5,-\h,2) -- (1.5,\h,2);
  \node at (1.3,0,2) {$2L_z$};

\end{tikzpicture}
       \caption{Illustration of the cylinder on the coordinate system $(\be_2, \be_1 \times \be_2, \be_1)$.} \label{fig:cylinder}
\end{figure}

\subsubsection{Elliptic integral}
\label{subsec:elliptic_method}

For a given $\bx_i$, let
\begin{equation*}
    \begin{aligned}
        \rho &= \sqrt{\bx_i[1]^2 + \bx_i[2]^2} \\
        \theta_i &= \arctantwo(\bx_i[2], \bx_i[1]).
    \end{aligned}
\end{equation*}
and assume $\rho > 0$.
The separable form \eqref{eq:separable} associated to $\Sigma$ is obtained by setting $u = \theta - \theta_i$ and $v = z - \bx_i[3]$:
\begin{equation*}
    \left\{
    \begin{aligned}
        \sigma(u,v) &= (R\cos(u) - \rho, R\sin(u), v)\\
        D &= \left(-\theta_{\max} - \theta_i, \theta_{\max} - \theta_i\right) \times \left(-L_z - \bx_i[3], L_z - \bx_i[3]\right) \\
        \psi(u) &= \rho^2 + R^2 - 2R\rho\cos(u) \\
        J_0 &= R
    \end{aligned}
    \right.
\end{equation*}
and $\psi$ is non-negative even, $2\pi$-periodic and strictly increasing on $[0,\pi]$.

Since $\psi$ is even and $2\pi$-periodic, $u$ is regarded as an angle modulo $2\pi$, and the endpoints $-\theta_{\max}-\theta_i$ and $\theta_{\max}-\theta_i$ are wrapped into $(-\pi,\pi]$, still denoted $u_1,u_2$.
As $2\theta_{\max}<\pi$, the resulting arc contains at most one of $0$ and $\pm\pi$.
The interval is split at one of these points, that is $0$ when $u_1 < 0 < u_2$ and $\pi$ when $u_2 < u_1$. Each split interval is then folded onto $[0,\pi]$ by $u\mapsto|u|$.
This yields at most two intervals
\begin{equation}\label{eq:folded_intervals}
\begin{cases}
[0,-u_1],\ [0,u_2] & \text{if } u_1 < 0 < u_2,\\
[u_1,\pi],\ [-u_2,\pi] & \text{if } u_2 < u_1,\\
[u_1,u_2],\ \emptyset & \text{if } 0 \le u_1 < u_2, \\
[-u_2,-u_1],\ \emptyset & \text{if } u_1 < u_2 \le 0,
\end{cases}
\end{equation}
defining the $u_{\min}$, $u_{\max}$ bounds of \eqref{eq:uvminmax}.

A few computations give furthermore the identities
\begin{equation*}
       \Psi(u, r) = \sqrt{2R \rho \cos u + r^2 - \rho^2 - R^2}, \qquad \Phi(v,r) = \arccos\left( \frac{\rho^2 + R^2 - r^2 + v^2}{2R\rho} \right).
\end{equation*}
The integral $I_l$ can then be expressed, following \Cref{lem:exact_integral} in \Cref{app:elliptic_integrals}, in terms of the incomplete elliptic integral of the second kind $E$:
\begin{equation} \label{eq:cyl_Il}
    \begin{aligned}
        I_l = \int_{\alpha_l}^{\beta_l} \Psi(u,c \tau_l) \dint u & = 2 \sqrt{2R\rho + b(c \tau_l)} \left[ E\left(\frac{\beta_l}{2}, \nu(c \tau_l)\right) - E\left(\frac{\alpha_l}{2}, \nu(c \tau_l)\right) \right] \\
        \text{where, } \qquad b(r) & = r^2 - \rho^2 - R^2 \\
                             \nu(r) & = \frac{4R\rho}{2R\rho + b(r)} \\
                             E(\vartheta, \nu) & = \int_{0}^{\vartheta} \sqrt{1 - \nu \sin^2(t)} \dint t.
    \end{aligned}
\end{equation}
The elliptic integral is well-defined as long as $\nu \sin^2(t) \leq 1$ for all $t \in [0,\vartheta]$, which \Cref{lem:exact_integral} guarantees for $\vartheta = \alpha_l/2$ and $\vartheta = \beta_l/2$.

Note that the case $\rho = 0$ occurs when $\bx_i$ is on the focal line of the cylindrical transducer, that is the line passing through $\bc_0$ along $\be_1$ (see \Cref{fig:cylinder}).
The algorithm nevertheless remains valid. Since $\psi(u_{\min}) = \psi(u_{\max}) = R^2$, the clamped map $\widehat\Phi(v,r)$ takes only the values $u_{\min}$ if $r^2 - v^2 \leq R^2$ and $u_{\max}$ otherwise. This leads to $\alpha_l, \beta_l \in \{u_{\min}, u_{\max}\}$ so that the $\arccos$ is never evaluated.

With these computations at hand, the area of $D^l$ again follows from \eqref{eq:area_D_l_xi} and \Cref{alg:separable_area}.
Unlike the plane transducer, the integral $I_l$ is more complex to deal with numerically.
In the following paragraphs we investigate three ways to evaluate it: exactly, through a lookup table, and through a trapezoidal approximation.

\paragraph{Exact}

Standard routines evaluate $E$ using the arithmetic-geometric mean \cite[Section 19.22 (ii)]{NIST:DLMF} or the Carlson symmetric forms \cite{carlson1995numerical}.
These implementations generally require the parameter $\nu$ to be in the range $[0,1]$.
Although $\nu(r)$ is always positive, it exceeds $1$ whenever $r < R + \rho$ (\Cref{app:elliptic_integrals}), which occurs for a large part of the time steps and voxels.
The additional evaluation of $F$ in \eqref{eq:reciprocal_modulus} below is therefore required for a substantial part of the evaluations of $I_l$.
For $\nu(r) > 1$, and for $\vartheta \in [0,\pi/2]$ with $\nu \sin^2\vartheta \leq 1$, we leverage the classical reciprocal-modulus transformation \cite[Eqs.~17.4.15--16]{abramowitz1966handbook}
\begin{equation} \label{eq:reciprocal_modulus}
       \begin{aligned}
              E(\vartheta, \nu) & = \sqrt{\nu}\left[ E\!\left(\widetilde{\vartheta}, \tfrac{1}{\nu}\right) - \left(1 - \tfrac{1}{\nu}\right) F\!\left(\widetilde{\vartheta}, \tfrac{1}{\nu}\right) \right], \\
              \text{where, } \qquad \widetilde{\vartheta} & = \arcsin\!\left(\sqrt{\nu}\sin\vartheta\right), \\
                                   F(\vartheta, \nu) & = \int_{0}^{\vartheta} \left( 1 - \nu \sin^2 t \right)^{-\frac{1}{2}} \dint t.
       \end{aligned}
\end{equation}
This transformation maps the evaluation of $E$ to elliptic integrals with a parameter in the range supported by the numerical routines, at the additional cost of one incomplete elliptic integral of the first kind $F$.
We rely on the Carlson symmetric forms implementations of these functions provided by the \texttt{boost::math} library on GPUs.

Computing $I_l$ for a single time step requires 2 to 8 evaluations of $E$, each involving one to two calls to an elliptic integral function with a substantial amount of floating point operations.
For the system described in \Cref{subsec:lib_setup}, approximately $10^{13}$ evaluations of $I_l$ are needed, making this computation the primary bottleneck of the method.
Significant performance gains can therefore be obtained by reducing the cost of these evaluations. To this end, we consider two alternatives: a lookup table approach and a trapezoidal approximation.

\paragraph{Lookup table (LUT)}\label{subsec:LUT_method}

A standard strategy for repeatedly evaluating an expensive function is to precompute its values on a lookup table and to interpolate at query time.
When the parameter range is bounded and the function is smooth, this replaces each evaluation by a few memory accesses and a bilinear interpolation.

Following this strategy for $E$ requires two adaptations.
First, since $\nu(r)$ takes arbitrary values in $\R_+$, we use \eqref{eq:reciprocal_modulus} again to map $\nu$ to a bounded domain, at the price of an additional LUT for $F$.
Second, $F$ has a logarithmic singularity as $(\vartheta, \nu) \to \left(\frac{\pi}{2}, 1\right)$ \cite[Section 19.12]{NIST:DLMF}, so that a uniform sampling would require a prohibitively fine grid. We use a non-uniform sampling instead.

Finally, to save the $\arcsin$ computation in \eqref{eq:reciprocal_modulus}, we tabulate the composed maps
\begin{equation*} \label{eq:LUT_definition}
       \widetilde{E}(\varsigma, \nu) = E(\arcsin\varsigma, \nu), \qquad
       \widetilde{F}(\varsigma, \nu) = F(\arcsin\varsigma, \nu), \qquad
       (\varsigma, \nu) \in [0, 1]^2,
\end{equation*}
on the power-law grid
\begin{equation} \label{eq:LUT_sampling}
       \varsigma_n = (1 - \varepsilon)\left[ 1 - \left(1 - \tfrac{n-1}{N_\varsigma - 1}\right)^4 \right], \quad
       \nu_{n'} = (1 - \varepsilon)\left[ 1 - \left(1 - \tfrac{n'-1}{N_\nu - 1}\right)^4 \right], \quad (n, n') \in [N_\varsigma] \times [N_\nu],
\end{equation}
where $N_\varsigma$ and $N_\nu$ are the numbers of samples and $\varepsilon = 10^{-16}$ is a small offset that avoids the singularity.
The exponent $4$ concentrates the samples near the singularity $(1,1)$. It has been chosen empirically, together with $N_\varsigma = N_\nu = 1000$.
The cost of this non-uniform sampling is that locating the grid points adjacent to a query $(\varsigma,\nu)$ requires inverting \eqref{eq:LUT_sampling}, that is, two square-root evaluations.

Overall, the LUT reduces the running time of the Exact evaluation by one order of magnitude, with no measurable loss of accuracy (\Cref{tab:exp1_summary}).

\paragraph{Trapezoidal approximation}\label{subsec:trapezoid_method}
When $\alpha_l$ and $\beta_l$ are sufficiently close, the curve $u \mapsto \Psi(u, c\tau_l)$ can be approximated by the segment joining $(\alpha_l,\Psi(\alpha_l, c\tau_l))$ and $(\beta_l,\Psi(\beta_l, c\tau_l))$.
This leads to a trapezoidal approximation of the subdomains $D^l_{\xi_1, \xi_2}$ so that
\begin{equation} \label{eq:trapezoid_Il}
       I_l \approx \tfrac{1}{2}\left(\beta_l - \alpha_l\right)\left[ \Psi(\alpha_l, c\tau_l) + \Psi(\beta_l, c\tau_l) \right].
\end{equation}
This approximation removes the need to evaluate elliptic integrals.
The computation of $I_l$ is reduced to evaluating $\Psi(\cdot, c\tau_l)$ at the two endpoints, which requires two square-root operations.
Among the three proposed methods, this is the fastest one, but it introduces an approximation error.

\subsubsection{Piecewise-Planes}
\label{subsec:plane_method}

It is also possible to approximate the cylindrical transducer $\Sigma$ by a piecewise planar surface \cite{jensen1992calculation, wu1999spatial, baek2012spatial, ilyina2017extension}.
More precisely, the aperture $[-\theta_{\max}, \theta_{\max}]$ is divided into $N_p$ sectors of equal angular width $\Delta\theta = 2\theta_{\max}/N_p$, and the $j$-th sector is replaced by the plane $P_j$ tangent to $\Sigma$ at its central angle $\theta_j = -\theta_{\max} + (j - \tfrac{1}{2})\Delta\theta$, for $j \in [N_p]$.
Each $P_j$ is a planar transducer in the sense of \eqref{eq:plane_param}, of half-width $L_1 = \tfrac{1}{2}R\,\Delta\theta$ and half-height $L_2 = L_z$.

The approximation bears only on the geometric quantity of \eqref{eq:def:bffi}: the area swept on $\Sigma$ is replaced by the sum of the areas swept on the tangent planes,
\begin{equation} \label{eq:plane_decomp_u}
       \left| \mathcal{M}_i^l \right| \approx \sum_{j \in [N_p]} \left| \mathcal{M}_i^l(P_j) \right|,
\end{equation}
where $\left| \mathcal{M}_i^l(P_j) \right|$ is the area \eqref{eq:def:bffi} computed for the planar transducer $P_j$, evaluated by \Cref{alg:separable_area} instantiated for the planar geometry of \Cref{subsec:plane_surface}.
By linearity of (19), the coefficients of the cylinder are then $\bbf \approx \sum_{j\in[N_p]} \bbf(P_j)$, where $\bbf(P_j)$ denotes the coefficients computed for the planar transducer $P_j$.

This decomposition replaces the evaluation of the incomplete elliptic integral by a sum of the more computationally tractable expressions in \eqref{eq:plane_primitive}.
The resulting computational cost therefore depends on the number of planar transducers $N_p$, while the approximation error depends on how accurately the piecewise planar surface represents the curvature of $\Sigma$.
The choice of $N_p$ thus provides a trade-off between computational efficiency and accuracy.

\section{Numerical experiments}
\label{sec:experiments}

The goal of these numerical experiments is to compare the performance of the different forward operator approximations and to assess their scalability at representative, full-reference scale.
To this end, we consider two experimental settings.

First, in \Cref{subsec:exp1_synthetic}, we simulate a moderate-size problem involving the cylindrical transducer introduced in \Cref{subsec:lib_setup}.
A few transducers are positioned around the imaged region.
Combined with the selected phantom, this setting provides a controlled test of the ability of the approximations to account for the SIR of the transducers, while retaining the main features of a complete imaging scenario.
We compare five operators: Point (\Cref{subsec:point_method}) and the four cylindrical implementations of the surface method, Exact, LUT, Trapezoidal (\Cref{subsec:elliptic_method}) and Piecewise-Planes (\Cref{subsec:plane_method}).
Since the Exact operator remains affordable at this scale, we can directly compare the accuracy and computational cost of each approximation, addressing the practical question of which operator to use for a given computational budget.

Second, in \Cref{subsec:exp2_vessel}, we simulate a realistic reconstruction scenario with the system of \Cref{subsec:lib_setup} at the full reference scale on an anatomically realistic vascular phantom.
We compare the performance of the best model identified in the first experiment against the reconstruction method designed for the system.

Beforehand, \Cref{subsec:methodology} introduces the reconstruction machinery and metrics shared by both.

\subsection{Common methodology}
\label{subsec:methodology}

Rather than comparing the different implementations on the forward problem, for example through operator norms, we evaluate them based on the quality of the reconstructions they produce.
Starting from a phantom $\bp_0^\star$, the corresponding measurements $\bs$ will be generated using a very fine approximation of the forward operator, which is taken as the reference operator.
This reference operator will not be used in the subsequent reconstructions to avoid the inverse crime, and cannot be used in practice due to its prohibitive computational cost.

In all experiments, the radial basis function is chosen as the cone function $\phi(r) = \max(0,1 - \tfrac{r}{\kappa})$, with $\kappa$ set equal to the spacing between adjacent grid points.
This corresponds to the interpolatory stationary regime considered in \cite{ding2020model}.

\paragraph{Reconstruction problem}

Given a candidate forward operator $\bA$ and measurements $\bs$, the initial pressure is reconstructed by solving the Tikhonov-regularized non-negative least-squares problem
\begin{equation} \label{eq:nnls_tik}
       \widehat{\bp}_0 = \argmin_{\bp \geq 0}
       \frac{1}{2}\| \bA\bp - \bs \|_2^2 + \frac{1}{2}\lambda_R \|\bp\|_2^2
\end{equation}
where $\lambda_R \geq 0$ is the regularization parameter.
The non-negativity constraint is motivated by the physics of photoacoustics, as the absorbed optical energy gives rise to a non-negative initial pressure increase $p_0 \ge 0$ (see \Cref{subsec:wave_prop}).

It is important to emphasize that \eqref{eq:nnls_tik} is not intended to provide state-of-the-art reconstruction quality.
Rather, this deliberately simple reconstruction formulation is chosen to minimize the effects introduced by regularization when comparing the different operator implementations.
Obviously, the same objective and solver will be used for every operator, so that any difference in the reconstruction can be mainly attributed to the forward operator $\bA$.

The optimization \eqref{eq:nnls_tik} is solved with the limited-memory quasi-Newton scheme L-BFGS-B \cite{byrd1995limited, zhu1997algorithm}.
The iterations are stopped when either the relative decrease of the objective between two successive iterates, or the largest component of the projected gradient (the first-order optimality measure under the non-negativity constraint) falls below $5 \times 10^{-7}$.
These tight tolerances ensure that the computed solution is sufficiently close to a minimizer of the objective function, making the effect of early stopping negligible.

\paragraph{Metrics}
The reconstruction accuracy will be assessed against the ground-truth phantom $\bp_0^\star$ with two image metrics: peak signal-to-noise ratio (PSNR) and the 3D structural similarity index ($\mathrm{SSIM}$) \cite{wang2004image}.

\paragraph{Hardware and software environment}
All experiments run on one NVIDIA A100 GPU ($80\,\mathrm{GB}$) of the Jean Zay supercomputer (IDRIS, CNRS), with all arrays in double precision.

Following the analysis of \Cref{subsec:comparison}, the memory footprint of the operators is linear in $U$, independent of the reconstruction grid, and identical for the four surface operators at a given $U$.
Our implementation does not exploit the time gating of \Cref{subsec:lib_setup} and stores the full recording of $L = 2\,688$ samples ($43\,\mu\mathrm{s}$) for every transducer, so that the $4KUL$ values amount to $60\,\mathrm{GB}$ at $U = 61$ and $11\,\mathrm{GB}$ at $U = 11$.
The $80\,\mathrm{GB}$ are needed only once, to synthesize the reference measurements of Experiment 2 with the Exact operator at $U = 61$, which peaks at $61\,\mathrm{GB}$.
The reconstruction itself, repeated at every iteration, peaks at $13\,\mathrm{GB}$ with the LUT operator at $U = 11$, including the variables of the solver, and fits on a $16\,\mathrm{GB}$ GPU.

\paragraph{Cylindrical transducer} In both experiments, we use cylindrical transducers that model those of the system described in \Cref{subsec:lib_setup}.
Their main characteristics, as well as their physical implications for image formation, are summarized here.

The transducers have a radius, also referred to as focal distance, of $R = 25\,\mathrm{mm}$ and a half-aperture angle of $\theta_{\max} \approx 8.5^\circ$.
Their center frequency is $F_c = 5\,\mathrm{MHz}$, and their EIR $e$ is modeled as a third-order Butterworth band-pass filter with a bandwidth of $2$--$10\,\mathrm{MHz}$.

A cylindrical transducer has a focal zone centered around its focal point ($\bc_0$ in \Cref{fig:cylinder}), located at the focal distance $R$.
The signals generated by approximately isotropic pressure sources located within this focal region are only weakly affected by the spatial impulse response (SIR), resulting in the highest sensitivity of the detector.
The focal zone is anisotropic, with its dimensions determined by diffraction and primarily governed by the center frequency and aperture of the transducer.
Specifically, this transducer has a focal zone of width $\approx 1 \,\mathrm{mm}$ (along $\be_1 \times \be_2$) and depth of field $\ge 20 \,\mathrm{mm}$ (along $\be_2$).

These properties determine the spatial scales that can be resolved by the system and thus its achievable image resolution.
In particular, in the focal zone, the finest achievable resolution is half the central wavelength, defined as $\frac{\lambda_c}{2} = \frac{c}{2 F_c} = 150\,\mu\mathrm{m}$.

\subsection{Experiment 1: Controlled synthetic study}
\label{subsec:exp1_synthetic}

This experiment has been carefully designed to assess how the various approximations account for the SIR of the transducer and to reveal the differences between them, while remaining of moderate numerical complexity so that all operators can be evaluated with reasonable computational times.

A rotational acquisition in which the cylindrical transducer is rotated around its focal point is considered.
The isotropic sources located close to the rotation axis remain near the focal point for all transducer orientations and are only weakly affected by the SIR.
As the radial distance from the axis increases, the sources progressively move away from the focal point, resulting in a stronger SIR effect.

The experiment thus provides a range of SIR effects on which the different approximations can be compared.

\subsubsection{Setup}
\label{subsec:exp1_setup}

\paragraph{Acquisition geometry}

The acquisition geometry and the domain $\Omega$ are illustrated in \Cref{fig:exp_array}.

The cylindrical transducer is rotated around its focal point (axis of rotation $\be_1$, \Cref{fig:cylinder}) at $43$ azimuthal positions, spaced by $360^\circ/43 \approx 8.4^\circ$, on a cylinder of radius $R=25\,\mathrm{mm}$.
Since each transducer has a half-angular aperture of $\approx 8.5^\circ$, their positions provide a full azimuthal sampling.
This circular arrangement at five positions along the rotation axis ($z$-axis) spans $36.7\,\mathrm{mm}$ and results in $K = 215$ cylindrical transducers.

The domain $\Omega$ is a $6 \times 6 \times 1.05\,\mathrm{mm}^3$ volume. Its dimensions in the $x$-$y$ plane are chosen small enough to reduce computational costs while providing a sufficiently wide range of SIR effects.
The third dimension of the ROI is limited to enable the reconstruction of nearly isotropic sources, given the diffraction limited resolution in this dimension.
The $z$-coordinates of the extreme transducer positions were chosen to cover, at the center frequency, the largest angular aperture given the dimension $L_z$ of the detectors.
Intermediate positions are added to improve the reconstruction in this third dimension.

\begin{figure}[!htb]
       \centering
       \begin{subfigure}[b]{0.36\linewidth}
              \centering
              \includegraphics[width=\linewidth]{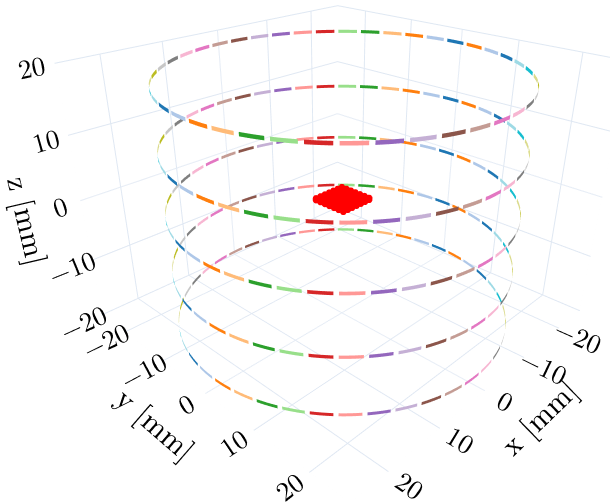}
              \caption{Acquisition geometry.}
              \label{fig:exp_array}
       \end{subfigure}
       \begin{subfigure}[b]{0.31\linewidth}
              \centering
              \includegraphics[width=\linewidth]{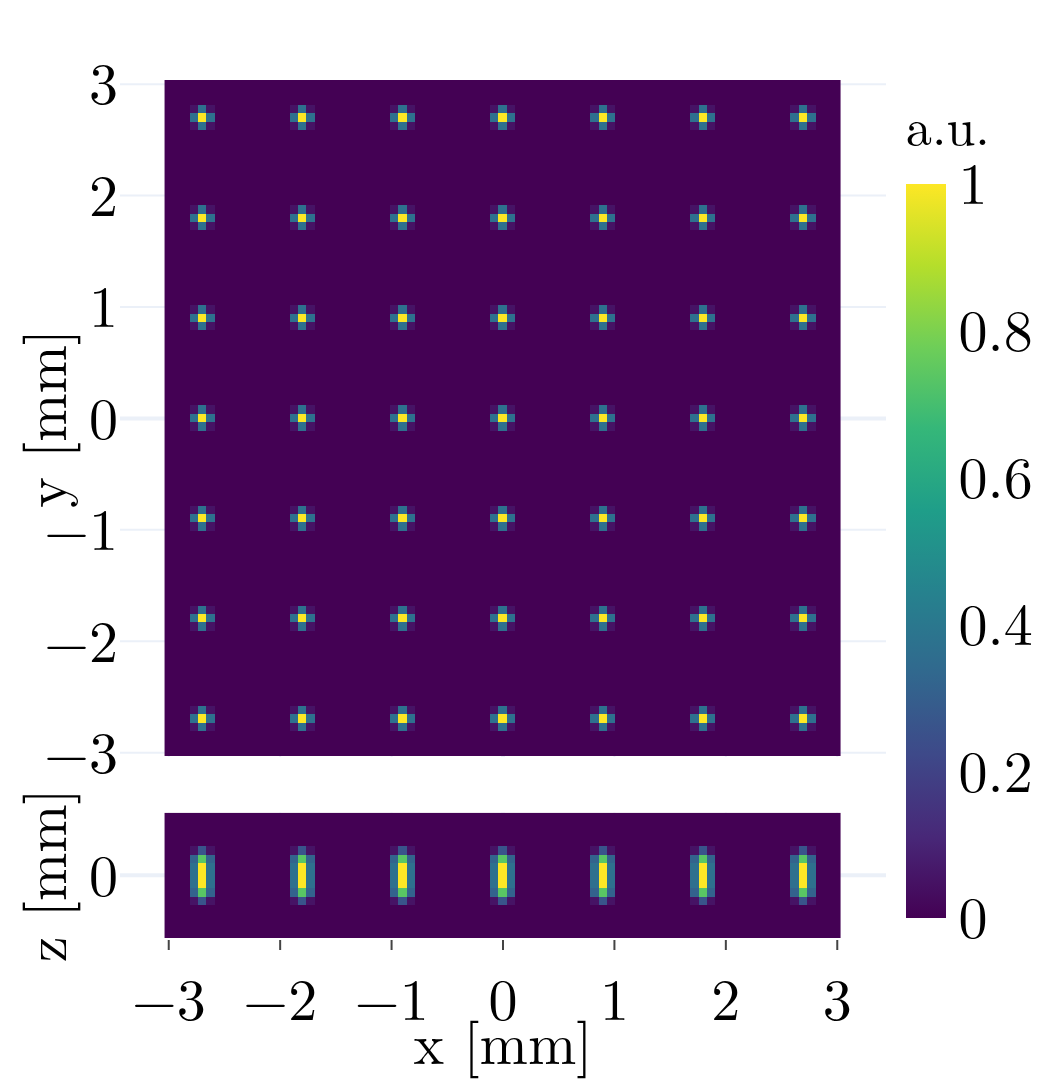}
              \caption{Phantom, fine grid.}
              \label{fig:exp1_phantom_fine}
       \end{subfigure}
       \begin{subfigure}[b]{0.31\linewidth}
              \centering
              \includegraphics[width=\linewidth]{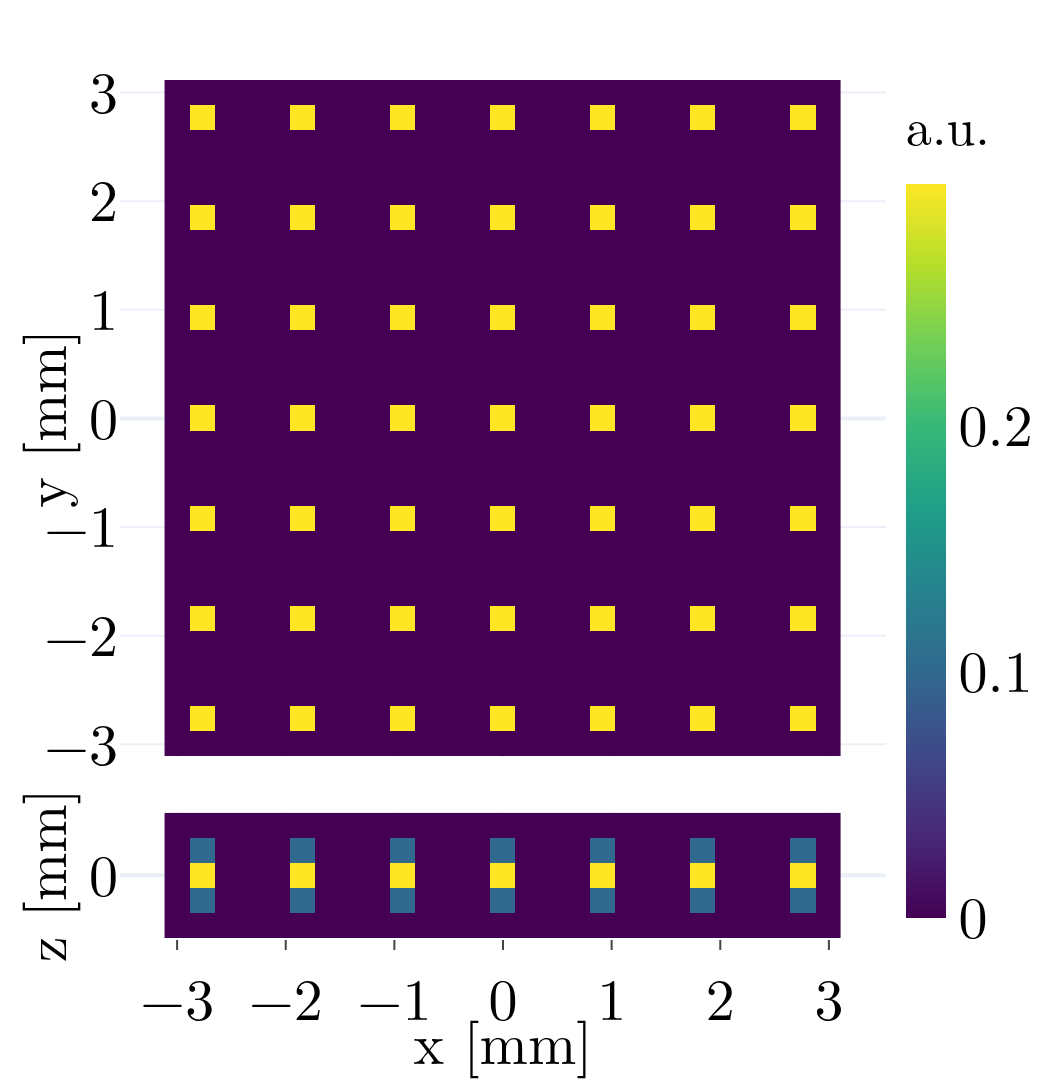}
              \caption{Phantom, coarse grid.}
              \label{fig:exp1_phantom_coarse}
       \end{subfigure}
       \caption{Experiment 1 setup. (a) Acquisition geometry: $K = 215$ elements ($5$ along $z$, $43$ rotations) on a cylinder of radius $25\,\mathrm{mm}$, with the phantom at the center (red). (b), (c) Phantom: a $7 \times 7$ grid of $49$ rods, shown on the fine synthesis grid $(81, 81, 15)$ and the coarse reconstruction grid $(27, 27, 5)$; top row: $z = 0$ slice ($x$--$y$ plane), bottom row: $y = 0$ slice ($x$--$z$ plane). The phantom is invariant under a $90^\circ$ rotation about the $z$ axis, so the $y$--$z$ slice is identical to the $x$--$z$ one and is not shown.}
       \label{fig:exp1_setup}
\end{figure}

\paragraph{Phantom}
We discretize $\Omega$, the $6 \times 6 \times 1.05\,\mathrm{mm}^3$ volume, on two grids.

The first one is an $81 \times 81 \times 15$ grid, yielding an isotropic voxel size of $75\,\mu\mathrm{m}$ in each dimension.
This corresponds to a quarter of the acoustic wavelength in water $\lambda_c = c/F_c = 0.30\,\mathrm{mm}$ at the center frequency $F_c = 5\,\mathrm{MHz}$.
The phantom consists of a $7 \times 7$ grid of absorbing rods centered on the $z=0$ plane and spaced $0.9\,\mathrm{mm}$ apart (\Cref{fig:exp1_phantom_fine}).
Each rod has a smooth paraboloidal absorption profile with an effective support of $225 \times 225 \times 525\,\mu\mathrm{m}^3$, that is $3 \times 3 \times 7$ voxels.
The rod dimensions are chosen to match the angular coverage of the detection geometry and generate photoacoustic signals whose frequency content is mostly contained within the detector EIR bandwidth.
This ensures that the rods can be reconstructed without significant information loss due to the detector response.

The second one is a three times coarser grid $(27,27,5)$ with an isotropic voxel size of $225\,\mu\mathrm{m}$ in each dimension.
On this grid, the ground truth is obtained by averaging the fine phantom over each $3 \times 3 \times 3$ block of voxels (\Cref{fig:exp1_phantom_coarse}).

\paragraph{Reference signals}

For both grids, the reference measurements are synthesized on the fine one by the Exact operator (\Cref{subsec:elliptic_method}) at temporal upsampling $U = 61$, evaluating the areas $|M_i^l|$ on every time interval ($\Delta l = 1$ in \Cref{rmk:coarser_area_grid}), fine enough that its quantization error, of order $(UF_s)^{-1}$ (\Cref{prop:approximation_surface}), is negligible.

The reconstruction on the fine grid deliberately goes beyond the resolution limit of the detectors and is necessary to observe differences in reconstruction quality between the various operators.
This setting is close to an inverse crime since the grid for generation and reconstruction is the same. However, the inverse crime is avoided since the operator generating the signals is not used for reconstruction.

The coarser reconstruction grid is closer to the resolution limit of the detectors and therefore has a voxel size more representative of what would be chosen in a real experiment.
It also introduces a discretization mismatch between the generation and reconstruction models, as would occur in practice, and allows us to compare the five operators under a non-zero model mismatch.

Two noise levels are also tested: noiseless, and additive Gaussian noise with standard deviation equal to $1\%$ of the maximum signal amplitude.

\paragraph{Operator and optimization parameters}

For all reconstruction operators we choose the upsampling factor $U = 11$. This choice is motivated by the following considerations.

The system described in \Cref{subsec:lib_setup} has $K = 11\,520$ transducers, each stored with $L = 2\,688$ time samples.
With $U = 11$, a reconstruction requires approximately $13\,\mathrm{GB}$ of memory, which fits on a medium-size graphics card.
Since we aim to draw conclusions from this experiment that extend to the full-scale problem considered in the second experiment, this is the upsampling we use here too.
Moreover, $U = 11$ makes the quantization error of the operators (\Cref{prop:approximation_points,prop:approximation_surface}) negligible.
In experiments not reported here, larger values of $U$ did not yield any noticeable improvement in reconstruction quality.

The Point operator uses $Q = 625$ quadrature points to discretize each transducer. This corresponds to five points per central wavelength $\lambda_c$ and is necessary to accurately account for acoustic interference.
Increasing $Q$ leads to prohibitive computation times without significant gains in reconstruction quality, whereas smaller values result in degraded reconstructions.

The LUT operator uses the $N_\varsigma = N_\nu = 1000$ tables of \eqref{eq:LUT_sampling} as justified in \Cref{subsec:LUT_method}. The Exact and Trapezoidal operators have no further parameters.
The LUT, Exact and Trapezoidal operators use the strategy of \Cref{rmk:coarser_area_grid} with $\Delta l = U$ which provides the best trade-off in terms of accuracy and computational cost.

The Piecewise-Planes operator uses $N_p = 30$ tangent planes. This choice is made to reduce the curvature errors as much as possible, while keeping computation times below those of the Exact operator.

The regularization parameter $\lambda_R$ of \eqref{eq:nnls_tik} is tuned for each grid-noise configuration by a manual search maximizing the PSNR, and shared by the five operators: $\lambda_R = 10^{-4}$ (fine grid, noiseless), $10^{-3}$ (fine, $1\%$ noise), $10^{-1}$ (coarse, noiseless) and $10^{-1}$ (coarse, $1\%$ noise).
The values increase with the noise level and are larger on the coarse grid where model mismatch occurs and calls for stronger regularization.

\subsubsection{Results}
\label{subsec:exp1_result}

The reconstructed images are presented in \Cref{fig:exp1_recon} and the associated metrics in \Cref{tab:exp1_summary}. \Cref{fig:exp1_SNR_vs_Time} provides a diagram illustrating the accuracy--cost trade-off for all operators.
All images feature the rotational symmetry of the detection geometry and the effects of the SIR are, as expected, larger as the radial distance to the axis increases.

On the fine grid without noise, the reconstructions of Exact, LUT and Trapezoidal operators are visually similar.
The rods are well reconstructed, especially in the center where SIR effects are negligible. A slight decrease of resolution is observed at the corners, due to the SIR effects being larger in this area.
The Exact and LUT operators achieve essentially identical PSNR values, while the Trapezoidal operator shows a slight decrease of $0.3~\mathrm{dB}$.
The Point and Piecewise-Planes reconstructions perform worse, with PSNR decreases of $1.2~\mathrm{dB}$ and $0.7~\mathrm{dB}$, respectively.
This is due to the amplitude overestimation at the central rod, despite its accurate location.

On the fine grid with noise, all reconstructions are visually similar and achieve comparable quantitative metrics.
With the added noise, the measurement signals are corrupted, and the algorithm struggles to invert the high-frequency content, which is progressively more attenuated by the SIR as the radial distance from the axis increases. This expected behavior is reflected in the smear of the rods observed, that is a progressive degradation of the resolution, with increasing radial distance from the axis.

On the coarser grid, the reconstructions given by all operators are similar and are unchanged by the addition of noise.
Similarly to the finer grid, the resolution of the rods progressively degrades with increasing radial distance from the axis.
The reconstructions are degraded compared to the fine grid because of the model mismatch which is larger than the noise level.

Regarding running times, the numbers of iterations between the algorithms are similar, and the timing differences are due to the computation times of the various operators.
The LUT operator is about 10 times faster than the Exact operator and 3 times faster than the Point operator.
The Trapezoidal operator is, however, about 20\% faster than the LUT operator on the fine grid and 40\% to 45\% faster on the coarse one.

Overall, all operators are able to take into account the SIR effects of the transducers. Except for the noiseless fine grid experiments, all operators perform similarly in terms of quality but with different computation times.
Across all experiments the LUT operator matches the accuracy of the Exact operator while being the second-fastest operator.
The Trapezoidal operator is the fastest overall but can lead to slight drops in reconstruction quality, especially visible on the fine grid.
This also suggests that the LUT operator is well-optimized despite requiring the inversion of the non-uniform indices \eqref{eq:LUT_sampling} and two bilinear interpolations, both of which are more expensive than the trapezoidal approximation.
The Piecewise-Planes operator is computationally expensive while not providing the best reconstruction quality.
Approximating the curved element by flat facets, as is customary when no closed form is available~\cite{jensen1992calculation,jensen1996field,wu1999spatial}, is therefore not an efficient strategy once a closed form is known.

\begin{figure}[!htb]
	\centering
	\setlength{\tabcolsep}{1.5pt}
	\newcommand{\rw}{0.235\linewidth}
	\newcommand{\gw}{0.45\linewidth}
	\newcommand{\rp}[1]{\includegraphics[width=\rw]{figures/exp1/#1}}
	\newcommand{\gp}[1]{\includegraphics[width=\gw]{figures/exp1/#1}}
	\newcommand{\rowlab}[1]{\raisebox{0.1\linewidth}{\rotatebox[origin=c]{90}{\small #1}}}
	\begin{tabular}{@{}c cccc@{}}
		& \multicolumn{2}{c}{\gp{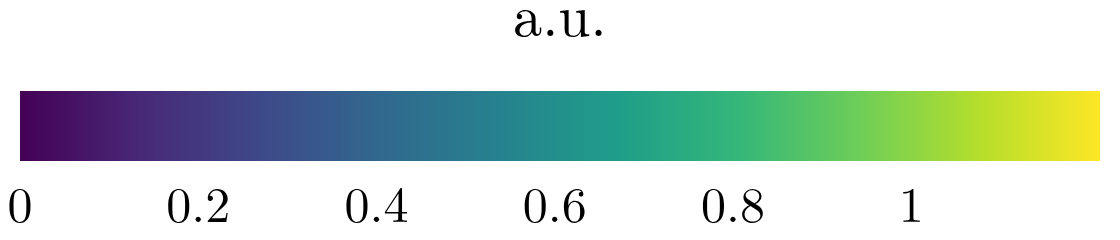}}
		& \multicolumn{2}{c}{\gp{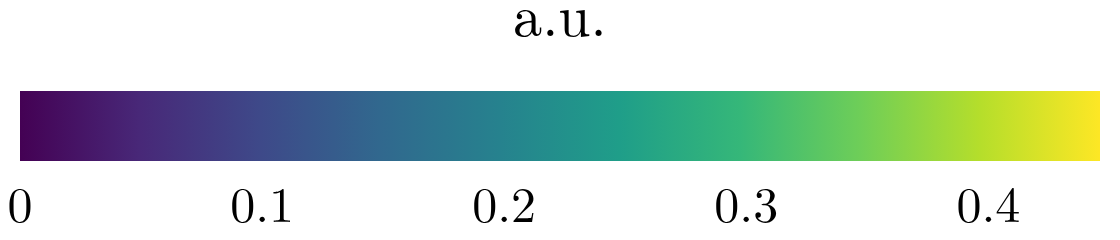}} \\[0pt]
		& \shortstack{Fine noiseless} & \shortstack{Fine $1\%$ noise}
		& \shortstack{Coarse noiseless} & \shortstack{Coarse $1\%$ noise} \\[2pt]
		\rowlab{Point}
			& \rp{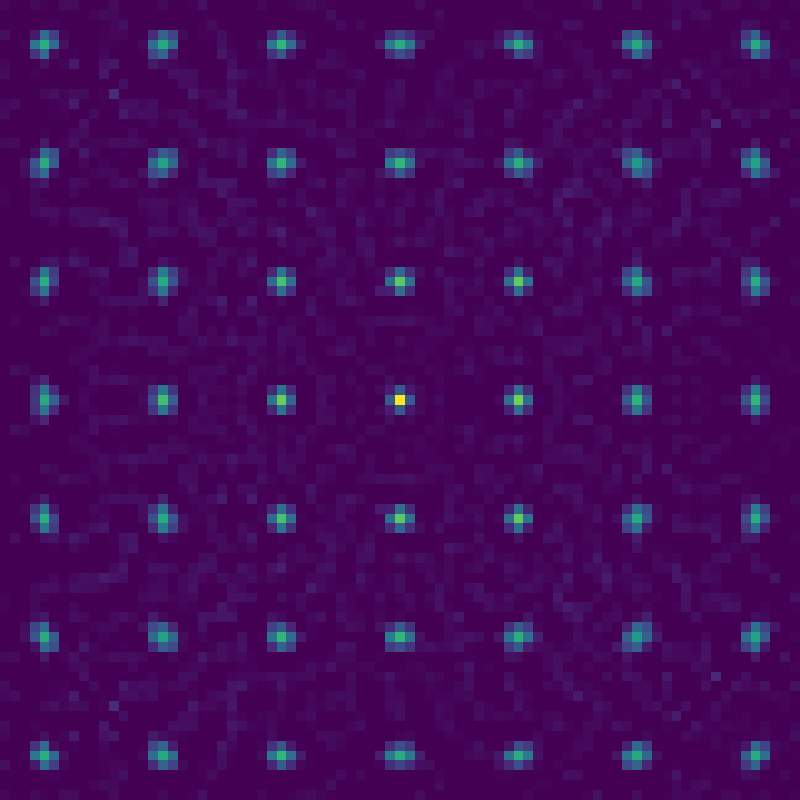} & \rp{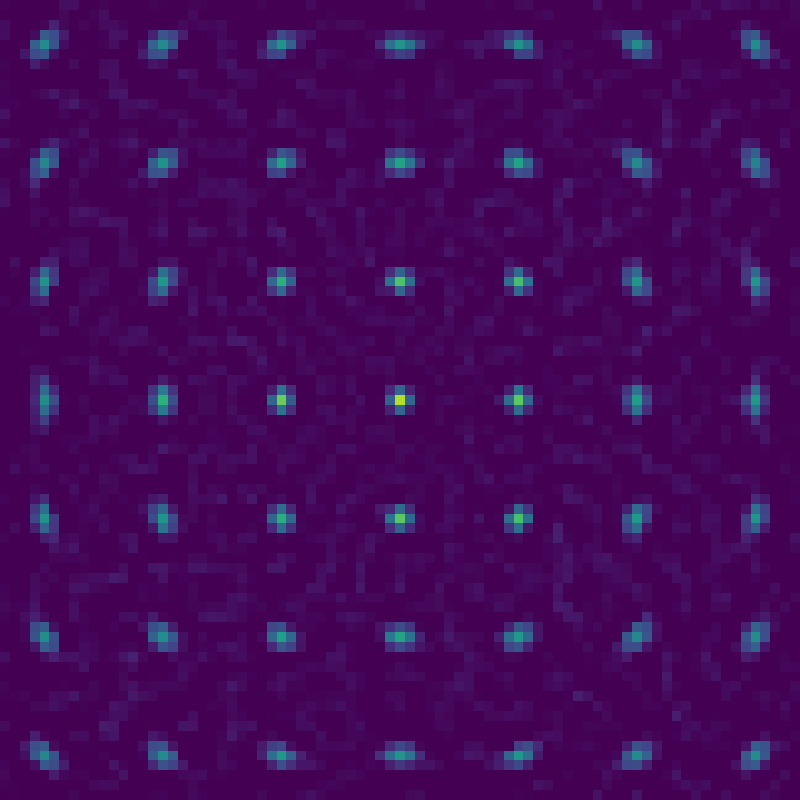}
			& \rp{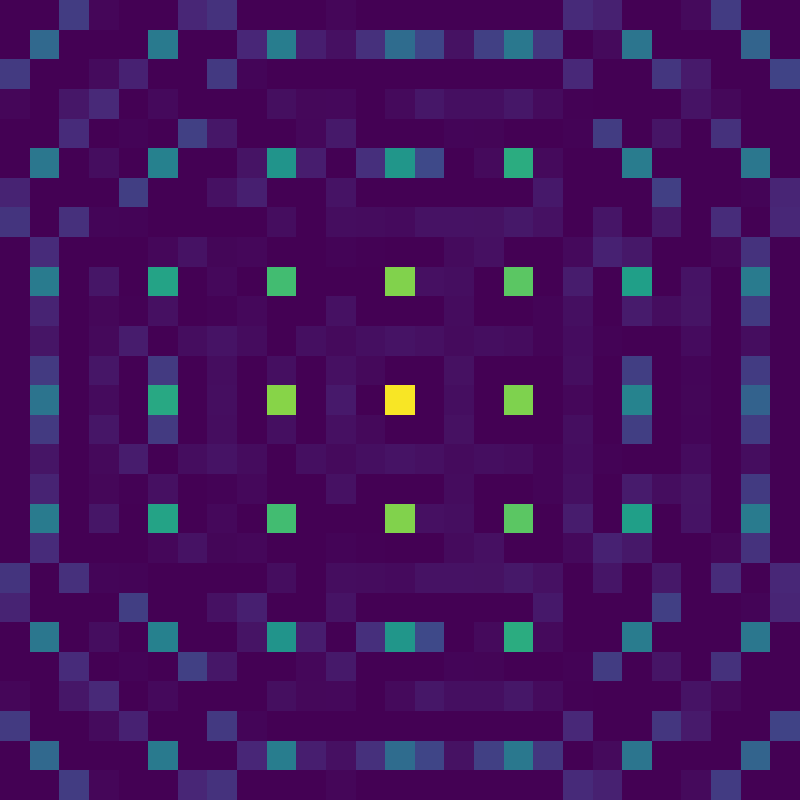} & \rp{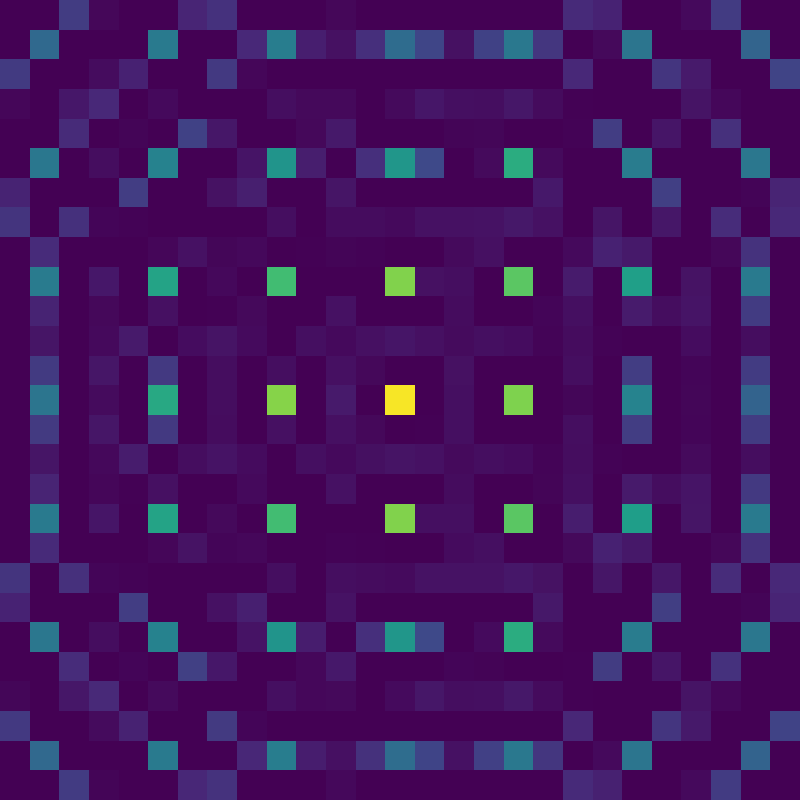} \\[2pt]
		\rowlab{Exact}
			& \rp{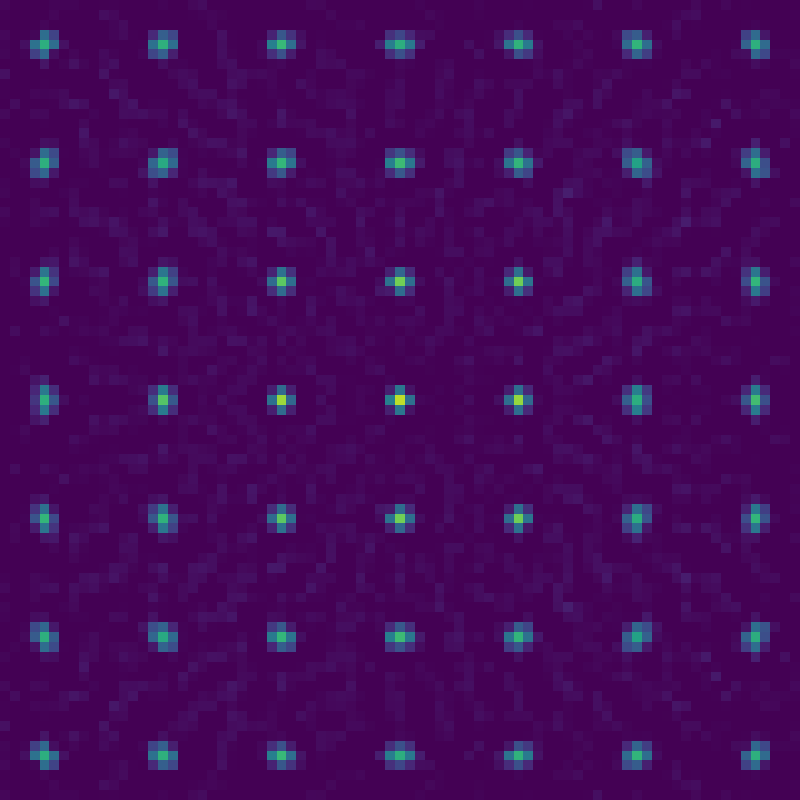} & \rp{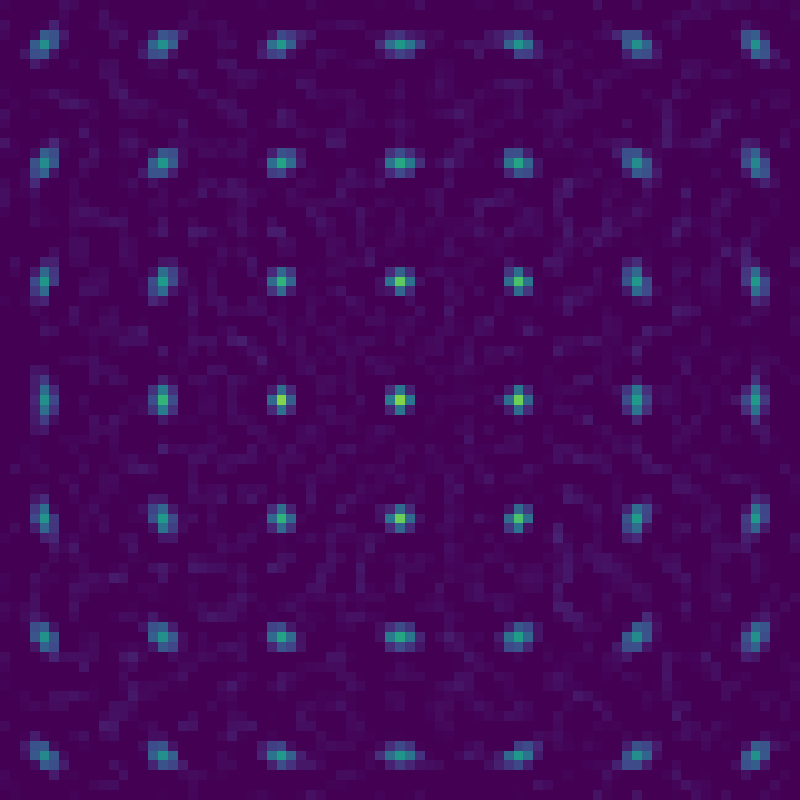}
			& \rp{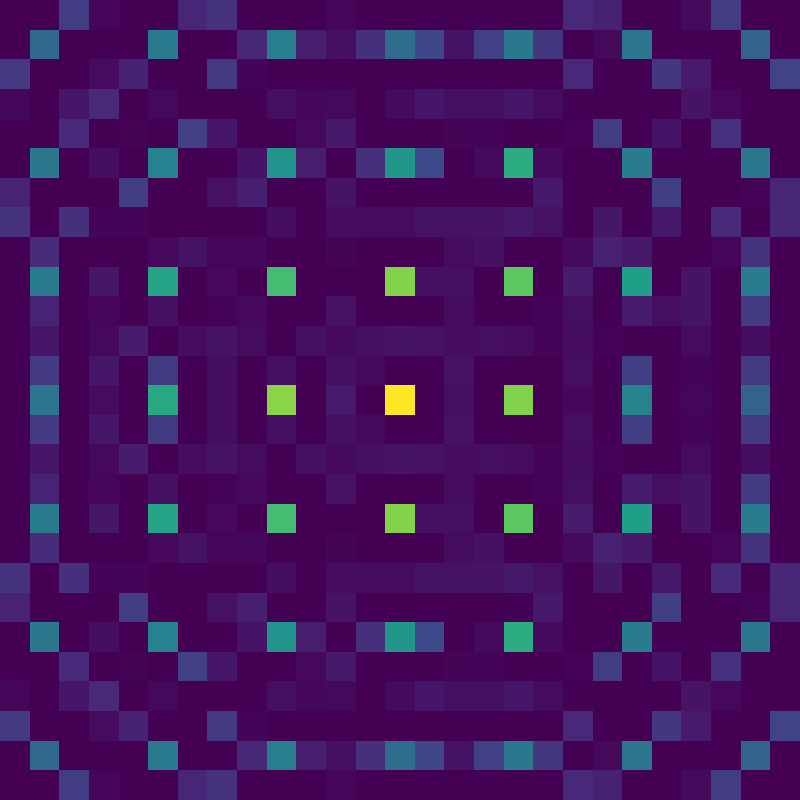} & \rp{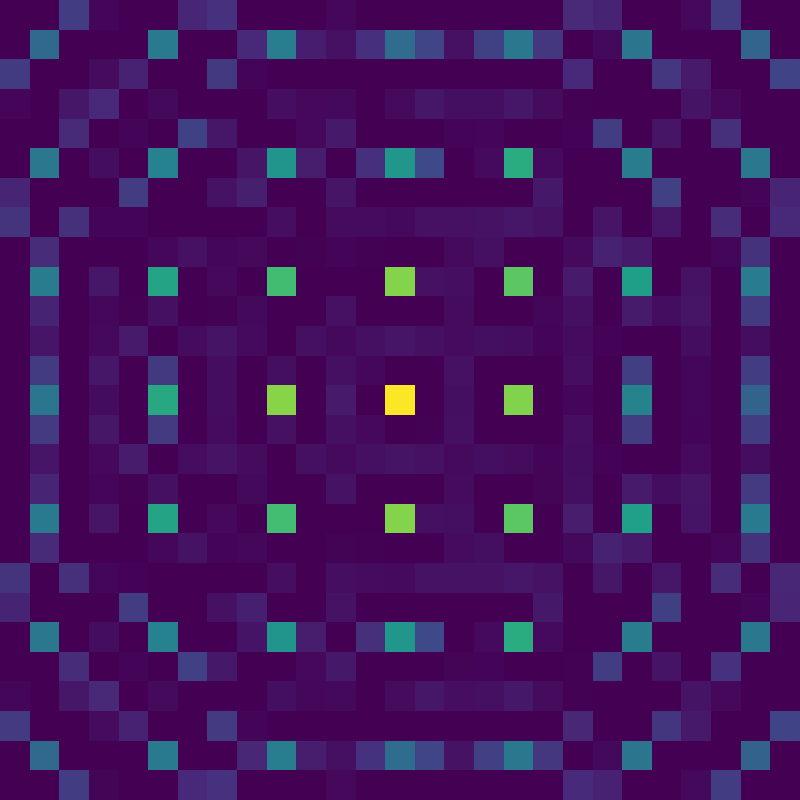} \\[2pt]
		\rowlab{LUT}
			& \rp{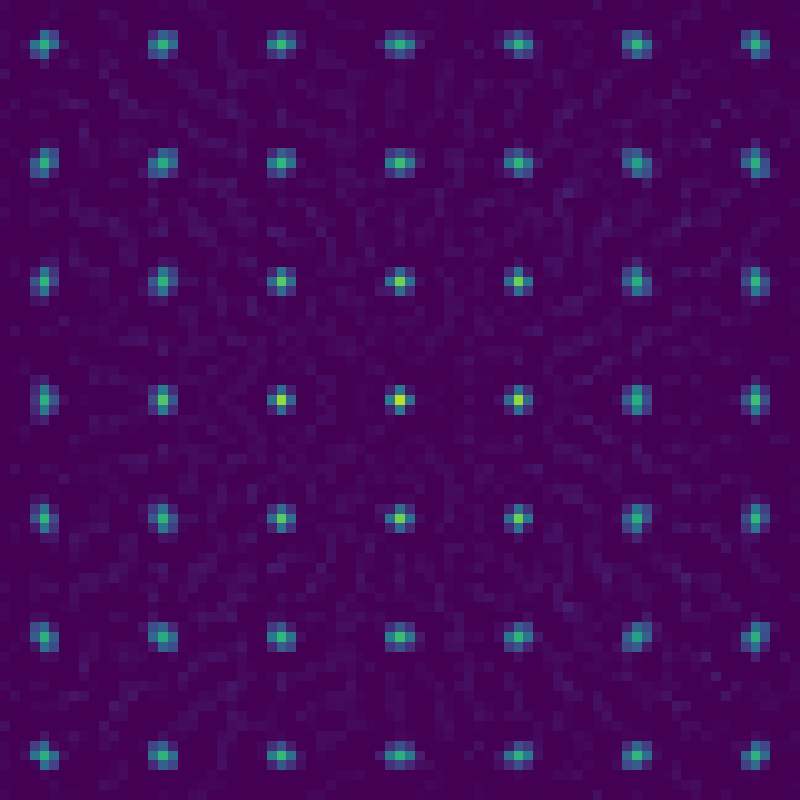} & \rp{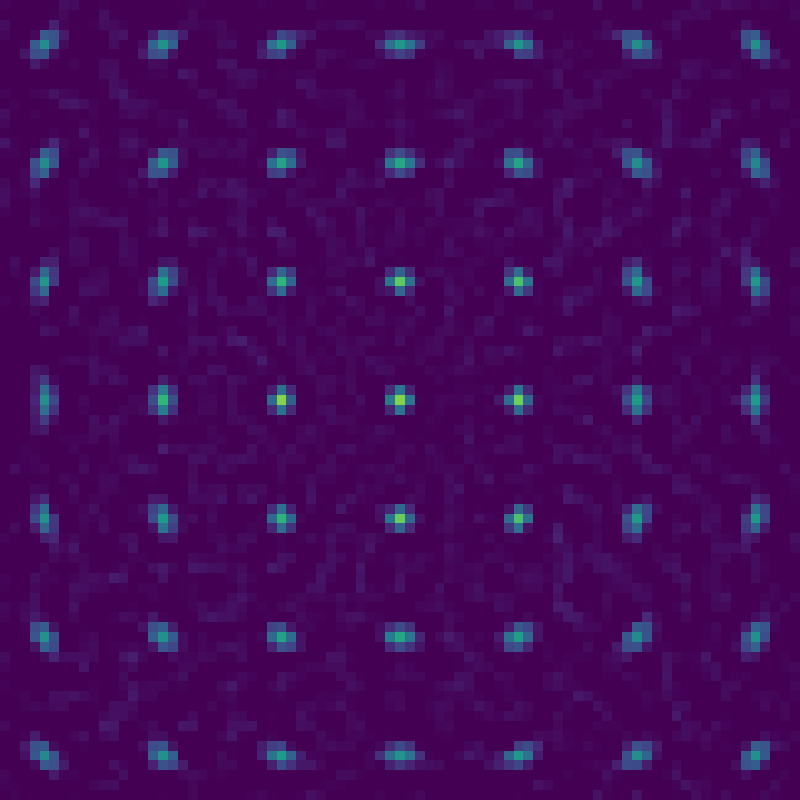}
			& \rp{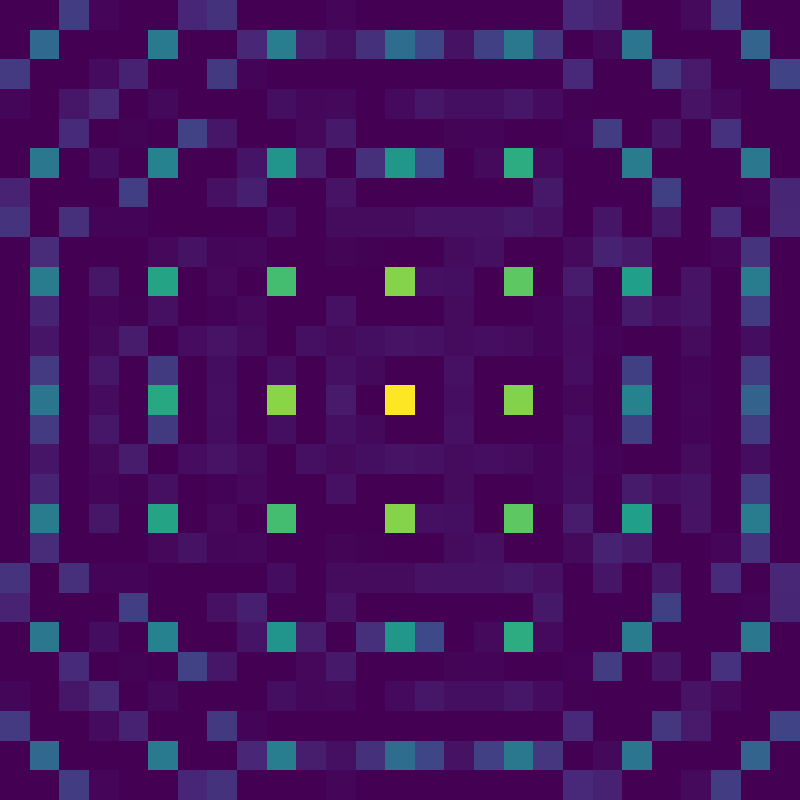} & \rp{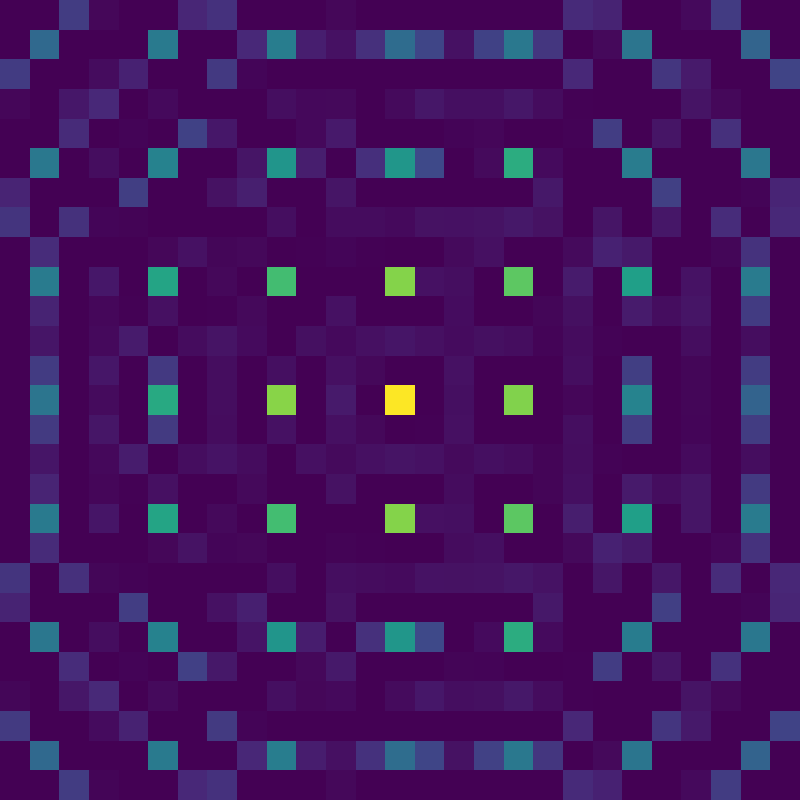} \\[2pt]
		\rowlab{Trapezoidal}
			& \rp{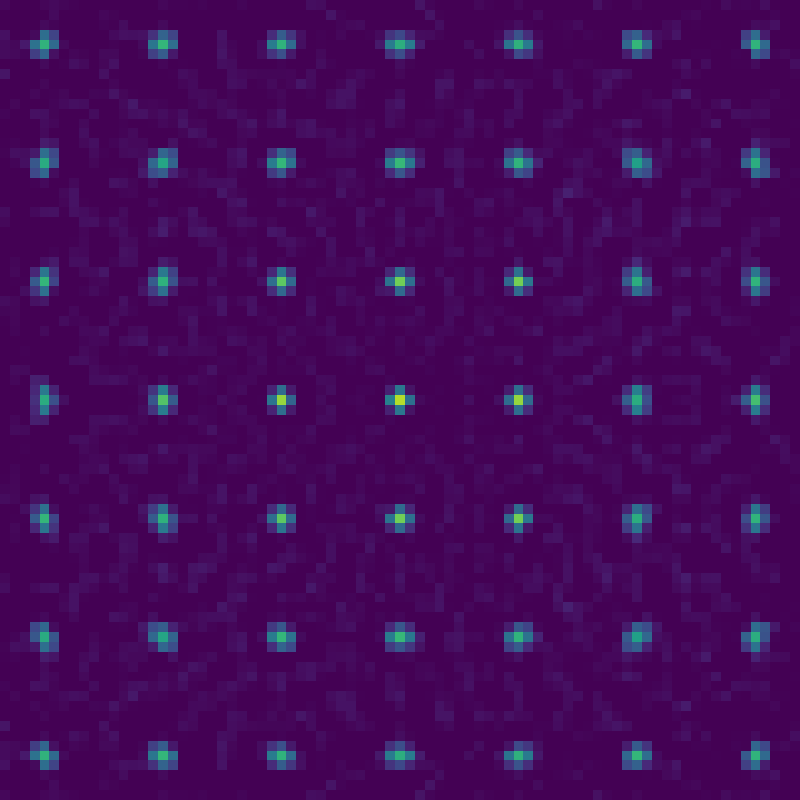} & \rp{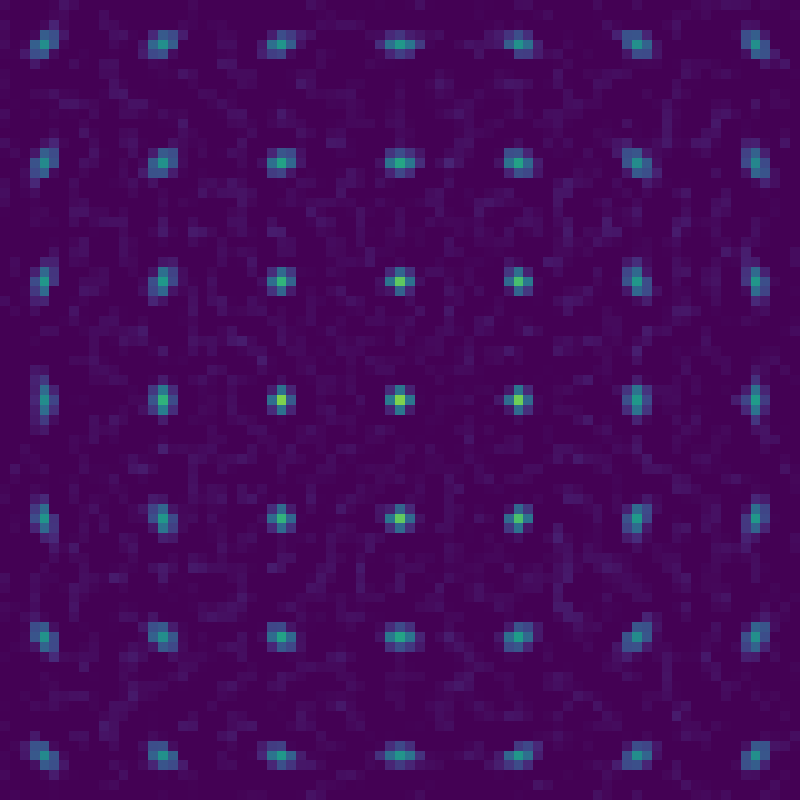}
			& \rp{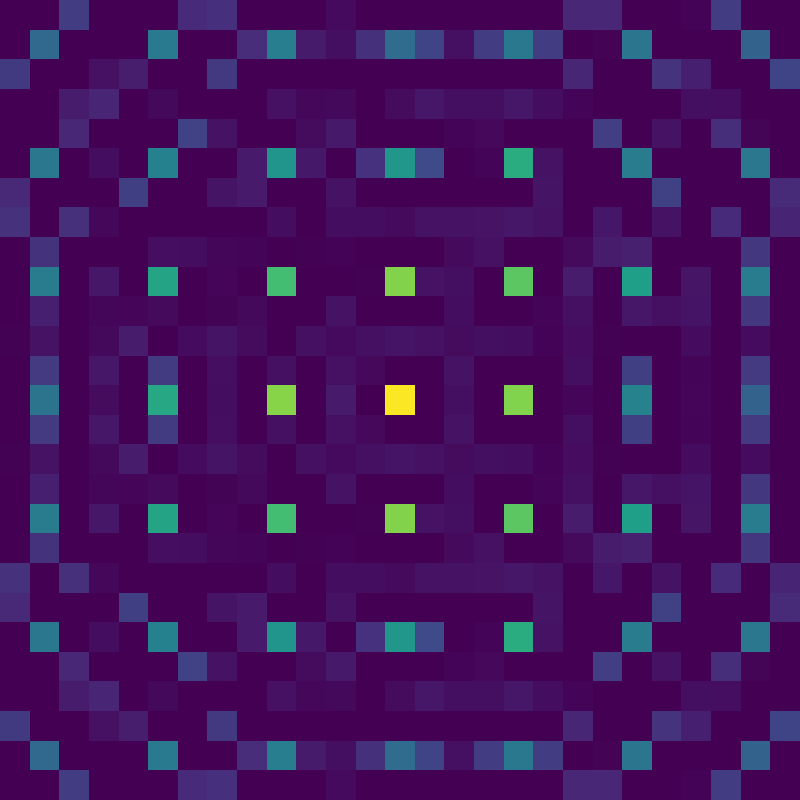} & \rp{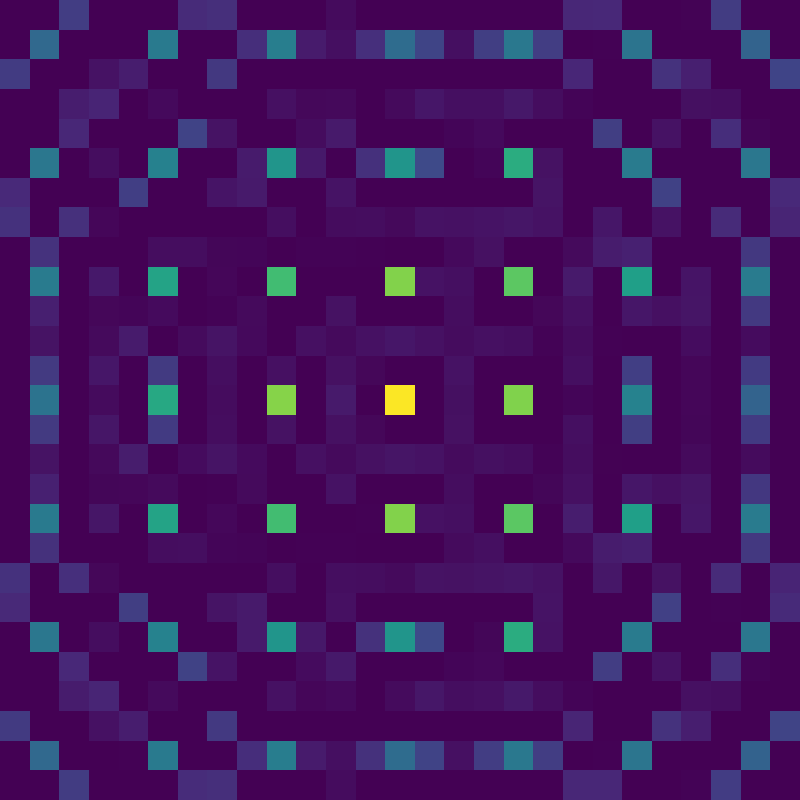} \\[2pt]
		\rowlab{Piecewise-Planes}
			& \rp{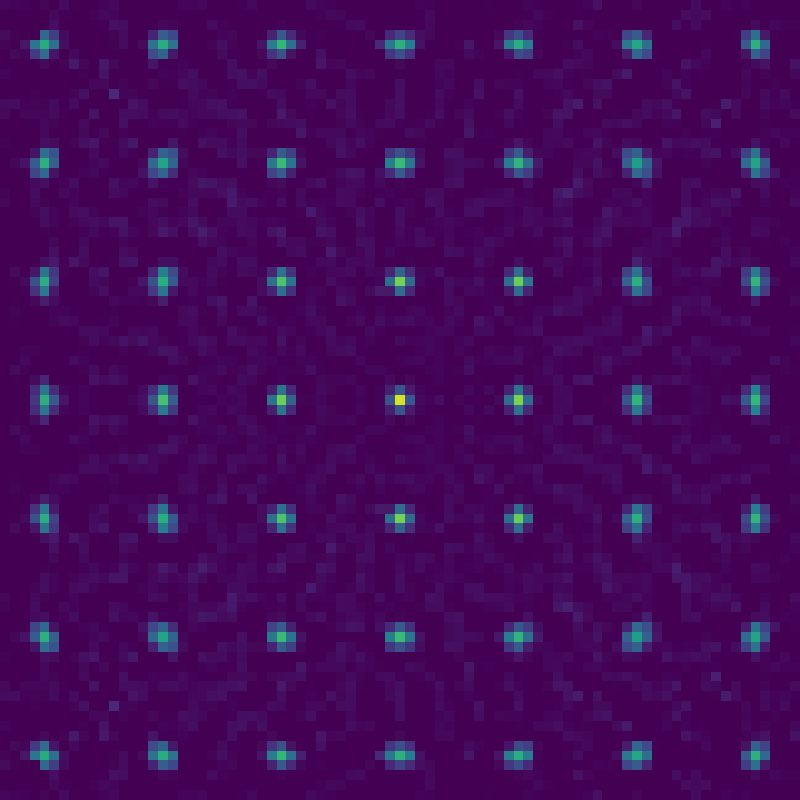} & \rp{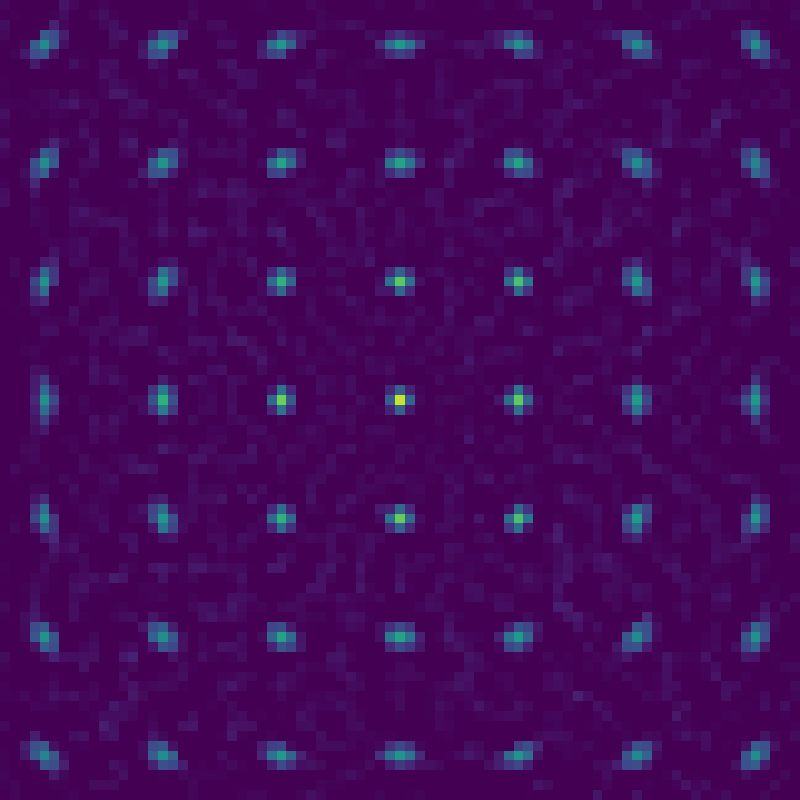}
			& \rp{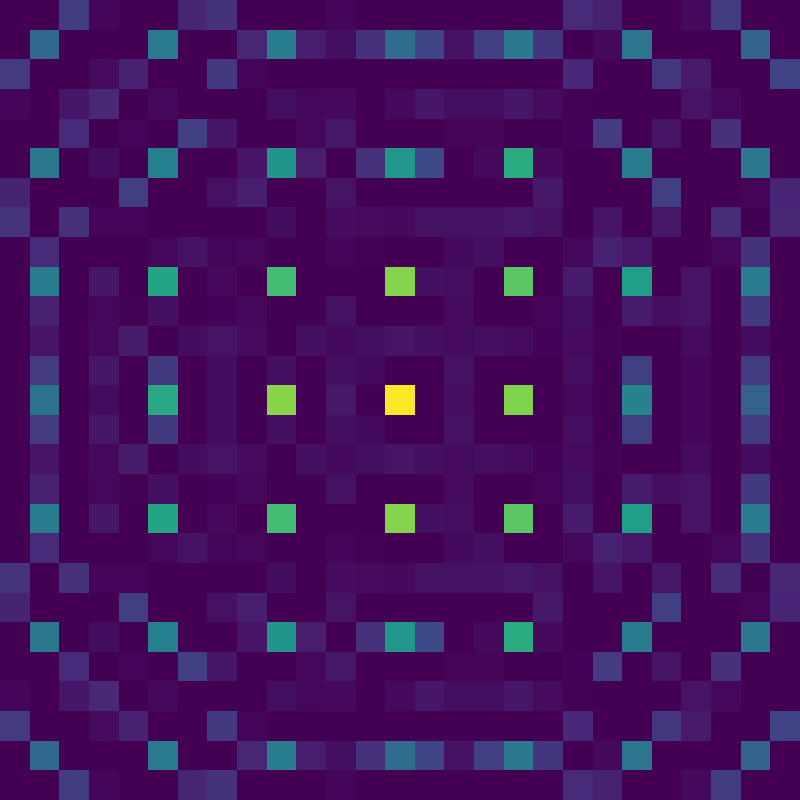} & \rp{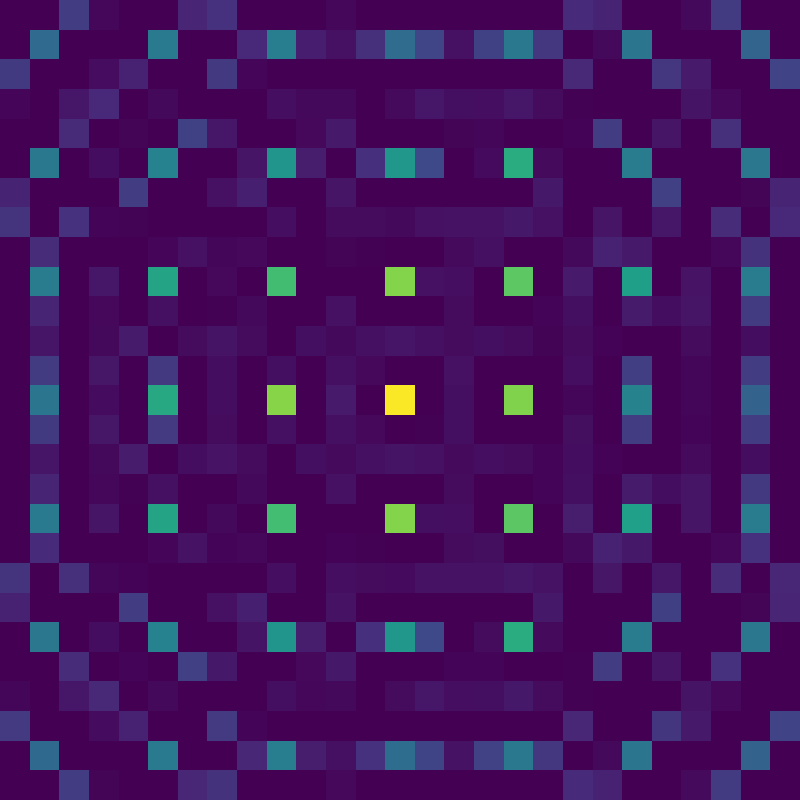} \\
	\end{tabular}
	\caption{Experiment 1 -- central $z = 0$ slice of the five operators' reconstructions (rows), for the fine and coarse grids without and with $1\%$ noise (columns) with shared colorbar for each grid (top).}
	\label{fig:exp1_recon}
\end{figure}

\begin{table}[!htb]
       \centering
       \resizebox{\linewidth}{!}{\setlength{\tabcolsep}{2pt}
\begin{tabular}{l cccc cccc cccc cccc}
    \toprule
    & \multicolumn{8}{c}{Fine grid} & \multicolumn{8}{c}{Coarse grid} \\
    \cmidrule(lr){2-9} \cmidrule(lr){10-17}
    & \multicolumn{4}{c}{noiseless} & \multicolumn{4}{c}{$1\%$ noise}
    & \multicolumn{4}{c}{noiseless} & \multicolumn{4}{c}{$1\%$ noise} \\
    \cmidrule(lr){2-5} \cmidrule(lr){6-9} \cmidrule(lr){10-13} \cmidrule(lr){14-17}
    Method & PSNR & SSIM & Time & \#Ite & PSNR & SSIM & Time & \#Ite & PSNR & SSIM & Time & \#Ite & PSNR & SSIM & Time  & \#Ite \\
    \midrule
    Point            & 31.11 & 0.533 & 101.28 & 190 & 30.55 & 0.574 & 43.11 & 83 & \textbf{24.38} & \textbf{0.718} & 3.03 & 15 & \textbf{24.39} & \textbf{0.717} & 3.03 & 15 \\
    Exact            & \underline{32.35} & \underline{0.602} & 367.88 & 182 & \underline{30.69} & \underline{0.582} & 173.06 & 84 & 24.36 & 0.716 & 8.27 & 15 & 24.36 & \underline{0.716} & 8.81 & 15 \\
    LUT              & \textbf{32.37} & \textbf{0.607} & \underline{35.21} & 183 & \textbf{30.70} & \textbf{0.583} & \underline{15.73} & 84 & 24.36 & 0.716 & \underline{0.83} & 15 & 24.36 & 0.715 & \underline{0.91} & 15 \\
    Trapezoidal      & 32.05 & 0.577 & \textbf{28.12} & 194 & 30.56 & 0.571 & \textbf{12.23} & 84 & 24.32 & 0.714 & \textbf{0.50} & 15 & 24.32 & 0.714 & \textbf{0.50} & 15 \\
    Piecewise        & 31.67 & 0.555 & 233.55 & 186 & 30.55 & 0.567 & 100.68 & 80 & \underline{24.37} & \underline{0.717} & 5.39 & 15 & \underline{24.37} & \underline{0.716} & 5.75 & 15 \\
    \bottomrule
\end{tabular}
}
       \caption{Experiment 1 -- PSNR (dB), SSIM, reconstruction time (s) and iteration count of the five forward operators, on the fine and coarse grids, without and with $1\%$ noise. In each metric column, the best value is in \textbf{bold} and the second best is \underline{underlined} (for time, best means shortest).}
	\label{tab:exp1_summary}
\end{table}

\begin{figure}[!htb]
       \centering
       \begin{tikzpicture}

\definecolor{opPoint}{HTML}{0072B2}
\definecolor{opExact}{HTML}{D55E00}
\definecolor{opLUT}{HTML}{009E73}
\definecolor{opFar}{HTML}{E69F00}
\definecolor{opPlanes}{HTML}{CC79A7}
\pgfplotsset{
  point/.style    ={only marks, color=opPoint, mark=*,        mark size=2.4pt},
  exact/.style    ={only marks, color=opExact, mark=square*,   mark size=2.3pt},
  lut/.style      ={only marks, color=opLUT,   mark=triangle*, mark size=3pt},
  farfield/.style ={only marks, color=opFar,   mark=diamond*,  mark size=3pt},
  planes/.style   ={only marks, color=opPlanes,mark=triangle*, mark options={rotate=180}, mark size=3pt},
}

\begin{groupplot}[
    group style={group size=2 by 1, horizontal sep=2.5cm, vertical sep=1cm},
    width=7cm, height=5.5cm,
    xmode=log, log basis x=10,
    xminorticks=true,
    major tick length=4pt, minor tick length=2.4pt,
    xlabel={time [s]}, ylabel={PSNR [dB]},
    grid=both, major grid style={gray!25}, minor grid style={gray!12},
    tick align=outside,
    title style={font=\normalsize, yshift=-1pt},
]

\nextgroupplot[title={Fine grid --- noiseless},
    xmin=10, xmax=600, xtick={10,100},
    minor xtick={20,30,40,50,60,70,80,90,200,300,400,500,600},
    legend to name=snrlegend, legend columns=5,
    legend style={/tikz/every even column/.append style={column sep=6pt}},
    legend image post style={mark size=2.6pt}]
    \addplot[point]    coordinates {(101.28, 31.11)}; \addlegendentry{Point}
    \addplot[exact]    coordinates {(367.88, 32.35)}; \addlegendentry{Exact}
    \addplot[lut]      coordinates {( 35.21, 32.37)}; \addlegendentry{LUT}
    \addplot[farfield] coordinates {( 28.12, 32.05)}; \addlegendentry{Trapezoidal}
    \addplot[planes]   coordinates {(233.55, 31.67)}; \addlegendentry{Piecewise-Planes}

\nextgroupplot[title={Fine grid --- $1\%$ noise},
    xmin=10, xmax=300, xtick={10,100},
    minor xtick={20,30,40,50,60,70,80,90,200}]
    \addplot[point]    coordinates {( 43.11, 30.55)};
    \addplot[exact]    coordinates {(173.06, 30.69)};
    \addplot[lut]      coordinates {( 15.73, 30.7)};
    \addplot[farfield] coordinates {( 12.23, 30.56)};
    \addplot[planes]   coordinates {(100.68, 30.55)};

\end{groupplot}

\node[anchor=south] at ($(group c1r1.north)!0.5!(group c2r1.north) + (0,0.9cm)$)
     {\pgfplotslegendfromname{snrlegend}};

\end{tikzpicture}
       \caption{Experiment 1 -- accuracy--cost trade-off. Reconstruction PSNR against computational time (log scale) for the five operators, on the fine grid, without and with $1\%$ noise. Up and to the left is better. The coarse grid diagrams are not reported since the reconstruction qualities are comparable.}
       \label{fig:exp1_SNR_vs_Time}
\end{figure}

\subsection{Experiment 2: Mimicking a vascular imaging experiment with a linear ultrasound array}
\label{subsec:exp2_vessel}

The goal of this experiment is to assess the performance of the reconstructions based on the LUT operator in a realistic full-scale imaging scenario.
The reconstruction is compared to a modified back-projection reconstruction that has been tailored to the acquisition system \cite{linger2023volumetric}.

\subsubsection{Setup}
\label{subsec:exp2_setup}

\paragraph{Acquisition geometry}
We simulate the full translate-rotate acquisition of the reference system described in \Cref{subsec:lib_setup}.
The imaging domain $\Omega$ is a cube of side length $10\,\mathrm{mm}$.
The system consists of a single ultrasound probe of $64$ elements, which is rotated over $12$ angles and translated over $15$ locations \cite{linger2023volumetric}.
This results in $K = 11\,520$ cylindrical transducers described in \Cref{subsec:methodology}.
Their locations and orientations are illustrated in \Cref{fig:exp2_setup}.

The transducers are located approximately at $25\,\mathrm{mm}$ above the center of the reconstruction grid $\Omega$, such that their focal zone lies within $\Omega$.
The transducer positions were extracted from an experimental acquisition.
Unlike the full azimuthal sampling of Experiment 1, all probe positions are on the same side of $\Omega$, while the rotation angles span a range of $70^\circ$. This acquisition is therefore considered limited-view.

\begin{figure}[!htb]
	\centering
	\begin{subfigure}[b]{0.3\linewidth}
		\centering
		\includegraphics[width=\linewidth]{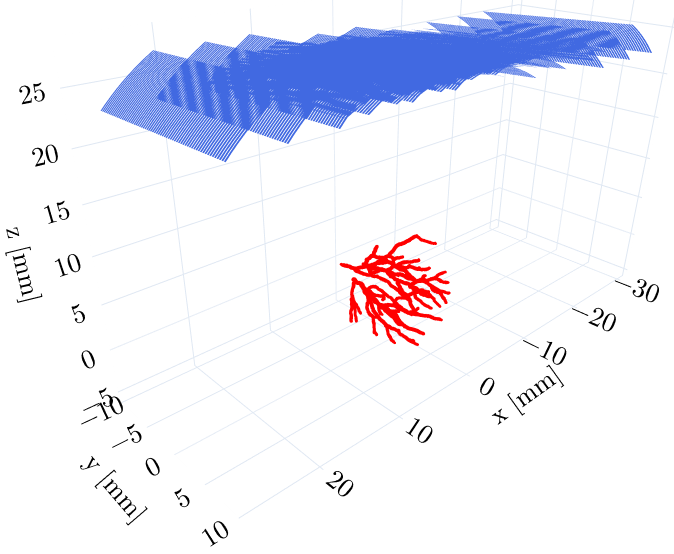}
		\caption{Acquisition geometry}
		\label{fig:exp2_setup1}
	\end{subfigure}
	\begin{subfigure}[b]{0.4\linewidth}
		\centering
		\includegraphics[width=\linewidth]{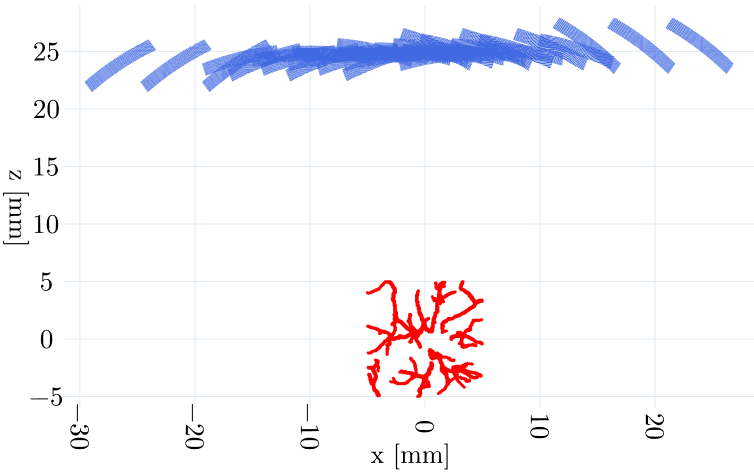}
		\caption{Acquisition geometry (front view)}
		\label{fig:exp2_setup2}
	\end{subfigure}
	\begin{subfigure}[b]{0.28\linewidth}
		\centering
		\includegraphics[width=\linewidth]{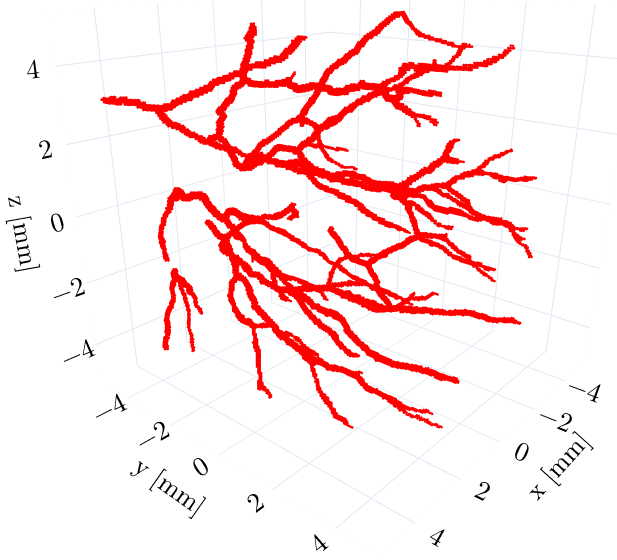}
		\caption{Phantom}
		\label{fig:exp2_phantom}
	\end{subfigure}
	\caption{Experiment 2 setup. (a) The $64$-element array surfaces (blue) at the $12 \times 15$ translate-rotate positions, $25\,\mathrm{mm}$ above the phantom (red). (b) Front view ($x$--$z$ plane): the cylindrically focused elements are oriented toward the imaged ROI and cover a limited angular sector above the phantom. (c) Phantom: vessel tree of a HiP-CT human kidney in a $10\,\mathrm{mm}$ cube.}
	\label{fig:exp2_setup}
\end{figure}

\paragraph{Phantom}
The domain $\Omega$ is discretized on a grid of $200^3$ voxels. This results in $N = 8$ million voxels of side length $50\,\mu\mathrm{m}$.
The domain is filled with a region extracted from the Hierarchical Phase-Contrast Tomography (HiP-CT) human kidney from the blood-vessel segmentation dataset \cite{blood-vessel-segmentation}.
The resulting phantom is illustrated in \Cref{fig:exp2_phantom}.

We selected a region in which the vessels are predominantly parallel to the $y$ axis.
This choice mitigates the poor reconstruction of structures oriented along the $z$ axis that can arise from the limited-view acquisition.
More specifically, the directions of those vessels are outside the angular sector covered by the detection geometry. The corresponding information is therefore lost in the measurements and cannot be used for the reconstruction.

The binary vessel labels are used as the initial pressure distribution, with the vessel diameter capped at three voxels ($\approx 150\,\mu\mathrm{m}$) to ensure that the photoacoustic signals fall within the EIR bandwidth.

\paragraph{Reference signals}
As in Experiment 1, reference measurements are synthesized by the Exact operator at temporal upsampling $U = 61$.
At this scale, the single application takes $\approx 2\,\mathrm{h}$ on one NVIDIA A100 GPU, and this figure already exploits the sparsity of the phantom by restricting the sum over voxels in \Cref{alg:surface_based} to the nonzero entries of $\bp_0$ ($0.30\%$ of the volume).
This is prohibitive inside an iterative solver that applies the operator tens of times, on iterates that are moreover dense.
Additive Gaussian noise with standard deviation equal to $1\%$ of the maximum signal amplitude is added to the measurements.

\paragraph{Operator and optimization parameters}
The reconstruction algorithm is run with the LUT operator with the same parameters as in Experiment 1 ($U = 11$, $N_\varsigma = N_\nu = 1000$ tables).
The resulting reconstruction will be compared to a back-projection (BP) algorithm specifically tailored for this imaging configuration, as detailed in \cite{linger2023volumetric}.
The BP is considered the standard non-iterative baseline.
The regularization parameter is $\lambda_R = 5 \times 10^{-3}$ and the solver uses the stopping tolerances defined in \Cref{subsec:methodology}.

\subsubsection{Results}
\label{subsec:exp2_result}
\Cref{fig:exp2_recon} displays side-by-side the Maximum Intensity Projection (MIP) images of the reconstructions obtained with the BP algorithm and the LUT operator, together with the ground truth phantom.
An interactive 3D rendering of the ground truth and both reconstructions is available at \url{https://trungthai1771-patminton-vessel-itk.static.hf.space}.

\paragraph{Qualitative analysis}

A qualitative analysis already reveals several artifacts that highlight the improved reconstruction quality of the model-based approach.

First, the BP reconstruction is blurred, with a visible loss of resolution around each vessel.
This may result from the low-pass filtering induced by the SIR, which is not accounted for by the BP algorithm.
In contrast, the vessels reconstructed with the model-based approach appear sharper.

Second, the model-based reconstruction is quantitative, meaning that its intensities reach the ground-truth level (maximum $\approx 1.1$ for a ground truth of $1$).
The BP amplitudes are not directly comparable since BP is designed to restore only relative, not absolute, amplitudes.

Third, BP fragments some vessels into disconnected blobs, whereas the model-based reconstruction recovers them as continuous structures. This can be observed, for example, for the vessels indicated by arrows labeled A in \Cref{fig:exp2_recon}.

Finally, the limited-view acquisition particularly affects vessels with a significant $z$ component. The model-based reconstruction partially mitigates this effect: the vessels indicated by arrows B, with a small $z$ component, remain visible, whereas those marked by arrows C, with a larger $z$ component, are only barely recovered. In the BP reconstruction, neither set of vessels is recovered.

\begin{figure}[!htb]
	\centering
	\setlength{\tabcolsep}{1.5pt}
	\newlength{\panelw}\setlength{\panelw}{0.29\linewidth}
	\newcommand{\rp}[1]{\includegraphics[width=\panelw]{figures/exp2/#1}}
	\newcommand{\cb}[1]{\includegraphics[height=\panelw]{figures/exp2/#1}}
	\newcommand{\rpgt}[3]{%
		\begin{tikzpicture}
			\node[anchor=south west, inner sep=0] at (0,0) {\rp{#1}};
			\begin{scope}[font=\scriptsize]
				\draw[black, -{Stealth[length=5.2pt]}, line width=1.6pt] (0.025\panelw,0.025\panelw) -- ++(0.1\panelw,0);
				\draw[black, -{Stealth[length=5.2pt]}, line width=1.6pt] (0.025\panelw,0.025\panelw) -- ++(0,0.1\panelw);
				\draw[cyan, -{Stealth[length=4pt]}, line width=0.7pt] (0.025\panelw,0.025\panelw) -- ++(0.1\panelw,0) node[anchor=west, inner sep=1pt] {\contour{black}{#2}};
				\draw[cyan, -{Stealth[length=4pt]}, line width=0.7pt] (0.025\panelw,0.025\panelw) -- ++(0,0.1\panelw) node[anchor=south, inner sep=1pt] {\contour{black}{#3}};
				\draw[black, line width=2.4pt] (0.79\panelw,0.02\panelw) -- ++(0.2\panelw,0);
				\draw[cyan, line width=1.2pt] (0.79\panelw,0.02\panelw) -- ++(0.2\panelw,0) node[midway, anchor=south, inner sep=1pt] {\color{cyan}\contour{black}{2\,mm}};
			\end{scope}
		\end{tikzpicture}}
	\newcommand{\rowlab}[1]{\raisebox{0.14\linewidth}{\rotatebox[origin=c]{90}{\small #1}}}
	\begin{tabular}{@{}c ccc c@{}}
		& \small MIP $z$ ($x$--$y$) & \small MIP $y$ ($x$--$z$) & \small MIP $x$ ($y$--$z$) & \\[2pt]
		\rowlab{Ground truth} & \rpgt{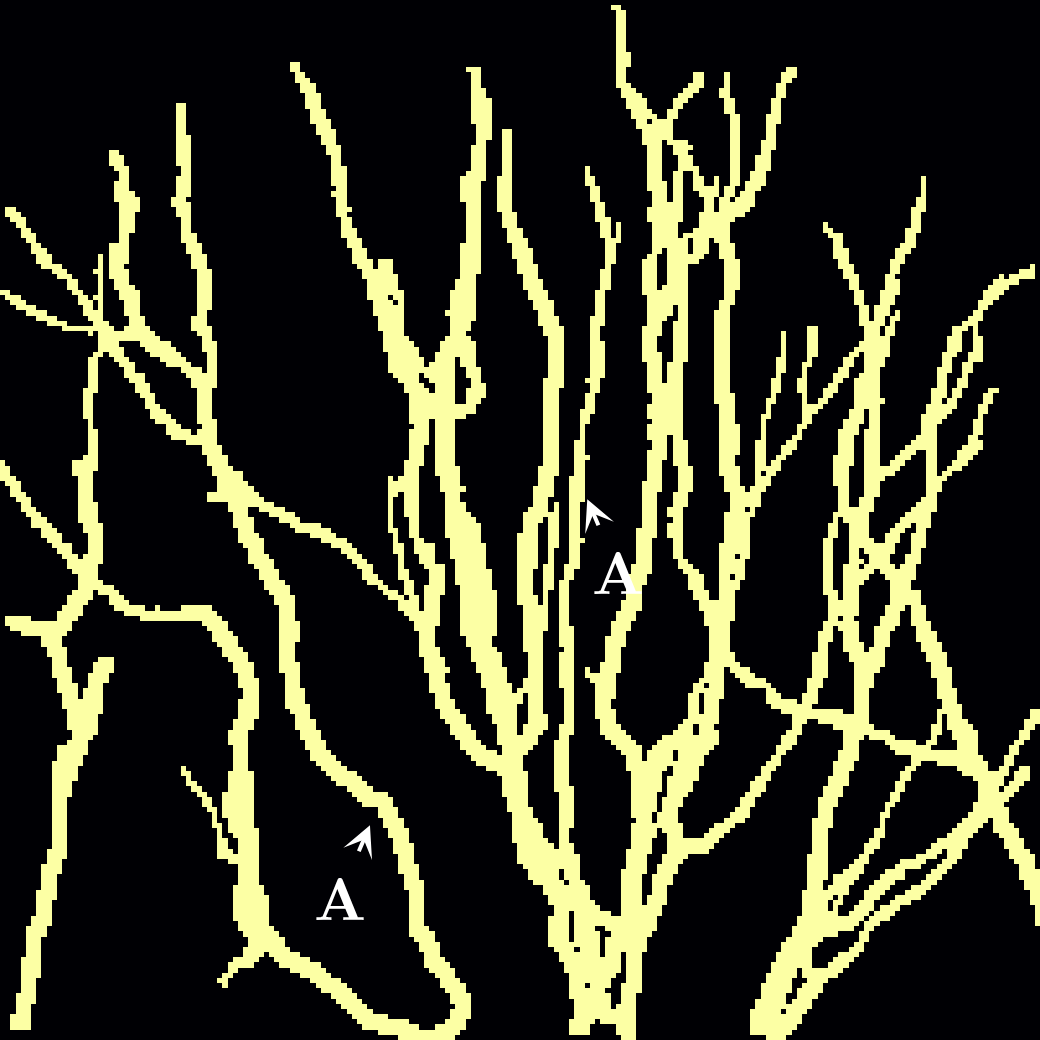}{$x$}{$y$} & \rpgt{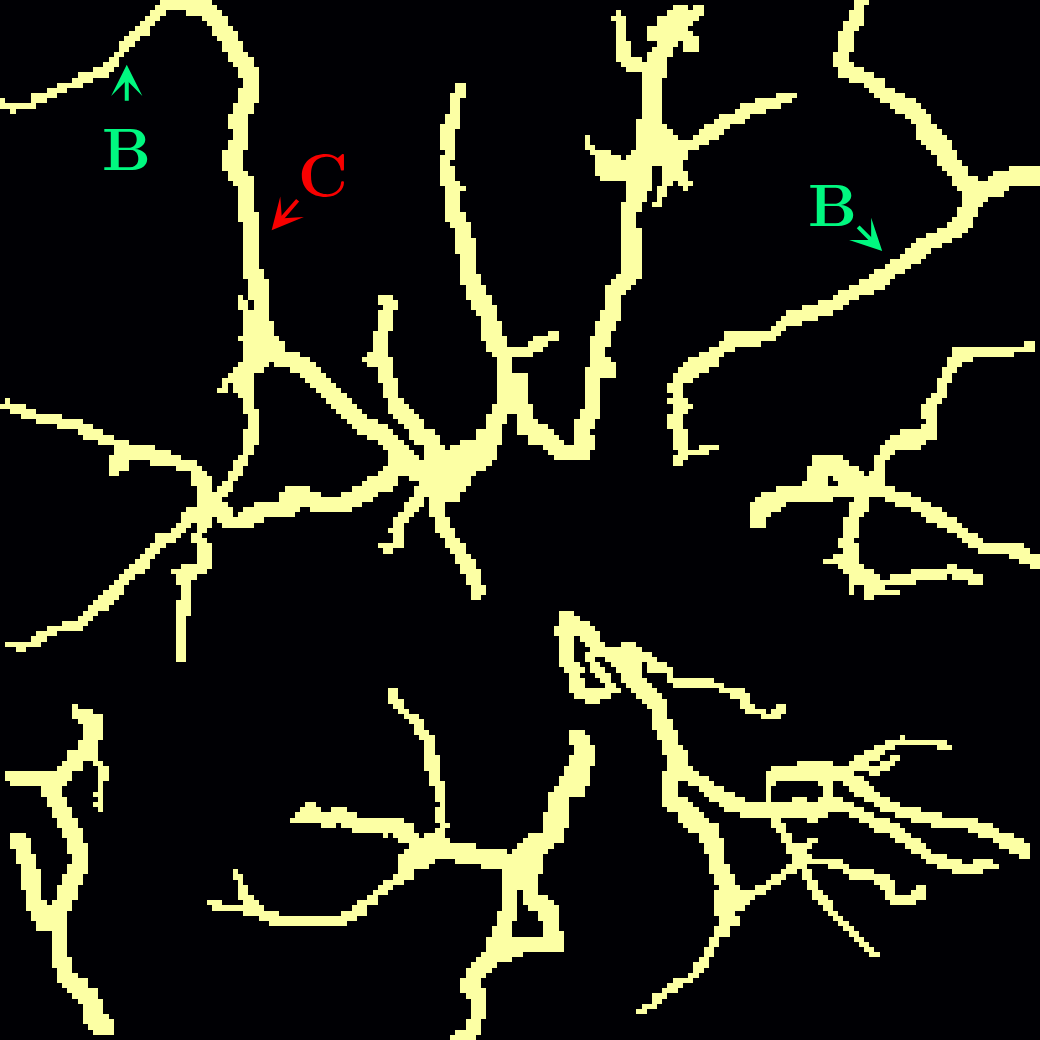}{$x$}{$z$} & \rpgt{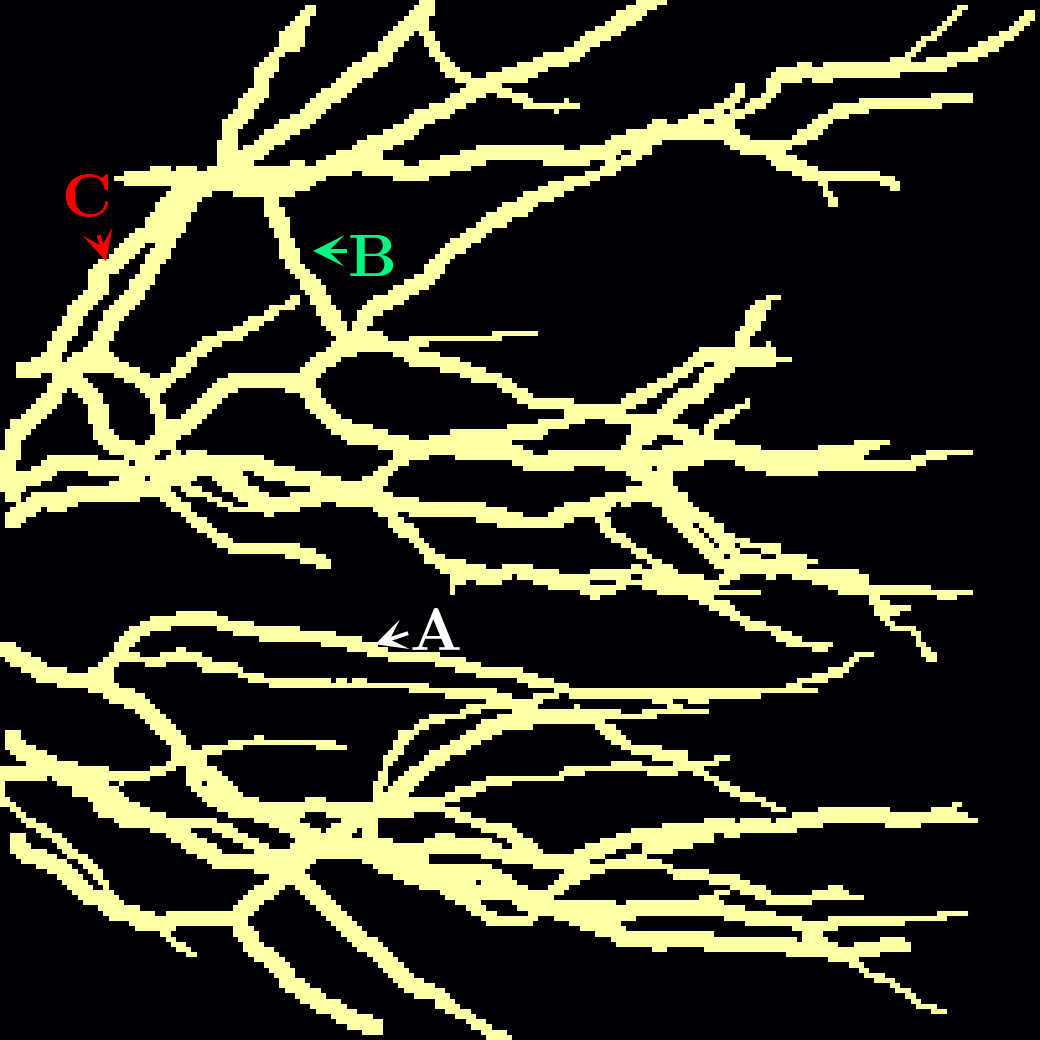}{$y$}{$z$} & \cb{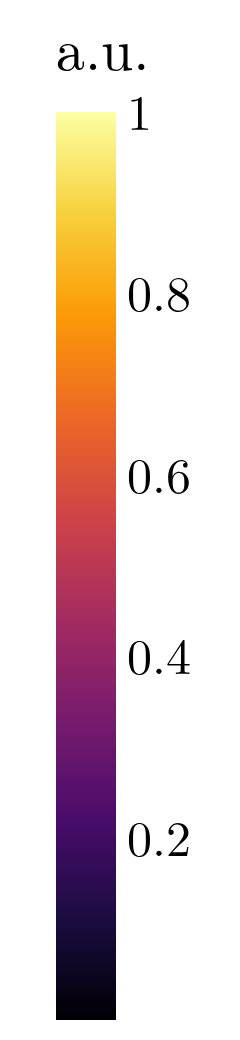} \\
		\rowlab{Model-based (LUT)} & \rp{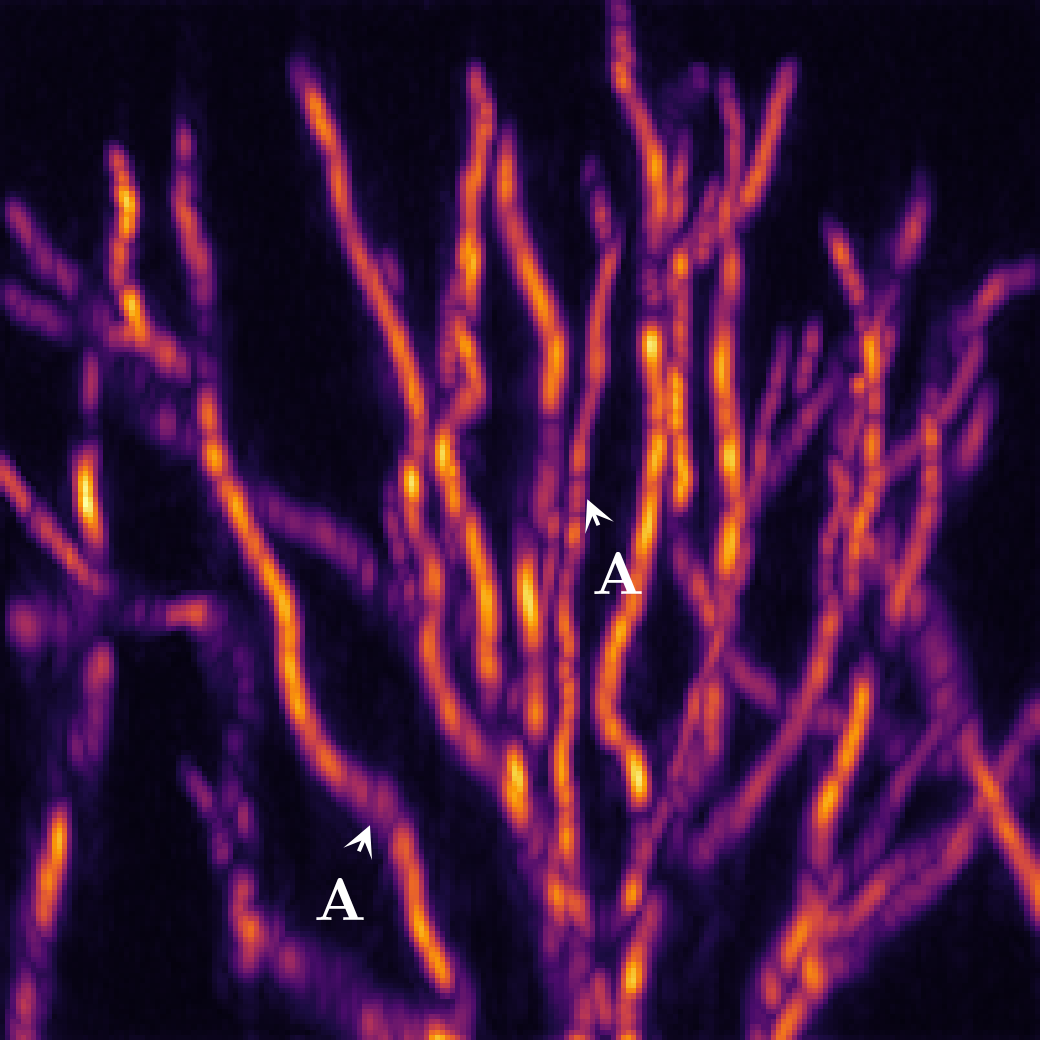} & \rp{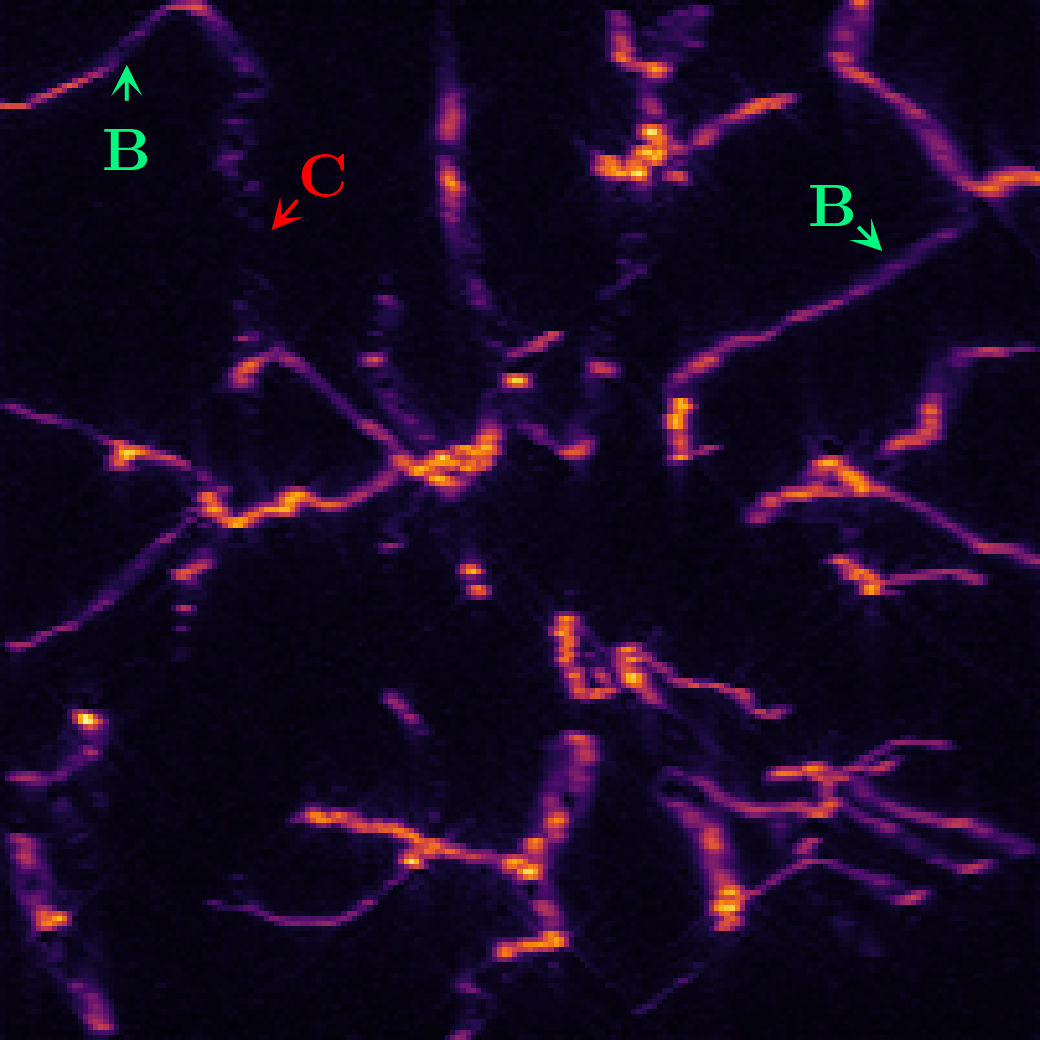} & \rp{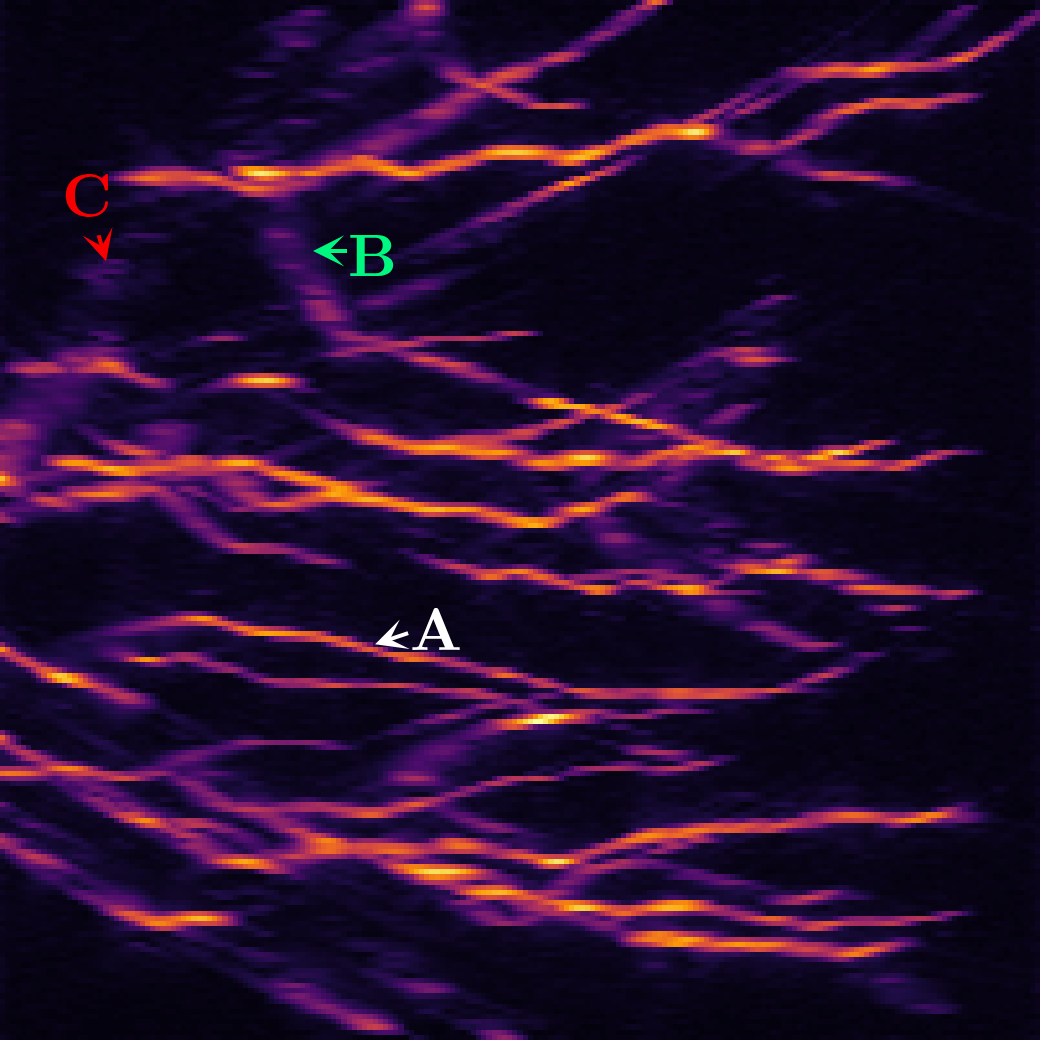} & \cb{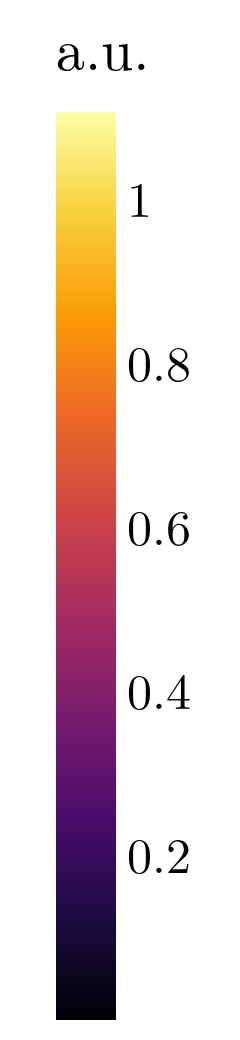} \\
		\rowlab{Back-projection (BP)} & \rp{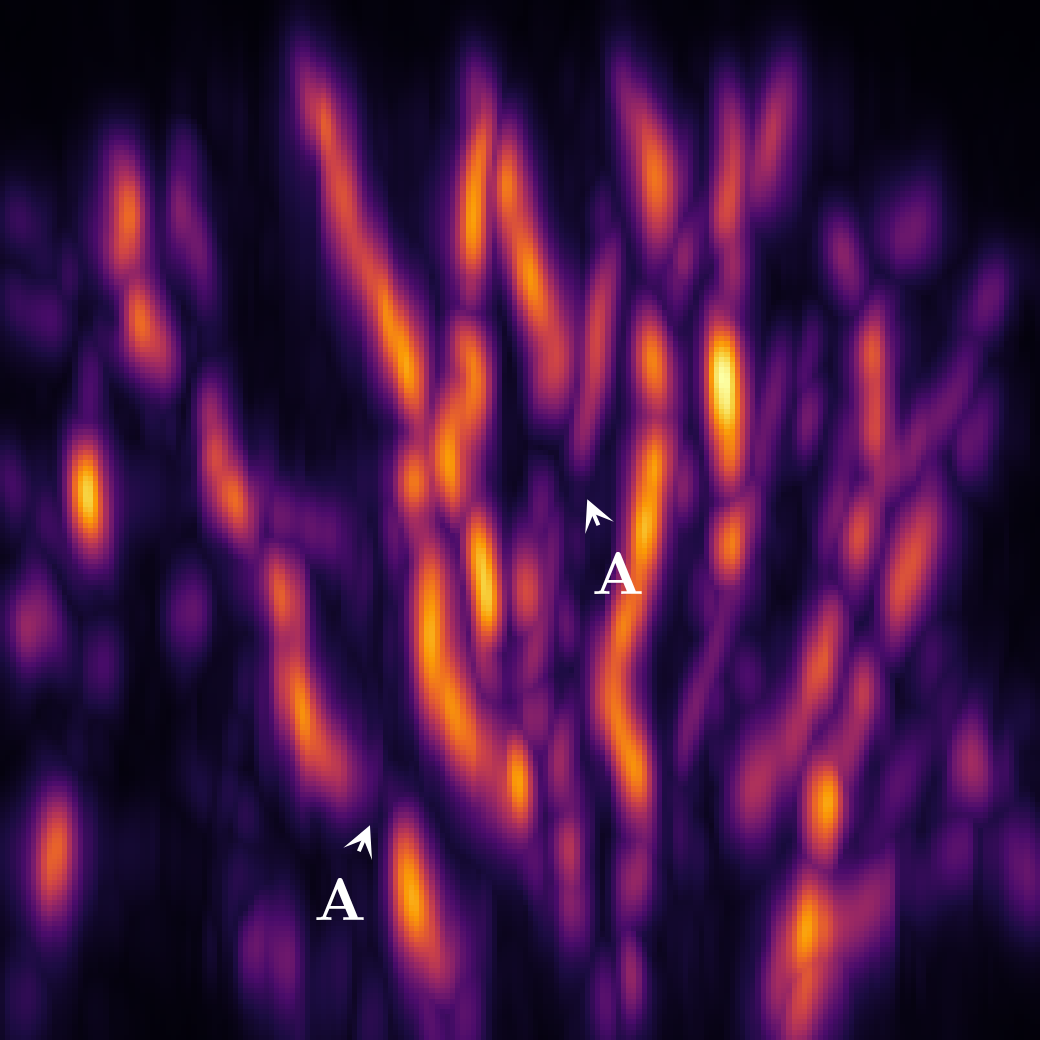} & \rp{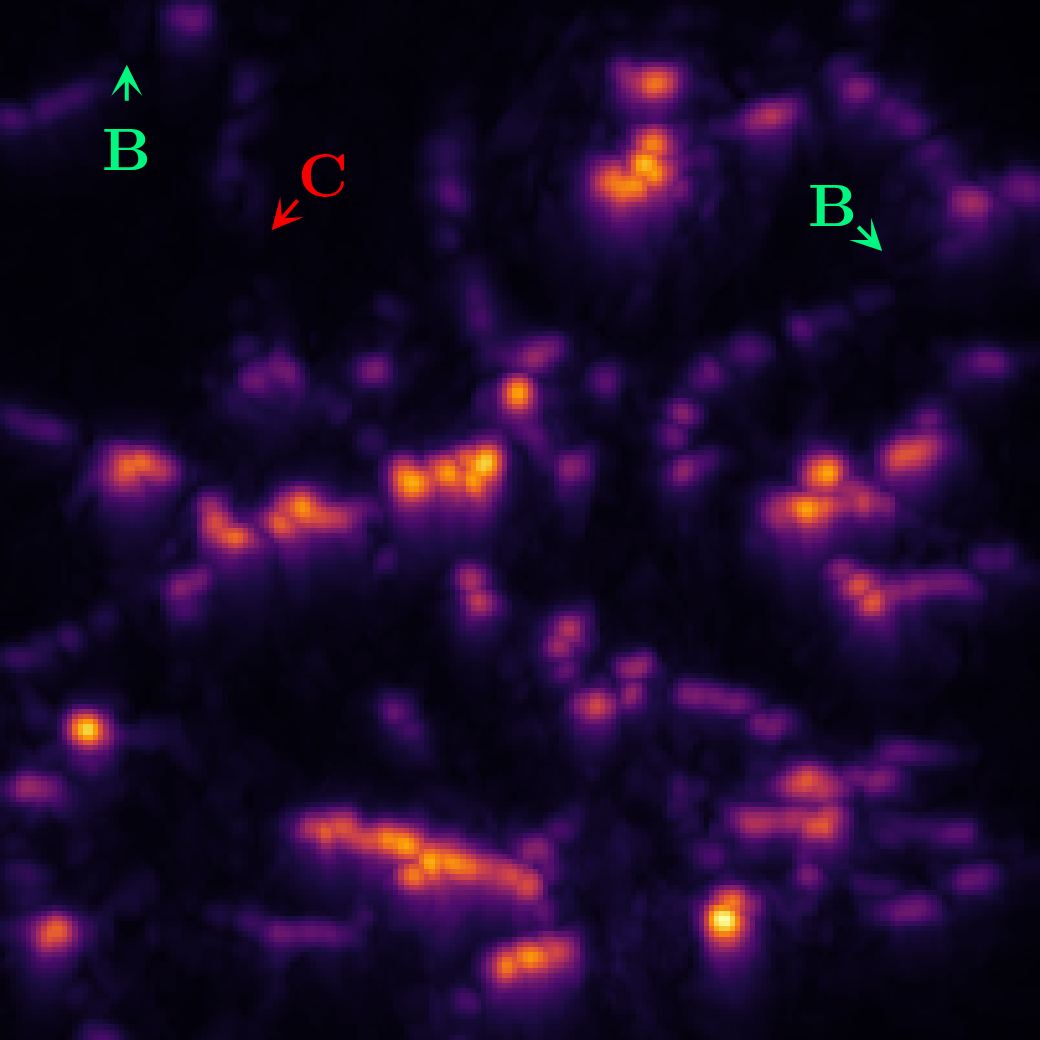} & \rp{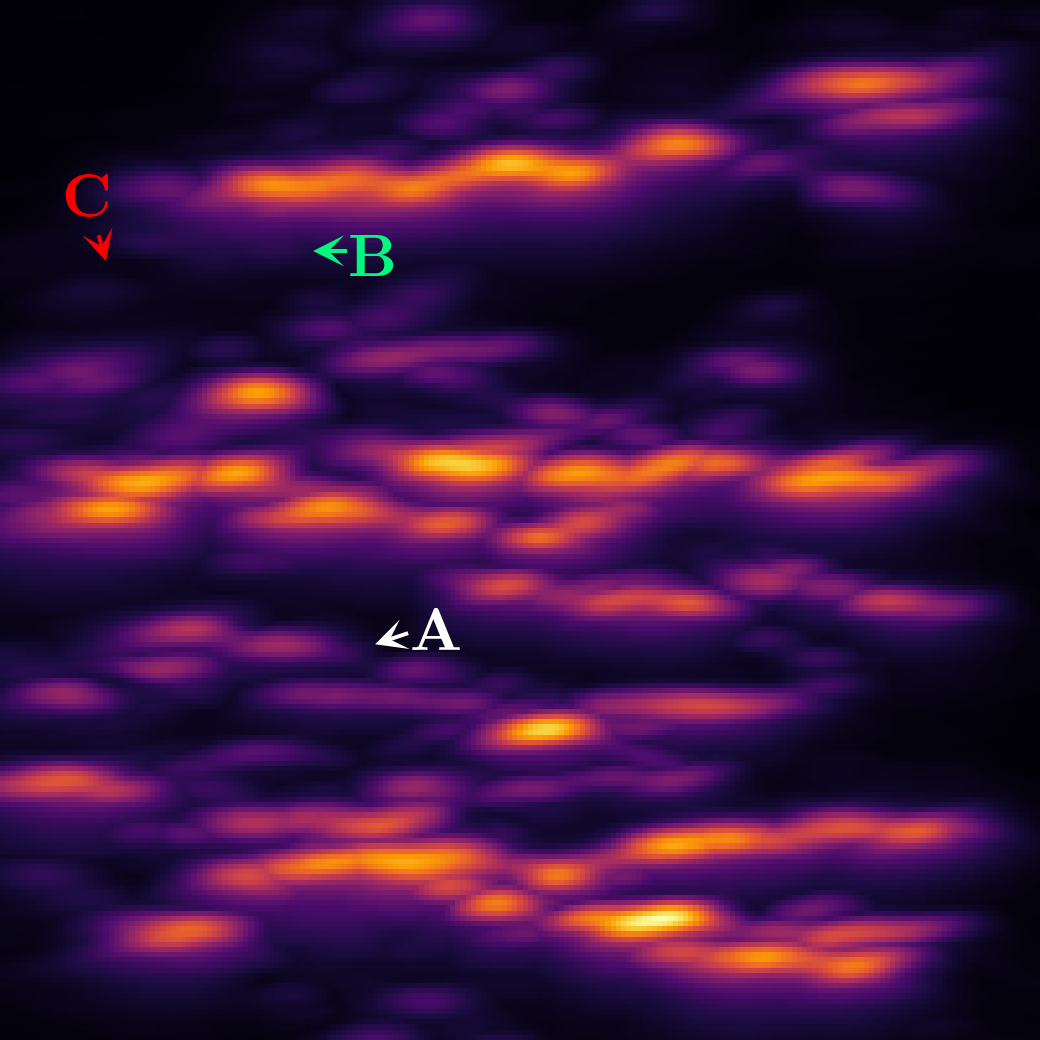} & \cb{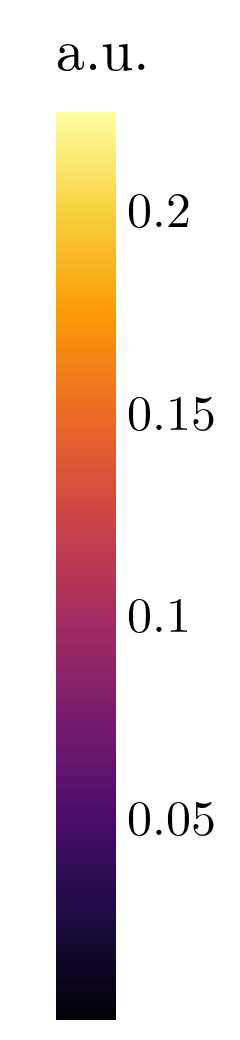} \\
	\end{tabular}
	\caption{Experiment 2 -- maximum-intensity projections (MIPs) along $z$, $y$ and $x$ (columns) of the ground truth, the model-based reconstruction with the LUT operator, and the back-projection reconstruction (BP) (rows). Arrows: (A) vessel fragmented into disconnected blobs by BP but recovered as a continuous structure by the model-based reconstruction; (B, C) vessels whose direction has a significant $z$ component, attenuated by the limited view: B, with a small $z$ component, remains partly visible in the model-based reconstruction, C, with a larger $z$ component, is barely recovered, and BP shows neither.}
	\label{fig:exp2_recon}
\end{figure}

\paragraph{Quantitative analysis}

Quantitatively, since BP recovers only relative amplitudes, the image metrics are made scale-invariant by first applying a linear transformation that minimizes the least-squares error with respect to the ground truth.
All metrics for BP are then computed on the transformed reconstruction, while those for the model-based reconstruction are computed without any rescaling.

The model-based reconstruction achieves a higher PSNR than BP ($27.18$ vs. $25.46\, \mathrm{dB}$), whereas BP achieves a higher SSIM ($0.849$ vs. $0.657$).
This apparent discrepancy is explained by the extreme sparsity of the phantom, with vessels occupying only $0.30\%$ of the voxels.
Both the mean squared error underlying PSNR and the local statistics underlying SSIM are therefore dominated by the $99.7\%$ background voxels.
Consequently, a reconstruction can achieve a high score by accurately reproducing the background, even if its recovery of the vessels is poor.
This issue has also been highlighted in the photoacoustic imaging literature, where PSNR and SSIM are often evaluated on masks restricted to informative regions \cite{van2026assessing}.

To assess reconstruction quality around the vessels, we recompute both metrics on a dedicated mask.
It is defined as the union of the ground-truth support, which is binary, and the voxels where the reconstruction exceeds a threshold of $0.1$. It is then dilated by a ball of radius $3$ voxels.
Within this mask, the model-based reconstruction outperforms BP on both metrics, achieving a PSNR of $14.45$ vs. $10.65\, \mathrm{dB}$ ($+3.8\, \mathrm{dB}$) and an SSIM of $0.321$ vs. $0.029$. These masked metrics are consistent with the qualitative observations above.

Regarding computational time, the model-based reconstruction requires $15.8\,\mathrm{h}$ for $30$ iterations on a single NVIDIA A100 GPU, corresponding to approximately $32\,\mathrm{min}$ per iteration, compared with only $3\,\mathrm{s}$ for BP.

\paragraph{Resolution via Fourier shell correlation}

The qualitative observations above indicate that the model-based reconstruction produces sharper vessels than BP.
This is complemented, in this paragraph, with a quantitative resolution metric that provides information unavailable with the masked PSNR and SSIM.
In particular, we ask a natural question in PAT: what is the finest vessel size that can be resolved by a given (system, reconstruction method) pair?

We assess the resolution using the Fourier shell correlation ($\mathrm{FSC}$), the three-dimensional analogue of the Fourier ring correlation.
This metric was originally introduced for resolution assessment in cryo-EM for investigations in structural biology \cite{van2005fourier}.
The $\mathrm{FSC}$ between two volumes is a one-dimensional curve obtained by computing the normalized cross-correlation of their Fourier coefficients over spherical shells of constant spatial frequency $\zeta$.
At a given frequency, $\mathrm{FSC}(\zeta)=1$ indicates identical Fourier content, while values approaching zero indicate that the two shells become uncorrelated.

\Cref{fig:exp2_fsc} shows the $\mathrm{FSC}$ curves between the ground truth, as a reference image, and the BP and model-based reconstructions respectively.
The spatial frequency $\zeta$ can be converted to an acoustic frequency $F$ through $F = c\,\zeta/w$ with $w = 50\,\mu\mathrm{m}$ the voxel size, $c$ the speed of sound.
Therefore, the Nyquist frequency of the grid $\zeta = 0.5$ corresponds to $15\,\mathrm{MHz}$ which is above the $10\,\mathrm{MHz}$ upper cut-off of the EIR.
The resolution is consequently limited by the measurement and not by the grid.
Both $\mathrm{FSC}$ curves are globally decreasing which shows that lower frequencies are better reconstructed than higher ones.
The model-based FSC decays substantially more slowly than that of BP. This indicates better preservation of high-frequency information.

The standard practice \cite{van2005fourier} is then to set the achievable resolution from the highest spatial frequency at which the FSC remains above a prescribed threshold.
Above this frequency, the reconstructed content is considered insufficiently correlated with the ground truth or dominated by noise.
Several thresholds exist, but we consider the $1$-bit threshold, see \Cref{fig:exp2_fsc}.
The model-based reconstruction remains above the threshold up to $\zeta \approx 0.34$, whereas BP falls below it at only $\zeta \approx 0.085$.
This leads to a resolution gain of a factor of about $4$ for the model-based over the BP reconstruction.
In acoustic terms these frequencies correspond to $\approx 10\,\mathrm{MHz}$ for the model-based reconstruction and $\approx 2.6\,\mathrm{MHz}$ for BP.

To conclude, the model-based reconstruction remains correlated with the ground truth down to a spatial period of approximately $150\,\mu\mathrm{m}$, that is three voxels, the finest scale present in the phantom.
In contrast, BP recovers only the low-frequency content of the vessels, consistent with the reduced sharpness observed in \Cref{fig:exp2_recon}.

\begin{figure}[!htb]
       \centering
       \begin{tikzpicture}
              \node[anchor=south west, inner sep=0] (img) at (0,0) {\includegraphics[width=0.6\linewidth]{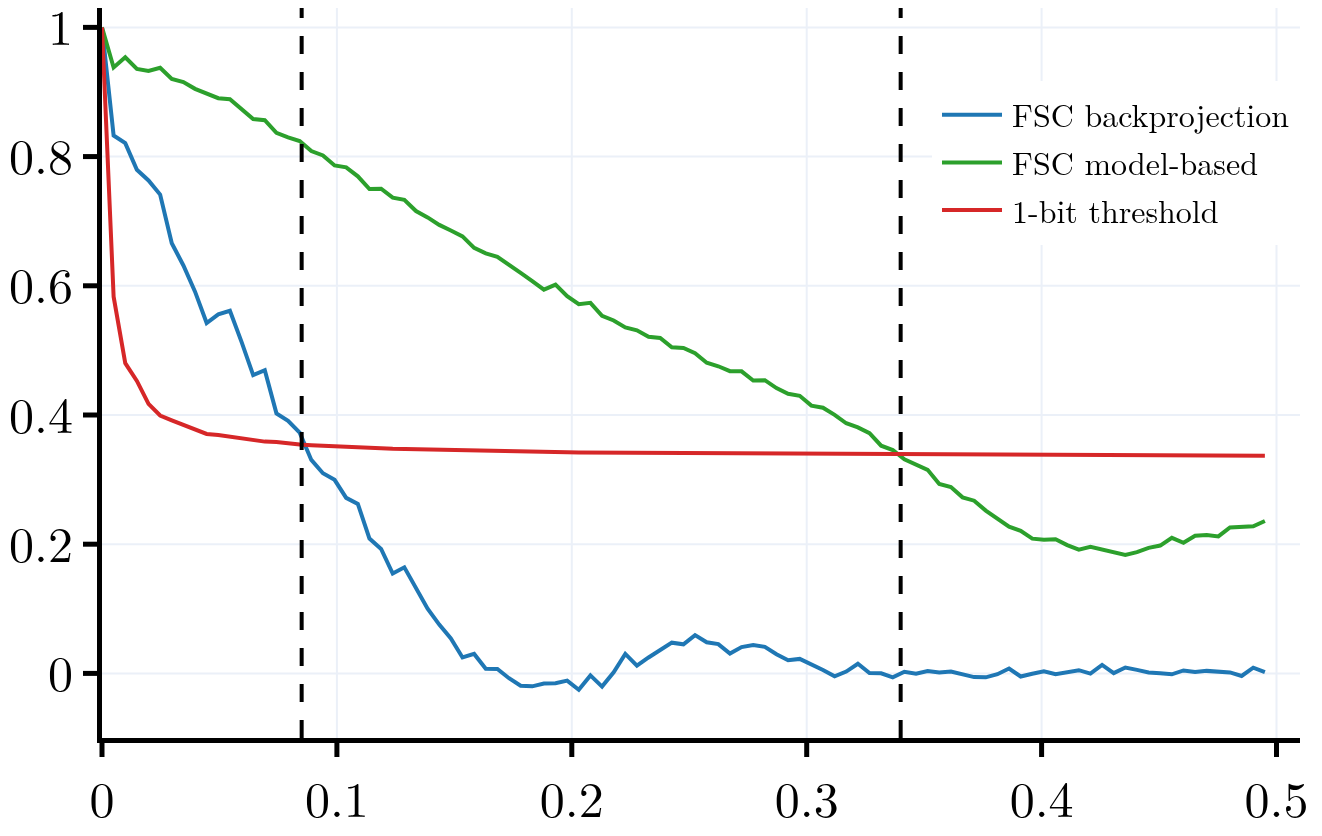}};
              \begin{scope}[x={(img.south east)}, y={(img.north west)}]
                     \node[anchor=south, inner sep=1pt] at (0.2277,1) {$\zeta_1$};
                     \node[anchor=south, inner sep=1pt] at (0.6808,1) {$\zeta_2$};
              \end{scope}
              \node[rotate=90, anchor=south, inner sep=5pt] at (img.west) {Fourier shell correlation};
              \node[anchor=north, inner sep=3pt] at (img.south) {spatial frequency $\zeta$ [cycles/voxel]};
       \end{tikzpicture}
       \caption{Experiment 2 -- Fourier shell correlation (FSC) of the two reconstructions against the ground truth, as a function of normalized spatial frequency $\zeta$ (in cycles per voxel; $\zeta = 0.5$ is the grid Nyquist, $15\,\mathrm{MHz}$; $\zeta_1 = 0.085$, $\zeta_2 = 0.34$).}
       \label{fig:exp2_fsc}
\end{figure}

Overall, the model-based reconstruction provides a significant improvement in reconstruction quality by explicitly accounting for the detector SIR.
This improvement comes at the cost of substantially higher computational requirements.
To the best of our knowledge, this is the first model-based reconstruction demonstrated on such a large-scale system, enabled by the efficient GPU implementations developed in \Cref{sec:implementations,sec:usual_surfaces}.

\section{Conclusion}
\label{sec:conclusion}

This paper provides fast, matrix-free implementations of the forward operator $\bA$ and its adjoint $\bA^*$ for photoacoustic tomography, accounting for the spatial impulse response (SIR) of transducers, at a memory cost that scales to full-resolution 3D systems.

We derived two such implementations, both based on a Kansa-type radial discretization of the initial pressure $p_0$ and a decoupling of interpolation and propagation that is valid for any compactly supported radial basis function.
The two methods differ only in their treatment of the surface integral.
The point method replaces the surface by a quadrature rule with $Q$ points, as commonly done in the photoacoustic literature, whereas the surface method treats the surface analytically. The latter was instantiated for planar and cylindrical elements, together with a methodology for extending it to other surface geometries.

On the theoretical side, both implementations are analyzed within a common mathematical framework. We establish error bounds with respect to the continuous wave-equation model, together with the computational complexity of each implementation, therefore making the trade-off between accuracy and computational cost explicit.

On the numerical side, the analytical treatment of the transducer surface outperforms the point-based quadrature approach in both accuracy and computational speed.
On the large-scale reference system, the surface method enables model-based reconstruction and outperforms the back-projection method currently used in practice, which is optimized for the same system.
To the best of our knowledge, this is the first model-based reconstruction demonstrated at this scale that takes into account the SIR of the transducers.

The surface of practical transducers is often considered a computational obstacle that requires fine discretization. The proposed analytical surface method avoids this discretization while providing improved accuracy and computational efficiency. The released implementation in \url{https://patminton.readthedocs.io/en/latest/} brings model-based three-dimensional photoacoustic reconstruction at the scale of a real acquisition system within the capabilities of a single GPU.

\ack{We sincerely thank Emmanuel Soubies for introducing us to the Fourier shell correlation method as a metric for image reconstruction.}

\funding{This project has received financial support from the CNRS through the MITI interdisciplinary programs.
This work was granted access to the HPC resources of IDRIS under the allocation 2025-AD011014692R2 made by GENCI.
This project received funding from the PHC Merlion, the Ministry for Europe and Foreign Affairs, the Ministry of Higher Education, Research and Space, and the Agency for Science, Technology and Research (A*STAR).
}

\data{No new data were created or analyzed in this study. The code used to produce the numerical results reported in this paper is available at \url{https://patminton.readthedocs.io/en/latest/}}

\appendix
\section{Table of main notations.}
\label{sec:notations_table}

{\footnotesize
\setlength{\LTpre}{\medskipamount}
\renewcommand{\arraystretch}{1.15}
\newcommand{\ntgroup}[1]{\multicolumn{2}{@{}l}{\emph{#1}}\\[2pt]}

\begin{longtable}{@{}l@{\hspace{1.5em}}p{0.75\textwidth}@{}}
\caption{Main notations used throughout the paper.}
\label{tab:notations}\\
\toprule
Symbol & Definition \\
\midrule
\endfirsthead
\multicolumn{2}{@{}l}{\itshape \Cref{tab:notations}, continued}\\
\toprule
Symbol & Definition \\
\midrule
\endhead
\midrule
\multicolumn{2}{r@{}}{\itshape continued on next page}\\
\endfoot
\bottomrule
\endlastfoot

\ntgroup{General conventions}
$[n]$                                   & the set $\{1, \ldots, n\}$, also the flattened index set of a grid of $n$ points \\
$\bv[i]$, $\bm{M}[i,j]$                 & $i$-th entry of a vector, $(i,j)$-th entry of a matrix \\
$\bs[k,l]$, $\bbf_k$                    & an index or subscript $k$ refers to the transducer $\Sigma_k$ \\
$\bx$, $\bx'$                           & points of $\R^3$; $\bx'$ is the source when $\bx$ is the detection point \\
$\|\cdot\|_p$, $\|\cdot\|$              & $\ell_p$ norm, Euclidean norm $\|\cdot\| = \|\cdot\|_2$ \\
$\|\cdot\|_{L^q}$, $\|\cdot\|_{W^{k,q}}$ & Lebesgue and Sobolev norms (\Cref{sec:notations}) \\
$|\cdot|$                               & Lebesgue measure of a set, absolute value of a number \\
$a \lesssim b$                          & $a \leq C b$ with $C > 0$ independent of the parameters \\
$C(\beta_1,\beta_2)$                    & constant depending only on $\beta_1$ and $\beta_2$ \\
$\star$                                 & convolution in time, continuous \eqref{eq:measurement} or discrete \eqref{eq:discrete_conv_2} \\
$\indic_A$, $\delta_a$                  & indicator function of the set $A$, Dirac mass at $a$ \\
$\widetilde{\cdot}$                     & approximation by quadrature on $\Sigma$, e.g.\ $\widetilde{m}$, $\widetilde{f}_\Sigma$ \\
$\overline{\cdot}$                      & exact average of $\,\widetilde{\cdot}\,$ over the cells of the upsampled time grid \\
$\widehat{\cdot}$                       & computed approximation \\
\midrule

\ntgroup{Acquisition model (\Cref{sec:forward_model})}
$\Omega \subset \R^3$                   & sample region, containing the support of $p_0$ \\
$c$                                     & speed of sound, assumed constant \\
$p_0$                                   & initial pressure, the quantity to reconstruct \\
$p(\bx,t)$                              & acoustic pressure, solution of \eqref{eq:wave_equation} \\
$S(\bx,r)$                              & sphere of center $\bx$ and radius $r$ \\
$K$, $\Sigma_k$                         & number of transducers, surface of the $k$-th transducer \\
$\Sigma$, $|\Sigma|$                    & generic transducer (index $k$ dropped), its area \\
$m$                                     & pressure integrated over the transducer surface \eqref{eq:measurement} \\
$e$, $T_e$                              & electrical impulse response (EIR), length of its support \\
$s$                                     & signal measured by a transducer, $s = e \star m$ \eqref{eq:measurement} \\
$F_s$, $t_0$, $t_l$                     & sampling frequency, window start, sampling instants $t_l = t_0 + l/F_s$ \\
$L$                                     & number of time samples per transducer \\
$\mathcal{S}$                           & sampling operator \eqref{eq:def_operatorS} \\
$\bs$                                   & discrete measurements in $\R^{KL}$, or $\R^{L}$ for one transducer \eqref{eq:measurement_discrete} \\
$\bs^{U}$                               & coefficients of the measured signal on the upsampled grid, $\dsop \bs^{U} = \bs$ (\Cref{sec:discrete_conv}) \\
\midrule

\ntgroup{Discretization (\Cref{subsec:discretization})}
$N$, $\bx_i$                            & number of grid points, grid points covering $\Omega$ \\
$\phi$, $\kappa$                        & radial function, supported in $[0,\kappa]$ \\
$\bp_0 \in \R^N$                        & coefficients of $p_0$, unknown of the inverse problem \eqref{eq:continuous_discretized_field} \\
$\dist_{\min}$                          & minimal grid-to-transducer distance \eqref{eq:min_dist} \\
\midrule

\ntgroup{Model and convolutional structure (\Cref{subsec:overview})}
$g$                                     & kernel depending only on $\phi$, $g(t) = -\frac{1}{2} c t\, \phi(|ct|)$ (\Cref{prop:point_value_pressure_field}) \\
$h$                                     & system kernel $h = e \star g$ \eqref{eq:def_h}, assumed in $W^{2,\infty}(\R)$ (see \Cref{prop:approximation_points}) \\
$f(\bx,\cdot)$, $f_\Sigma$              & arrival-time distributions, $p(\bx,\cdot) = g \star f(\bx,\cdot)$ and $m = g \star f_\Sigma$ \eqref{eq:def_f_sigma} \\
$U$, $\tau_l$                           & upsampling factor, upsampled grid $\tau_l = t_0 + l/(UF_s)$ with midpoints $\tau_{l-\frac{1}{2}}$ \\
$\widehat{f}_\Sigma$, $\bbf$            & piecewise constant approximation of $f_\Sigma$, its coefficients in $\R^{UL}$ \eqref{eq:approx_u_sigma} \\
$\widehat{m}$                           & computed signal $g \star \widehat{f}_\Sigma$ \\
$\bh$                                   & discretization of $h$ on the upsampled grid \eqref{eq:def_bh} \\
$\dsop$                                 & downsampling by the factor $U$ \eqref{eq:def_downsampling} \\
$\widehat{\bs}$                         & computed measurements $\dsop(\bh \star \bbf)$ \eqref{eq:discrete_conv} \\
$\bA$, $\bA^{*}$                        & forward operator $\bp_0 \mapsto \widehat{\bs}$, its adjoint \\
\midrule

\ntgroup{Point detector approximation (\Cref{subsec:point_method})}
$Q$                                     & number of quadrature points on $\Sigma$ \\
$(\bq_j, \Delta \bq_j)_{j \in [Q]}$     & quadrature points and areas \\
$\widetilde{m}$, $\widetilde{f}_\Sigma$ & quadrature approximations of $m$ and $f_\Sigma$ \eqref{eq:def_f_point} \\
\midrule

\ntgroup{Surface method (\Cref{subsec:surface_method})}
$\mathcal{M}_i^l$                       & part of $\Sigma$ reached by the wave from $\bx_i$ during $[\tau_{l-1},\tau_l)$ \eqref{eq:def:bffi} \\
$\overline{\bbf}$, $\overline{\bbf}_i$  & exact cell averages of $f_\Sigma$ and part due to $\bx_i$, $\overline{\bbf} = \sum_i \bp_0[i]\, \overline{\bbf}_i$ \\
$\bbf_i$                                & midpoint approximation $\bbf_i[l] = U F_s \left| \mathcal{M}_i^l \right| / (c\tau_{l-\frac{1}{2}})$ of $\overline{\bbf}_i[l]$ \eqref{eq:ui_area} \\
$r_{\min}(i)$, $r_{\max}(i)$            & minimal and maximal distances from $\bx_i$ to $\Sigma$ \eqref{eq:rmin_rmax} \\
$l_{\min}(i)$, $l_{\max}(i)$            & first and last time intervals where $\mathcal{M}_i^l$ can be nonempty \eqref{eq:rmin_rmax} \\
$\overline{B}$, $C_{\mathcal{M}}$       & mean number of time intervals per voxel, cost of one area $\left| \mathcal{M}_i^l \right|$ \\
$\Delta l$                              & step of the coarser area grid (\Cref{rmk:coarser_area_grid}) \\
\midrule

\ntgroup{Surface parametrization (\Cref{sec:usual_surfaces})}
$\sigma$, $D$                           & parametrization of $\Sigma$, its domain; recentered on $\bx_i$ in \Cref{sec:usual_surfaces} \\
$(u,v)$                                 & coordinates in $D$ \\
$\psi$, $J_0$                           & separable form $\|\sigma(u,v)\|^2 = \psi(u) + v^2$, $\dint\sigma = J_0 \dint u \dint v$ \eqref{eq:separable} \\
$D_{\xi_1,\xi_2}$                       & quadrant $[u_{\min}, u_{\max}] \times [v_{\min}, v_{\max}]$ of $D$, reflected to $u, v \geq 0$ \\
$D^l$, $D^l_{\xi_1,\xi_2}$              & preimage of $\mathcal{M}^l$ in $D$, its part in the quadrant $D_{\xi_1,\xi_2}$ \eqref{eq:D_l_xi} \\
$\Psi$, $\Phi$                          & solutions in $v$ and in $u$ of $\|\sigma(u,v)\| = r$; $\widehat{\Psi}$, $\widehat{\Phi}$ clamped to the quadrant \\
$\alpha_l$, $\beta_l$                   & abscissas where $\|\sigma\| = c\tau_l$ meets $v = v_{\max}$ and $v = v_{\min}$ \eqref{eq:alphabeta} \\
$I_l$                                   & arc integral $\int_{\alpha_l}^{\beta_l} \Psi(u, c\tau_l) \dint u$ \eqref{eq:area_D_l_xi} \\
\midrule

\ntgroup{Transducer geometries (\Cref{subsec:plane_surface,subsec:cylinder_method})}
$\bc_0$, $\be_1$, $\be_2$               & center and orthonormal frame of a transducer \\
$L_1$, $L_2$                            & half-width and half-height of a planar transducer \eqref{eq:plane_param} \\
$R$, $\theta_{\max}$, $L_z$             & radius, angular half-aperture, axial half-height of a cylinder \eqref{eq:cylinder_param} \\
$\rho(r)$                               & plane: $\rho(r) = \sqrt{r^2 - \bx_i[3]^2}$ \eqref{eq:plane_primitive} \\
$\rho$, $\theta_i$                      & cylinder: polar coordinates of $\bx_i$ \\
$E(\vartheta,\nu)$, $F(\vartheta,\nu)$  & incomplete elliptic integrals of the second and first kind \\
$b(r)$, $\nu(r)$                        & coefficient and parameter of the elliptic form of $I_l$ \eqref{eq:cyl_Il} \\
$N_\varsigma$, $N_\nu$                  & numbers of samples along the two axes of the lookup tables \eqref{eq:LUT_sampling} \\
$N_p$                                   & number of planes of the Piecewise-Planes operator \eqref{eq:plane_decomp_u} \\
\midrule

\ntgroup{Numerical experiments (\Cref{sec:experiments})}
$\bp_0^{\star}$, $\widehat{\bp}_0$      & ground-truth phantom, its reconstruction \eqref{eq:nnls_tik} \\
$\lambda_R$                             & regularization parameter \eqref{eq:nnls_tik} \\
$F_c$, $\lambda_c$                      & center frequency, acoustic wavelength $\lambda_c = c/F_c$
\end{longtable}
}

\section{Proof of \Cref{prop:point_value_pressure_field}} \label{sec:prop1}

Let $\bx \in \Sigma_k$ for some $k \in [K]$ and $t > 0$, substituting \eqref{eq:continuous_discretized_field} in \eqref{eq:forward} gives
\begin{equation*}
       p(\bx,t) = \sum_{i \in [N]} \bp_0[i] \frac{1}{4\pi c} \partial_t a(i,\bx,ct) \quad \text{with} \quad
       a(i, \bx, r) = \frac{1}{r} \int_{S(\bx,r)} \phi( \| \bx' - \bx_i\| ) \dint S_r(\bx')
\end{equation*}
The remainder of the proof consists in expressing the value of each $a(i,\bx,r)$. It will be computed for an arbitrary $i$ and the same result will hold for all $i$.
To simplify notation, we will abbreviate $a(i, \bx, r)$ as
\begin{equation*}
       a(r) = \frac{1}{r} I_S(r) \quad \text{with} \quad  I_S(r) = \int_{S} \phi(\| \bx' - \bx_i \|) \dint S_r(\bx'),
\end{equation*}
where $S = S(\bx,r)$ with surface element $\dint S_r$. We furthermore let $d = \| \bx - \bx_i\|$.

\begin{figure}[!htb] \centering
\begin{tikzpicture}[scale=1.5, >=Stealth]

  \def\ax{213.69}
  \def\rr{3.1}
  \def\th{21}
  \def\kap{1.65}
  \def\spread{32}

  \coordinate (xi) at (1,2);
  \coordinate (X)  at (4,4);
  \coordinate (y)  at ($(X)+({\ax+\th}:\rr)$);
  \coordinate (ym) at ($(X)+({\ax-\th}:\rr)$);
  \coordinate (C)  at ($(X)+(\ax:{\rr*cos(\th)})$);
  \coordinate (p0) at ($(X)+(\ax:\rr)$);

  \draw[step=1cm, gray!55, dashed] (0,0) grid (2,2);
  \foreach \gx in {0,1,2}{\foreach \gy in {0,1,2}{\fill[black] (\gx,\gy) circle (1pt);}}

  \fill[ForestGreen, opacity=0.10] (xi) circle (\kap);
  \draw[ForestGreen, opacity=0.55] (xi) circle (\kap);

  \draw[red, thick] ($(X)+({\ax-\spread}:\rr)$) arc ({\ax-\spread}:{\ax+\spread}:\rr);

  \draw[dashed] (xi) -- (X)
        node[pos=0.42, inner sep=1.5pt, below=1pt] {$d$};
  \draw (X) -- (y)
        node[pos=0.68, inner sep=1.5pt, below=1pt] {$r$};
  \draw[blue, thick] (xi) -- (y)
        node[pos=0.52, below=1pt, blue, inner sep=1.5pt] {$\lambda(\theta)$};

  \draw[thin] ($(X)+(\ax:0.95)$) arc (\ax:{\ax+\th}:0.95);
  \node[font=\small] at ($(X)+({\ax+\th/2}:1.22)$) {$\theta$};

  \fill       (xi) circle (1.3pt); \node[anchor=south east]      at (xi) {$\bx_i$};
  \fill[red]  (X)  circle (1.3pt); \node[anchor=south west, red] at (X)  {$\bx$};
  \fill[blue] (y)  circle (1.3pt); \node[anchor=west, blue]      at ($(y)+(0.09,-0.03)$) {$\bx'$};

  \node[anchor=north west, red] at ($(X)+({\ax+\spread}:\rr)$) {$S(\bx,r)$};
  \node[anchor=north, ForestGreen, font=\small] at (1,-0.15)
        {$\supp\phi(\|\cdot-\bx_i\|)$};

\end{tikzpicture}
\caption{Illustration of the geometrical objects at stake in the integration.} \label{fig:change_variable}
\end{figure}

\paragraph{Reduction to a 1d integral} Let $r > 0$, the sphere $S(\bx, r)$ is parametrized by spherical coordinates of polar axis $d^{-1} (\bx_i - \bx)$.

A point $\bx'(\omega, \theta) \in S(\bx,r)$ is determined by the azimuth
$\omega \in [0,2\pi)$ and the polar angle $\theta \in [0,\pi]$ between $\bx' - \bx$
and the axis $\bx_i - \bx$. The surface element reads
$\dint S_r = r^2 \sin \theta \dint \theta \dint \omega$.
Let $\lambda(\theta) = \| \bx'(\omega, \theta) - \bx_i \|$. By the law of cosines,
this distance is independent of $\omega$ and is defined as
\begin{equation*}
    \lambda(\theta) = \sqrt{r^2 + d^2 - 2rd\cos(\theta)}.
\end{equation*}
\Cref{fig:change_variable} illustrates this geometry.

The integration therefore leads to
\begin{equation*}
    \begin{aligned}
       I_S(r) &= 2\pi r^2 \int_{0}^\pi \phi(\lambda(\theta)) \sin \theta \dint \theta.
    \end{aligned}
\end{equation*}

The map $\lambda$ defines a change of variable from $[0,\pi]$ onto $[|r-d|, r+d]$
since it is $C^1$ and strictly increasing on $(0, \pi)$. Consequently,
$r^2 \sin(\theta) \dint \theta = \frac{r}{d} \lambda(\theta) \lambda'(\theta) \dint \theta$.
Hence,
\begin{equation*}
    \begin{aligned}
       I_S(r) &= \frac{2\pi r}{d} \int_{|r-d|}^{r+d} \lambda \phi(\lambda) \dint \lambda.
    \end{aligned}
\end{equation*}
Thus,
\begin{equation*}
    \begin{aligned}
       a(r) &= \frac{2\pi}{d} \int_{|r-d|}^{r+d} \lambda \phi(\lambda) \dint \lambda.
    \end{aligned}
\end{equation*}

\paragraph{Differentiating $a$.}
The kernel $g(t') = -\tfrac12 ct' \phi(|ct'|)$ of \Cref{prop:point_value_pressure_field} is odd, integrable and supported in $[-\kappa/c,\kappa/c]$, since $\phi$ is bounded and supported in $[0,\kappa]$.
Moreover, $\lambda\phi(\lambda)=-2g(\lambda/c)$ for $\lambda\ge0$.
Since $g$ is odd,
\begin{equation*}
    \int_{\frac{r-d}{c}}^{\frac{d-r}{c}} g(t') \dint t' =0.
\end{equation*}
Hence in both cases $r\le d$ and $r\ge d$,
\begin{equation*}
  a(r)=\frac{2\pi}{d}\int_{|r-d|}^{r+d}\lambda\phi(\lambda)\,\dint\lambda
      = -\frac{4\pi c}{d} \int_{\frac{r-d}{c}}^{\frac{r+d}{c}} g(t') \dint t'.
\end{equation*}
The radius $r$ therefore enters only through the integration limits, and not through the integrand.
The function $\Lambda(\tau) = \int_{0}^{\tau} g(t') \dint t'$ is absolutely continuous with $\Lambda' = g$ almost everywhere.
Since
\begin{equation*}
  a(r) = -\frac{4\pi c}{d} \left( \Lambda\left(\frac{r+d}{c}\right) - \Lambda\left(\frac{r-d}{c}\right) \right),
\end{equation*}
we obtain, for almost every $r > 0$
\begin{equation} \label{eq:def:diff_a}
        \frac{1}{4 \pi} a'(r) = -\frac{1}{d} \left( g\left(\frac{r+d}{c}\right) - g\left(\frac{r-d}{c}\right) \right) =  \frac{1}{d} g\left(\frac{r-d}{c}\right)
\end{equation}
where $g(c^{-1}(r+d)) = 0$ since $r+d > d \geq \dist_{\min} \geq \kappa$ under \Cref{ass:setting}\ref{ass:setting:separation}.

\paragraph{Gathering}
Since $\frac{1}{4\pi c} \partial_t a(i,\bx,ct) = \frac{1}{4\pi} a'(ct)$, restoring the index $i$ and plugging \eqref{eq:def:diff_a} into the definition of $p$ gives
\begin{equation*}
    p(\bx, t) = \sum_{i \in [N]} g\left( t - \frac{\|\bx - \bx_i\|}{c}\right) \frac{\bp_0[i]}{\|\bx - \bx_i\|}.
\end{equation*}
This is the announced identity.

\section{Mathematical justification of \Cref{sec:implementations}} \label{sec:conv_form}

This appendix provides the justification of the arguments developed in \Cref{sec:implementations}.
We will use the framework of distributions, as we believe it is the most widespread in the community. The proposed arguments can also be framed with Radon measures.

\subsection{Quick reminders on distributions}

This section gathers some facts on distributions that will be used in the next section. They are borrowed from textbooks such as \cite{schwartz1951theorie,friedlander1998introduction} in which more details can be found.

Let $\D(\R)$ be the set of infinitely differentiable functions with compact support.
A distribution acts on $\D(\R)$ with a pairing denoted $T(\varphi) = \langle T, \varphi \rangle_{\D',\D}$, and is a continuous linear form on $\D(\R)$ in the sense that for all compact $\mathcal{K} \subset \R$, there exists a constant $C_{\mathcal{K}} \geq 0$ and an integer $m_{\mathcal{K}} \geq 0$ such that
\begin{equation} \label{eq:local_boundedness}
       \left| \langle T, \varphi \rangle_{\D',\D} \right| \leq C_{\mathcal{K}} \sup_{0 \leq k \leq m_{\mathcal{K}}} \|\varphi^{(k)}\|_{L^\infty}, \qquad \forall \varphi \in \D(\R) \text{ such that } \supp \varphi \subseteq \mathcal{K}.
\end{equation}
The space $\D'(\R)$ contains all distributions.

Every $F \in L^1_{\mathrm{loc}}(\R)$ defines a distribution $T_F \in \D'(\R)$ by $\langle T_F, \varphi \rangle_{\D',\D} = \int_\R F \varphi \dint t$, and we systematically identify $F$ with $T_F$. The Dirac mass $\delta_{a} \in \D'(\R)$ at $a \in \R$ is defined by $\langle \delta_a, \varphi \rangle_{\D',\D} = \varphi(a)$ for all $\varphi \in \D(\R)$.

Let $g$ be a compactly supported $L^1_{\mathrm{loc}}(\R)$ function and define $\overline g := g(-\cdot)$.
For $\varphi \in \D(\R)$, the function $\overline g \star \varphi \in \D(\R)$.
Given $T \in \D'(\R)$, the convolution $g \star T$ is also in $\D'(\R)$ and is defined by
\begin{equation*}
       \langle g \star T, \varphi \rangle_{\D',\D} := \langle T, \overline g \star \varphi \rangle_{\D',\D}, \qquad \forall \varphi \in \D(\R).
\end{equation*}

For $g \in L^1_{\mathrm{loc}}(\R)$, we have the identity $g \star \delta_{a} = g(\cdot - a)$ since
\begin{equation*}
 \langle g\star\delta_{a},\varphi\rangle_{\D',\D} = \langle \delta_{a}, \overline g\star\varphi\rangle_{\D',\D} = (\overline g \star \varphi)(a) = \int_\R g(t - a)\varphi(t)\dint t.
\end{equation*}
This implies that for $T = \sum_{n \in [J]} \gamma_n \delta_{a_n}$, a finite sum of Dirac masses,
\begin{equation} \label{eq:conv_sum_dirac}
       g \star T = \sum_{n} \gamma_n g(\cdot - a_n).
\end{equation}

Let $\mu$ be a measure on $X$ and let $T : X \to \D'(\R)$ be a family of distributions such that, for every $\varphi \in \D(\R)$, the map $\bx \mapsto \langle T(\bx), \varphi\rangle_{\D',\D}$ is measurable and integrable with respect to $\mu$.
Then $\int_X T(\bx) \dint \mu(\bx)$ is a linear form on $\D(\R)$ defined by
\begin{equation*}
       \left\langle \int_X T(\bx) \dint \mu(\bx), \varphi \right\rangle_{\D',\D} = \int_X \langle T(\bx), \varphi \rangle_{\D',\D} \dint \mu(\bx) \quad \forall \varphi \in \D(\R).
\end{equation*}
It is a distribution in $\D'(\R)$ as soon as it is continuous in the sense of \eqref{eq:local_boundedness}. This weak integration of distributions depending on parameters can be thought of as Pettis integrals.

\subsection{Justification of the derivations}

Recall that for all $t > 0$ and $\bx \in \Sigma$
\begin{equation*}
       \begin{aligned}
              p(\bx,\cdot) &= \sum_{i \in [N]} \frac{\bp_0[i]}{\|\bx - \bx_i\|} \, g\left( \cdot - \frac{\|\bx - \bx_i\|}{c} \right) \\
                           &= \sum_{i \in [N]} \frac{\bp_0[i]}{\|\bx - \bx_i\|} \, g \star \delta_{\frac{\|\bx - \bx_i\|}{c}} \\
                           &= g \star \underbrace{\left(\sum_{i \in [N]} \frac{\bp_0[i]}{\|\bx - \bx_i\|} \delta_{\frac{\|\bx - \bx_i\|}{c}} \right)}_{f(\bx, \cdot)}
       \end{aligned}
\end{equation*}
where we used \eqref{eq:conv_sum_dirac}, valid since $g$ is integrable (\Cref{lem:smoothness_g}). As a finite weighted sum of Dirac masses, $f(\bx, \cdot)$ defines a distribution in $\D'(\R)$.

In the following $\dint \mu$ refers to any positive measure on $\Sigma$ with total mass $|\Sigma|$, allowing us to study the two cases where $\dint \mu = \sum_{j \in [Q]} \Delta \bq_j \delta_{\bq_j}$, obtained with the quadrature rule for \Cref{subsec:point_method} or $\dint \mu = \dint \sigma$ for \Cref{subsec:surface_method}.

\begin{lemma} \label{lem:conv_measure}
       Let $\mu$ be a measure supported on $\Sigma$ and total mass $|\Sigma|$. Then
       \begin{equation*}
              \int_\Sigma p (\bx, \cdot) \dint\mu = g \star f_{\Sigma}, \quad \text{with } \quad f_{\Sigma} = \int_{\Sigma} f(\bx, \cdot) \dint \mu \in \D'(\R).
       \end{equation*}
\end{lemma}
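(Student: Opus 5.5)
The plan is to verify the identity weakly, by testing both sides against an arbitrary $\varphi \in \D(\R)$. This requires first checking that $f_\Sigma$ is a well-defined distribution, and then exchanging the surface integral with the distributional pairing.

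\textbf{Step 1: $f_\Sigma$ is well defined in $\D'(\R)$.} Fix $\varphi \in \D(\R)$. For each $\bx \in \Sigma$ we have
\begin{equation*}
\langle f(\bx,\cdot),\varphi\rangle_{\D',\D} = \sum_{i\in[N]} \frac{\bp_0[i]}{\|\bx-\bx_i\|}\,\varphi\left(\frac{\|\bx-\bx_i\|}{c}\right).
\end{equation*}
This is a continuous function of $\bx$, since the distances are bounded below by $\dist_{\min} \geq \kappa > 0$ under \Cref{ass:setting}\ref{ass:setting:separation}. It is therefore measurable, and it is bounded by $\|\bp_0\|_1 \|\varphi\|_{L^\infty}/\dist_{\min}$. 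Since $\mu$ has finite mass $|\Sigma|$, it is integrable. The same estimate gives
\begin{equation*}
|\langle f_\Sigma,\varphi\rangle_{\D',\D}| \leq |\Sigma|\,\|\bp_0\|_1\,\|\varphi\|_{L^\infty}/\dist_{\min}.
\end{equation*}
This is continuity of order $0$ in the sense of \eqref{eq:local_boundedness}, so $f_\Sigma \in \D'(\R)$.

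\textbf{Step 2: interpreting the left-hand side.} I would read $\int_\Sigma p(\bx,\cdot)\dint\mu$ in the same weak sense. By \Cref{prop:point_value_pressure_field} and \eqref{eq:conv_sum_dirac}, $p(\bx,\cdot) = g \star f(\bx,\cdot)$ is the $L^1$ function
\begin{equation*}
\sum_{i\in[N]} \frac{\bp_0[i]}{\|\bx-\bx_i\|}\, g\left(\cdot - \frac{\|\bx-\bx_i\|}{c}\right).
\end{equation*}
Here $g$ is bounded and compactly supported, since $\phi$ is bounded and supported in $[0,\kappa]$. Its pairing with $\varphi$ is therefore
\begin{equation*}
\langle p(\bx,\cdot),\varphi\rangle_{\D',\D} = \langle f(\bx,\cdot), \overline g \star \varphi\rangle_{\D',\D}.
\end{equation*}
In fact $\overline g \star \varphi$ is a smooth compactly supported function, because $g \in L^1$ has compact support, so this pairing is legitimate. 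It is continuous and bounded in $\bx$ by Step 1, applied with $\overline g\star\varphi$ in place of $\varphi$.

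\textbf{Step 3: exchange.} By the definition of the weak integral and of convolution with a compactly supported $L^1_{\mathrm{loc}}$ function,
\begin{equation*}
\left\langle \int_\Sigma p(\bx,\cdot)\dint\mu,\varphi\right\rangle_{\D',\D}
= \int_\Sigma \langle f(\bx,\cdot),\overline g\star\varphi\rangle_{\D',\D}\dint\mu(\bx)
= \langle f_\Sigma,\overline g\star\varphi\rangle_{\D',\D}
= \langle g\star f_\Sigma,\varphi\rangle_{\D',\D}.
\end{equation*}
Since this holds for every $\varphi$, the identity follows.

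One subtlety deserves attention. When $\mu = \dint\sigma$, the left-hand side is also a genuine Bochner or pointwise integral in $t$. To see that it agrees with the weak one, I would apply Fubini to $(\bx,t)\mapsto p(\bx,t)\varphi(t)$. This map is bounded by $\|\bp_0\|_1\|g\|_{L^\infty}|\varphi(t)|/\dist_{\min}$, so it is integrable on $\Sigma\times\R$. Beyond this, the argument is routine bookkeeping of the pairing definitions. I expect the main point to be the measurability and uniform bound in Step 1, which rely on the separation assumption.
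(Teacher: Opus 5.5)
Your proposal is correct and follows the paper's proof essentially step by step: first the order-zero bound $|\Sigma|\,\|\bp_0\|_1\|\varphi\|_{L^\infty}/\dist_{\min}$, which shows $f_\Sigma\in\D'(\R)$, and then the chain $\langle g\star f_\Sigma,\varphi\rangle=\langle f_\Sigma,\overline g\star\varphi\rangle=\int_\Sigma\langle p(\bx,\cdot),\varphi\rangle\,\dint\mu(\bx)$. Your closing Fubini remark plays the role of the paper's check that $m=\int_\Sigma p(\bx,\cdot)\,\dint\mu$ is a genuine $L^1$ function, and it applies verbatim to any finite $\mu$, including the quadrature measure, not only to $\dint\sigma$.
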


\begin{proof}
       Recall that
       \begin{equation*}
              \begin{aligned}
                     f(\bx,\tau) &= \sum_{i \in [N]} \alpha_{i}(\bx) \, \delta_{\tau = \tau_{i}(\bx)} \\
                     \tau_{i}(\bx) &= c^{-1}\|\bx - \bx_i\| \\
                     \alpha_{i}(\bx) &= \bp_0[i](c\tau_{i}(\bx))^{-1} \\
                     p(\bx,t) &= (g \star f(\bx,\cdot))(t) = \sum_{i \in [N]} \alpha_{i}(\bx) g(t - \tau_{i}(\bx))
              \end{aligned}
       \end{equation*}
       The function $m$ defined by
       \begin{equation*}
              m(t) = \int_\Sigma p(\bx, t) \dint\mu(\bx)
       \end{equation*}
       belongs to $L^1(\R)$ since
       \begin{equation*} \label{eq:integrability_check_m}
              \begin{aligned}
                     \int_{\R} \int_\Sigma \left| p(\bx, t) \right| \dint\mu(\bx) \dint t & \leq \sum_{i \in [N]} \int_{\R} \int_{\Sigma}  \left| \alpha_{i}(\bx)  g(t - \tau_{i}(\bx))\right| \dint \mu(\bx) \dint t \\
                            & \leq \sum_{i \in [N]} |\Sigma| \frac{|\bp_0[i]|}{\dist_{\min}} \|g \|_{L^1} \\
                            & \leq |\Sigma| \frac{\|\bp_0\|_1}{\dist_{\min}} \|g \|_{L^1} \\
                            & < + \infty
              \end{aligned}
       \end{equation*}
       since under \Cref{ass:setting}, $\tau_{i}(\bx) \geq c^{-1} \dist_{\min}$ and $\|g \|_{L^1} < +\infty$ due to \Cref{lem:smoothness_g}.
       Moreover, $p(\bx, \cdot)$ is bounded by
       \begin{equation*}
              |p(\bx, t)| \leq \frac{\|\bp_0\|_1}{\dist_{\min}} \|g \|_{L^\infty},
       \end{equation*}
       which is integrable against $\mu$, so that $m$ is continuous by dominated convergence whenever $g$ is, that is when $\phi$ is continuous and vanishes at $\kappa$.

       \begin{description}
              \item[Checking that $f_{\Sigma} \in \D'(\R)$]
              Let $\varphi \in \D(\R)$, since
              \begin{equation*}
                     \langle f(\bx, \cdot) , \varphi \rangle_{\D',\D} = \sum_{i \in [N]} \alpha_{i}(\bx) \varphi(\tau_{i}(\bx))
              \end{equation*}
              for $\bx \in \Sigma$, the function $\bx \in \Sigma \mapsto \langle f(\bx, \cdot) , \varphi \rangle_{\D',\D}$ is continuous and therefore measurable. It is moreover integrable since
              \begin{equation*}
                     \begin{aligned}
                            \int_{\Sigma} |\langle f(\bx,\cdot), \varphi \rangle_{\D',\D}| \dint \mu(\bx) & \leq \| \bp_0\|_1 \| \varphi\|_{L^\infty} |\Sigma| \dist_{\min}^{-1}
                     \end{aligned}
              \end{equation*}
              using that $\| \alpha_i \|_{L^{\infty}(\Sigma)} \leq |\bp_0[i]| \dist_{\min}^{-1}$ under \Cref{ass:setting}. This inequality also allows one to prove that \eqref{eq:local_boundedness} holds for all compact supports, with $m_K = 0$ and $C_K = \| \bp_0\|_1 |\Sigma| \dist_{\min}^{-1}$, therefore proving that $f_{\Sigma} \in \D'(\R)$.

              \item[Showing the equality] Since $f_\Sigma \in \D'(\R)$, we have for $\varphi \in \D(\R)$
              \begin{equation*}
                     \begin{aligned}
                            \langle g \star f_\Sigma, \varphi \rangle_{\D',\D} & = \langle f_\Sigma,  \overline{g} \star \varphi \rangle_{\D',\D} \\
                             & = \int_{\Sigma }\langle f(\bx, \cdot),  \overline{g} \star \varphi \rangle_{\D',\D} \dint \mu(\bx) \\
                             & = \int_{\Sigma }\langle g \star f(\bx, \cdot),  \varphi \rangle_{\D',\D} \dint \mu(\bx) \\
                             & = \int_{\Sigma }\langle p(\bx, \cdot),  \varphi \rangle_{\D',\D} \dint \mu(\bx) \\
                             & = \left \langle \int_{\Sigma} p(\bx, \cdot) \dint \mu(\bx),  \varphi \right \rangle_{\D',\D}
                     \end{aligned}
              \end{equation*}
              showing the equality in $\D'$. The equality therefore holds almost everywhere, and everywhere when $g$ is continuous.
       \end{description}

\end{proof}

\begin{lemma} \label{lem:dirac_indicator}
       Let $a,b \in \R$ such that $a \leq b$. Then
       \begin{equation*}
              \int_{a}^b \delta_{\tau = t} \dint \tau = \indic_{[a,b)}(t).
       \end{equation*}
\end{lemma}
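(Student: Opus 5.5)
The statement is an identity between distributions: the integral over $[a,b]$ in the variable $\tau$ of the family of Dirac masses $\tau \mapsto \delta_{\tau = t}$, viewed as a distribution in $t$, coincides with the indicator function $\indic_{[a,b)}$. The plan is to interpret the left-hand side with the weak (Pettis-type) integral of parameter-dependent distributions recalled above. Here the parameter space is $X = [a,b)$ with Lebesgue measure $\mu$ (the endpoint $b$ is a null set, so using $[a,b]$ changes nothing). The equality is then checked by pairing both sides with an arbitrary test function $\varphi \in \D(\R)$.

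First, verify the hypotheses of the weak integral. For fixed $\varphi \in \D(\R)$, the map $\tau \mapsto \langle \delta_\tau, \varphi\rangle_{\D',\D} = \varphi(\tau)$ is continuous, hence measurable. It is bounded by $\|\varphi\|_{L^\infty}$, hence integrable on the bounded interval $[a,b)$. The resulting linear form $\varphi \mapsto \int_a^b \varphi(\tau)\dint\tau$ satisfies the continuity estimate \eqref{eq:local_boundedness} with $m_{\mathcal{K}} = 0$ and $C_{\mathcal{K}} = b-a$, so it defines an element of $\D'(\R)$.

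Second, compute the pairing:
\begin{equation*}
\left\langle \int_a^b \delta_{\tau}\dint\tau, \varphi\right\rangle_{\D',\D} = \int_a^b \varphi(\tau)\dint\tau = \int_\R \indic_{[a,b)}(t)\varphi(t)\dint t = \langle T_{\indic_{[a,b)}}, \varphi\rangle_{\D',\D}.
\end{equation*}
Since $\indic_{[a,b)} \in L^1_{\mathrm{loc}}(\R)$ is identified with $T_{\indic_{[a,b)}}$, the two distributions agree on every test function. This proves the identity in $\D'(\R)$. The case $a=b$ is trivial, as both sides vanish.

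There is no genuine obstacle; the only care needed is conventions. The choice of half-open versus closed interval is invisible at the level of distributions, since it differs only on a null set. The pointwise form used in the text, such as \eqref{eq:def_f_point}, also needs a justification. For this, apply the lemma to the finite Dirac sums $f(\bx,\cdot)$, or to $\widetilde f_\Sigma$, and exchange the $\tau$-integral with the finite sum or the $\mu$-integral via Fubini. This exchange is legitimate because the integrands are bounded, as in the proof of \Cref{lem:conv_measure}. The half-open convention $[\tau_{l-1},\tau_l)$ then simply fixes which bin receives an arrival time falling exactly on a grid point. This matches the floor formula for $l$ used in \Cref{alg:point_based}.
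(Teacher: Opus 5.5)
Your proof is correct and follows the paper's argument. As the paper does, you first check that the weak integral defines an element of $\D'(\R)$: continuity of $\tau \mapsto \varphi(\tau)$, the bound $(b-a)\|\varphi\|_{L^\infty}$, and the estimate \eqref{eq:local_boundedness} with $m_{\mathcal{K}}=0$ and $C_{\mathcal{K}}=b-a$. You then identify it with $\indic_{[a,b)}$ by pairing with an arbitrary test function.
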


\begin{proof}
       Let $T = \int_{a}^b \delta_{\tau = t} \dint \tau$. We first show that $T \in \D'(\R)$ and then show the equality.
       \begin{description}
              \item[Checking that $T \in \D'(\R)$]
              Let $\varphi \in \D(\R)$, since
              \begin{equation*}
                     \langle \delta_{\tau = t} , \varphi \rangle_{\D',\D} = \varphi(\tau)
              \end{equation*}
              for all $\tau \in \R$, the function $\tau \in \R \mapsto \langle \delta_{\tau = t} , \varphi \rangle_{\D',\D}$ is continuous and therefore measurable. It is moreover integrable since
              \begin{equation*}
                     \begin{aligned}
                            \int_{a}^b |\langle \delta_{\tau = t}, \varphi \rangle_{\D',\D}| \dint \tau & \leq (b-a) \| \varphi\|_{L^\infty}
                     \end{aligned}
              \end{equation*}
              This inequality also allows one to prove that \eqref{eq:local_boundedness} holds for all compact supports, with $m_K = 0$ and $C_K = (b-a)$, therefore proving that $T \in \D'(\R)$.

              \item[Showing the equality] Since $T \in \D'(\R)$, we have for $\varphi \in \D(\R)$
              \begin{equation*}
                     \begin{aligned}
                            \langle T, \varphi \rangle_{\D',\D} & = \int_{a}^b \langle \delta_{\tau = t}, \varphi  \rangle_{\D',\D} \dint \tau \\
                             & = \int_a^b \varphi(\tau) \dint \tau \\
                             & = \int_{\R} \indic_{[a,b)}(t) \varphi(t) \dint t \\
                            & = \left\langle \indic_{[a,b)}, \varphi \right\rangle_{\D',\D}
                     \end{aligned}
              \end{equation*}
              showing the equality in $\D'$. Moreover since $\indic_{[a,b)}$ is in $L^1(\R)$, the equality stands almost everywhere.
       \end{description}
\end{proof}

\section{Discrete convolutions} \label{sec:discrete_conv}

This appendix establishes \eqref{eq:discrete_conv}: convolving a piecewise constant function with a kernel and sampling the result following \eqref{eq:def_operatorS} is computed exactly by a discrete convolution followed by a downsampling.
The derivation is performed for arbitrary functions. For the sake of clarity and ease of understanding, the same notation as in \Cref{sec:implementations} is used.

The kernel $h = e \star g$ belongs to $L^1(\R)$ by Young's inequality, since $e \in L^1(\R)$ by \Cref{ass:setting}\ref{ass:setting:e} and $g \in L^1(\R)$ by \Cref{lem:smoothness_g}. Moreover, it is compactly supported since both $e, g$ are.

\paragraph{Setting}
Let the upsampled grid be defined by $\tau_0 = t_0$ and $\tau_l = t_0 + \frac{l}{UF_s}$ for $l \in \Z$, let $\varphi = \indic_{[t_0,t_0 + \frac{1}{UF_s})}$ and define
\begin{equation*}
       \varphi_l = \varphi\left( \cdot - \frac{l-1}{UF_s} \right), \qquad \text{so that } \; \varphi_l = \indic_{[\tau_{l-1}, \tau_l)}, \quad \forall l \in \Z.
\end{equation*}
Given $\bbf \in \R^{UL}$, let $f = \sum_{l \in [UL]} \bbf[l] \varphi_l$ and $s = h \star f$.
The $\varphi_l$ having disjoint supports of length $(UF_s)^{-1}$, the coefficients $\bbf$ are recovered from $f$ by
\begin{equation*}
       \begin{aligned}
              \bbf[l_1] &= UF_s \langle f, \varphi_{l_1} \rangle = UF_s \int_{\tau_{l_1-1}}^{\tau_{l_1}} f(t) \dint t, \quad \forall l_1 \in [UL],
       \end{aligned}
\end{equation*}
where $\langle \cdot, \cdot \rangle$ denotes the scalar product of $L^2(\R)$.
We define likewise the coefficients of $s$ on the upsampled grid, denoted $\bs^{U}$ to distinguish them from the coefficients $\bs$ on the original grid,
\begin{equation*}
       \begin{aligned}
              \bs^{U}[l_2] &= UF_s \langle s, \varphi_{l_2} \rangle = UF_s \int_{\tau_{l_2-1}}^{\tau_{l_2}} s(t) \dint t, \quad \forall l_2 \in \Z,
       \end{aligned}
\end{equation*}
indexed by the whole of $\Z$ since $s$ need not be supported in $[\tau_0, \tau_{UL}]$.

\paragraph{Discrete convolution}

By linearity of the convolution and finiteness of the sum,
\begin{equation*}
       \begin{aligned}
              s &= \sum_{l_1 \in [UL]} \bbf[l_1] (h \star \varphi_{l_1}),
       \end{aligned}
\end{equation*}
so that
\begin{equation*}
       \begin{aligned}
              \bs^{U}[l_2] &= UF_s  \sum_{l_1 \in [UL]} \bbf[l_1] \langle h \star \varphi_{l_1}, \varphi_{l_2}  \rangle \\
              & = UF_s  \sum_{l_1 \in [UL]} \bbf[l_1] \int_{\R \times \R} h(t-\tau) \varphi_{l_1}(\tau) \varphi_{l_2}(t) \dint \tau \dint t
       \end{aligned}
\end{equation*}
using Fubini's theorem, licensed by Tonelli's theorem since a successive application of the Cauchy--Schwarz inequality and of Young's inequality for convolution gives
\begin{equation*}
       \int_{\R \times \R} \left| h(t-\tau) \varphi_{l_1}(\tau) \varphi_{l_2}(t) \right| \dint \tau \dint t \leq \| h \|_{L^1} \| \varphi_{l_2} \|_{L^2} \| \varphi_{l_1} \|_{L^2} = \frac{\| h \|_{L^1}}{UF_s} < + \infty.
\end{equation*}
The change of variable $\tau \gets \tau + \frac{l_1-1}{UF_s}$, $t \gets t + \frac{l_2-1}{UF_s}$ then removes the dependence on $l_1$ and $l_2$ except through their difference,
\begin{equation*}
       \begin{aligned}
              \bs^{U}[l_2]& = UF_s  \sum_{l_1 \in [UL]} \bbf[l_1] \int_{\R \times \R} h \left(t-\tau + \frac{l_2 - l_1}{UF_s}\right) \varphi(\tau) \varphi(t) \dint \tau \dint t \\
              & = \sum_{l_1 \in [UL]} \bh[l_2 - l_1] \bbf[l_1], \quad \forall l_2 \in \Z,
       \end{aligned}
\end{equation*}
with $\bh : \Z \to \R$ defined by
\begin{equation*}
       \begin{aligned}
              \bh[n] & = UF_s \int_{\R \times \R} h\left(t-\tau + \frac{n}{UF_s}\right) \varphi(\tau) \varphi(t) \dint \tau \dint t, \quad \forall n \in \Z.
       \end{aligned}
\end{equation*}
Since $\varphi$ is supported in an interval of length $(UF_s)^{-1}$, the integrand vanishes unless $\frac{n}{UF_s}$ is within $(UF_s)^{-1}$ of the support of $h$, which gives the support of $\bh$ announced in \Cref{subsec:overview}.
In other words, $\bs^{U} = \bh \star \bbf$.

\paragraph{Downsampling}

Consider now the original time grid, $t_l = t_0 + l/F_s$ for $l \in [L]$, and the coefficients
\begin{equation*}
       \widehat{\bs}[l] = F_s \int_{t_{l-1}}^{t_l} s(t) \dint t .
\end{equation*}
Since $t_{l-1} = \tau_{U(l-1)}$ and $t_l = \tau_{Ul}$, the cell $[t_{l-1}, t_l)$ is partitioned by the $U$ cells of the upsampled grid it contains, so that
\begin{equation*}
       \begin{aligned}
              \widehat{\bs}[l] &= F_s \sum_{l' \in [U]} \int_{\tau_{U(l-1) + (l'-1)}}^{\tau_{U(l-1) + l'}} s(t) \dint t \\
              & = F_s \sum_{l' \in [U]} \frac{1}{UF_s} UF_s\int_{\tau_{U(l-1) + (l'-1)}}^{\tau_{U(l-1) + l'}} s(t) \dint t \\
              & = \frac{1}{U} \sum_{l' \in [U]} \bs^{U}[U(l-1) + l'] = \left( \dsop \bs^{U} \right)[l].
       \end{aligned}
\end{equation*}

\paragraph{Application}

Applying the above with $f = \widehat{f}_\Sigma$, whose coefficients are $\bbf$, gives $\widehat{s} = h \star \widehat{f}_\Sigma = e \star \widehat{m}$, $\bh$ as defined in \eqref{eq:def_bh} and $\bs^{U} = \bh \star \bbf$. The coefficients of $\widehat{s}$ on the original grid are then $\mathcal{S}(e \star \widehat{m}) = \widehat{\bs}$.
Hence, $\widehat{\bs} = \dsop \bs^{U} = \dsop (\bh \star \bbf)$, that is \eqref{eq:discrete_conv}.

\section{Proofs of \Cref{prop:approximation_points,prop:approximation_surface}} \label{app:proof_prop_23}

\subsection{Overview of the proofs}

The approximation strategy of both methods can be summarized as:
\begin{center}
\resizebox{\linewidth}{!}{%
       \begin{tikzpicture}

  \node (FF)   [block]                        {\Cref{prop:point_value_pressure_field}};
  \node (junc) [junction,right=2cm of FF]  {};

  \draw[arr] ++(-2.2cm, 0) -- node[above]{$m$} (FF);
  \draw[arr] (FF) -- node[above]{$\displaystyle m = g \star f_{\Sigma}$} (junc);

  \node (Q)    [block,  above right=0.8cm and 0.5cm of junc] {Quadrature};
  \node (PA)   [block,  right=1cm of Q]                    {Piecewise Approx. \Cref{lem:quantization}};

  \draw[arr] (junc) |- node[above, near end]{} (Q);
  \draw[arr] (Q)    -- node[above]{$\widetilde{m}$} (PA);

  \coordinate (PO_y) at ($(junc.south) + (0,-1.2cm)$);
  \node (PO) [block] at (PA |- PO_y) {Piecewise Approx. \Cref{lem:quantization}};

  \node (CA)   [block,  right=1cm of PO]                   {Coeff. Approx};
  \node (out2) [right=1cm of CA]                           {};

    \draw[arr] (junc) |- node[above, near end]{$\widetilde m$} (PO);
  \draw[arr] (PO)  -- node[above]{$\overline{m}$} (CA);
  \draw[arr] (CA)  -- node[above]{} (out2) node[right]{$\widehat{m}$};

\node (out1) at (out2 |- PA) {};
\draw[arr] (PA)   -- node[above]{$\overline{m}$} (out1) node[right]{$\widehat{m}$};

  \node (box_top) [dashbox, fit=(Q)(PA)(out1),
        label={[font=\small\itshape]above:Point detector approx.\ \Cref{prop:approximation_points}}] {};

  \node (box_bot) [dashbox, fit=(PO)(CA)(out2),
        label={[font=\small\itshape]below:Surface integral \Cref{prop:approximation_surface}}] {};

\end{tikzpicture}
}
\end{center}
The diagram reads as a chain of successive approximations of $f_\Sigma$ leading to different approximations of $m = g \star f_\Sigma$.
First, $m$ is written as $m = g \star f_{\Sigma}$ following \Cref{prop:point_value_pressure_field}.
The point detector approximation consists in approximating the surface integral over $\Sigma$ defining $f_\Sigma$, by quadrature giving $\widetilde{f}_\Sigma$. Then $\widetilde{f}_{\Sigma}$ is further approximated by the piecewise constant function $\overline{f}_{\Sigma}$. No further approximations are used, hence $\widehat{f}_\Sigma = \overline{f}_\Sigma$.
The surface method treats the integral over $\Sigma$ analytically so that $\widetilde{f}_\Sigma = f_\Sigma$ and the piecewise constant approximation is directly applied to $f_\Sigma$. The coefficients $\overline{\bbf}$ of $\overline{f}_\Sigma$ are then further approximated by a midpoint rule, giving $\bbf$ and $\widehat{f}_\Sigma$.
Each bound of \Cref{prop:approximation_points,prop:approximation_surface} is obtained by summing these contributions along the chain.

We recall that
\begin{equation*}
       \begin{aligned}
              m(t) &=  g \star f_{\Sigma} (t)  = \int_{\Sigma} \sum_{i \in [N]} g\left( t - \frac{\|\bx - \bx_i\|}{c} \right) \frac{\bp_0[i]}{\|\bx - \bx_i\|}  \dint\sigma(\bx) \\
              \widetilde{m}(t) &= g \star \widetilde{f}_\Sigma(t), \quad \text{with } \; \widetilde{f}_\Sigma =  \sum_{j \in [Q], i \in [N]} \frac{\bp_0[i] \Delta \bq_j}{\| \bq_j - \bx_i\|} \delta_{\frac{\| \bq_j - \bx_i\|}{c}} \\
              \overline{m}(t) &= g \star \overline{f}_\Sigma(t), \quad \text{with } \; \overline{f}_\Sigma = \sum_{l \in [UL]} \overline{\bbf}[l] \indic_{[\tau_{l-1}, \tau_l)}, \quad \text{and } \; \overline{\bbf}[l] = U F_s \int_{\tau_{l-1}}^{\tau_l} \widetilde{f}_\Sigma(t) \dint t \\
              \widehat{m}(t) &= g \star \widehat{f}_\Sigma(t), \quad \text{with } \; \widehat{f}_\Sigma = \sum_{l \in [UL]} \bbf[l] \indic_{[\tau_{l-1}, \tau_l)}.
       \end{aligned}
\end{equation*}

Recall that $s = e \star m$  and define its approximations
\begin{equation*}
       s = e \star m = h \star f_{\Sigma}, \qquad
       \widetilde{s} = e \star \widetilde{m} = h \star \widetilde{f}_\Sigma, \qquad
       \overline{s} = e \star \overline{m} = h \star \overline{f}_\Sigma, \qquad
       \widehat{s} = e \star \widehat{m} = h \star \widehat{f}_\Sigma,
\end{equation*}
where each identity follows from the associativity of convolution.

Due to discrete convolution being exact, the coefficients $\widehat{\bs} = \dsop(\bh \star \bbf)$ are given by
       \begin{equation*}
              \widehat{\bs}[l] = \frac{1}{U} \sum_{l' \in [U]} U F_s \int_{\tau_{U(l-1)+(l'-1)}}^{\tau_{U(l-1)+l'}} \widehat{s} (t) \dint t.
       \end{equation*}

       Moreover, the value of $\bs$ can be decomposed by the integration on the finer grid as
       \begin{equation*}
              \begin{aligned}
                     \bs[l] &= F_s \int_{t_{l-1}}^{t_l} s(t) \dint t \\
                     & = F_s \sum_{l' \in [U]} \int_{\tau_{U(l-1)+(l'-1)}}^{\tau_{U(l-1)+l'}} s(t) \dint t
              \end{aligned}
       \end{equation*}

       Therefore,
       \begin{equation*}
              \begin{aligned}
                     \left|\bs[l] - \widehat{\bs}[l]\right| &\leq  F_s \sum_{l' \in [U]} \int_{\tau_{U(l-1)+(l'-1)}}^{\tau_{U(l-1)+l'}} | s - \widehat{s} | \dint t \\
                     & \leq UF_s (UF_s)^{-1} \| s - \widehat{s} \|_{L^\infty}
              \end{aligned}
       \end{equation*}
       The term $\| s - \widehat{s}\|_{L^\infty}$ can be bounded leveraging the following decomposition for $t > 0$,
       \begin{equation} \label{eq:error_decomp}
              \begin{aligned}
                     |s(t) - \widehat{s}(t)| &\leq \left| s(t) - \widetilde{s}(t) \right| + \left| \widetilde{s}(t) - \overline{s}(t) \right| + \left| \overline{s}(t) - \widehat{s}(t)  \right| \\
                     &= \err_1 + \err_2 + \err_3.
              \end{aligned}
       \end{equation}

The errors are bounded independently for each method.

\subsection{Proofs}

\begin{proof}[{Proof of \Cref{prop:approximation_points}}]
       The error decomposition \eqref{eq:error_decomp} for the point detector method gives $\err_3 = 0$ since in this case $\overline{\bbf} = \bbf$, so that $\widehat{s} = \overline{s}$. It remains to bound $\err_1$ and $\err_2$.
       \begin{description}
              \item[Bounding $\err_1$:] this term is due to the quadrature rule used to approximate the integral on $\Sigma$.
              Let $q(\bx, \cdot) := e \star p(\bx, \cdot) = h \star f(\bx,\cdot)$, so that $s = \int_\Sigma q(\bx,\cdot) \dint\sigma(\bx)$, and let $q_t : (u, v) \in D \mapsto q(\sigma(u, v), t)$.

              Note that $(u_j,v_j)$ satisfies
              \begin{equation*}
                \int_{D_j} \bn \dint \sigma(u,v) = 0
              \end{equation*}
              where $\bn = (u - u_j, v - v_j)$ and $(u_j,v_j) \in D_j$ since $D_j$ is convex.

              Using the above notation, the error $\err_1$ can be decomposed as
              \begin{equation*}
                     \begin{aligned}
                            \err_1 &= \left|\int_{D} q_t(u, v) \dint \sigma(u, v) - \sum_{j \in [Q]} \Delta \bq_j q_t(u_j,v_j)\right| \\
                            & = \left| \sum_{j \in [Q]} \int_{D_j} q_t(u, v) - q_t(u_j,v_j) \dint \sigma(u, v)  \right|
                     \end{aligned}
              \end{equation*}
since $\int_{D_j} \dint \sigma(u, v) = \Delta \bq_j$.

              First, observe that for all $t> 0$ the map $q_t$ has a Hessian $H[q_t]$ satisfying
              \begin{equation*}
                     \esssup_{t > 0, (u, v) \in D} \| H[q_t](u, v)\| \lesssim \| \bp_0\|_1 C(h,\dist_{\min}, \sigma),
              \end{equation*}
              using \Cref{lem:smoothness_f} and where $C(h,\dist_{\min}, \sigma)$ is a constant depending on $h,\dist_{\min}$ and $\sigma$ defined in the lemma.

              Given a cell $j \in [Q]$, this smoothness allows writing the Taylor expansion around its barycenter $(u_j, v_j)$
              \begin{equation*}
                     q_t(u, v) = q_t(u_j,v_j) + \langle \nabla q_t(u_j,v_j), \bn \rangle + \int_{0}^{1} (1-r) \langle H[q_t]((u_j,v_j)+r\bn) \bn, \bn \rangle \dint r
              \end{equation*}
              where $\bn = (u - u_j, v - v_j)$. Note that since $(u_j,v_j)$ is the barycenter of the cell $j$,
              \begin{equation*}
                     \begin{aligned}
                            &\int_{(u,v) \in D_j} q_t(u, v) - q_t(u_j,v_j) \dint\sigma(u, v) \\
                            &\qquad = \int_{(u,v) \in D_j} \int_{0}^{1} (1-r) \langle H[q_t]((u_j,v_j)+r\bn) \bn, \bn \rangle \dint r \dint\sigma(u, v).
                     \end{aligned}
              \end{equation*}
              This allows one to bound
              \begin{equation*}
                     \begin{aligned}
                            \left| \int_{(u,v) \in D_j} q_t(u, v) - q_t(u_j,v_j) \dint\sigma(u, v) \right| &\leq  \int_{(u,v) \in D_j} \frac{1}{2} \esssup_{ (u, v) \in D} \| H[q_t](u, v)\| \| \bn \|^2 \dint\sigma(u, v) \\
                            & \lesssim C(h,\dist_{\min}, \sigma) \|\bp_0\|_1 \diam(D_j)^2 \Delta \bq_j.
                     \end{aligned}
              \end{equation*}
              Summing over all the cells gives
              \begin{equation*}
                     \begin{aligned}
                            \err_1    & \lesssim C(h,\dist_{\min}, \sigma) \|\bp_0\|_1 \sum_{j \in [Q]} \diam(D_j)^2 \Delta \bq_j\\
                                   & \lesssim C(h,\dist_{\min}, \sigma) \|\bp_0\|_1 \sum_{j \in [Q]} \Delta \bq_j Q^{-1}  \\
                                   & \lesssim C(h,\dist_{\min}, \sigma) \|\bp_0\|_1 |\Sigma| Q^{-1}
                     \end{aligned}
              \end{equation*}
              using the fact that for all $j$, $\diam(D_j) \lesssim Q^{-1/2}$.
              \item[Bounding $\err_2$:] this term is due to the quantization of the arrival times $c^{-1}\|\bq_j - \bx_i\|$ on the grid $(\tau_l)_{l=0}^{UL}$.
              Recall that $\err_2 = | \widetilde{s}(t) - \overline{s}(t)|$ with
              \begin{equation*}
                     \begin{aligned}
                            \widetilde{s}(t) &= \sum_{j \in [Q]} \sum_{i \in [N]} h \left( t - \frac{\| \bq_j - \bx_i\|}{c}\right) \frac{\bp_0[i] \Delta \bq_j}{\| \bq_j - \bx_i\|}\\
                            \overline{s}(t) &= \left( h \star \sum_{l \in [UL]} \overline{\bbf}[l] \indic_{[\tau_{l-1}, \tau_l)} \right)(t) \\
                            \overline{\bbf}[l] & = U F_s \sum_{j \in [Q]} \sum_{i \in [N]} \frac{\bp_0[i] \Delta \bq_j}{\| \bq_j - \bx_i\|} \indic_{[\tau_{l-1}, \tau_l)} \left( \frac{\| \bq_j - \bx_i\|}{c} \right).
                     \end{aligned}
              \end{equation*}
              Using \Cref{lem:quantization} with the measure $\mu = \sum_{j \in [Q]} \Delta \bq_j \delta_{\bq_j}$ gives
              \begin{equation*}
                     \err_2 \leq \dist_{\min}^{-1} |\Sigma| \|h'\|_{L^\infty} \|\bp_0\|_1 (UF_s)^{-1}.
              \end{equation*}
       \end{description}
       The final bound is obtained by combining $\err_1$, $\err_2$.

       When $h \in W^{1,\infty}(\R)$ only, the mean value inequality on each convex cell $D_j$ replaces the Taylor expansion.
       Since $\|\nabla q_t(u,v)\| \lesssim \|\bp_0\|_1 \bigl( \|h'\|_{L^\infty}/(c\,\dist_{\min}) + \|h\|_{L^\infty}/\dist_{\min}^2 \bigr) |\sigma|_{W^{1,\infty}}$ by \Cref{lem:smoothness_f}, it gives $\err_1 \lesssim \|\bp_0\|_1 \sum_{j \in [Q]} \diam(D_j) \Delta \bq_j \lesssim |\Sigma| \, \|\bp_0\|_1 \, Q^{-\frac{1}{2}}$, with a constant depending only on $\|h\|_{L^\infty}$, $\|h'\|_{L^\infty}$, $\dist_{\min}$ and $|\sigma|_{W^{1,\infty}}$.

\end{proof}

\begin{proof}[{Proof of \Cref{prop:approximation_surface}}]
        The error decomposition \eqref{eq:error_decomp} for the surface method gives $\err_1 = 0$ since in this case $\widetilde{s} = s$. It remains to bound $\err_2$ and $\err_3$.

       \begin{description}
              \item[Bounding $\err_2$:] this term is due to the quantization of the arrival times $c^{-1}\|\bx - \bx_i\|$, for $\bx \in \Sigma$, on the grid $(\tau_l)_{l=0}^{UL}$.
              Recall that $\err_2 = |\widetilde{s}(t) - \overline{s}(t)|$ with
              \begin{equation*}
                     \begin{aligned}
                            \widetilde{s}(t) &= \sum_{i \in [N]} \int_{\Sigma} h \left( t - \frac{\| \bx - \bx_i\|}{c}\right) \frac{\bp_0[i]}{\| \bx - \bx_i\|} \dint\sigma(\bx)\\
                            \overline{s}(t) &= \left( h \star \sum_{l \in [UL]} \overline{\bbf}[l] \indic_{[\tau_{l-1}, \tau_l)} \right)(t) \\
                            \overline{\bbf}[l] & = U F_s \sum_{i \in [N]} \int_\Sigma \frac{\bp_0[i]}{\| \bx - \bx_i\|} \indic_{[\tau_{l-1}, \tau_l)} \left( \frac{\| \bx - \bx_i\|}{c} \right) \dint \sigma(\bx).
                     \end{aligned}
              \end{equation*}
              Using \Cref{lem:quantization} with the measure $\dint \mu = \dint \sigma$ gives
              \begin{equation*}
                     \err_2 \leq \dist_{\min}^{-1} |\Sigma| \|h'\|_{L^\infty} \|\bp_0\|_1 (UF_s)^{-1}.
              \end{equation*}
              \item[Bounding $\err_3$:] This last error term is due to the approximation of $\| \bx - \bx_i\|^{-1}$ by $(c \tau_{l-\frac{1}{2}})^{-1}$.
              Recalling that
              \begin{equation*}
                     \begin{aligned}
                           \overline{s}(t) - \widehat{s}(t) &= \int_{\R} h(t - \tau) \sum_{l \in [UL]} \left( \overline{\bbf}[l] - \bbf[l] \right) \indic_{[\tau_{l-1}, \tau_l)} (\tau) \dint \tau \\
                           &= \sum_{l \in [UL]} \left( \overline{\bbf}[l] - \bbf[l] \right) \int_{\R} h(t - \tau)  \indic_{[\tau_{l-1}, \tau_l)} (\tau) \dint \tau
                     \end{aligned}
              \end{equation*}
              so that
              \begin{equation*}
                     \begin{aligned}
                           \|\overline{s} - \widehat{s}\|_{L^\infty} & \leq \frac{\| h\|_{L^\infty}}{UF_s} \left\| \overline{\bbf} - \bbf \right\|_1.
                     \end{aligned}
              \end{equation*}

              First observe that for all $l \in [UL]$,
              \begin{equation*}
                     \begin{aligned}
                           |\overline{\bbf}[l] - \bbf[l]| & \leq UF_s \sum_{i} | \bp_0[i]| \int_{\mathcal{M}_i^l} \left| \frac{1}{\|\bx - \bx_i\|} - \frac{1}{c \tau_{l-\frac{1}{2}}} \right| \dint\sigma(\bx) \\
                           & \leq UF_s \sum_{i} | \bp_0[i]| \left| \mathcal{M}_i^l \right| \sup_{\bx \in \mathcal{M}_i^l} \left| \frac{1}{\| \bx - \bx_i\|} - \frac{1}{c\tau_{l-\frac{1}{2}}} \right| \\
                    \end{aligned}
              \end{equation*}
              Due to \eqref{eq:min_dist}, $\|\bx - \bx_i\| \geq \dist_{\min}$.
              Note that \Cref{alg:surface_based} only computes $\bbf[l]$ for $l_{\min} \leq l \leq l_{\max}$ so that the midpoint satisfies $c\tau_{l-\frac12} \geq r_{\min}(i) - \frac{c}{2UF_s} \geq \frac{\dist_{\min}}{2}$ as soon as $c/(UF_s) \leq \dist_{\min}$.
              Since on $\mathcal{M}_i^l$ both $\| \bx - \bx_{i} \|$ and $c \tau_{l-\frac{1}{2}}$ belong to $[c\tau_{l-1}, c\tau_l)$ it follows that
            \begin{equation*}
                \left| \frac{1}{\|\bx - \bx_i\|} - \frac{1}{c\tau_{l-\frac12}} \right| \leq \frac{c}{2UF_s}  \frac{1}{\dist_{\min}}  \frac{2}{\dist_{\min}} = \frac{c}{UF_s \dist_{\min}^2}.
        \end{equation*}
            Therefore,
                \begin{equation*}
                     \begin{aligned}
                             |\overline{\bbf}[l] - \bbf[l]| & \leq \sum_{i} | \bp_0[i]| \left| \mathcal{M}_i^l \right| \frac{c}{\dist_{\min}^2}.
                     \end{aligned}
              \end{equation*}
              Then, summing on $l$ gives:
                \begin{equation*}
                     \begin{aligned}
                            \left\|\overline{\bbf} - \bbf \right\|_1 & \leq \sum_{l} \sum_i | \bp_0[i]| \left| \mathcal{M}_i^l \right| \frac{c}{\dist_{\min}^2} \\
                            & \leq \|\bp_0\|_1 \left| \Sigma \right| \frac{c}{\dist_{\min}^2}
                     \end{aligned}
              \end{equation*}
              since $\sum_l \left| \mathcal{M}_i^l \right| \leq |\Sigma|$ for all $i$. Therefore,
              \begin{equation*}
                     \begin{aligned}
                           \err_3 & \lesssim |\Sigma| \| \bp_0\|_1 \|h\|_{L^\infty} c \dist_{\min}^{-2} (UF_s)^{-1}.
                     \end{aligned}
              \end{equation*}
       \end{description}
       The final bound is obtained by combining $\err_2$ and $\err_3$:
       \begin{equation*}
        \begin{aligned}
              \err_2 + \err_3 &\lesssim |\Sigma| \|\bp_0\|_1 \left( \|h'\|_{L^\infty} + \frac{c}{\dist_{\min}} \|h\|_{L^\infty} \right) \dist_{\min}^{-1} (UF_s)^{-1} \\
              & = C(h,\dist_{\min}) \, |\Sigma| \, \|\bp_0\|_1 \, \dist_{\min}^{-1} (UF_s)^{-1}.
        \end{aligned}
       \end{equation*}
\end{proof}

\subsection{Supporting lemmata}

The proof of \Cref{prop:approximation_points,prop:approximation_surface} will use the following results.

\begin{lemma}  \label{lem:smoothness_g}
       Assume that the extension of $\phi$ by zero belongs to $W^{k, \infty}(\R_+)$.
       Then if $k \geq 1$, the derivative of $g$ defined in \Cref{prop:point_value_pressure_field} is
       \begin{equation*}
              g'(t) = -\frac{1}{2} c \left( \phi(|ct|) + |ct| \phi'(|ct|) \right).
       \end{equation*}
       If $k \geq 2$ its second order weak derivative is
       \begin{equation*}
              g''(t) = -\frac{1}{2} c^2 \left( 2 \, \mathrm{sign}(t) \phi'(|ct|) + ct \phi''(|ct|) \right).
       \end{equation*}
       Moreover,
       \begin{equation*}
              \begin{aligned}
                     \| g \|_{L^\infty} &\leq  \frac{1}{2} \kappa \| \phi \|_{L^\infty} \\
                     \| g' \|_{L^\infty} &\leq \frac{1}{2} c (\|\phi\|_{L^\infty} + \kappa \|\phi'\|_{L^\infty} ) \\
                     \| g'' \|_{L^\infty} &\leq \frac{1}{2} c^2 ( 2 \| \phi ' \|_{L^\infty} + \kappa \|\phi'' \|_{L^\infty} ) \\
                     \|g\|_{L^1} & \leq \kappa^2 c^{-1} \| \phi \|_{L^\infty}
              \end{aligned}
       \end{equation*}
       These formulas are the ones used to transfer regularity from $\phi$ onto $g$.
\end{lemma}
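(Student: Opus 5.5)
The plan is to differentiate $g(t) = -\tfrac12 ct\,\phi(|ct|)$ directly, using the chain and product rules for Sobolev functions, and then bound each term by exploiting the compact support of $\phi$. Throughout, I extend $\phi$ by zero to the whole of $\R_+$. Because $g$ is odd, every property can be checked on $t>0$ and transferred to $t<0$ by symmetry, provided nothing goes wrong at $t=0$.

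\emph{First derivative.} For $t>0$ we have $g(t) = -\tfrac12 ct\,\phi(ct)$. When $k\ge1$, $\phi$ is Lipschitz, so the product rule for Lipschitz functions gives
\[
g'(t) = -\tfrac12 c\bigl(\phi(ct)+ct\,\phi'(ct)\bigr)
\]
almost everywhere. For $t<0$, oddness makes $g'$ even, which is exactly the formula with $|ct|$. Near $0$, $g$ is continuous and $g(0)=0$, so $g$ is absolutely continuous across the origin and no Dirac term appears.

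\emph{Second derivative.} For $k\ge2$ I differentiate the expression for $g'$ again on $t>0$ and obtain $-\tfrac12 c^2\bigl(2\phi'(ct)+ct\,\phi''(ct)\bigr)$. Since $g'$ is even, $g''$ is odd, which accounts for the factor $\mathrm{sign}(t)$ on the $\phi'$ term (note that $ct\,\phi''$ is already odd). At $t=0$ the only question is whether $g'$ jumps. It does not, since $g'$ is continuous there with value $-\tfrac12 c\,\phi(0)$ from both sides. Hence the weak derivative carries no singular part.

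This step, justifying that the weak derivatives contain no singular parts, is the main obstacle. It has to be checked at two places: at $t=0$, where the absolute value enters, and at $|ct|=\kappa$, where $\phi$ is cut off. At $|ct|=\kappa$, the assumption that the zero extension of $\phi$ lies in $W^{k,\infty}$ supplies the needed continuity of $\phi$ (and of $\phi'$ when $k\ge2$). I would argue this via the characterization of $W^{1,\infty}$ functions as Lipschitz functions, applied to $g$ and then to $g'$.

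\emph{Sup bounds.} On the support we have $|ct|\le\kappa$. This directly gives $\|g\|_\infty\le \tfrac12\kappa\|\phi\|_\infty$, the bound on $g'$ with $|ct|\,|\phi'|\le\kappa\|\phi'\|_\infty$, and similarly the bound on $g''$.

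\emph{$L^1$ bound.} Here I integrate over the support $[-\kappa/c,\kappa/c]$:
\[
\|g\|_{L^1} \le 2\int_0^{\kappa/c}\tfrac12 ct\,\|\phi\|_\infty\,\dint t = \tfrac{\kappa^2}{2c}\|\phi\|_\infty,
\]
which is at most $\kappa^2c^{-1}\|\phi\|_\infty$. This $L^1$ bound requires only that $\phi$ be bounded, so it holds without any smoothness assumption.
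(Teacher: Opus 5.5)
Your proposal is correct and follows the paper's route: differentiate $g$ directly by the chain and product rules, read the sup bounds off from $|ct|\le\kappa$ on the support, and integrate over $[-\kappa/c,\kappa/c]$ for the $L^1$ bound. You are more careful than the paper's one-line proof in ruling out singular parts of the weak derivatives at $t=0$ and $|ct|=\kappa$. Your exact integration of $t$ gives the sharper constant $\kappa^2/(2c)$, where the paper bounds $|ct|\le\kappa$ and obtains $\kappa^2/c$.
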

\begin{proof}
       The derivatives are obtained using the definition of $g$ and the chain rule. The $L^\infty$ bounds directly follow. The $L^1$ bound is obtained as follows
              \begin{equation*}
                     \begin{aligned}
                           \|g\|_{L^1} & = \int_{|t| < \kappa c^{-1}} \frac{1}{2} |ct| \left| \phi(|ct|) \right| \dint t \\
                           & \leq \frac{1}{2} \kappa \| \phi \|_{L^\infty} \int_{|t| < \kappa c^{-1}} \dint t \\
                           & = \frac{1}{2} \kappa \| \phi \|_{L^\infty} \cdot 2\kappa c^{-1} \\
                           & = \kappa^2 c^{-1} \| \phi \|_{L^\infty}.
                     \end{aligned}
              \end{equation*}
\end{proof}

\begin{lemma} \label{lem:smoothness_f}
       Fix $t > 0$ and a voxel $\bx_i$, translate coordinates so that $\bx_i = 0$. Let $\varrho : D \to \R$, $G : (0,\infty) \to \R$ and $q_{t,i} : D \to \R$ be the functions defined by
       \begin{equation*}
              \begin{aligned}
                     \varrho(u, v) & = \| \sigma(u, v) \| \\
                     G(r) & = r^{-1} h\left(t - c^{-1}r\right) \\
                     q_{t,i}(u, v) &= G(\varrho(u, v)),
              \end{aligned}
       \end{equation*}
       with $h = e \star g$ as in \eqref{eq:def_h}, so that the map $q_t$ of \Cref{prop:approximation_points} decomposes as $q_t = \sum_{i \in [N]} \bp_0[i]\, q_{t,i}$.
       Then under \Cref{ass:smoothness_transducer},
       \begin{equation*}
              \begin{aligned}
                     \nabla q_{t,i}(u, v) &= G'(\varrho(u, v)) \nabla \varrho(u, v) \\
                     H[q_{t,i}](u, v) &= G''(\varrho(u, v)) \nabla \varrho(u, v) \nabla \varrho(u, v)^T + G'(\varrho(u, v)) H[\varrho](u, v) \\
              \end{aligned}
       \end{equation*}
       with
       \begin{equation*}
              \begin{aligned}
              G'(r) &= -\frac{1}{cr} h'\left(t - c^{-1}r\right) - \frac{1}{r^2} h\left(t - c^{-1}r\right) \\
              G''(r) &= -\frac{2}{r} G'(r) + \frac{1}{c^2r} h''\left(t - c^{-1}r\right) \\
              \nabla \varrho(u, v) & =J[\sigma](u, v)^T \frac{\sigma(u, v)}{\varrho(u, v)} \\
              H[\varrho](u, v) &= B(u, v) + \frac{1}{\varrho(u, v)} J[\sigma](u, v)^T \left( \Id -\frac{\sigma(u, v) \sigma(u, v)^T}{\varrho(u, v)^2} \right) J[\sigma](u, v) \\
              B(u, v) &= \begin{pmatrix} \left\langle \partial^2_u \sigma(u, v) , \frac{\sigma(u, v)}{\varrho(u, v)} \right\rangle & \left\langle \partial^2_{uv} \sigma(u, v) , \frac{\sigma(u, v)}{\varrho(u, v)} \right\rangle \\ \left\langle \partial^2_{uv} \sigma(u, v) , \frac{\sigma(u, v)}{\varrho(u, v)} \right\rangle & \left\langle \partial^2_v \sigma(u, v) , \frac{\sigma(u, v)}{\varrho(u, v)} \right\rangle\end{pmatrix}
              \end{aligned}
       \end{equation*}
       Moreover, under \Cref{ass:setting,ass:smoothness_transducer} and $h \in W^{2, \infty}(\R)$,
       \begin{equation*}
              \| H[q_{t,i}](u, v) \| \lesssim K(h,\dist_{\min}) \left( | \sigma |_{W^{1,\infty}}^2 + \dist_{\min} | \sigma |_{W^{2,\infty}} \right) \dist_{\min}^{-1}
       \end{equation*}
       with
       \begin{equation*}
            K(h,\dist_{\min}) = \frac{\|h''\|_{L^\infty}}{c^{2}} + \frac{\|h'\|_{L^\infty}}{c\,\dist_{\min}} + \frac{\|h\|_{L^\infty}}{\dist_{\min}^{2}}.
       \end{equation*}
       The constant of \Cref{prop:approximation_points} is defined as
       \begin{equation*}
        C(h,\dist_{\min},\sigma) = K(h,\dist_{\min}) \left( | \sigma |_{W^{1,\infty}}^2 + \dist_{\min} | \sigma |_{W^{2,\infty}} \right) \dist_{\min}^{-1}.
       \end{equation*}

\end{lemma}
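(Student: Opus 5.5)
The plan is to differentiate the composition $q_{t,i}=G\circ\varrho$ by the chain rule, compute the derivatives of $G$ and $\varrho$ explicitly, and then bound each piece using only $\varrho\geq\dist_{\min}$ and the Sobolev seminorms of $\sigma$ and $h$.

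\emph{Formulas.} First I record that $\varrho=\|\sigma\|\geq\dist_{\min}>0$ on $D$, by \eqref{eq:min_dist} after translating $\bx_i$ to the origin. Hence $r\mapsto r^{-1}$ is smooth on the range of $\varrho$, and $\varrho=\sqrt{\langle\sigma,\sigma\rangle}$ inherits the $W^{2,\infty}$ regularity of $\sigma$. The chain rule then gives $\nabla q_{t,i}=G'(\varrho)\nabla\varrho$. Since $h\in W^{2,\infty}$ and $G'$ is Lipschitz, differentiating once more gives $H[q_{t,i}]=G''(\varrho)\nabla\varrho\nabla\varrho^T+G'(\varrho)H[\varrho]$. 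This holds almost everywhere, with weak derivatives; it is justified by the chain rule for Lipschitz compositions in $W^{1,\infty}$, applied to the gradient.

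\emph{Derivatives of $G$.} Differentiating $G(r)=r^{-1}h(t-r/c)$ gives $G'(r)=-\frac{1}{cr}h'-\frac{1}{r^2}h$. Differentiating again:
\begin{equation*}
G''(r)=\frac{1}{c^2r}h''+\frac{1}{cr^2}h'+\frac{1}{cr^2}h'+\frac{2}{r^3}h=\frac{1}{c^2r}h''-\frac{2}{r}G'(r).
\end{equation*}

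\emph{Derivatives of $\varrho$.} From $\varrho^2=\langle\sigma,\sigma\rangle$ we get $\nabla\varrho=J[\sigma]^T\sigma/\varrho$. Differentiating $\partial_a\varrho=\langle\partial_a\sigma,\sigma\rangle/\varrho$ once more gives
\begin{equation*}
\partial_{ab}\varrho=\frac{\langle\partial_{ab}\sigma,\sigma\rangle}{\varrho}+\frac{\langle\partial_a\sigma,\partial_b\sigma\rangle}{\varrho}-\frac{\langle\partial_a\sigma,\sigma\rangle\langle\partial_b\sigma,\sigma\rangle}{\varrho^3}.
\end{equation*}
The first term is the entry of $B$, and the last two combine into $\varrho^{-1}J^T(\Id-\sigma\sigma^T/\varrho^2)J$, as claimed.

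\emph{Bounds.} Since $\|\sigma/\varrho\|=1$ and $\Id-\sigma\sigma^T/\varrho^2$ is an orthogonal projection, we have $\|\nabla\varrho\|\leq|\sigma|_{W^{1,\infty}}$, $\|B\|\lesssim|\sigma|_{W^{2,\infty}}$, and $\|H[\varrho]\|\lesssim|\sigma|_{W^{2,\infty}}+|\sigma|_{W^{1,\infty}}^2\dist_{\min}^{-1}$. Using $r\geq\dist_{\min}$:
\begin{equation*}
|G'|\leq\Bigl(\frac{\|h'\|_{L^\infty}}{c\,\dist_{\min}}+\frac{\|h\|_{L^\infty}}{\dist_{\min}^2}\Bigr)\leq K\qquad\text{and}\qquad |G''|\lesssim K\dist_{\min}^{-1}.
\end{equation*}
Here the first two terms of $K(h,\dist_{\min})$ carry the $\dist_{\min}^{-1}$ factor through $2|G'|/r$, and the term $\|h''\|_{L^\infty}/(c^2r)$ gives the first term of $K$ times $\dist_{\min}^{-1}$. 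Combining,
\begin{equation*}
\|H[q_{t,i}]\|\lesssim K\dist_{\min}^{-1}|\sigma|_{W^{1,\infty}}^2+K\bigl(|\sigma|_{W^{2,\infty}}+|\sigma|_{W^{1,\infty}}^2\dist_{\min}^{-1}\bigr),
\end{equation*}
which equals $K\bigl(|\sigma|_{W^{1,\infty}}^2+\dist_{\min}|\sigma|_{W^{2,\infty}}\bigr)\dist_{\min}^{-1}$.

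The main obstacle is the regularity bookkeeping rather than the algebra. Because $h$ and $\sigma$ are only $W^{2,\infty}$, the second derivatives exist only almost everywhere, so I would need the chain rule for $W^{1,\infty}$ compositions, applied to $\nabla q_{t,i}$. I would also need to check that the estimate is uniform in $t$; it is, since only sup norms of $h$ enter. The decomposition $q_t=\sum_i\bp_0[i]q_{t,i}$ follows from \Cref{prop:point_value_pressure_field} convolved with $e$, and summing with the triangle inequality yields the $\|\bp_0\|_1$ bound used earlier.
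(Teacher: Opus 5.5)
Your proposal is correct and follows essentially the same route as the paper. Both apply the chain rule to $G\circ\varrho$, compute $G'$, $G''$, $\nabla\varrho$ and $H[\varrho]$ explicitly, and then use $\varrho\geq\dist_{\min}$, $\|\sigma/\varrho\|=1$ and the fact that the projection has norm at most one to obtain $\|G'\|_{L^\infty}\le K$, $\|G''\|_{L^\infty}\lesssim K\dist_{\min}^{-1}$ and $\|H[\varrho]\|\lesssim|\sigma|_{W^{2,\infty}}+\dist_{\min}^{-1}|\sigma|_{W^{1,\infty}}^2$. The paper only invokes the chain rule, so your remarks on its a.e.\ validity for $W^{2,\infty}$ data are an addition; your final ``equals'' should read $\lesssim$ (a factor of $2$), which is harmless.
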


\begin{proof}
       The expressions of the derivatives are obtained using chain rule.
       Since $\varrho(u,v) \geq \dist_{\min}$ by \Cref{ass:setting}\ref{ass:setting:separation} and the definition \eqref{eq:min_dist} of $\dist_{\min}$, the bounds
       \begin{equation*}
              \begin{aligned}
                     \|G'\|_{L^\infty} & \leq \frac{1}{c\,\dist_{\min}} \|h'\|_{L^\infty} + \frac{1}{\dist_{\min}^2} \|h\|_{L^\infty} \leq K(h,\dist_{\min}), \\
                     \|G''\|_{L^\infty} & \leq \frac{2}{\dist_{\min}} \|G'\|_{L^\infty} + \frac{1}{c^2 \dist_{\min}} \|h''\|_{L^\infty} \lesssim \frac{1}{\dist_{\min}} K(h,\dist_{\min}),
              \end{aligned}
       \end{equation*}
       follow directly, using $h \in W^{2, \infty}(\R)$ for the finiteness of $\|h\|_{L^\infty}, \|h'\|_{L^\infty}, \|h''\|_{L^\infty}$.
       Moreover,
       \begin{equation*}
              \begin{aligned}
                     \| B(u, v) \| & \leq \| B(u, v) \|_F \\
                     & \leq \sqrt{ \|\partial_u^2 \sigma(u, v) \|^2 + 2 \|\partial_{uv}^2 \sigma(u, v) \|^2 + \|\partial_v^2 \sigma(u, v) \|^2 } \\
                     & \leq 2 |\sigma|_{W^{2,\infty}}
              \end{aligned}
       \end{equation*}
       where $\| B(u, v) \|_F$ is the Frobenius norm and where the second inequality has been obtained via a Cauchy--Schwarz inequality and the fact that $\sigma(u, v) \varrho(u, v)^{-1}$ is a unit vector.
       This leads to
              \begin{equation*}
              \begin{aligned}
                     \| H[\varrho](u, v) \| & \leq \| B(u, v) \| + \frac{1}{\varrho(u, v)} \| J[\sigma](u, v)\|^2 \\
                     & \leq 2|\sigma|_{W^{2,\infty}} + \frac{2}{\dist_{\min}} | \sigma |_{W^{1,\infty}}^2,
              \end{aligned}
       \end{equation*}
since $\| J[\sigma](u, v)\|^2 \leq \| \partial_u \sigma(u, v)\| ^2 + \|\partial_v \sigma(u, v)\|^2 \leq 2|\sigma|_{W^{1,\infty}}^2$.
       Plugging these bounds into
       \begin{equation*}
              \begin{aligned}
                     \| H[q_{t,i}](u, v) \| &\leq \|G''\|_{L^\infty} \| \nabla \varrho(u, v) \|^2 + \|G'\|_{L^\infty} \| H[\varrho](u, v)\| \\
                     & \leq \left( \|G''\|_{L^\infty} + \frac{1}{\dist_{\min}} \|G'\|_{L^\infty}  \right)  \| J[\sigma](u, v)\|^2 + \|G'\|_{L^\infty} \| B(u, v) \| \\
                     & \lesssim \frac{K(h,\dist_{\min})}{\dist_{\min}} \, | \sigma |_{W^{1,\infty}}^2 + K(h,\dist_{\min}) \, | \sigma |_{W^{2,\infty}} \\
                     & = \frac{K(h,\dist_{\min})}{\dist_{\min}} \left( | \sigma |_{W^{1,\infty}}^2 + \dist_{\min} | \sigma |_{W^{2,\infty}} \right)
              \end{aligned}
       \end{equation*}
       which is the desired bound.
\end{proof}

\begin{lemma} \label{lem:quantization}
       Let $\mu$ be a measure supported on $\Sigma$ and of total mass $|\Sigma|$. Let
       \begin{equation*}
              \begin{aligned}
                     \tau_{i}(\bx) &= c^{-1} \|\bx - \bx_i\|, \\
                     \alpha_{i}(\bx) &= \bp_0[i] (c \tau_{i}(\bx))^{-1}, \\
                     \widetilde{s}(t) & = \sum_{i \in [N]} \int_{\Sigma} \alpha_{i}(\bx) h(t - \tau_{i}(\bx)) \dint \mu(\bx), \\
                     \overline{s}(t) & = \sum_{l \in [UL]} \overline{\bbf}[l] h \star \indic_{[\tau_{l-1}, \tau_l)}(t) \\
                     \overline{\bbf}[l] & = UF_s \sum_{i \in [N]} \int_\Sigma \alpha_{i}(\bx)\indic_{[\tau_{l-1}, \tau_l)}(\tau_{i}(\bx)) \dint \mu(\bx).
              \end{aligned}
       \end{equation*}
       Assume that $\tau_i(\bx) \in [\tau_0, \tau_{UL})$ for all $i \in [N]$ and $\bx \in \Sigma$, as ensured by the padding convention of \Cref{subsec:overview}.
       Then, under \Cref{ass:setting,ass:smoothness_transducer} and $h \in W^{1, \infty}(\R)$
       \begin{equation*}
              |\widetilde{s}(t) - \overline{s}(t)| \lesssim \dist_{\min}^{-1} |\Sigma| \|h'\|_{L^\infty} \|\bp_0\|_1 (UF_s)^{-1}.
       \end{equation*}
\end{lemma}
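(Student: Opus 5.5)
The plan is to rewrite $\overline{s}$ in the same form as $\widetilde{s}$ and compare the two pointwise in $t$. By linearity and the definition of $\overline{\bbf}$, and after exchanging the finite sum over $l$ with the integral against $\mu$ (Fubini, justified since all quantities are bounded and $\mu$ is finite), we get
\begin{equation*}
\overline{s}(t) = \sum_{i \in [N]} \int_\Sigma \alpha_i(\bx) \sum_{l \in [UL]} \indic_{[\tau_{l-1},\tau_l)}(\tau_i(\bx))\, UF_s \int_{\tau_{l-1}}^{\tau_l} h(t-\tau)\dint\tau \dint\mu(\bx).
\end{equation*}
Under the assumption $\tau_i(\bx) \in [\tau_0,\tau_{UL})$, exactly one index $l = l_i(\bx)$ contributes. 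Writing $h(t-\tau_i(\bx)) = UF_s\int_{\tau_{l-1}}^{\tau_l} h(t-\tau_i(\bx))\dint\tau$ for that same $l$, the difference becomes
\begin{equation*}
\widetilde{s}(t)-\overline{s}(t) = \sum_{i \in [N]} \int_\Sigma \alpha_i(\bx)\, UF_s \int_{\tau_{l_i(\bx)-1}}^{\tau_{l_i(\bx)}} \bigl( h(t-\tau_i(\bx)) - h(t-\tau)\bigr)\dint\tau \dint\mu(\bx).
\end{equation*}

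Next I use the Lipschitz continuity of $h$. Since $h \in W^{1,\infty}(\R)$, it has a Lipschitz representative with constant $\|h'\|_{L^\infty}$. Both $\tau$ and $\tau_i(\bx)$ lie in the same cell of length $(UF_s)^{-1}$, so the inner integrand is bounded by $\|h'\|_{L^\infty}(UF_s)^{-1}$. After the averaging factor $UF_s\int\dint\tau$, the inner term is therefore at most $\|h'\|_{L^\infty}(UF_s)^{-1}$.

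It remains to bound the weights. We have $|\alpha_i(\bx)| \le |\bp_0[i]|/\dist_{\min}$, by the definition \eqref{eq:min_dist} of $\dist_{\min}$ together with \Cref{ass:setting}, since $\mu$ is supported on $\Sigma$. Integrating against $\mu$, whose total mass is $|\Sigma|$, and summing over $i$ then gives $\dist_{\min}^{-1}|\Sigma|\,\|h'\|_{L^\infty}\|\bp_0\|_1(UF_s)^{-1}$.

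The only delicate step is measure-theoretic bookkeeping. One must check that $\bx\mapsto l_i(\bx)$ is measurable, which holds because it is a composition of a continuous map with a floor function. One must also identify $h\star\indic_{[\tau_{l-1},\tau_l)}$ pointwise with the integral $\int_{\tau_{l-1}}^{\tau_l} h(t-\tau)\dint\tau$, which is valid for every $t$ since $h$ is continuous. I expect no real obstacle beyond this.
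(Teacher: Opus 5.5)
Your proposal is correct and follows essentially the same route as the paper's proof. You rewrite $\overline{s}$ through the unique cell index $l_i(\bx)$ containing $\tau_i(\bx)$, express the difference as a cell average of $h(t-\tau_i(\bx)) - h(t-\tau)$ bounded by $\|h'\|_{L^\infty}(UF_s)^{-1}$, and bound the weights by $|\bp_0[i]|/\dist_{\min}$; your measurability and sum--integral exchange remarks only make explicit what the paper leaves implicit.
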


\begin{proof}
       Plugging the definition of $\overline{\bbf}[l]$ in $\overline{s}$ gives
              \begin{equation*}
                     \begin{aligned}
                            \overline{s}(t) &= \sum_{l \in [UL]} U F_s \sum_{i \in [N]} \int_{\Sigma} \alpha_{i}(\bx) \indic_{[\tau_{l-1}, \tau_l)} (\tau_{i}(\bx)) \dint\mu(\bx) h \star \indic_{[\tau_{l-1}, \tau_l)} (t)
                     \end{aligned}
              \end{equation*}
              Let $l(i,\bx)$ be the index such that $\tau_{i}(\bx) \in [\tau_{l(i,\bx)-1}, \tau_{l(i,\bx)})$, then
              \begin{equation*}
                     \begin{aligned}
                            \overline{s}(t) &= U F_s \sum_{i \in [N]} \int_{\Sigma} \alpha_{i}(\bx) h \star \indic_{[\tau_{l(i,\bx)-1}, \tau_{l(i,\bx)})} (t) \dint\mu(\bx)
                     \end{aligned}
              \end{equation*}
              This allows one to show that
              \begin{equation*}
                     \begin{aligned}
                            \left| \widetilde{s}(t) - \overline{s}(t) \right| &=  \left| \sum_{i \in [N]} \int_{\Sigma} \alpha_{i}(\bx)  \left( h(t - \tau_{i}(\bx)) - UF_s h \star \indic_{[\tau_{l(i,\bx)-1}, \tau_{l(i,\bx)})} (t)\right) \dint\mu(\bx)\right|.
                     \end{aligned}
              \end{equation*}
              Observing that
              \begin{equation*}
                     \begin{aligned}
                           h(t - \tau_{i}(\bx)) - UF_s h \star \indic_{[\tau_{l(i,\bx)-1}, \tau_{l(i,\bx)})} (t) &= h(t - \tau_{i}(\bx)) - UF_s \int_{\tau_{l(i,\bx)-1}}^{\tau_{l(i,\bx)}} h( t - \tau ) \dint \tau \\
                            &= UF_s \int_{\tau_{l(i,\bx)-1}}^{\tau_{l(i,\bx)}} h(t - \tau_{i}(\bx)) - h( t - \tau ) \dint \tau \\
                     \end{aligned}
              \end{equation*}
              allows one to bound, since $h \in W^{1,\infty}(\R)$
              \begin{equation*}
                     \begin{aligned}
                            \left| h(t - \tau_{i}(\bx)) - UF_s h \star \indic_{[\tau_{l(i,\bx)-1}, \tau_{l(i,\bx)})} (t) \right| & \leq \frac{\|h'\|_{L^\infty}}{UF_s},
                     \end{aligned}
              \end{equation*}
              since $\tau_{i}(\bx) \in [\tau_{l(i,\bx)-1}, \tau_{l(i,\bx)})$.
              This shows that
              \begin{equation*}
                     \begin{aligned}
                            \left| \widetilde{s}(t) - \overline{s}(t) \right| &\leq  \sum_{i} \int_\Sigma |\alpha_{i}(\bx)| \dint\mu(\bx) \frac{\|h'\|_{L^\infty}}{UF_s}.
                     \end{aligned}
              \end{equation*}
              Moreover, by definition of $\alpha$
              \begin{equation*}
                     \begin{aligned}
                           \sum_{i} \int_\Sigma |\alpha_{i}(\bx)| \dint\mu(\bx) & \leq \sum_{i} \int_{\Sigma} \left|\frac{\bp_0[i]}{c \tau_{i}(\bx)}\right| \dint \mu(\bx)\\
                            & \leq \frac{1}{\dist_{\min}} \sum_{i} | \bp_0[i] | \int_\Sigma \dint\mu(\bx) \\
                            & = \frac{1}{\dist_{\min}} |\Sigma| \| \bp_0 \|_1
                     \end{aligned}
              \end{equation*}
              giving the desired bound.
\end{proof}

\section{Elliptic integral evaluation of the arc integral}
\label{app:elliptic_integrals}
This appendix establishes the closed form \eqref{eq:cyl_Il} of the arc integral $I_l$ for the cylindrical transducer, with $\psi(u) = \rho^2 + R^2 - 2R\rho\cos u$ the function of its separable form (\Cref{subsec:elliptic_method}).

\begin{lemma} \label{lem:exact_integral}
       Let $b(r) = r^2 - \rho^2 - R^2$, so that $\Psi(u, r) = \sqrt{r^2 - \psi(u)} = \sqrt{2R\rho\cos u + b(r)}$. Let $0 \leq \alpha \leq \beta \leq \pi$ and $r \geq 0$ be such that $\psi(\beta) \leq r^2$. Then $2R\rho + b(r) = r^2 - \psi(0) \geq 0$ and, if $2R\rho + b(r) > 0$, the parameter $\nu(r) = \dfrac{4R\rho}{2R\rho + b(r)}$ satisfies $\nu(r)\sin^2 t \leq 1$ for all $t \in [0, \beta/2]$ and
       \begin{equation} \label{eq:arc_in_E}
              \int_{\alpha}^{\beta} \Psi(u, r)\,\dint u = 2\sqrt{2R\rho + b(r)} \left[ E\!\left(\tfrac{\beta}{2}, \nu(r)\right) - E\!\left(\tfrac{\alpha}{2}, \nu(r)\right) \right].
       \end{equation}
       If $2R\rho + b(r) = 0$, then $\alpha = \beta$ or $\Psi(\cdot, r) \equiv 0$, and the left-hand side of \eqref{eq:arc_in_E} vanishes.
\end{lemma}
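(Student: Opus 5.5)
The proof will rest on the half-angle identity $\cos u = 1 - 2\sin^2(u/2)$, which turns $\Psi$ into the integrand of an incomplete elliptic integral of the second kind.

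First, set $A = 2R\rho + b(r)$. Since $\psi(0) = \rho^2 + R^2 - 2R\rho$, we have $A = r^2 - \psi(0)$. Because $\psi$ is nondecreasing on $[0,\pi]$ and $\beta \in [0,\pi]$, the hypothesis $\psi(\beta) \le r^2$ gives $\psi(0) \le \psi(\beta) \le r^2$, so $A \ge 0$.

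Next, write
\begin{equation*}
2R\rho\cos u + b(r) = 2R\rho\bigl(1 - 2\sin^2(u/2)\bigr) + b(r) = A - 4R\rho\sin^2(u/2).
\end{equation*}
Assume $A>0$. Then $\Psi(u,r) = \sqrt{A}\,\sqrt{1 - \nu(r)\sin^2(u/2)}$ with $\nu(r) = 4R\rho/A$.

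For the admissibility condition $\nu\sin^2 t \le 1$ on $[0,\beta/2]$, take $t\in[0,\beta/2]$ and set $u = 2t \in [0,\beta]$. By monotonicity of $\psi$ on $[0,\pi]$, $\psi(u) \le \psi(\beta) \le r^2$. Hence $A - 4R\rho\sin^2 t = r^2 - \psi(u) \ge 0$, and dividing by $A>0$ gives $\nu\sin^2 t \le 1$. This covers $[0,\alpha/2]$ as well, since $\alpha \le \beta$. So both $E(\beta/2,\nu)$ and $E(\alpha/2,\nu)$ are well defined real integrals, including when $\nu>1$; in that case we are not in the usual parameter range, so the argument must rely only on this pointwise nonnegativity.

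Then substitute $u = 2t$, $\dint u = 2\,\dint t$:
\begin{equation*}
\int_\alpha^\beta \Psi(u,r)\,\dint u = 2\sqrt{A}\int_{\alpha/2}^{\beta/2}\sqrt{1-\nu\sin^2 t}\,\dint t .
\end{equation*}
Splitting the last integral as $\int_0^{\beta/2} - \int_0^{\alpha/2}$ gives exactly \eqref{eq:arc_in_E}.

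Finally, treat the degenerate case $A=0$, i.e. $r^2 = \psi(0)$. Then $\psi(0)\le\psi(\beta)\le r^2=\psi(0)$ forces $\psi(\beta)=\psi(0)$.
\begin{itemize}
\item If $R\rho>0$, then $\psi$ is strictly increasing on $[0,\pi]$, so $\beta=0$ and hence $\alpha=\beta$.
\item If $R\rho=0$ (the focal-line case $\rho=0$ discussed earlier), then $\psi$ is constant and $\Psi(\cdot,r)=\sqrt{b(r)} = \sqrt{A} = 0$.
\end{itemize}
In either case the left-hand side of \eqref{eq:arc_in_E} vanishes.

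The main point of care is not the computation but this bookkeeping. One must check that the square root stays real throughout the integration range, which is why the monotonicity of $\psi$ on $[0,\pi]$ is invoked. This matters especially because $\nu>1$ is permitted here, so $E$ is understood as the real integral up to the angle where $\nu \sin^2 t$ reaches $1$.
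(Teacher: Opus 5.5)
Your proposal is correct and follows the paper's proof essentially step for step. You use the same monotonicity of $\psi$ on $[0,\pi]$ to get $2R\rho+b(r)=r^2-\psi(0)\ge 0$ and $1-\nu(r)\sin^2 t=(r^2-\psi(2t))/(r^2-\psi(0))\ge 0$ on $[0,\beta/2]$, the same half-angle substitution $u=2t$, and the same split of the degenerate case into $\rho>0$ (forcing $\alpha=\beta=0$) and $\rho=0$ (where $\Psi(\cdot,r)\equiv 0$).
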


\begin{proof}
       Since $\psi$ is non-decreasing on $[0, \pi]$, the hypothesis $\psi(\beta) \leq r^2$ gives $\psi(u) \leq r^2$ for all $u \in [0, \beta]$.
       In particular $\Psi(\cdot, r)$ is well-defined on $[\alpha, \beta]$ and $2R\rho + b(r) = r^2 - \psi(0) \geq 0$.
       In the degenerate case $2R\rho + b(r) = 0$, that is $r^2 = \psi(0)$: if $\rho > 0$, $\psi$ is strictly increasing and $\psi(\beta) \leq \psi(0)$ forces $\alpha = \beta = 0$; if $\rho = 0$, $\psi \equiv R^2$ and $\Psi(\cdot, r) \equiv 0$.

       In both cases the left-hand side of \eqref{eq:arc_in_E} vanishes. Otherwise, $2R\rho + b(r) > 0$ and, for all $t \in [0, \beta/2]$,
       \begin{equation*}
              1 - \nu(r)\sin^2 t = \frac{2R\rho\cos 2t + b(r)}{2R\rho + b(r)} = \frac{r^2 - \psi(2t)}{r^2 - \psi(0)} \geq 0,
       \end{equation*}
       which is the announced bound and makes the elliptic integrals below well-defined. The substitution $u = 2t$ with $\cos 2t = 1 - 2\sin^2 t$ then gives
       \begin{equation*}
              \begin{aligned}
                     \int_{\alpha}^{\beta} \sqrt{2R\rho\cos u + b(r)}\,\dint u
                     &= 2\int_{\alpha/2}^{\beta/2} \sqrt{\,2R\rho + b(r) - 4R\rho\sin^2 t\,}\,\dint t \\
                     &= 2\sqrt{2R\rho + b(r)} \int_{\alpha/2}^{\beta/2} \sqrt{1 - \nu(r)\sin^2 t}\,\dint t,
              \end{aligned}
       \end{equation*}
       which is \eqref{eq:arc_in_E} by definition of $E$ in \eqref{eq:cyl_Il}.
\end{proof}

The hypotheses are automatically satisfied when the lemma is applied in \eqref{eq:cyl_Il} with $r = c\tau_l$, $\alpha = \alpha_l$ and $\beta = \beta_l$, and likewise at the index $l-1$.
Indeed, after the reduction to the positive quadrant, $0 \leq u_{\min} \leq \alpha_l \leq \beta_l \leq u_{\max} \leq \pi$, the last bound following from $\theta_{\max} < \pi/2$ and the modulo of \Cref{subsec:elliptic_method}, and the ordering $\alpha_l \leq \beta_l$ from $v_{\min} \leq v_{\max}$ and the monotonicity of $\widehat{\Phi}$ in $v$.
As for $\psi(\beta_l) \leq (c\tau_l)^2$, either $\alpha_l = \beta_l$ and $I_l = 0$ trivially, or $\alpha_l < \beta_l$, which occurs only for $c\tau_l \in (r_{\min}, r_{\max})$.
In the latter case $\beta_l > u_{\min}$, so $\beta_l$ falls in one of the last two branches of the definition of $\widehat{\Phi}$, and in both $\psi(\beta_l) \leq (c\tau_l)^2 - v_{\min}^2 \leq (c\tau_l)^2$, since $v_{\min} \geq 0$.

Finally, since $2R\rho + b(r) = r^2 - (R-\rho)^2$, the parameter $\nu(r)$ exceeds $1$ if and only if $r < R + \rho$, that is, as long as the wavefront has not reached the far edge of the circle of radius $R$ in the plane of $\bx_i$. This is the regime in which the reciprocal-modulus transformation \eqref{eq:reciprocal_modulus} is required.

\section{Adjoint operators} \label{app:adjoint}

This appendix is dedicated to the adjoint counterparts of \Cref{alg:forward_global,alg:point_based,alg:surface_based}, given in \Cref{alg:adjoint_global,alg:adjoint_point,alg:adjoint_surface}.
\Cref{alg:forward_global} applies three linear stages in sequence (the geometric assembly $\bp_0 \mapsto \bbf_k$, the convolution with $\bh$ and the downsampling $\dsop$), so that its transpose applies the transposes of these stages in reverse order.
Being exact transposes of the implemented forward algorithms, the adjoints satisfy $\langle \bA \bp_0, \bs \rangle = \langle \bp_0, \bA^{*} \bs \rangle$ up to floating point accuracy, as required by the reconstruction algorithms of \Cref{sec:experiments}.

\paragraph{Upsampling}

Transposing the downsampling \eqref{eq:def_downsampling} replicates each time sample over the $U$ upsampled indices it averages, that is
\begin{equation*}
       \left( \dsop^{\top} \bv \right)\left[U(l-1) + l'\right] = \frac{1}{U}\,\bv[l], \qquad \forall l \in [L], \; \forall l' \in [U],
\end{equation*}
at a cost of $O(UL)$ operations, as for $\dsop$.

\paragraph{Correlation}

Transposing the discrete convolution \eqref{eq:discrete_conv_2} gives the correlation with the same kernel, $\bv \mapsto \bh^{-} \star \bv$ with $\bh^{-}[l] = \bh[-l]$.
Since $\bh$ is real-valued, the reversal amounts to a conjugation of its spectrum, so that the correlation is computed by the same batched FFT as the forward convolution, at the same cost of $O(UL \log(UL))$ operations per transducer.

\paragraph{Geometric assembly}

Transposing \eqref{eq:def_f_point} and \eqref{eq:ui_area} turns the scatter of the voxel contributions into a gather over the time samples, which is the only line of the inner loop by which \Cref{alg:adjoint_point,alg:adjoint_surface} differ from \Cref{alg:point_based,alg:surface_based}.
The padding convention of \Cref{subsec:overview} ensures that every index $l$ falls in $[UL]$, so that the scatter and the gather address the same entries.
The geometric quantities (the distances $\| \bq_j - \bx_i \|$, the time indices, the bounds $l_{\min}(i), l_{\max}(i)$ of \eqref{eq:rmin_rmax} and the areas $\left| \mathcal{M}_i^l \right|$ of \eqref{eq:def:bffi}) do not depend on $\bp_0$: they are computed by the same routines in the forward and in the adjoint pass, so that \Cref{alg:separable_area} is reused unchanged and requires no adjoint counterpart.
Similarly, the Piecewise-Planes operator \eqref{eq:plane_decomp_u} sums the contributions of the $N_p$ tangent planes in the adjoint as in the forward.

\paragraph{Coarser area grid}

When the strategy of \Cref{rmk:coarser_area_grid} is used, as in \Cref{sec:experiments}, with $\Delta l = U$, the forward pass spreads a single area $\left| \mathcal{M}_i^{l-\Delta l+1 : l} \right|$ evenly over the $\Delta l$ time intervals it covers. The inner loop of \Cref{alg:adjoint_surface} gathers them back accordingly,
\begin{equation*}
       \bz[i] \gets \bz[i] + \frac{\left| \mathcal{M}_i^{l-\Delta l+1 : l} \right|}{\Delta l} \sum_{l' = l-\Delta l+1}^{l} \frac{\bv[l']}{c\,\tau_{l'-\frac{1}{2}}},
\end{equation*}
with the same reduction in the number of area evaluations as in the forward.

\paragraph{Summary}

The adjoint has the complexity and the memory footprint of the forward operator, and differs from it only in its access pattern: the forward accumulates each voxel contribution into the shared vector $\bbf$, whereas the adjoint accumulates into the single entry $\bz[i]$ of its output $\bz = \bA^{*}\bs$, so that its loops over the voxels are free of write conflicts; only the sum over the $K$ transducers in \Cref{alg:adjoint_global} requires a reduction.

\begin{algorithm}[ht]
\caption{Adjoint operator $\bA^{*} : \bs \mapsto \bz$.}
\label{alg:adjoint_global}
\begin{algorithmic}[1]
       \Require $\bs \in \R^{KL}$ ; $(\bx_i)_{i \in [N]}$ ; $(\Sigma_k)_{k \in [K]}$ ; the same precomputed $\bh$ as in \Cref{alg:forward_global} ;
       \Ensure $\bz = \bA^{*} \bs \in \R^{N}$.
       \Statex
       \State \textbf{Initialization}: $\bz \gets 0 \in \R^{N}$
       \For{$k = 1, \ldots, K$} \Comment{Loop over transducers}
              \State $\bv_k \gets \dsop^{\top} \bs_k \in \R^{UL}$. \Comment{Upsampling}
              \State $\bv_k \gets \bh^{-} \star \bv_k$. \Comment{FFT correlation}
              \State Compute $\bz_k \in \R^N$ from $\bv_k$. \Comment{Method-specific: \Cref{alg:adjoint_point,alg:adjoint_surface}.}
              \State $\bz \gets \bz + \bz_k$.
       \EndFor
       \State \Return $\bz$.
\end{algorithmic}
\end{algorithm}

\begin{algorithm}[ht]
\caption{Adjoint of the point detector approximation for one transducer $\Sigma$.}
\label{alg:adjoint_point}
\begin{algorithmic}[1]
       \Require $\bv \in \R^{UL}$ ; $(\bx_i)_{i \in [N]}$ ; quadrature $(\bq_j, \Delta \bq_j)_{j \in [Q]}$ of $\Sigma$;
       \Ensure $\bz \in \R^{N}$, the transpose of \Cref{alg:point_based} applied to $\bv$.
       \Statex
       \State \textbf{Initialization}: $\bz \gets 0 \in \R^{N}$
       \For{$i \in [N]$} \Comment{Loop over voxels}
              \For{$j \in [Q]$} \Comment{Loop over point detectors of $\Sigma$}
                     \State $r \gets \|\bq_j - \bx_i\|$.
                     \State $l \gets \lfloor (r/c - t_0)\, U F_s \rfloor + 1$.
                     \State $\bz[i] \gets \bz[i] + \Delta \bq_j\, \bv[l] / r$. \Comment{Gather instead of scatter}
              \EndFor
       \EndFor
       \State $\bz \gets U F_s \cdot \bz$.
       \State \Return $\bz$.
\end{algorithmic}
\end{algorithm}

\begin{algorithm}[ht]
\caption{Adjoint of the surface approximation for one transducer $\Sigma$.}
\label{alg:adjoint_surface}
\begin{algorithmic}[1]
       \Require $\bv \in \R^{UL}$ ; $(\bx_i)_{i \in [N]}$ ; transducer surface $\Sigma$
       \Ensure $\bz \in \R^{N}$, the transpose of \Cref{alg:surface_based} applied to $\bv$.
       \State \textbf{Initialization:} $\bz \gets 0 \in \R^{N}$.
       \For{$i \in [N]$} \Comment{Loop over voxels}
       \State Compute $l_{\min}(i)$ and $l_{\max}(i)$ as in \eqref{eq:rmin_rmax}
              \For{$l = l_{\min}(i), \ldots, l_{\max}(i)$} \Comment{Loop over time steps}
                     \State Compute $\left| \mathcal{M}_i^l \right|$ in \eqref{eq:def:bffi} \Comment{\Cref{alg:separable_area}, unchanged}
                     \State $\bz[i] \gets \bz[i] + \left| \mathcal{M}_i^l \right| \bv[l] / (c\,\tau_{l-\frac{1}{2}})$.
              \EndFor
       \EndFor
       \State $\bz \gets U F_s \cdot \bz$.
       \State \Return $\bz$.
\end{algorithmic}
\end{algorithm}

\bibliographystyle{unsrt}
\bibliography{biblio}

\begin{thebibliography}{100}

\bibitem{beard2011biomedical}
Paul Beard.
\newblock Biomedical photoacoustic imaging.
\newblock {\em Interface Focus}, 1(4):602--631, 2011.

\bibitem{wang2012photoacoustic}
Lihong~V Wang and Song Hu.
\newblock Photoacoustic tomography: in vivo imaging from organelles to organs.
\newblock {\em Science}, 335(6075):1458--1462, 2012.

\bibitem{xia2014photoacoustic}
Jun Xia, Junjie Yao, and Lihong~V Wang.
\newblock Photoacoustic tomography: principles and advances.
\newblock {\em Progress In Electromagnetics Research}, 147:1--22, 2014.

\bibitem{yao2014photoacoustic}
Junjie Yao and Lihong~V Wang.
\newblock Photoacoustic brain imaging: from microscopic to macroscopic scales.
\newblock {\em Neurophotonics}, 1(1):011003--011003, 2014.

\bibitem{ntziachristos2010molecular}
Vasilis Ntziachristos and Daniel Razansky.
\newblock Molecular imaging by means of multispectral optoacoustic tomography
  ({MSOT}).
\newblock {\em Chemical Reviews}, 110(5):2783--2794, 2010.

\bibitem{perez2025dual}
Mailyn P{\'e}rez-Liva, Mar{\'\i}a Alonso~de Leci{\~n}ana, Mar{\'\i}a
  Guti{\'e}rrez-Fern{\'a}ndez, Jorge Camacho Sosa~Dias, Jorge F~Cruza, Jorge
  Rodr{\'\i}guez-Pardo, Iv{\'a}n Garc{\'\i}a-Su{\'a}rez, Fernando
  Laso-Garc{\'\i}a, Joaquin~L Herraiz, and Luis Elvira~Segura.
\newblock Dual photoacoustic/ultrasound technologies for preclinical research:
  current status and future trends.
\newblock {\em Physics in Medicine \& Biology}, 70(7):07TR01, 2025.

\bibitem{park2025clinical}
Jeongwoo Park, Seongwook Choi, Ferdinand Knieling, Bryan Clingman, Sarah
  Bohndiek, Lihong~V Wang, and Chulhong Kim.
\newblock Clinical translation of photoacoustic imaging.
\newblock {\em Nature Reviews Bioengineering}, 3(3):193--212, 2025.

\bibitem{attia2019review}
Amalina Binte~Ebrahim Attia, Ghayathri Balasundaram, Mohesh Moothanchery,
  US~Dinish, Renzhe Bi, Vasilis Ntziachristos, and Malini Olivo.
\newblock A review of clinical photoacoustic imaging: Current and future
  trends.
\newblock {\em Photoacoustics}, 16:100144, 2019.

\bibitem{wang2017photoacoustic}
Lihong Wang.
\newblock {\em Photoacoustic Imaging and Spectroscopy}.
\newblock CRC Press, 2017.

\bibitem{poudel2019survey}
Joemini Poudel, Yang Lou, and Mark~A Anastasio.
\newblock A survey of computational frameworks for solving the acoustic inverse
  problem in three-dimensional photoacoustic computed tomography.
\newblock {\em Physics in Medicine \& Biology}, 64(14):14TR01, 2019.

\bibitem{anastasio2007application}
Mark~A Anastasio, Jin Zhang, Dimple Modgil, and Patrick J~La Rivi{\`e}re.
\newblock Application of inverse source concepts to photoacoustic tomography.
\newblock {\em Inverse Problems}, 23(6):S21--S35, 2007.

\bibitem{kuchment2008mathematics}
Peter Kuchment and Leonid Kunyansky.
\newblock Mathematics of thermoacoustic tomography.
\newblock {\em European Journal of Applied Mathematics}, 19(2):191--224, 2008.

\bibitem{arridge2016adjoint}
Simon~R Arridge, Marta~M Betcke, Ben~T Cox, Felix Lucka, and Brad~E Treeby.
\newblock On the adjoint operator in photoacoustic tomography.
\newblock {\em Inverse Problems}, 32(11):115012, 2016.

\bibitem{Wang2011}
Kun Wang, Sergey~A. Ermilov, Richard Su, Hans-Peter Brecht, Alexander~A.
  Oraevsky, and Mark~A. Anastasio.
\newblock An imaging model incorporating ultrasonic transducer properties for
  three-dimensional optoacoustic tomography.
\newblock {\em IEEE Transactions on Medical Imaging}, 30(2):203--214, 2011.

\bibitem{rosenthal2011model}
Amir Rosenthal, Daniel Razansky, and Vasilis Ntziachristos.
\newblock Model-based optoacoustic inversion with arbitrary-shape detectors.
\newblock {\em Medical Physics}, 38(7):4285--4295, 2011.

\bibitem{mitsuhashi2014investigation}
Kenji Mitsuhashi, Kun Wang, and Mark~A Anastasio.
\newblock Investigation of the far-field approximation for modeling a
  transducer's spatial impulse response in photoacoustic computed tomography.
\newblock {\em Photoacoustics}, 2(1):21--32, 2014.

\bibitem{queiros2014modeling}
Daniel Queir{\'o}s, Xos{\'e}~Lu{\'\i}s D{\'e}an-Ben, Andreas Buehler, Daniel
  Razansky, Amir Rosenthal, and Vasilis Ntziachristos.
\newblock Modeling the shape of cylindrically focused transducers in
  three-dimensional optoacoustic tomography.
\newblock In {\em Photons Plus Ultrasound: Imaging and Sensing 2014}, volume
  8943, pages 335--341. SPIE, 2014.

\bibitem{roitner2014deblurring}
Heinz Roitner, Markus Haltmeier, Robert Nuster, Dianne~P O’Leary, Thomas
  Berer, Guenther Paltauf, Hubert Gr{\"u}n, and Peter Burgholzer.
\newblock Deblurring algorithms accounting for the finite detector size in
  photoacoustic tomography.
\newblock {\em Journal of Biomedical Optics}, 19(5):056011--056011, 2014.

\bibitem{piwakowski1989method}
Bogdan Piwakowski and Bertrand Delannoy.
\newblock Method for computing spatial pulse response: Time-domain approach.
\newblock {\em The Journal of the Acoustical Society of America},
  86(6):2422--2432, 1989.

\bibitem{piwakowski1999new}
Bogdan Piwakowski and Khalid Sbai.
\newblock A new approach to calculate the field radiated from arbitrarily
  structured transducer arrays.
\newblock {\em IEEE Transactions on Ultrasonics, Ferroelectrics, and Frequency
  Control}, 46(2):422--440, 1999.

\bibitem{ding2017efficient}
Lu~Ding, Xose~Luis Dean-Ben, and Daniel Razansky.
\newblock Efficient {3-D} model-based reconstruction scheme for arbitrary
  optoacoustic acquisition geometries.
\newblock {\em IEEE Transactions on Medical Imaging}, 36(9):1858--1867, 2017.

\bibitem{wise2019representing}
Elliott~S. Wise, Ben~T. Cox, Jiri Jaros, and Bradley~E. Treeby.
\newblock Representing arbitrary acoustic source and sensor distributions in
  {F}ourier collocation methods.
\newblock {\em The Journal of the Acoustical Society of America},
  146(1):278--288, 2019.

\bibitem{lockwood1973high}
JC~Lockwood and JG~Willette.
\newblock High-speed method for computing the exact solution for the pressure
  variations in the nearfield of a baffled piston.
\newblock {\em The Journal of the Acoustical Society of America},
  53(3):735--741, 1973.

\bibitem{san1992diffraction}
Jose~Luis San~Emeterio and Luis~G Ullate.
\newblock Diffraction impulse response of rectangular transducers.
\newblock {\em The Journal of the Acoustical Society of America},
  92(2):651--662, 1992.

\bibitem{schmerr2007ultrasonic}
Lester~W Schmerr~Jr and Sung-Jin Song.
\newblock {\em Ultrasonic Nondestructive Evaluation Systems: Models and
  Measurements}.
\newblock Springer, 2007.

\bibitem{hunt1983ultrasound}
John~W. Hunt, Marcel Arditi, and F.~Stuart Foster.
\newblock Ultrasound transducers for pulse-echo medical imaging.
\newblock {\em IEEE Transactions on Biomedical Engineering},
  BME-30(8):453--481, 1983.

\bibitem{baek2012spatial}
David~B{\"o}ttcher B{\ae}k, J{\o}rgen~Arendt Jensen, and Morten Willatzen.
\newblock Spatial impulse response of a rectangular double curved transducer.
\newblock {\em The Journal of the Acoustical Society of America},
  131(4):2730--2741, 2012.

\bibitem{jensen1992calculation}
J{\o}rgen~Arendt Jensen and Niels~Bruun Svendsen.
\newblock Calculation of pressure fields from arbitrarily shaped, apodized, and
  excited ultrasound transducers.
\newblock {\em IEEE Transactions on Ultrasonics, Ferroelectrics, and Frequency
  Control}, 39(2):262--267, 1992.

\bibitem{jensen1996field}
J{\o}rgen~Arendt Jensen.
\newblock {FIELD}: A program for simulating ultrasound systems.
\newblock {\em Medical \& Biological Engineering \& Computing}, 34(Suppl. 1,
  Pt. 1):351--353, 1996.
\newblock 10th Nordic-Baltic Conference on Biomedical Imaging.

\bibitem{wu1999spatial}
Ping Wu and Tadeusz Stepinski.
\newblock Spatial impulse response method for predicting pulse-echo fields from
  a linear array with cylindrically concave surface.
\newblock {\em IEEE Transactions on Ultrasonics, Ferroelectrics, and Frequency
  Control}, 46(5):1283--1297, 1999.

\bibitem{dean2022practical}
Xos{\'e}~Lu{\'\i}s De{\'a}n-Ben and Daniel Razansky.
\newblock A practical guide for model-based reconstruction in optoacoustic
  imaging.
\newblock {\em Frontiers in Physics}, 10:1028258, 2022.

\bibitem{do2025reconstruction}
Trung-Thai Do, Caroline Chaux, Paul Escande, and J{\'e}r{\^o}me G{\^a}teau.
\newblock Reconstruction d'images en tomographie photoacoustique avec
  r{\'e}gularisation combin{\'e}e variation totale-{C}auchy.
\newblock In {\em GRETSI 2025}, 2025.

\bibitem{paltauf2002iterative}
Guenther Paltauf, JA~Viator, SA~Prahl, and SL~Jacques.
\newblock Iterative reconstruction algorithm for optoacoustic imaging.
\newblock {\em The Journal of the Acoustical Society of America},
  112(4):1536--1544, 2002.

\bibitem{xu2006photoacoustic}
Minghua Xu and Lihong~V Wang.
\newblock Photoacoustic imaging in biomedicine.
\newblock {\em Review of Scientific Instruments}, 77(041101), 2006.

\bibitem{xu2005universal}
Minghua Xu and Lihong~V. Wang.
\newblock {Universal back-projection algorithm for photoacoustic computed
  tomography}.
\newblock In Alexander~A. Oraevsky and Lihong~V. Wang, editors, {\em Photons
  Plus Ultrasound: Imaging and Sensing 2005: The Sixth Conference on Biomedical
  Thermoacoustics, Optoacoustics, and Acousto-optics}, volume 5697, pages 251
  -- 254. International Society for Optics and Photonics, SPIE, 2005.

\bibitem{burgholzer2007exact}
Peter Burgholzer, Gebhard~J. Matt, Markus Haltmeier, and G\"unther Paltauf.
\newblock Exact and approximative imaging methods for photoacoustic tomography
  using an arbitrary detection surface.
\newblock {\em Physical Review E}, 75:046706, Apr 2007.

\bibitem{finch2004determining}
David Finch and Sarah~K Patch.
\newblock Determining a function from its mean values over a family of spheres.
\newblock {\em SIAM Journal on Mathematical Analysis}, 35(5):1213--1240, 2004.

\bibitem{kunyansky2007explicit}
Leonid~A Kunyansky.
\newblock Explicit inversion formulae for the spherical mean {R}adon transform.
\newblock {\em Inverse Problems}, 23(1):373--383, 2007.

\bibitem{haltmeier2014universal}
Markus Haltmeier.
\newblock Universal inversion formulas for recovering a function from spherical
  means.
\newblock {\em SIAM Journal on Mathematical Analysis}, 46(1):214--232, 2014.

\bibitem{paltauf2007experimental}
Guenther Paltauf, Robert Nuster, Markus Haltmeier, and Peter Burgholzer.
\newblock Experimental evaluation of reconstruction algorithms for limited view
  photoacoustic tomography with line detectors.
\newblock {\em Inverse Problems}, 23(6):S81--S94, 2007.

\bibitem{xu2004reconstructions}
Yuan Xu, Lihong~V Wang, Gaik Ambartsoumian, and Peter Kuchment.
\newblock Reconstructions in limited-view thermoacoustic tomography.
\newblock {\em Medical Physics}, 31(4):724--733, 2004.

\bibitem{sheu2008simulations}
Yae-Lin Sheu and Pai-Chi Li.
\newblock Simulations of photoacoustic wave propagation using a
  finite-difference time-domain method with {B}erenger’s perfectly matched
  layers.
\newblock {\em The Journal of the Acoustical Society of America},
  124(6):3471--3480, 2008.

\bibitem{huang2013full}
Chao Huang, Kun Wang, Liming Nie, Lihong~V Wang, and Mark~A Anastasio.
\newblock Full-wave iterative image reconstruction in photoacoustic tomography
  with acoustically inhomogeneous media.
\newblock {\em IEEE Transactions on Medical Imaging}, 32(6):1097--1110, 2013.

\bibitem{mitsuhashi2017forward}
Kenji Mitsuhashi, Joemini Poudel, Thomas~P Matthews, Alejandro Garcia-Uribe,
  Lihong~V Wang, and Mark~A Anastasio.
\newblock A forward-adjoint operator pair based on the elastic wave equation
  for use in transcranial photoacoustic computed tomography.
\newblock {\em SIAM Journal on Imaging Sciences}, 10(4):2022--2048, 2017.

\bibitem{Yuan2007threeD}
Zhen Yuan and Huabei Jiang.
\newblock Three-dimensional finite-element-based photoacoustic tomography:
  Reconstruction algorithm and simulations.
\newblock {\em Medical Physics}, 34(2):538--546, 2007.

\bibitem{Yao2009finite}
Lei Yao and Huabei Jiang.
\newblock Finite-element-based photoacoustic tomography in time domain.
\newblock {\em Journal of Optics A: Pure and Applied Optics}, 11(8):085301, May
  2009.

\bibitem{mast2001k}
T~Douglas Mast, Laurent~P Souriau, D-LD Liu, Makoto Tabei, Adrian~I Nachman,
  and Robert~C Waag.
\newblock A k-space method for large-scale models of wave propagation in
  tissue.
\newblock {\em IEEE Transactions on Ultrasonics, Ferroelectrics, and Frequency
  Control}, 48(2):341--354, 2001.

\bibitem{cox2007k}
Benjamin~T Cox, S~Kara, Simon~R Arridge, and Paul~C Beard.
\newblock k-space propagation models for acoustically heterogeneous media:
  Application to biomedical photoacoustics.
\newblock {\em The Journal of the Acoustical Society of America},
  121(6):3453--3464, 2007.

\bibitem{treeby2010kwave}
Bradley~E. Treeby and Benjamin~T. Cox.
\newblock {k-Wave}: {MATLAB} toolbox for the simulation and reconstruction of
  photoacoustic wave fields.
\newblock {\em Journal of Biomedical Optics}, 15(2):021314, 2010.

\bibitem{treeby2012modeling}
Bradley~E Treeby, Jiri Jaros, Alistair~P Rendell, and Benjamin~T Cox.
\newblock Modeling nonlinear ultrasound propagation in heterogeneous media with
  power law absorption using a k-space pseudospectral method.
\newblock {\em The Journal of the Acoustical Society of America},
  131(6):4324--4336, 2012.

\bibitem{else2024patato}
Thomas~R. Else, Janek Gröhl, Lina Hacker, and Sarah~E. Bohndiek.
\newblock {PATATO}: a {Python} photoacoustic tomography analysis toolkit.
\newblock {\em Journal of Open Source Software}, 9(93):5686, 2024.

\bibitem{stanziola2023j}
Antonio Stanziola, Simon~R Arridge, Ben~T Cox, and Bradley~E Treeby.
\newblock {j-Wave}: An open-source differentiable wave simulator.
\newblock {\em SoftwareX}, 22:101338, 2023.

\bibitem{hristova2008reconstruction}
Yulia Hristova, Peter Kuchment, and Linh Nguyen.
\newblock Reconstruction and time reversal in thermoacoustic tomography in
  acoustically homogeneous and inhomogeneous media.
\newblock {\em Inverse Problems}, 24(5):055006, 2008.

\bibitem{li2026slingbag}
Shuang Li, Yibing Wang, Jian Gao, Chulhong Kim, Seongwook Choi, Yu~Zhang, Qian
  Chen, Yao Yao, and Changhui Li.
\newblock {SlingBAG}: point cloud-based iterative algorithm for large-scale
  {3D} photoacoustic imaging.
\newblock {\em Nature Communications}, 17(1):128, 2026.

\bibitem{wang2026gpair}
Yibing Wang, Shuang Li, Tingting Huang, Yu~Zhang, Chulhong Kim, Seongwook Choi,
  and Changhui Li.
\newblock {GPAIR}: Gaussian-kernel-based ultrafast {3D} photoacoustic iterative
  reconstruction.
\newblock {\em arXiv preprint arXiv:2602.03893}, 2026.

\bibitem{jaros2016fullwave}
Jiri Jaros, Alistair~P. Rendell, and Bradley~E. Treeby.
\newblock Full-wave nonlinear ultrasound simulation on distributed clusters
  with applications in high-intensity focused ultrasound.
\newblock {\em The International Journal of High Performance Computing
  Applications}, 30(2):137--155, 2016.

\bibitem{rosenthal2010fast}
Amir Rosenthal, Daniel Razansky, and Vasilis Ntziachristos.
\newblock Fast semi-analytical model-based acoustic inversion for quantitative
  optoacoustic tomography.
\newblock {\em IEEE Transactions on Medical Imaging}, 29(6):1275--1285, 2010.

\bibitem{buehler2011model}
Andreas Buehler, Amir Rosenthal, Thomas Jetzfellner, Alexander Dima, Daniel
  Razansky, and Vasilis Ntziachristos.
\newblock Model-based optoacoustic inversions with incomplete projection data.
\newblock {\em Medical Physics}, 38(3):1694--1704, 2011.

\bibitem{caballero2013model}
Miguel Angel~Araque Caballero, J{\'e}r{\^o}me Gateau, Xos{\'e}-Luis
  D{\'e}an-Ben, and Vasilis Ntziachristos.
\newblock Model-based optoacoustic image reconstruction of large
  three-dimensional tomographic datasets acquired with an array of directional
  detectors.
\newblock {\em IEEE Transactions on Medical Imaging}, 33(2):433--443, 2013.

\bibitem{Hauptmann2018}
Andreas Hauptmann, Felix Lucka, Marta Betcke, Nam Huynh, Jonas Adler, Ben Cox,
  Paul Beard, Sebastien Ourselin, and Simon Arridge.
\newblock Model-based learning for accelerated, limited-view {3-D}
  photoacoustic tomography.
\newblock {\em IEEE Transactions on Medical Imaging}, 37(6):1382--1393, 2018.

\bibitem{ding2020model}
Lu~Ding, Daniel Razansky, and Xosé~Luís Deán-Ben.
\newblock Model-based reconstruction of large three-dimensional optoacoustic
  datasets.
\newblock {\em IEEE Transactions on Medical Imaging}, 39(9):2931--2940, 2020.

\bibitem{li2026slingbagpro}
Shuang Li, Yibing Wang, Jian Gao, Chulhong Kim, Seongwook Choi, Yu~Zhang, Qian
  Chen, Yao Yao, and Changhui Li.
\newblock {SlingBAG Pro}: Accelerating point cloud-based iterative
  reconstruction for {3D} photoacoustic imaging with arbitrary array
  geometries.
\newblock {\em arXiv preprint arXiv:2601.00551}, 2026.

\bibitem{diebold1991photoacoustic}
Gerald~J Diebold, T~Sun, and MI~Khan.
\newblock Photoacoustic monopole radiation in one, two, and three dimensions.
\newblock {\em Physical Review Letters}, 67(24):3384, 1991.

\bibitem{Lewitt1990attenuation}
Robert~M. Lewitt.
\newblock Multidimensional digital image representations using generalized
  {K}aiser--{B}essel window functions.
\newblock {\em Journal of the Optical Society of America A}, 7(10):1834--1846,
  Oct 1990.

\bibitem{wang2014discrete}
Kun Wang, Robert~W Schoonover, Richard Su, Alexander Oraevsky, and Mark~A
  Anastasio.
\newblock Discrete imaging models for three-dimensional optoacoustic tomography
  using radially symmetric expansion functions.
\newblock {\em IEEE Transactions on Medical Imaging}, 33(5):1180--1193, 2014.

\bibitem{kansa1990multiquadrics}
Edward~J Kansa.
\newblock Multiquadrics—a scattered data approximation scheme with
  applications to computational fluid-dynamics—{II} solutions to parabolic,
  hyperbolic and elliptic partial differential equations.
\newblock {\em Computers \& Mathematics with Applications}, 19(8-9):147--161,
  1990.

\bibitem{franke1998solving}
Carsten Franke and Robert Schaback.
\newblock Solving partial differential equations by collocation using radial
  basis functions.
\newblock {\em Applied Mathematics and Computation}, 93(1):73--82, 1998.

\bibitem{kostli2001temporal}
K.~P. K{\"o}stli, M.~Frenz, H.~Bebie, and H.~P. Weber.
\newblock Temporal backward projection of optoacoustic pressure transients
  using {Fourier} transform methods.
\newblock {\em Physics in Medicine and Biology}, 46:1863--1872, 2001.

\bibitem{cox2005fast}
B.~T. Cox and P.~C. Beard.
\newblock Fast calculation of pulsed photoacoustic fields in fluids using
  k-space methods.
\newblock {\em The Journal of the Acoustical Society of America},
  117(6):3616--3627, 2005.

\bibitem{hauptmann2018approximate}
Andreas Hauptmann, Ben Cox, Felix Lucka, Nam Huynh, Marta Betcke, Paul Beard,
  and Simon Arridge.
\newblock Approximate k-space models and deep learning for fast photoacoustic
  reconstruction.
\newblock In {\em Machine Learning for Medical Image Reconstruction (MLMIR
  2018)}, volume 11074 of {\em Lecture Notes in Computer Science}, pages
  103--111. Springer, 2018.

\bibitem{kucukkomurcu2026depth}
Ege K{\"u}{\c{c}}{\"u}kk{\"o}m{\"u}rc{\"u}, Simon Labouesse, Marc Allain, and
  Thomas Chaigne.
\newblock A depth-dependent, transverse shift-invariant operator for fast
  iterative {3D} photoacoustic tomography in planar geometry.
\newblock {\em arXiv preprint arXiv:2603.28150}, 2026.

\bibitem{antholzer2019deep}
Stephan Antholzer, Markus Haltmeier, and Johannes Schwab.
\newblock Deep learning for photoacoustic tomography from sparse data.
\newblock {\em Inverse Problems in Science and Engineering}, 27(7):987--1005,
  2019.

\bibitem{davoudi2019deep}
Neda Davoudi, Xos{\'e}~Lu{\'\i}s De{\'a}n-Ben, and Daniel Razansky.
\newblock Deep learning optoacoustic tomography with sparse data.
\newblock {\em Nature Machine Intelligence}, 1(10):453--460, 2019.

\bibitem{schwab2019learned}
Johannes Schwab, Stephan Antholzer, and Markus Haltmeier.
\newblock Learned backprojection for sparse and limited view photoacoustic
  tomography.
\newblock In {\em Photons Plus Ultrasound: Imaging and Sensing 2019}, volume
  10878, pages 263--271. SPIE, 2019.

\bibitem{allman2018photoacoustic}
Derek Allman, Austin Reiter, and Muyinatu A~Lediju Bell.
\newblock Photoacoustic source detection and reflection artifact removal
  enabled by deep learning.
\newblock {\em IEEE Transactions on Medical Imaging}, 37(6):1464--1477, 2018.

\bibitem{lan2020net}
Hengrong Lan, Daohuai Jiang, Changchun Yang, Feng Gao, and Fei Gao.
\newblock {Y-Net}: Hybrid deep learning image reconstruction for photoacoustic
  tomography in vivo.
\newblock {\em Photoacoustics}, 20:100197, 2020.

\bibitem{hauptmann2020deep}
Andreas Hauptmann and Ben Cox.
\newblock Deep learning in photoacoustic tomography: current approaches and
  future directions.
\newblock {\em Journal of Biomedical Optics}, 25(11):112903, 2020.

\bibitem{song2024solving}
Bowen Song, Soo~Min Kwon, Zecheng Zhang, Xinyu Hu, Qing Qu, and Liyue Shen.
\newblock Solving inverse problems with latent diffusion models via hard data
  consistency.
\newblock In {\em International Conference on Learning Representations}, volume
  2024, pages 7624--7654, 2024.

\bibitem{dey2024score}
Sreemanti Dey, Snigdha Saha, Berthy~T Feng, Manxiu Cui, Laure Delisle, Oscar
  Leong, Lihong~V Wang, and Katherine~L Bouman.
\newblock Score-based diffusion models for photoacoustic tomography image
  reconstruction.
\newblock In {\em ICASSP 2024-2024 IEEE International Conference on Acoustics,
  Speech and Signal Processing (ICASSP)}, pages 2470--2474. IEEE, 2024.

\bibitem{yang2021review}
Changchun Yang, Hengrong Lan, Feng Gao, and Fei Gao.
\newblock Review of deep learning for photoacoustic imaging.
\newblock {\em Photoacoustics}, 21:100215, 2021.

\bibitem{grohl2021deep}
Janek Gr{\"o}hl, Melanie Schellenberg, Kris Dreher, and Lena Maier-Hein.
\newblock Deep learning for biomedical photoacoustic imaging: A review.
\newblock {\em Photoacoustics}, 22:100241, 2021.

\bibitem{deng2021deep}
Handi Deng, Hui Qiao, Qionghai Dai, and Cheng Ma.
\newblock Deep learning in photoacoustic imaging: a review.
\newblock {\em Journal of Biomedical Optics}, 26(4):040901--040901, 2021.

\bibitem{bridal2024innovative}
Théotim Lucas and Jérôme Gateau.
\newblock Tomography and spectroscopy: photoacoustics.
\newblock In {\em Innovative Ultrasound Imaging Techniques: Biomedical
  Applications}, chapter~9. John Wiley \& Sons, 2024.

\bibitem{evans2022partial}
Lawrence~C Evans.
\newblock {\em Partial Differential Equations}, volume~19.
\newblock American Mathematical Society, 2022.

\bibitem{oppenheim2010discrete}
Alan~V Oppenheim and Ronald~W Schafer.
\newblock {\em Discrete-Time Signal Processing}.
\newblock Pearson, 2010.

\bibitem{arnau2004piezoelectric}
Antonio Arnau and David Soares.
\newblock {\em Piezoelectric Transducers and Applications}.
\newblock Springer, 2004.

\bibitem{Wang_2013}
Kun Wang, Chao Huang, Yu-Jiun Kao, Cheng-Ying Chou, Alexander~A. Oraevsky, and
  Mark~A. Anastasio.
\newblock Accelerating image reconstruction in three-dimensional optoacoustic
  tomography on graphics processing units.
\newblock {\em Medical Physics}, 40(2):023301, January 2013.

\bibitem{linger2023volumetric}
Cl{\'e}ment Linger, Yoann Atlas, Remy Winter, Marine Vandebrouck, Maxime Faure,
  Th{\'e}otim Lucas, S~Lori Bridal, and J{\'e}r{\^o}me Gateau.
\newblock Volumetric and simultaneous photoacoustic and ultrasound imaging with
  a conventional linear array in a multiview scanning scheme.
\newblock {\em IEEE Transactions on Ultrasonics, Ferroelectrics, and Frequency
  Control}, 70(12):1607--1620, 2023.

\bibitem{wendland2005scattered}
Holger Wendland.
\newblock {\em Scattered Data Approximation}.
\newblock Cambridge University Press, 2005.

\bibitem{wendland1995piecewise}
Holger Wendland.
\newblock Piecewise polynomial, positive definite and compactly supported
  radial functions of minimal degree.
\newblock {\em Advances in Computational Mathematics}, 4:389--396, 1995.

\bibitem{narcowich2006sobolev}
Francis~J. Narcowich, Joseph~D. Ward, and Holger Wendland.
\newblock Sobolev error estimates and a {B}ernstein inequality for scattered
  data interpolation via radial basis functions.
\newblock {\em Constructive Approximation}, 24:175--186, 2006.

\bibitem{buhmann2003radial}
Martin~D. Buhmann.
\newblock {\em Radial Basis Functions: Theory and Implementations}.
\newblock Cambridge University Press, 2003.

\bibitem{schaback2006kernel}
Robert Schaback and Holger Wendland.
\newblock Kernel techniques: from machine learning to meshless methods.
\newblock {\em Acta Numerica}, 15:543--639, 2006.

\bibitem{chacko2013discretization}
Nikhil Chacko, Michael Liebling, and Thierry Blu.
\newblock Discretization of continuous convolution operators for accurate
  modeling of wave propagation in digital holography.
\newblock {\em Journal of the Optical Society of America A}, 30(10):2012--2020,
  2013.

\bibitem{stepanishen1971transient}
Peter~R. Stepanishen.
\newblock Transient radiation from pistons in an infinite planar baffle.
\newblock {\em The Journal of the Acoustical Society of America},
  49(5B):1629--1638, 1971.

\bibitem{NIST:DLMF}
{\it NIST Digital Library of Mathematical Functions}.
\newblock \url{https://dlmf.nist.gov/}, Release 1.2.7 of 2026-06-15.
\newblock F.~W.~J. Olver, A.~B. {Olde Daalhuis}, D.~W. Lozier, B.~I. Schneider,
  R.~F. Boisvert, C.~W. Clark, B.~R. Miller, B.~V. Saunders, H.~S. Cohl, and
  M.~A. McClain, eds.

\bibitem{carlson1995numerical}
B.~C. Carlson.
\newblock Numerical computation of real or complex elliptic integrals.
\newblock {\em Numerical Algorithms}, 10(1):13–26, March 1995.

\bibitem{abramowitz1966handbook}
Milton Abramowitz and Irene~A Stegun.
\newblock {\em {Handbook of Mathematical Functions, with Formulas, Graphs, and
  Mathematical Tables}}.
\newblock American Institute of Physics, 1966.

\bibitem{ilyina2017extension}
Natalia Ilyina, Jeroen Hermans, Koen Van Den~Abeele, and Jan D'hooge.
\newblock Extension of the angular spectrum method to model the pressure field
  of a cylindrically curved array transducer.
\newblock {\em The Journal of the Acoustical Society of America},
  141(3):EL262--EL266, 2017.

\bibitem{byrd1995limited}
Richard~H Byrd, Peihuang Lu, Jorge Nocedal, and Ciyou Zhu.
\newblock A limited memory algorithm for bound constrained optimization.
\newblock {\em SIAM Journal on Scientific Computing}, 16(5):1190--1208, 1995.

\bibitem{zhu1997algorithm}
Ciyou Zhu, Richard~H Byrd, Peihuang Lu, and Jorge Nocedal.
\newblock Algorithm 778: {L-BFGS-B}: {Fortran} subroutines for large-scale
  bound-constrained optimization.
\newblock {\em ACM Transactions on Mathematical Software (TOMS)},
  23(4):550--560, 1997.

\bibitem{wang2004image}
Zhou Wang, Alan~C Bovik, Hamid~R Sheikh, and Eero~P Simoncelli.
\newblock Image quality assessment: from error visibility to structural
  similarity.
\newblock {\em IEEE Transactions on Image Processing}, 13(4):600--612, 2004.

\bibitem{blood-vessel-segmentation}
Yashvardhan Jain, Katy Borner, Claire Walsh, Nancy Ruschman, Peter~D. Lee,
  Griffin~M. Weber, Ryan Holbrook, and Addison Howard.
\newblock {SenNet + HOA - Hacking the Human Vasculature in 3D}.
\newblock \url{https://kaggle.com/competitions/blood-vessel-segmentation},
  2023.
\newblock Kaggle.

\bibitem{van2026assessing}
Melle Van Der~Brugge, Kalloor~Joseph Francis, and Navchetan Awasthi.
\newblock Assessing image quality in photoacoustic imaging: A metric-based and
  deep learning-based evaluation.
\newblock {\em Photoacoustics}, 49(100800), 2026.

\bibitem{van2005fourier}
Marin Van~Heel and Michael Schatz.
\newblock {F}ourier shell correlation threshold criteria.
\newblock {\em Journal of Structural Biology}, 151(3):250--262, 2005.

\bibitem{schwartz1951theorie}
Laurent Schwartz.
\newblock {\em Th{\'e}orie des Distributions}.
\newblock Hermann, 1951.

\bibitem{friedlander1998introduction}
Friedrich~Gerard Friedlander.
\newblock {\em Introduction to the Theory of Distributions}.
\newblock Cambridge University Press, 1998.

\end{thebibliography}

\end{document}